\providecommand{\Inf}{\mathrm{Inf}} 
\newtheorem{theorem}{Theorem}[section]
\newtheorem{lemma}[theorem]{Lemma}
\newtheorem{proposition}[theorem]{Proposition}
\newtheorem{corollary}[theorem]{Corollary}
\theoremstyle{definition}
\newtheorem{definition}[theorem]{Definition}
\theoremstyle{remark}
\newtheorem{remark}[theorem]{Remark}
\numberwithin{equation}{section}
\newcommand{\blank}{\sqcup}
\begin{document}
\makeatletter
\def\ps@pprintTitle{%
 \let\@oddhead\@empty
 \let\@evenhead\@empty
 \let\@oddfoot\@empty
 \let\@evenfoot\@empty
}
\makeatother

\begin{frontmatter}

\title{Formal Foundations for Controlled Stochastic Activity Networks}

\author[sharif,umich]{Ali Movaghar\corref{cor1}}
\cortext[cor1]{Corresponding author.}
\ead{movaghar@sharif.edu}
\ead{movaghar@umich.edu}

\address[sharif]{Sharif University of Technology, Tehran, Iran}
\address[umich]{Computer Science and Engineering Division, University of Michigan, Ann Arbor, MI, USA}

\begin{abstract}
We introduce \emph{Controlled Stochastic Activity Networks} (Controlled SANs), a formal extension of classical Stochastic Activity Networks that integrates explicit control actions into a unified semantic framework for modeling distributed real-time systems. Controlled SANs systematically capture dynamic behavior involving nondeterminism, probabilistic branching, and stochastic timing, while enabling policy-driven decision making within a rigorous mathematical setting.

We develop a hierarchical, automata-theoretic semantics for Controlled SANs that encompasses nondeterministic, probabilistic, and stochastic models uniformly. A structured taxonomy of control policies—ranging from memoryless and finite-memory strategies to computationally augmented policies—is formalized, and their expressive power is characterized through associated language classes. To support model abstraction and compositional reasoning, we introduce behavioral equivalences, including bisimulation and stochastic isomorphism.

Controlled SANs generalize classical frameworks such as continuous-time Markov decision processes (CTMDPs), providing a rigorous foundation for the specification, verification, and synthesis of dependable systems operating under uncertainty. This framework enables both quantitative and qualitative analysis, advancing the design of safety-critical systems where control, timing, and stochasticity are tightly coupled.
\end{abstract}

\begin{keyword}
Controlled stochastic activity networks \sep controlled automata \sep probabilistic automata 
\sep continuous-time Markov decision processes \sep formal verification  \sep reinforcement learning \sep safety-critical systems
\end{keyword}

\end{frontmatter}

\setcounter{equation}{0}
\section{Introduction}

Safety-critical systems—such as those used in avionics, autonomous vehicles, medical devices, and industrial automation—are characterized by stringent correctness requirements, continuous operation, and the need to make timely decisions under uncertainty. These systems typically involve a distributed architecture with interacting components that operate concurrently and must respond predictably to a combination of real-time stimuli, probabilistic events, and external commands. Any error in their behavior can have catastrophic consequences, making formal modeling and verification indispensable to their development and deployment \cite{hoare1969axiomatic, lamport1977proving, newcombe2015aws, woodcock2009survey, seshia2022cyberphysical}.

To analyze and synthesize such systems, we require modeling frameworks that are expressive enough to capture their complex dynamic behavior, yet structured enough to support rigorous reasoning and tool-based analysis. Classical paradigms—such as finite-state machines \cite{Sipser2013}, timed automata \cite{alur1994theory}, Petri nets \cite{Peterson1981}, and stochastic process algebras \cite{hillston1996compositional, hermanns2002interactive}—have each contributed key insights in this direction. In particular, \emph{Stochastic Activity Networks} (SANs) \cite{movaghar1984stochastic,meyer1985stochastic,movaghar2001stochastic} generalize timed Petri nets by associating activities with probabilistic durations and gate-driven transitions. SANs are equipped with intuitive graphical notations and support compositionality, making them effective for performance and dependability analysis.

Despite these strengths, SANs fall short in contexts that demand \emph{explicit control}, i.e., decision logic that governs how the system evolves in response to observed events, internal states, or policies. In safety-critical domains, such control is pervasive, ranging from real-time schedulers and fault-tolerance strategies to adaptive control loops and fail-safe mechanisms. However, traditional SANs lack the semantic machinery to encode and reason about such decision-driven behavior. This limitation motivates the development of models that integrate control theory with stochastic modeling and formal semantics.

In this paper, we introduce \emph{Controlled Stochastic Activity Networks} (Controlled SANs), a formal extension of SANs that incorporates control actions and policies into the modeling process. Our goal is to provide a unified formalism for modeling and analyzing systems where control decisions interact with stochastic and timed behaviors. The resulting models support analysis tasks essential to safety-critical systems, such as controller synthesis, reachability under uncertainty, risk evaluation, and reward optimization, within a compositional and mathematically rigorous framework.

\subsection*{Motivation from Safety-Critical Domains}

Consider an autonomous drone navigating an urban environment. Its mission requires sensing obstacles, adjusting altitude and velocity in real-time, avoiding collisions, and ensuring safe return. Each of these behaviors involves uncertain timings (e.g., sensor latencies), probabilistic outcomes (e.g., wind disturbances), and control decisions (e.g., evasive maneuvers). Similarly, a medical infusion pump must adjust dosages based on monitored vitals, react to hardware faults, and enforce safety constraints on flow rates—all while managing concurrent activities with varied timing profiles. In both cases, a formal model must:
\begin{itemize}
    \item Represent concurrent activities with different timing semantics.
    \item Express control actions that guide system transitions based on state, history, or external inputs.
    \item Support probabilistic modeling of failures, delays, or random phenomena.
    \item Enable quantitative reasoning about safety, liveness, and performance objectives.
\end{itemize}

Controlled SANs address these needs by extending SANs with policy-driven control, enabling formal reasoning about decision-making under uncertainty and time constraints. Through their integration of nondeterministic, probabilistic, and stochastic semantics, they provide a natural and expressive framework for modeling and analyzing safety-critical systems.

\subsection*{Overview of the Model and Contributions}

We develop Controlled SANs through a layered approach that introduces increasing levels of semantic detail. At each level, we formalize system behavior using corresponding automata-theoretic models, define classes of control policies, and characterize the expressive power of the framework in terms of language recognition and bisimulation \cite{Milner1989,baier2008}.

\textbf{1. Nondeterministic Semantics.} At the foundational level, Controlled SANs extend classical activity networks by introducing control actions that mediate the outcome of timed activities. System evolution is governed by marking transitions induced by activity completions and gate functions. The behavior of such systems is captured by \emph{controlled automata}, where transitions are labeled by activities and control actions. These automata define state-based semantics for Controlled SANs and allow for equivalence checking and reasoning about nondeterministic executions.

\textbf{2. Probabilistic Semantics.} To handle uncertainty in instantaneous transitions, we introduce \emph{controlled probabilistic activity networks}, where the choice among enabled instantaneous activities in an unstable marking is made probabilistically. These are interpreted as \emph{controlled probabilistic automata}, which generalize discrete-time Markov decision processes (DTMDPs) \cite{puterman1994markov} with activity labels and control actions. This extension supports modeling fault injection, randomized algorithms, and probabilistic control behavior.

\textbf{3. Stochastic Semantics.} For real-time systems, we enrich the model with timing semantics by assigning probability distributions and enabling rates to timed activities. The resulting \emph{controlled stochastic activity networks} are semantically equivalent to \emph{controlled stochastic automata}, which model system dynamics as continuous-time stochastic processes governed by policies. In the special case where timing distributions are exponential, the models correspond to \emph{continuous-time Markov decision processes} (CTMDPs) \cite{puterman1994markov}, enabling application of well-established analysis techniques.

Across all these models, we define a spectrum of policy types—including memoryless, history-dependent, finite-memory, stack-augmented, and tape-augmented—and classify their expressive power using formal language theory. This taxonomy offers a rigorous lens for assessing controller complexity and capability, particularly relevant to automated synthesis in safety-critical settings.

\subsection*{Contributions}

Our contributions in this paper include:
\begin{enumerate}
    \item A unified formalism for modeling control-enhanced stochastic systems using Controlled SANs, encompassing nondeterministic, probabilistic, and stochastic dynamics.
    \item Formal definitions and semantics for Controlled SANs based on automata-theoretic models, including controlled automata, controlled probabilistic automata, and controlled stochastic automata.
    \item Introduction and classification of control policies by computational structure and memory, with a detailed study of their expressive power via language acceptance.
    \item Bisimulation-based notions of equivalence, enabling compositional reasoning and model reduction while preserving behavioral semantics.
    \item A formal correspondence between Controlled SANs and CTMDPs, demonstrating the compatibility of our model with established verification frameworks.
\end{enumerate}
Through these results, Controlled SANs offer a rigorous foundation for the design, analysis, and synthesis of decision-making in stochastic real-time systems. They support both qualitative properties (e.g., safety, liveness, and controllability) and quantitative objectives (e.g., expected reward, risk bounds, or deadline satisfaction), making them suitable for a wide range of safety-critical domains.

While this paper focuses on the formal and semantic foundations of Controlled~SANs rather than their algorithmic verification, 
the framework provides the theoretical groundwork for integrating temporal-logic model-checking techniques 
(such as CTL and LTL) in future research.
\subsection*{Organization of the Paper}
The rest of the paper is organized as follows. Section 2 discusses the related work. Section 3 introduces the syntax and semantics of Controlled SANs in a nondeterministic setting, along with their interpretation via controlled automata. Section 4 extends the model to controlled probabilistic activity networks and discusses policy types and language classes. Section 5 presents the stochastic semantics, introduces controlled stochastic automata, and formalizes the transition to CTMDPs. Section 6 summarizes the key findings and outlines future directions, including verification techniques and safety assurance methods for real-world applications.
\section{Related Work}

The Controlled SAN framework introduced in this paper brings together several classical and modern strands of research in stochastic modeling, formal verification, and automata theory. In this section, we review the most relevant foundational and applied models that inform and contrast with our approach.
\subsection*{Petri Nets and SANs}

Stochastic Activity Networks (SANs)~\cite{movaghar1984stochastic,meyer1985stochastic,movaghar2001stochastic} 
extend timed Petri nets by associating activities with probabilistic timing and gate-based control logic. 
SANs have been extensively applied to performance modeling of distributed and fault-tolerant systems. 
However, they lack mechanisms for encoding \emph{control policies}—that is, dynamic selection among multiple transitions based on system state or history. 
Our Controlled SANs generalize SANs by integrating control actions and policy-based semantics, subsuming the original SAN model as a special case without decision-making. 

A closely related and influential extension of Petri nets is the model of
\emph{Generalized Stochastic Petri Nets} (GSPNs)~\cite{Marsan1984,Balbo1997}.
GSPNs enrich classical Petri nets with both immediate (zero-time) and timed
(transitionally stochastic) firings, yielding an underlying continuous-time
Markov chain semantics. They provide a powerful framework for performance and
dependability evaluation and have been widely adopted in the modeling
community.

While GSPNs support rich stochastic timing and priority mechanisms,
they do not incorporate explicit decision-making or policy-based control.
In particular, nondeterminism in GSPNs is typically resolved implicitly
(e.g., via priorities or probabilistic weights), rather than through
externally selectable control actions.
Controlled SANs extend beyond GSPNs by embedding control policies directly
into the semantic model, enabling systematic reasoning about optimal
decision-making, policy synthesis, and expressiveness under control.

The formal underpinnings of SANs are rooted in Petri net theory~\cite{Peterson1981}, 
which has also been enriched through control-oriented extensions such as \emph{inhibitor arcs} and \emph{priority transitions}, 
and through the application of \emph{supervisory control theory} to Petri nets~\cite{Ramadge1987,Cassandras2008}.
These approaches often address control indirectly or externally. 
In contrast, Controlled SANs embed policy-driven behavior directly into the model’s semantics, 
enabling explicit and systematic reasoning about system evolution under control.

\subsection*{CTMDPs, GSMPs, and GSMDPs}

Continuous-Time Markov Decision Processes (CTMDPs)~\cite{puterman1994markov} provide a foundational model for decision-making under stochastic timing with nondeterministic control. They assume exponential holding times and enable optimal policy synthesis with respect to expected reward, reachability, or long-run behavior. Controlled SANs (CSANs) subsume CTMDPs when all timed activities are exponential, and control actions govern the resolution of nondeterminism, while extending the model with additional structure and semantic richness, including gate-based synchronization and compositional modeling.

Generalized Semi-Markov Processes (GSMPs)~\cite{Matthes1962, Schassberger1977} generalize CTMDPs by supporting multiple concurrently enabled transitions with arbitrary timing distributions and enabling race semantics. However, they are fundamentally descriptive and lack a native control-theoretic framework. CSANs may be seen as a control-augmented extension of GSMPs, with better alignment with formal verification and model-based synthesis.

Generalized Semi-Markov Decision Processes (GSMDPs)~\cite{YounesSimmons2004} introduce explicit decision-making into the GSMP framework, combining general timing distributions with controlled actions. These models are highly expressive and support value computation via phase-type approximation. However, GSMDPs often lack a compositional semantics, and their modeling language does not naturally capture structured constructs such as gates or multi-level synchronizations. In contrast, CSANs offer a modular, automata-based formalism that preserves analyzability while encompassing the timing and control features of both GSMPs and GSMDPs. In particular, the Controlled SAN model captures not only semi-Markovian timing and nondeterministic choice, but also allows interaction via gates, enabling a more expressive and implementable modeling paradigm.

From a semantic perspective, GSPNs can be viewed as a Petri-net–based
counterpart of GSMPs, in which multiple concurrently enabled stochastic
events race according to general timing distributions, but without
explicit control actions.

\subsection*{Controlled Automata and  Language Expressiveness}

The study of controlled automata and their language-theoretic properties under various policy types has a rich history in formal methods, including supervisory control theory~\cite{Ramadge1987}, game-theoretic controller synthesis~\cite{Pnueli1989}, and probabilistic automata~\cite{Segala1995}. Our work builds on these foundations by systematically classifying control policies (e.g., memoryless, finite-memory, stack-augmented, tape-augmented) and characterizing their expressive power in terms of accepted languages over finite and infinite words.

The hierarchy of language families induced by these policies mirrors classical distinctions in automata theory (e.g., regular, context-free, and recursively enumerable languages) and connects to foundational results on $\omega$-automata and  B\"uchi acceptance~\cite{Thomas1990, Vardi1986Automata, Rabin1969}. 

These classical results provide the theoretical basis for understanding the expressive power of Controlled SANs. Depending on the class of control policy, the corresponding accepted languages may range from regular to recursively enumerable. This structured view allows a principled analysis of the interplay between control capabilities and probabilistic dynamics in both finite and infinite behaviors.

\begin{remark}
The increasing expressiveness induced by richer classes of control policies has direct implications for the decidability and complexity of language-acceptance problems.  
For example, while memoryless control corresponds to regular languages with well-understood algorithmic properties, stack- or tape-augmented control can reach the level of context-free or even recursively enumerable languages, where emptiness and equivalence problems become much harder.  
This expressiveness–complexity trade-off forms a conceptual foundation for later undecidability and hardness results developed in this paper.
\end{remark}

\subsection*{Pushdown Automata: Probabilistic and Controlled Extensions}
Pushdown automata (PDAs) provide a classical foundation for modeling
infinite-state systems with stack-based control. 
A well-studied subclass is \emph{Basic Process Algebra} (BPA), 
which can be viewed as a special case of PDAs where the stack always
contains a single nonterminal at the top and its evolution is determined
by leftmost rewriting according to context-free rules in Greibach normal form. 

\emph{Probabilistic BPA} (pBPA) extends BPA by introducing probability
distributions over rewrite rules, thereby inducing infinite-state
Markov decision processes with structured stack discipline.
These models have been extensively investigated in the literature and
admit decidability results for several qualitative properties.

More general classes, such as \emph{stochastic BPA games}~\cite{brazdil2010qualitative} 
and \emph{probabilistic pushdown automata} (pPDA)~\cite{etessami2005recursive,etessami2009recursive,esparza2014probabilistic},
capture recursion, nondeterminism, and probabilistic branching.
These models correspond to infinite-state Markov decision processes
with well-defined semantics and decidable reachability and $\omega$-regular
properties for important subclasses.

Our \emph{Controlled Stochastic Automata} (CSA) framework can be viewed
as a unifying generalization of these probabilistic pushdown models,
incorporating policy-based control and stochastic timing into the
semantic hierarchy. This allows CSA to subsume both finite-state CTMDPs
and infinite-state pushdown extensions within a single formalism.

\subsection*{Formal Methods in Safety-Critical Systems}
Finally, formal modeling and verification of safety-critical systems has increasingly emphasized the integration of stochastic timing, uncertainty, and control~\cite{seshia2022cyberphysical, Clarke1999}. Tools like PRISM~\cite{Kwiatkowska2011}, MODEST~\cite{Hahn2010}, and Storm~\cite{Dehnert2017} support a subset of these features, often limited to MDP-based models. Controlled SANs provide a formalism that aligns more closely with the practical demands of such systems---offering compositionality, distributed semantics, and structured control---while retaining formal rigor and support for verification and synthesis. 
One of the most prominent tools for modeling and analyzing SAN-based systems is the Möbius Modeling Framework \cite{Deavours2002Mobius}, which provides a compositional, extensible environment for performance and dependability analysis. Möbius has significantly influenced the modeling community and offers a natural foundation for extensions such as Controlled Stochastic Activity Networks (CSANs).

\section{Nondeterministic Models}
Controlled activity networks are a generalization of activity networks \cite{movaghar1984stochastic, movaghar2001stochastic}, incorporating control mechanisms in a nondeterministic setting.  
\subsection*{Model Structure}
Throughout this paper, ${\mathcal N}$ denotes the set of natural
numbers and ${\mathcal R_{+}}$ represents the set of non-negative real
numbers.
\begin{definition} \label{def2.1}
A {\it controlled activity network} is a 9-tuple 
\[
K = (P, IA, TA, CA, IG, OG, IR, IOR, TOR)
\]
where:
\begin {itemize}
\item 
$P$ is a finite set of {\it places},
\item 
$IA$ is a finite set of {\it instantaneous 
activities},
\item 
$TA$ is a finite set of {\it timed activities},
\item
$CA$ is a finite set of {\it control actions}.
\item 
$IG$ is a finite set of 
{\it input gates}.  
Each input gate has a finite number of {\it inputs.}
To each $G \in IG$, 
with $m$ inputs, is associated
a function $f_{G}: \; {\mathcal N}^{m} \longrightarrow
{\mathcal N}^{m}$,
called the {\it function} of $G$, 
and a predicate $g_{G} : \; {\mathcal N}^{m}
\longrightarrow \{true, false\}$, called the 
{\it enabling predicate} of $G$,
\item 
$OG$ is a finite set of {\it output gates}.
Each output gate has a finite number of {\it outputs.}
To each $G \in OG$, with $m$ outputs, is 
associated a function $f_{G} : \; {\mathcal N}^{m}
\longrightarrow {\mathcal N}^{m}$, called the {\it function} of $G$,
\item 
$ IR \subseteq P \times
\{1, \ldots, |P|\} \times IG \times (IA \cup TA)$ is the 
{\it input relation}.
$IR$ satisfies the following conditions:
\begin{itemize}
\item
for any 
$(P_{1}, i, G, a) \in IR$ such that $G$ has 
$m$ inputs, $i \leq m$,
\item
for any
$G \in IG$ with $m$
inputs and 
$i \in {\mathcal N},$ $i \leq m$, there exist
$a \in (IA \cup TA)$ and
$P_{1} \in P$ such that 
$(P_{1}, i, G, a) \in IR$,
\item
for any $(P_{1}, i, G_{1}, a), (P_{1}, j, G_{2}, a) \in IR$,
$i = j$ and $G_{1} = G_{2}$,
\end{itemize}
\item $IOR \subseteq IA  \times OG \times 
\{1, \ldots |P| \} \times P$ is the 
{\it instantaneous output relation}.
\item $TOR \subseteq TA \times CA \times OG \times 
\{1, \ldots |P| \} \times P$ is the 
{\it timed output relation}.
\\
$IOR$  and $TOR$ above satisfy the following conditions:
\begin{itemize} 
\item
for any $(a, G, i, P_{1}) \in IOR$ such that $G$ has 
$m$ outputs, $i \leq m$,
item
for any $(a, G,  c, i, P_{1}) \in TOR$ such that $G$ has 
$m$ outputs, $i \leq m$,
\item 
for any 
$G \in OG$ with 
$m$ outputs and $i \in {\mathcal N}$, $i \leq m$, either there 
exist 
$a \in IA $ and
$P_{1} \in P$ such that $(a, G, i, P_{1}) \in IOR$ or there exist $a \in TA $, $c \in CA$  and
$P_{1} \in P$ such that $(a, G, c, i, P_{1}) \in TOR$, but not both,
\item
for any $(a, G_{1}, i, P_{1}), (a, G_{2}, j, P_{1}) \in IOR,$
$i = j$ and $G_{1} = G_{2}$.
\item
for any $(a, G_{1}, c_{1},  i, P_{1}), (a, G_{2},  c_{2}, j, P_{1}) \in TOR,$
$i = j$, $c_{1} = c_{2}$  and $G_{1} = G_{2}$.
\end{itemize}
\end{itemize} 
\end{definition}
Graphically, a controlled activity network is represented as follows.
\vspace{2ex} 
A place is depicted as \put(12,3){\circle{12}}\hspace{5ex}.
An instantaneous activity 
\vspace{2ex} 
is represented as \put(15,-3){\line(0,1){13}} \hspace{6ex}   
and a timed activity as \put(15,-3){\rule{1mm}{5mm}}  \hspace{6ex}. 
A control action is shown as \put(6,3){\circle{5}}\hspace{2ex} connected to the right side of a timed activity.
An input gate
\vspace{2ex}
with  $m$ inputs
is shown as
\\
\put(12,0){\line(1,0){10}} \put(12,10){\line(1,0){10}}
\put(5,2){.} \put(5,5){.} \put(5,8){.}
\put(5,-2){\tiny $m$} \put(5,10){\tiny $1$}
\put(10,2){.} \put(10,5){.} \put(10,8){.}
\put(22,-3){\line(0,1){16}}
\put(22,13){\line(3,-2){12}}
\put(22,-3){\line(3,2){12}}
\put(34,5){\line(1,0){5}}
\hspace{10ex}
and an output gate with  
\vspace{2ex} 
$m$ outputs
as
\put(18,-3){\line(0,1){16}}
\put(18,13){\line(3,-2){12}}
\put(18,-3){\line(3,2){12}}
\put(11,5){\line(1,0){7}}
\put(23,0){\line(1,0){12}} \put(23,10){\line(1,0){12}}
\put(33,2){.} \put(33,5){.} \put(33,8){.}
\put(38,-2){\tiny m} \put(38,10){\tiny 1}
\put(38,2){.} \put(38,5){.} \put(38,8){.} 
\hspace{12ex}.

Except for the inclusion of control actions, the structure of controlled activity networks is the same as the structure of activity networks \cite{movaghar2001stochastic}.  
As an example, consider the graphical representation 
of a controlled  
activity network as in Figure 1. 
$P1$, $P2$, $P3$, $P4$, $P5$, $P6$, $P7$, and $P8$ are
places.
$T1$, $T2$,  $T3$, 
and $T4$ are timed activities, and $I1$, and $I2$ are instantaneous activities.
$c1$, $c2$, $c3$, and $c4$ are control actions.
$G1$ is an input gate with only one input, and $G2$ is
an output gate with only one output.  The enabling predicates, if any, and functions 
of these gates are indicated in a table called 
a ``Gate Table''  as depicted in Figure 1. 

A directed line from a place to an activity represents a  special input gate with a single input and 
an enabling predicate $g$ and a function
$f$ such that 
$g(x) = true$, iff $x \geq 1$, and $f(x) = x -1$ (e.g.,
directed lines from $P1$ to $I1$ and from $P2$
to $I2$).  A directed line from a timed activity and a control action to a 
place represents a special output gate with a single output and 
a function $f$ such that 
$f(x) = x + 1$ (e.g., the directed lines from $T1$ and $c1$ to $P1$ and 
the directed lines from $T1$ and $c2$ to $P3$).
These special gates are referred to as ``standard''  
gates.

Consider a controlled activity network as in Definition ~\ref{def2.1}.
Suppose, $(P_{k}, k, G, a) \in IR$.
Then, in a graphical representation, place $P_{k}$
is linked to the $k$-th input of an input gate $G$ whose output is connected to activity $a$.
$P_{k}$ is said to be an {\it input place} of $a$ and $G$ 
is referred to as an {\it input gate} of $a$.
For example, let $IR$ be the input relation of the model of Figure 1.
Then, $(P4, 1, G2, T2) \in IR$, $P4$, $P5$, and $P6$
are input places of $T2$, and $G2$ is an input gate of $T2$. 
Similarly, suppose 
$(a, c, G, k, P_{k}) \in TOR$.
Then, in a graphical representation, the timed activity $a$ via control action $c$ is connected to the input of an output gate $G$ whose $k$-th output is linked to a place $P_k$.  $P_k$ is said to be 
an {\it output place} of the timed activity $a$, and gate $G$ is referered to as an {\it output gate} of $a$ via control action $c$.  
For example, in the model of Figure 1, $(T1, c1, G1, 1, P2) \in TOR$, that is, a timed activity $T1$ via control action $c1$ is linked to the 
input of an output gate $G1$ whose only output is 
connected to place $P2$.
$G1$ is an output gate of $T1$ via $c1$, and $P2$ is 
an output place of $T1$ via $c1$.
Note also that
$(T1, c1, G', 1, P1), (T1, c2, G'', 1, P3) \in TOR$, where
$G'$ and $ G''$ are standard output gates, which 
are the output gates of $T1$ via $c1$ and $c2$, respectively,
and that $P1$ and $P3$ are the output places of $T1$ via $c1$ and $c2$, respectively.

Like Petri nets, we have a notion of ``marking'' for controlled activity networks.
\begin{definition}
Consider a controlled activity network as in Definition ~\ref{def2.1}.
A marking is a function
\[
\mu: \; P \longrightarrow {\mathcal N}.
\]
It is often convenient to characterize a marking $\mu$ as a vector, that is, 
$\mu = (\mu_{1}, \ldots , \mu_{n})$,
where $\mu_{i} = \mu (P_{i})$,                               
$i = 1, \ldots , n$, and $P = \{ P_{1}, \ldots , P_{n} \}$.
In a graphical representation, a marking is characterized by 
{\it tokens} (dots) inside places.
The number of tokens in a place represents the marking of that place
(e.g., marking $(0,,0,0,1,1,1,0,0)$ in Figure 1).
An activity is ``enabled'' in a marking if the enabling predicates 
of its input gates are true in that marking.
\end{definition}
More formally, we have:
\begin{definition}
Consider a controlled activity network as in Definition ~\ref{def2.1}.
$a \in (IA \cup TA)$ is {\it enabled} in a marking $\mu$ if
for any input gate $G$ of $a$ with $m$ inputs and an enabling predicate
$g_{G}$,
\[
g_{G} (\mu_{1}. \dots, \mu_{m}) = true,
\]
where $\mu_{k} = \mu (P_{k} )$,
for some $P_{k} \in P$ such that $(P_{k}, k, G, a) \in IR$,
$ k = 1, \dots , m$.
\end{definition}
An activity is {\it disabled} in a marking if it is not enabled in that
marking.  A marking is {\it stable} if no instantaneous 
activity is enabled in that marking.
A marking is {\it unstable} if it is not stable. For example,  $(0,,0,1,0,1,1,0,0)$ is an unstable marking but $(0,,0,0,1,1,1,0,0)$
is a stable marking in the model of Figure 1.  In the former marking, timed activity $T1$ and instantaneous activity $I3$ are enabled, but in the latter marking,  timed activities $T1$ and $T2$ are enabled.
\subsection*{Model Behavior}
A controlled activity network 
with a 
marking is a dynamic system. 
A marking changes only if an 
activity {\it completes}.  In a stable marking, only one of the enabled 
timed activities is allowed to complete. 
When there is more than one enabled timed activity, the choice of which activity to complete first 
is done nondeterministically.
In an unstable marking, only one of the enabled 
instantaneous activities may complete
(i.e., enabled instantaneous activities have priority
over enabled timed activities for completion). 
When there is more than one enabled instantaneous activity, the choice 
of which activity to complete first is also done nondeterministically. 
When an activity completes, it may change the marking of its 
input and output places.  This change is governed by the functions of 
its input gates and output gates, and is done in two steps as 
follows.  First, the marking of its input places may change due to the 
functions of its input gates, resulting in an intermediary marking.
Next, in this latter marking, the marking of its output places
may also change due to the functions of 
its output gates, resulting in a
final marking after the completion of that activity. 
More specifically, 
let us consider
a controlled activity network as in Definition ~\ref{def2.1}.
Suppose an activity $a$ 
completes in a marking $\mu$.  The next marking $\mu'$ is determined in 
two steps as follows.  First, an intermediary marking 
$\mu''$ is obtained from $\mu$
by the functions of the input gates of $a$.  Then, $\mu'$ is determined depending on whether $a$ is an instantaneous activity or a timed activity.  In the first case, 
$\mu'$ will be determined from $\mu''$ by the functions of 
the output gates of $a$.  In the latter case, first, a control action is chosen.  When there is more than one control action, the choice is made according to a given  {\it policy} (to be defined later in this section).  
$\mu'$ will then be determined from $\mu''$ by the functions of 
the output gates of $a$ via the chosen control action.

More formally, $\mu''$ and  $\mu'$ are defined as follows:
\begin{itemize}
\item for any $P_{1} \in P$ which is not an input or output place of $a$, 
\[
\mu'' (P_{1} ) = \mu'(P_{1}) = \mu (P_{1}) ,
\]
\item
for any input gate $G$ of $a$ with $m$ inputs and a 
function $f_{G}$,
\[
f_{G} (\mu_{1}, \ldots , \mu_{m}) = (\mu''_{1}, \ldots , \mu''_{m} ),
\]
where $\mu_{k} = \mu (P_{k} )$ and $\mu''_{k} = 
\mu'' (P_{k} )$ such that
$(P_{k}, k, G, a) \in IR$,
$k = 1, \ldots , m$, 
\item 
for any output gate $G$ of an instatnaneous activity $a$ with $m$ outputs and a 
function $f_{G}$,
\[
f_{G} (\mu''_{1}, \dots , \mu''_{m} ) = (\mu'_{1}, \ldots , \mu'_{m} ), 
\]
where $\mu''_{k} = \mu'' (P_{k} )$ and
$\mu'_{k} = \mu' (P_{k} )$ such that 
$(a, G, k, P_{k}) \in IOR$, $k = 1, \ldots , m$.
\item 
for any output gate $G$ of a timed activity $a$ choosing control action $c$ with $m$ outputs and a 
function $f_{G}$,
\[
f_{G} (\mu''_{1}, \dots , \mu''_{m} ) = (\mu'_{1}, \ldots , \mu'_{m} ), 
\]
where $\mu''_{k} = \mu'' (P_{k} )$ and
$\mu'_{k} = \mu' (P_{k} )$ such that 
$(a, c, G, k, P_{k}) \in TOR$, $k = 1, \ldots , m$.
\end{itemize}

The above summarizes the behavior of a controlled activity network.  As an example, consider the model of Figure 1 with a marking 
$(0,0,0,,1,1,1,0,0)$.
In this marking, the timed activities $T1$ and $T2$ are enabled.  Any of these
activities may be completed.  Suppose $T1$ completes first. Then, one of the control actions $c1$ and $c2$ is chosen.  Let us assume $c1$ is chosen.  
Then, $P1$ and $P2$ each gain a token, and the marking changes to marking $(1,1,0,1,1,1,0,0)$, as in Figure 2, which is an unstable marking.  
In this marking, timed activities $T1$ and $T2$ and instantaneous activities $I1$ and $I2$ are enabled.  Only $I1$ or $I2$ is allowed to be completed next.  Suppose $I2$ completes next.  The resulting marking will be $(1,0,0,1,2,1,0,0)$, as in Figure 3, which is an unstable marking.  In this marking, only $I1$ is allowed to complete, resulting in the stable marking $(0,0,0,2,2,1,0,0)$, as in Figure 4.

Following the above discussion, a marking $\mu'$ is said to be {\it reachable} from 
a marking $\mu$ under a string of activities $a_{1}  \dots a_{n}$, if the successive 
completion of $a_{1}, \ldots, a_{n}$ changes the marking of the network from $\mu$ to 
$\mu'$ under some possible choice of a sequence of control actions.
$\mu' $ is said to be {\it reachable} from $\mu$ if $\mu'$
is reachable from $\mu$ under a string of activities or 
$ \mu' = \mu$. 
For example, $(0,0,0,2,2,1,0,0)$ is reachable from  $(0,0,0,1,1,1,0,0)$ under
$T1 I2 I1$, and $(1,0,0,1,2,1,0,0)$ is reachable from $(0,0,0,,1,1,1,0,0)$.
 \subsection*{Semantic Models}
The behavior of a controlled activity network is concerned with how various stable markings are reached from each other due to 
the completion of timed activities. To study this behavior
more formally, we take advantage of the notion of a controlled automaton.
\begin{definition}
A {\it controlled automaton} is a 5-tuple \( S = (Q, A, C, \rightarrow, Q_0) \) where:
\begin{itemize}[noitemsep]
  \item \( Q \)  is the set of {\it states},
  \item \( A \) is the {\it activity} alphabet,
  \item \( C \) is the  {\it control action} alphabet,
  \item \( \rightarrow \subseteq Q \times A \times C \times Q \) is the  {\it transition relation},
  \item \( Q_0 \subseteq Q \) is the set of {\it initial states.}
\end{itemize}
\end{definition}
We use the term ‘automaton’ in a general sense to describe state-transition structures governed by activity completions and control actions, without requiring acceptance criteria.
We also write \( q \xrightarrow{a,c} q' \) to denote \( (q, a, c, q') \in \rightarrow \).
In this case, we say that state $q'$ is {\it immediately reachable} 
from a state $q$ under an activity $a$ and control action $c$, and that activity $a$ is {\it enabled} in state $q$.  For $q_i \in Q$, $ a_i \in A$ and $ c_i \in C$, $0 \leq i \leq n$, 
$q_0 \xrightarrow{a_0, c_0} q_1 \xrightarrow{a_1, c_1} \ldots q_{n-1}\xrightarrow{a_{n-1}, c_{n-1}} q_n$ is called a finite {\it execution} in $S$ with lenght $n$ and $(q_0, a_0)( q_1, a_1) \ldots (q_{n-1}, a_{n-1}) q_n$
is referred to as a finite {\it run} of this execution with length $n$ in $S$. 
The above execution may also be denoted simply as $(q_0, a_0, c_0) ( q_1, a_1, c_1) \ldots (q_{n-1}, a_{n-1}, c_{n-1}) q_n$. 
\subsection*{Bisimulation}
Next, we define a notion of equivalence for controlled automata based on the concept of bisimulation, which is similar to the one proposed for automata \cite{Milner1989,baier2008}.
\begin{definition} \label{def3.5}
Let $S = (Q, A, C, \rightarrow, Q_{0})$ and $S' = (Q' , A', C', \rightarrow', 
Q'_{0})$ be two 
controlled automata with the same activity alphabet and set of control action alphabet(i.e., $ A=A'$ and $ C=C'$).
$S$ and $S'$ are said to be {\it equivalent} if there exists a symmetric binary relation
$\gamma$ on $Q \cup Q'$ such that:
\begin{itemize}
\item  for any $q \in Q$, there exists a $q' \in Q'$ such that $(q,q') \in \gamma$; also, for any $q' \in Q'$, there exists a $q \in Q$ such that $(q',q) \in \gamma$,
\item  for any $q \in Q_{0}$, there exists a $q' \in Q'_{0}$ such that $(q,q') \in \gamma$; also, for any $q' \in Q'_{0}$, there exists a $q \in Q_{0}$ such that $(q',q) \in \gamma$,
\item for any $q_{1}, q_{2} \in Q$, $q'_{1} \in Q'$, $a \in A$ and $c \in C$ such that 
$(q_{1}, q'_{1}) \in \gamma$ and 
$(q_{1}, a, c, q_{2}) \in \rightarrow$,
there exists $
q'_{2} \in Q'$ such that $(q_{2}, q'_{2}) \in \gamma$ and 
$(q'_{1}, a, c, q'_{2}) \in \rightarrow'$;
also, for any $q'_{1}, q'_{2} \in Q'$, 
$q_{1} \in Q$, $a \in A$ and $c \in C$ such that 
$(q'_{1}, q_{1}) \in \gamma$ and $q'_{1}, a, c, q'_{2}) \in \rightarrow'$,
there exists $
q_{2} \in Q$ such that $(q'_{2}, q_{2}) \in \gamma$ and 
$(q_{1}, a, c, q_{2}) \in \rightarrow$.
\end{itemize}
$\gamma$ above is said to be a {\it bisimulation} between the two controlled automata $S$ and $S'$ above. 
$S$ and $S'$ are {\it isomorphic} if $\gamma$  is a bijection.
\end{definition}
\begin{proposition} 
Let $\mathcal {E_S}$  denote a relation on the set of all controlled automata such that $(S_1, S_2) \in \mathcal {E_S}$ if and only if $S_1$ and $S_2$ are equivalent controlled automata in the sense of Definition \ref{def3.5}.  Then 
$\mathcal {E_S}$ will be an equivalence relation.
\end{proposition}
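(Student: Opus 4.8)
The plan is to verify the three defining properties of an equivalence relation—reflexivity, symmetry, and transitivity—for $\mathcal{E_S}$, producing an explicit witnessing bisimulation in each case. For reflexivity, given a controlled automaton $S = (Q, A, C, \rightarrow, Q_0)$, I would take the diagonal relation $\gamma = \{(q,q) : q \in Q\}$ on $Q \cup Q = Q$. This relation is trivially symmetric, total, and respects initial states, and the transfer condition holds because every transition $q_1 \xrightarrow{a,c} q_2$ is matched by itself. Hence $(S,S) \in \mathcal{E_S}$.

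Symmetry is essentially immediate from the formulation of Definition~\ref{def3.5}. The witnessing relation $\gamma$ is required to be symmetric on $Q \cup Q'$, and each clause is stated together with its mirror image (the repeated ``also, for any $\ldots$''). Thus if $\gamma$ is a bisimulation between $S$ and $S'$, the very same $\gamma$—now read as a relation on $Q' \cup Q$—witnesses $(S', S) \in \mathcal{E_S}$, so no new construction is needed.

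The substance of the proof lies in transitivity. Suppose $\gamma_{12}$ is a bisimulation between $S_1$ and $S_2$ and $\gamma_{23}$ one between $S_2$ and $S_3$ (which forces $A_1 = A_2 = A_3$ and $C_1 = C_2 = C_3$, so the shared-alphabet hypothesis of Definition~\ref{def3.5} is met for $S_1$ and $S_3$). I would define $\gamma_{13}$ on $Q_1 \cup Q_3$ as the symmetric relational composition: $(q_1, q_3) \in \gamma_{13}$ iff there exists $q_2 \in Q_2$ with $(q_1, q_2) \in \gamma_{12}$ and $(q_2, q_3) \in \gamma_{23}$, together with all reversed pairs. Totality and the initial-state clause then follow by chaining the corresponding clauses of $\gamma_{12}$ and $\gamma_{23}$: any $q_1 \in Q_1$ is linked through some $q_2$ to some $q_3$, and an initial state of $S_1$ is linked through an initial state of $S_2$ to one of $S_3$, with the reverse directions handled symmetrically.

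The key step—and the one I expect to require the most care—is the transfer (zig-zag) condition. Given $(q_1, q_3) \in \gamma_{13}$ with intermediate witness $q_2$, and a transition $q_1 \xrightarrow{a,c} q_1^{\ast}$ in $S_1$, I would first apply the transfer property of $\gamma_{12}$ to obtain some $q_2^{\ast}$ with $(q_1^{\ast}, q_2^{\ast}) \in \gamma_{12}$ and $q_2 \xrightarrow{a,c} q_2^{\ast}$ in $S_2$, and then apply the transfer property of $\gamma_{23}$ to \emph{that} $S_2$-transition to obtain $q_3^{\ast}$ with $(q_2^{\ast}, q_3^{\ast}) \in \gamma_{23}$ and $q_3 \xrightarrow{a,c} q_3^{\ast}$ in $S_3$. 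The new witness $q_2^{\ast}$ then certifies $(q_1^{\ast}, q_3^{\ast}) \in \gamma_{13}$, completing the match, and the direction from $Q_3$ to $Q_1$ is symmetric. The main obstacle here is purely bookkeeping: one must keep the intermediate $S_2$-witness aligned across both applications so that the same label pair $(a,c)$ is preserved throughout the chain and so that the freshly produced witness is exactly the one needed to place the successor pair back in $\gamma_{13}$. Once this is verified, $\gamma_{13}$ is a bisimulation, giving $(S_1, S_3) \in \mathcal{E_S}$, and the three properties together show that $\mathcal{E_S}$ is an equivalence relation.
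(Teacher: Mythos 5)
Your proof is correct. Note that the paper states this proposition without providing any proof, so there is no argument of the author's to compare against; your argument—diagonal relation for reflexivity, symmetry read off directly from the symmetric formulation of Definition~\ref{def3.5}, and symmetric relational composition of the two witnessing bisimulations for transitivity, with the zig-zag chase through the intermediate $S_2$-witness—is exactly the canonical justification one would expect to fill this gap, and each step (totality, the initial-state clause, preservation of the label pair $(a,c)$ across both transfer applications) is handled soundly.
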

\begin{proposition}
Let $S = (Q, A, C, \rightarrow, Q_{0})$ and $S' = (Q' , A', C', \rightarrow', 
Q'_{0})$ be two equivalent 
controlled automata with the same activity alphabet and set of control actions (i.e., $ A=A'$ and $ C=C'$) under a bisimulation $\gamma$.
Then, we have:
\begin{itemize}
\item
for any finite execution $(q_0, a_0, c_0) (q_1, a_1, c_1) \ldots (q_{n-1,} a_{n-1}, c_{n-1}) q_n$ in $S$, there exists a finite execution  
$(q'_0, a_0, c_0)(q'_1, a_1, c_1) \ldots (q'_{n-1}, a_{n-1}, c_{n-1}) q'_n$  in $S'$ such that $(q_i, q'_i) \in \gamma$, 
$0 \leq i \leq n$, and
\item for any finite execution $(q'_0, a_0,  c_0)(q'_1, a_1, c_1) \ldots (q'_{n-1}, a_{n-1}, c_{n-1}) q'_n$ in $S'$, there exists a finite execution 
$(q_0, a_0, c_0)( q_1, a_1,  c_1) \ldots (q_{n-1}, a_{n-1}, c_{n-1}) q_n$  in $S$ such that $(q'_i, q_i) \in \gamma$,  $0 \leq i \leq n$.
\end{itemize}
\end{proposition}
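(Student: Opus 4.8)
The plan is to prove both bullet points by induction on the length $n$ of the execution, invoking the three defining clauses of the bisimulation $\gamma$ from Definition~\ref{def3.5}. Since the two bullets are symmetric — the second is obtained from the first by interchanging the roles of $S$ and $S'$ and appealing to the symmetry of $\gamma$ — it suffices to establish the first statement in detail and then remark that the identical argument, applied in the reverse direction, yields the second.

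For the base case $n = 0$, an execution consists of a single state $q_0 \in Q$. The first clause of Definition~\ref{def3.5} supplies a $q'_0 \in Q'$ with $(q_0, q'_0) \in \gamma$, which is the required matching execution of length $0$; if initial states are to be tracked, the second clause provides a corresponding $q'_0 \in Q'_0$ whenever $q_0 \in Q_0$.

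For the inductive step, I would assume the claim for all executions of length $n$ and consider an execution $(q_0, a_0, c_0) \ldots (q_{n-1}, a_{n-1}, c_{n-1})(q_n, a_n, c_n)\, q_{n+1}$ of length $n+1$ in $S$. Applying the induction hypothesis to its length-$n$ prefix yields a matching execution $(q'_0, a_0, c_0) \ldots (q'_{n-1}, a_{n-1}, c_{n-1})\, q'_n$ in $S'$ with $(q_i, q'_i) \in \gamma$ for $0 \le i \le n$. In particular $(q_n, q'_n) \in \gamma$, while by hypothesis $(q_n, a_n, c_n, q_{n+1}) \in \rightarrow$. The transfer (third) clause of Definition~\ref{def3.5} then produces a state $q'_{n+1} \in Q'$ with $(q_{n+1}, q'_{n+1}) \in \gamma$ and $(q'_n, a_n, c_n, q'_{n+1}) \in \rightarrow'$. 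Appending this step to the prefix gives the desired execution of length $n+1$, closing the induction.

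The argument is essentially routine, and I expect no genuine obstacle beyond careful bookkeeping. The two points that warrant attention are that the transfer clause preserves both the activity label $a$ and the control action $c$ \emph{simultaneously} — which is exactly what is needed for the labels of the matched execution to coincide — and that the correct forward-versus-backward clause of the bisimulation is invoked for each of the two bullets. Because the single-step matching guaranteed by bisimulation composes cleanly along the execution, the induction assembles these local matchings into the global correspondence with no additional difficulty.
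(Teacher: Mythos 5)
Your proof is correct: the paper states this proposition without providing any proof, and your induction on execution length—using the state-matching clause of Definition~\ref{def3.5} for the base case and the transfer clause for the inductive step—is exactly the standard argument the result implicitly relies on. Your reduction of the second bullet to the first via symmetry is also sound, since Definition~\ref{def3.5} explicitly contains both directions of the matching and transfer conditions.
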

\subsection*{Operational Semantics}
We are now in a position to formalize the notion of 
the behavior of a controlled activity network as follows.
\begin{definition}
Let $(K, \mu_{0})$ denote a controlled activity network $K$
with an initial marking $\mu_{0}$ where $K$ is defined 
as in Definition ~\ref{def2.1}.
$(K, \mu_{0})$ is said to {\it realize} 
a controlled automaton $S = (Q, A, C, \rightarrow, Q_{0})$ where:
\begin{itemize}
\item $Q$ is the set of all stable markings of $K$ which are reachable 
from $\mu_{0}$ and a state $\Delta$ if, in $K$, 
an infinite sequence of instantaneous activities 
can be completed in a marking reachable from 
$\mu_{0}$,
\item $A$ is the set of timed activities of $K$,
\item $C$ is the set of control actions of $K$,
\item for any  
$ \mu, \mu' \in Q$ , $a \in A$ and $c \in C$,  $(\mu, a, c, \mu') \in , \rightarrow$,
iff, in $K$, $\mu'$ is reachable from $\mu$ under a string 
of activities $ax$, where
$x$ is a (possibly an empty) string of instantaneous activities; 
$(\mu, a, c, \Delta) \in \rightarrow$, iff, in $K$, a sequence of activities $ay$
can complete in $\mu$, where $y$ is an infinite sequence of 
instantaneous activities,  
\item $Q_{0}$ is the set of all stable markings of $K$ which are 
reachable from $\mu_{0}$ under a (possibly an empty) string of 
instantaneous activities and $\Delta$ if, 
in $K$, an infinite sequence of instantaneous activities can complete 
in $\mu_{0}$.
\end{itemize}
\end{definition}
The above definition implies a notion of equivalence for controlled
activity networks as follows.
\begin{definition} \label{def3.7}
Two controlled activity networks are {\it equivalent} if they realize equivalent controlled automata.
\end{definition}
\begin{proposition}
Let $\mathcal {E_K}$  denote a relation on the set of all controlled activity networks such that $(K_1, K_2) \in \mathcal {E_K}$ if and only if $K_1$ and $K_2$ are equivalent controlled activity networks in the sense of 
Definition \ref{def3.7}.  Then 
$\mathcal {E_K}$ will be an equivalence relation.
\end{proposition}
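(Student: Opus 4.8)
The plan is to reduce the claim to the fact already established in the preceding proposition—that $\mathcal{E_S}$, the equivalence of controlled automata under bisimulation, is an equivalence relation—by exhibiting $\mathcal{E_K}$ as the pullback of $\mathcal{E_S}$ along the map that sends a network to the automaton it realizes.

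First I would observe that the operational-semantics definition assigns to each controlled activity network (with its given initial marking $\mu_0$) a \emph{canonical} controlled automaton: every component $Q$, $A$, $C$, $\rightarrow$, and $Q_0$ of the realized automaton is pinned down unambiguously by $K$ and $\mu_0$ through explicit set-builder conditions. Hence ``realize'' is a genuine function $\Phi$ from controlled activity networks to controlled automata, with $\Phi(K) = S$. Verifying this well-definedness—that no network realizes two syntactically distinct automata—is the one step that deserves care, and it is where I expect the only real (if modest) work to lie: one must check that each clause (the reachable-stable-marking set together with the $\Delta$ state, the transition clauses governing strings of the form $ax$ and $ay$, and the initial-state set) determines its component uniquely from $(K,\mu_0)$.

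With $\Phi$ in hand, Definition~\ref{def3.7} says exactly that $(K_1,K_2)\in\mathcal{E_K}$ if and only if $(\Phi(K_1),\Phi(K_2))\in\mathcal{E_S}$; that is, $\mathcal{E_K}$ is the preimage of $\mathcal{E_S}$ under $\Phi$. I would then invoke the general fact that the preimage of an equivalence relation under any function is again an equivalence relation, and combine it with the preceding proposition asserting that $\mathcal{E_S}$ is an equivalence relation, which yields the claim immediately.

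If a self-contained argument is preferred over the preimage lemma, I would instead discharge the three properties directly, transferring each verbatim through $\Phi$. For reflexivity, $\Phi(K)$ is equivalent to itself via the diagonal relation on its state set (the identity bisimulation), so $(K,K)\in\mathcal{E_K}$. For symmetry, bisimulation is defined through a symmetric witness $\gamma$, so the same $\gamma$ that establishes $\Phi(K_1)\,\mathcal{E_S}\,\Phi(K_2)$ also establishes $\Phi(K_2)\,\mathcal{E_S}\,\Phi(K_1)$, giving $(K_2,K_1)\in\mathcal{E_K}$. For transitivity, given bisimulations $\gamma$ between $\Phi(K_1)$ and $\Phi(K_2)$ and $\delta$ between $\Phi(K_2)$ and $\Phi(K_3)$, the symmetrized relational composition of $\gamma$ and $\delta$ is a bisimulation between $\Phi(K_1)$ and $\Phi(K_3)$, whence $(K_1,K_3)\in\mathcal{E_K}$. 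In either route the only nontrivial ingredient remains the well-definedness of $\Phi$ noted above; everything else is a direct appeal to the corresponding properties of $\mathcal{E_S}$.
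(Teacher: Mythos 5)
Your proof is correct. The paper states this proposition without any proof, treating it as an immediate consequence of the preceding proposition that $\mathcal{E_S}$ is an equivalence relation; your argument---viewing realization as a well-defined map $\Phi$ from networks (with their given initial markings) to controlled automata and exhibiting $\mathcal{E_K}$ as the preimage of $\mathcal{E_S}$ under $\Phi$---is precisely the natural formalization of that implicit reasoning, and your alternative direct verification of reflexivity, symmetry, and transitivity via identity, symmetric witness, and composed bisimulations is likewise sound.
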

\subsection*{Modeling Power}
The following concepts help specify the modeling power
of controlled activity networks.
\begin{definition}
A controlled automaton is said to be {\it computable}
if it has a computable transition relation 
and an enumerable set of 
initial states.
\end{definition}
\begin{definition}
A controlled activity network is said to be {\it computable} if
the enabling predicates and functions of all of its input gates
and the functions of all of its output gates are computable.
\end{definition}
The following theorem more precisely characterizes the modeling power of controlled activity networks
\begin{theorem} \label{theorem2.1}
Any computable controlled automaton is isomorphic to a controlled automaton
realized by a computable controlled activity network with some initial marking.  
\end{theorem}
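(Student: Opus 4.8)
The plan is to prove the theorem by an explicit \emph{simulation}: given a computable controlled automaton $S = (Q, A, C, \rightarrow, Q_0)$, I would construct a computable controlled activity network $K$ together with an initial marking $\mu_0$ whose realized automaton is isomorphic to $S$. The construction exploits the two features that make controlled activity networks computationally powerful, namely that gate functions may be \emph{arbitrary computable maps} on $\mathcal{N}^m$ and enabling predicates may be \emph{arbitrary computable predicates}, so that an entire step of $\rightarrow$ can be carried out inside the gates. Since a controlled activity network has only finitely many timed activities and control actions, I first note that realizability forces $A$ and $C$ to be finite; this is the intended reading of a computable controlled automaton. Because the realized automaton retains only markings reachable from $\mu_0$, I may also assume without loss of generality that every state of $S$ is reachable from $Q_0$, since that is all any realization can ever produce.

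First I would fix a bijective encoding $q \mapsto \langle q\rangle \in \mathcal{N}$ of the enumerable state set $Q$ and represent the current state of $S$ by the number of tokens in a single distinguished \emph{state place} $P_\ast$; all remaining places of $K$ serve as scratch storage and are required to be empty in every stable marking. I would install one timed activity $T_a$ for each $a \in A$ and one control action $c$ for each $c \in C$. For fixed $T_a$ and $c$, the intended behaviour is that, from the clean marking encoding $q$, firing $T_a$ and choosing $c$ leads, after a terminating burst of instantaneous activities, to exactly the clean markings encoding the successor set $\{q' : (q,a,c,q') \in \rightarrow\}$. Since $\rightarrow$ is computable, the input and output gates of $T_a$ (using computable functions that read $\langle q\rangle$ from $P_\ast$) can write the encoding of the appropriate successor and clear the scratch places. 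Nondeterminism in $\rightarrow$ is handled by letting the instantaneous phase branch: the intermediary marking records $(q,a,c)$ together with a successor index $i$, and a computable gate maps this data to the encoding of the $i$-th successor, with the finitely many indices covered by nondeterministic choices among instantaneous activities. Note that this is consistent with the semantics, since instantaneous activities carry no control actions, so every branch shares the single label $(a,c)$.

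Next I would treat the initial states and the special state $\Delta$. To realize an enumerable $Q_0$, I would make $\mu_0$ an unstable marking carrying a generator token, from which a gadget of instantaneous activities nondeterministically produces, as stable markings, exactly the encodings of the members of $Q_0$, driven by a computable enumeration of $Q_0$. Crucially, I would design every scratch computation so that its instantaneous sequences \emph{always terminate}; then no infinite sequence of instantaneous activities is ever enabled, the realized automaton contains no $\Delta$ state, and its state set is precisely $\{\mu_q\}$, where $\mu_q$ places $\langle q\rangle$ tokens in $P_\ast$. Finally I would verify that the map $q \mapsto \mu_q$ is a bijection witnessing isomorphism in the sense of Definition~\ref{def3.5}: clean markings biject with $Q$, the stable markings reachable from $\mu_0$ are exactly the encodings of $Q_0$, and by construction $(\mu_q, a, c, \mu_{q'})$ is a realized transition iff $(q,a,c,q') \in \rightarrow$. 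Computability of $K$ is immediate, as every gate is built from the computable predicate $\rightarrow$ and the computable enumerations of $Q$ and $Q_0$.

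The main obstacle I expect is reconciling the finiteness of $K$ with the generality of a computable $\rightarrow$, in particular matching the realized transition structure \emph{exactly} while avoiding spurious behaviour. A merely decidable relation $\rightarrow$ only guarantees that successor sets and $Q_0$ are recursively enumerable, so the existence of a successor, and hence the correct enabling condition for $T_a$, is a priori only semi-decidable; an unbounded search for successors would either introduce a nonterminating instantaneous sequence (a spurious $\Delta$) or leave scratch tokens in a stable marking. The crux is therefore to organize the instantaneous-activity gadgets, and to read ``computable transition relation'' as presenting each $(q,a,c)$ with its finite, computably listed successor set, so that every branch halts in a clean encoding, the branching reproduces the successor set precisely, and no clean marking other than an intended successor is reachable. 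Establishing these invariants, no leftover scratch tokens in any stable marking and an exact correspondence between halting instantaneous branches and elements of $\rightarrow$, is the technical heart of the proof; once they hold, the isomorphism follows routinely.
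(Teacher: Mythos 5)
Your construction differs from the paper's in where the computational power lives: you let the input/output gate functions be arbitrary computable maps that carry out a step of $\rightarrow$ directly, using instantaneous activities only to resolve nondeterministic branching, whereas the paper's proof (Appendix~\ref{appendix:thm2.1proof}) encodes the state in a single place $P_{S}$ and delegates all computation of the relations $G_{a}$, $R_{ac}$, $R_{Q_{0}}$ to subnetworks of \emph{instantaneous} activities built solely from standard and inhibitor gates, invoking the classical fact that such nets simulate nondeterministic Turing machines. That difference is not fatal in itself (though it forfeits the paper's corollary that standard and inhibitor gates suffice). The genuine gap is that your argument proves a strictly weaker statement than Theorem~\ref{theorem2.1}. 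The paper's notion of computability requires only that $\rightarrow$ be a computable relation and $Q_{0}$ an enumerable set; this permits a single triple $(q,a,c)$ to have an \emph{infinite} successor set, and permits $Q_{0}$ to be infinite. Your construction cannot reach infinitely many outcomes while keeping your stated invariant: a net has finitely many instantaneous activities, each with a deterministic effect on a given marking, so the tree of instantaneous executions following a completion of $T_{a}$ (or following $\mu_{0}$) is finitely branching, and by K\"onig's lemma it either has finitely many leaves or contains an infinite branch. Hence ``every instantaneous burst terminates and no $\Delta$ state ever arises'' is impossible whenever some successor set or $Q_{0}$ is infinite. You acknowledge this by ``reading'' computability of $\rightarrow$ as supplying each $(q,a,c)$ with a finite, computably listed successor set, but that hypothesis is genuinely stronger than the definition in the paper --- even when all successor sets happen to be finite, their cardinalities need not be computable from a decision procedure for $\rightarrow$ --- so the theorem as stated is not established.

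The paper's route is designed exactly to sidestep this obstruction: since $K_{R_{ac}}$ and $K_{Q_{0}}$ are unbounded nondeterministic Turing-machine simulations, a single fixed subnetwork has finite runs producing each of infinitely many successors (or initial states), so infinite branching per se is not a barrier there; the price, left implicit in the paper's ``it can be shown,'' is the treatment of failing or divergent search branches, which threaten spurious stable markings or spurious $\Delta$ transitions. Your closing paragraph diagnoses precisely this difficulty, and that diagnosis is the most valuable part of your proposal --- it applies to the paper's own sketch as well --- but resolving it by strengthening the hypothesis means your proof, as written, covers only a finitely-branching variant of the theorem rather than the theorem itself. To repair it along your lines you would either have to justify the finite-listing reading as part of the definition of a computable controlled automaton, or replace the bounded index gadget with an unbounded nondeterministic search and then confront the $\Delta$/leftover-token problem head on, which is exactly the step neither you nor the paper's sketch fully discharges.
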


\begin{proof}[Proof Sketch]
The proof builds on the foundational result that extended Petri nets with inhibitor arcs can simulate nondeterministic Turing machines~\cite{Hack1975, Peterson1981, Sipser2013}. More specifically, the class of \emph{Generalized Stochastic Petri Nets (GSPNs)}~\cite{Marsan1984, Balbo1997} extends classical Petri nets with stochastic timed transitions and immediate transitions—closely resembling activity networks.

Controlled activity networks generalize GSPNs by integrating control actions and gate logic (e.g., standard, enabling, and inhibitor gates). These networks can include both instantaneous and stochastically timed activities, whose firing conditions may depend on the current marking and control policy.

By appropriately encoding the tape, state, and transition rules of a nondeterministic Turing machine into a controlled activity network with discrete places, inhibitor gates, and deterministic control actions, we can simulate any computable relation. Hence, the class of computable controlled automata coincides (up to isomorphism) with the class of automata realizable by such networks. For full details, see Appendix~\ref{appendix:thm2.1proof}.
\end{proof}

The above result shows that, in terms of behavioral equivalence, any computable controlled automaton can be represented using a controlled activity network comprising standard and inhibitor gates. However, such a representation may require substantial structural expansion—potentially leading to models that are less intuitive or harder to construct manually.

\begin{corollary}
Any computable controlled activity network is isomorphic to a network that uses only standard and inhibitor gates.
\end{corollary}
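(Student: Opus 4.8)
The plan is to obtain the corollary as a direct consequence of Theorem~\ref{theorem2.1} together with the concrete shape of the network produced in its proof. Let $K$ be an arbitrary computable controlled activity network and fix an initial marking $\mu_{0}$. Then $(K,\mu_{0})$ realizes some controlled automaton $S$. My first task is to show that $S$ is itself a \emph{computable} controlled automaton, so that Theorem~\ref{theorem2.1} applies to $S$; once this is established, the theorem immediately supplies a network built from the restricted gate repertoire.

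To establish computability of $S$, I would argue directly from the operational semantics. Since every enabling predicate and every gate function of $K$ is computable, the single-step marking change induced by the completion of any activity is computable. The states of $S$ are the stable markings reachable from $\mu_{0}$; because each instantaneous and each timed completion is given by a computable function, the reachability relation under finite strings of activities is recursively enumerable, and the reachable stable markings can be enumerated, giving an enumerable initial-state set. For the transition relation, deciding whether $\mu'$ is reachable from $\mu$ under a string $ax$ (with $x$ a finite string of instantaneous activities terminating in a stable marking) reduces to a bounded search over instantaneous-completion sequences driven by computable gate functions. I would then invoke Theorem~\ref{theorem2.1}: the computable automaton $S$ is isomorphic to a controlled automaton $S'$ realized by a computable controlled activity network $K'$ with some initial marking.

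The crucial observation is that the construction underlying the proof of Theorem~\ref{theorem2.1} (Appendix~\ref{appendix:thm2.1proof}) realizes $S'$ using only instantaneous activities together with standard and inhibitor gates, since it proceeds through the simulation of a nondeterministic Turing machine by an extended activity network. Hence $K'$ already has only standard and inhibitor gates by construction. Because $S$ and $S'$ are isomorphic controlled automata, $K$ and $K'$ realize isomorphic automata, and by Definition~\ref{def3.7} they are equivalent; as the witnessing bisimulation is a bijection, $K$ is in fact isomorphic to the standard-and-inhibitor-gate network $K'$, which is exactly the claim.

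The main obstacle I anticipate lies in the computability claim for $S$, and specifically in the treatment of the divergence state $\Delta$. The transitions into $\Delta$ encode the existence of an infinite sequence of instantaneous completions, that is, a nonterminating internal computation, and detecting such divergence on an unbounded marking space is in general not decidable. I would therefore need to argue that, under the definition of a computable controlled automaton—which requires a computable transition relation but only an enumerable initial-state set—the $\Delta$-behavior can be normalized through the same Turing-machine simulation used in Theorem~\ref{theorem2.1}, so that divergence in $K$ is faithfully mirrored by divergence in $K'$ under the isomorphism without ever having to be decided. Reconciling this faithful preservation of divergence with the computability requirement is the delicate point on which the rigor of the argument ultimately rests.
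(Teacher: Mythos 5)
Your proposal follows essentially the same route the paper intends: the paper supplies no independent proof of this corollary, deriving it instead from Theorem~\ref{theorem2.1} together with the observation that the network constructed in Appendix~\ref{appendix:thm2.1proof} uses only instantaneous activities, standard gates, and inhibitor gates alongside the timed activities. Your chain---$(K,\mu_0)$ realizes $S$; $S$ is computable; Theorem~\ref{theorem2.1} yields a network $K'$ over the restricted gate repertoire realizing an automaton isomorphic to $S$; hence $K$ and $K'$ realize isomorphic automata and are therefore isomorphic as networks---is precisely the paper's implicit argument.

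However, the step you yourself single out is a genuine gap, and your attempted patch of it fails. Deciding whether $(\mu,a,c,\mu')$ belongs to the transition relation of $S$ does \emph{not} reduce to a bounded search: the instantaneous string $x$ in $ax$ has no a priori length bound, and since activity networks with inhibitor gates can simulate Turing machines (the very fact the appendix exploits), this reachability question is undecidable in general---it is only recursively enumerable. The $\Delta$-transitions are worse: by K\"onig's lemma, the existence of an infinite sequence of instantaneous completions from a marking is a co-r.e.\ property, so it is not even semi-decidable. Consequently, a computable controlled activity network need not realize a computable controlled automaton in the paper's sense (decidable transition relation, enumerable initial-state set), and the hypothesis of Theorem~\ref{theorem2.1} cannot be discharged the way both you and the paper assume; the paper simply never addresses this point, so the gap is shared, but your ``normalization'' of divergence through the Turing-machine simulation is a hope rather than an argument, and the ``bounded search'' claim is false. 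A clean repair would bypass the realized automaton altogether: since every gate of $K$ is computable by hypothesis, replace each input and output gate of $K$ by a subnetwork of instantaneous activities, standard gates, and inhibitor gates that computes its predicate and function (exactly the kind of simulation used in the appendix). The resulting network has only standard and inhibitor gates and step-by-step mimics $K$, so it realizes an automaton isomorphic to that of $K$, and no computability claim about $S$ is ever needed.
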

\begin{remark}
While Generalized Stochastic Petri Nets (GSPNs)~\cite{Marsan1984} offer a powerful formalism for performance modeling with stochastic timing, their expressiveness is inherently limited to bounded reachability and fixed transition structures. In contrast, Controlled SANs (CSANs) are Turing-complete and policy-driven, enabling the representation of arbitrary computable control behaviors and unbounded decision-making patterns. Thus, CSANs strictly subsume the modeling power of GSPNs, especially in contexts requiring adaptive control, dynamic policy synthesis, or integration with AI-driven strategies.
\end{remark}

\subsection*{Policy Types}
Next, we consider the notion of a policy and policy types in a controlled automaton as follows.
\begin{definition}
Let \( S = (Q, A, C, \rightarrow, Q_0) \) be a controlled automaton. A {\it (history-dependent) policy} $ \pi $ is defined as
\[\pi : (Q \times A)^* \rightarrow C. \]
 For $q_i \in Q$,  $ a_i \in A$ and  $c_i \in C$, $0 \leq i \leq n$, an execution 
$q_0 \xrightarrow{a_0, c_0} q_1 \xrightarrow{a_1, c_1} \ldots q_{n-1}\xrightarrow{a_{n-1}, c_{n-1}} q_n$ in $S$ is defined under a policy $\pi$ if  $c_{i-1} = \pi(h_i)$, where \( h_i = (q_0, a_0) \ldots (q_{i-1}, a_{i-1}) \) for all $0 < i \leq n$.   $ \pi $ is called {\it memoryless} if $ \pi : Q \rightarrow C$.
 \end{definition}
 \begin{definition} \label{equivpolicy}
Let \(S=(Q,A,C,\to,Q_0)\) and \(S'=(Q',A',C',\to',Q_0')\) be equivalent controlled automata sharing the same
activity alphabet and control set, i.e.\ \(A=A'\) and \(C=C'\). Let \(\gamma\subseteq Q\times Q'\) be a bisimulation between \(S\) and \(S'\).
For the sake of brevity, we denote a finite execution  $(q_0, a_0, c_0)(q_1, a_1, c_1) \dots (q_{n-1}, a_{n-1}, c_{n-1}) q_{n}$ of $S$ as \((q_i,a_i,c_i)_{i=0}^{n-1} \, q_n\), 
where for each \(0 \le i <n\) we have \((q_i, a_i, c_i, q_{i+1}) \in  \to \).
Recall the history available at step \(i\) for this execution is defined as
\[
h_i \;=\; (q_0,a_0)\,(q_1,a_1)\cdots(q_{i-1},a_{i-1})\quad(\text{with }h_0=\varepsilon),
\]
and analogously \(h'_i\) for an execution of \(S'\).

Two (history-dependent) policies \(\pi\) on \(S\) and \(\pi'\) on \(S'\) are said to be \emph{equivalent under \(\gamma\)} if for every pair of executions
\[
(q_i,a_i,c_i)_{i=0}^{n-1} \, q_n \text{ of } S \quad\text{and}\quad (q'_i,a_i,c_i)_{i=0}^{n-1} \, q'_n \text{ of } S'
\]
of the same length and with \((q_i,q'_i)\in\gamma\) for all \(0\le i\le n\), the recommended controls coincide at every step:
\[
\pi(h_i)\;=\;\pi'(h'_i)\;=\;c_{i-1} \qquad\text{for all }\,0 < i <n.
\]
\end{definition}
We now define three types of policies for controlled automata by augmenting these models with finite-state automata, pushdown automata, and Turing machines \cite{Sipser2013} as follows.
\begin{definition}
Let $S=(Q,A,C,\rightarrow,Q_0)$ be a controlled automaton. A {\it finite-memory} policy is a tuple $ \pi = (M, m_0, \delta, \gamma)$ where:
\begin{itemize}[noitemsep]
  \item \( M \): finite set of memory states,
  \item \( m_0 \in M \): initial memory state,
  \item \( \delta : M \times (Q \times A) \rightarrow M \): memory update function,
  \item \( \gamma : Q \times M \rightarrow C \): control output function.
\end{itemize}
\end{definition}
\begin{definition}
Let $S=(Q,A,C,\rightarrow,Q_0)$ be a controlled automaton.
A {\it stack-augmented} policy is a tuple
\[
\pi=(M,\Gamma,\bot,m_0,\delta,\gamma),
\]
where:
\begin{itemize}
    \item $M$ is a finite set of controller states;
    \item $\Gamma$ is a finite stack alphabet containing the
    bottom-of-stack symbol $\bot$;
    \item $m_0\in M$ is the initial controller state, and the initial
    stack consists of the single symbol $\bot$;
    \item
    \[
    \delta:
    M\times\Gamma\times(Q\times A)
    \longrightarrow
    M\times\Gamma^*
    \]
    is the controller-state and stack-update function;
    \item
    \[
    \gamma:
    Q\times M\times\Gamma
    \longrightarrow C
    \]
    is the control-output function.
\end{itemize}

If the current controller state is $m$, the top stack symbol is
$Z$, and the observed plant state--activity pair is $(q,a)$, then
$\delta(m,Z,(q,a))=(m',w)$ replaces $Z$ by $w\in\Gamma^*$ and
updates the controller state to $m'$.  The selected control action is
$\gamma(q,m,Z)$.
\end{definition}
\begin{definition}
Let $S=(Q,A,C,\rightarrow,Q_0)$ be a controlled automaton.
A {\it tape-augmented} policy is a tuple
\[
\pi=(M,\Gamma,\blank,m_0,\tau_0,h_0,\delta,\gamma),
\]
where:
\begin{itemize}
    \item $M$ is a finite set of controller states;
    \item $\Gamma$ is a finite tape alphabet containing the blank symbol
    $\blank$;
    \item $m_0\in M$ is the initial controller state;
    \item $\tau_0:\mathbb{N}\to\Gamma$ is the initial tape contents,
    with finite nonblank support;
    \item $h_0\in\mathbb{N}$ is the initial head position;
    \item
    \[
    \delta:
    M\times\Gamma\times(Q\times A)
    \longrightarrow
    M\times\Gamma\times\{L,R,S\}
    \]
    is the controller-state, tape-symbol, and head-movement update
    function;
    \item
    \[
    \gamma:
    Q\times M\times\Gamma
    \longrightarrow C
    \]
    is the control-output function.
\end{itemize}
Given controller state $m$, the tape head scans symbol $Z$,
and the observed plant state--activity pair is $(q,a)$, then
$\delta(m,Z,(q,a))=(m',Z',d)$ updates the controller state to $m'$,
writes $Z'$, and moves the head in direction $d\in\{L,R,S\}$.
The selected control action is $\gamma(q,m,Z)$.
\end{definition}
\subsection*{Nondeterministic Languages on Finite Words}
In the remainder of this section, to further explore formal language recognition by controlled automata, we assume that the automaton has a finite set of states along with a designated subset 
$F$, referred to as the set of \emph{accepting states}.
\begin{definition}
Let $S = (Q, A, C, P, Q_{0}, F)$ be a controlled automaton with a finite set of states and a set of accepting states ($F \subseteq Q$).
Given a policy \( \pi \) in $S$, the language {\it accepted} by $S$ under \( \pi \) is defined as:
\begin{equation*}
\begin{aligned}
L_\pi(S)
  = \bigl\{\, a_0 a_1 \ldots a_{n-1} \in A^* \ \bigm|\ 
  &\exists \text{ a run } (q_0, a_0)\ldots(q_{n-1}, a_{n-1})\, q_n 
     \text{  in } S \text{ under } \pi \\
  & \text{ such that } q_0 \in Q_0 \text{ and }\ q_n \in F 
  \,\bigr\}.
\end{aligned}
\end{equation*}
\end{definition}

\begin{definition}
\label{def:deterministic.controlled.automaton}
A controlled automaton 
$S = (Q, A, C, \rightarrow, q_0)$
is \emph{deterministic} if for every state $q \in Q$,
input symbol $a \in A$, and control action $c \in C$,
there exists at most one successor state $q' \in Q$
such that $(q, a, c, q') \in \rightarrow$.
\smallskip

\noindent
Equivalently, the transition relation $\rightarrow$
defines a partial function
\[
\delta : Q \times A \times C \rightharpoonup Q.
\]
If $\delta$ is total, the automaton is said to be
\emph{completely deterministic}.
\end{definition}
\begin{proposition}
Let \( S \) be deterministic and \( \pi : (Q \times A)^* \rightarrow C \). Then for every initial state \( q_0 \in Q_0 \) and string \( a_0 a_1 \ldots a_{n-1} \in A^* \), there is at most one execution:
\[
q_0 \xrightarrow{a_0, \pi(h_1)} q_1 \xrightarrow{a_1, \pi(h_2)} \ldots \xrightarrow{a_{n-1}, \pi(h_n)} q_n,
\]
where \( h_i = (q_0, a_0) \ldots, (q_{i-1} a_{i-1}) \) for $0 < i \leq n$.
\end{proposition}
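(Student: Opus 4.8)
The plan is to proceed by induction on the length $n$ of the input string, showing that at each step the current state and the prescribed control action are uniquely pinned down, so that no two executions on the same string starting from the same initial state can diverge.

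First I would fix $q_0 \in Q_0$ together with the string $a_0 a_1 \ldots a_{n-1}$, and consider two executions under $\pi$ reading this string from $q_0$. Writing $q_0, q_1, \ldots, q_n$ with controls $c_0, \ldots, c_{n-1}$ for the first and $\bar q_0, \bar q_1, \ldots, \bar q_n$ with controls $\bar c_0, \ldots, \bar c_{n-1}$ for the second, I would establish by induction on $i$ the claim that $q_i = \bar q_i$ for all $0 \le i \le n$ and $c_i = \bar c_i$ for all $0 \le i < n$.

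The base case $i = 0$ is immediate, since both executions start in the same fixed state, so $q_0 = \bar q_0$. For the inductive step, assume $q_j = \bar q_j$ for every $j \le i$. Then the histories
\[
h_{i+1} = (q_0, a_0)(q_1, a_1)\cdots(q_i, a_i)
\quad\text{and}\quad
\bar h_{i+1} = (\bar q_0, a_0)(\bar q_1, a_1)\cdots(\bar q_i, a_i)
\]
coincide, since the input symbols are shared and the states agree by hypothesis. Because $\pi$ is a function, the definition of an execution under a policy forces $c_i = \pi(h_{i+1}) = \pi(\bar h_{i+1}) = \bar c_i$. Now with $q_i = \bar q_i$, the common symbol $a_i$, and $c_i = \bar c_i$ in hand, I would invoke determinism (Definition~\ref{def:deterministic.controlled.automaton}): the partial function $\delta$ admits at most one successor of the triple $(q_i, a_i, c_i)$, so $q_{i+1} = \bar q_{i+1}$. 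This closes the induction and yields that the two executions are identical.

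The argument is structurally simple; the main point requiring care is the interleaving of the two mechanisms that enforce uniqueness. The control action at step $i$ becomes determined only once the state $q_i$ is known, since the history $h_{i+1}$ records $q_i$, whereas the next state $q_{i+1}$ becomes determined only once the control $c_i$ is known, through $\delta$. The induction must therefore advance the two in lockstep---state, then control, then next state---rather than treating the state trajectory and the control sequence as independent. I would also note that determinism guarantees only \emph{at most one} successor and never existence, which is exactly why the conclusion asserts at most one execution: the control $\pi(h_{i+1})$ prescribed by the policy may leave the triple $(q_i, a_i, c_i)$ without any successor, in which case no execution of the given length exists at all.
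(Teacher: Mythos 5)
Your proof is correct. The paper states this proposition without any proof, evidently treating it as immediate from Definition~\ref{def:deterministic.controlled.automaton} and the definition of an execution under a policy; your lockstep induction (state determines history, history determines control via $\pi$, control plus determinism determines the next state) is precisely the argument the paper leaves implicit, and your closing observation that determinism yields only ``at most one'' successor—so existence may fail—correctly explains why the statement is phrased as uniqueness rather than existence.
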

Throughout this paper, $\subset$ denotes proper inclusion, i.e., $A \subset B$ iff  $A \subseteq B$ and $A \ne B$. The following properties hold:
\begin{proposition} \label{prop3.5}
Let $S$ be a controlled automaton.  Define the following language classes:
\begin{align*} 
L_0(S) &= \{ L_\pi(S) \mid \pi \text{ is a memoryless policy}  \}, \\
L_F(S) &= \{ L_\pi(S) \mid \pi \text{ is a finite-memory policy} \}, \\
L_{\text{Stack}}(S) &= \{ L_\pi(S) \mid \pi \text{ is a stack-augmented policy} \}, \\
L_{\text{Tape}}(S) &= \{ L_\pi(S) \mid \pi \text{ is a tape-augmented policy} \}, \\
L_H(S) &= \{ L_\pi(S) \mid \pi  \text{ is a history-dependent policy} \}.
\end{align*}
 Then we have:
\[
L_0(S)  \subseteq L_F(S) \subseteq L_{\text{Stack}}(S) \subseteq L_{\text{Tape}}(S) \subseteq L_H(S)
\]
\end{proposition}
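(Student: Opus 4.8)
The plan is to treat the five policy classes as a tower of auxiliary computational devices that read the trace of state--activity pairs and emit control actions, and to establish the chain in two independent directions: the inclusions $L_0(S)\subseteq L_F(S)\subseteq L_{\text{Stack}}(S)\subseteq L_{\text{Tape}}(S)\subseteq L_H(S)$, which I would prove uniformly for every $S$, and the strictness of each inclusion, which I would witness by constructing suitable controlled automata. Strictness cannot hold for a degenerate $S$ (a single-state automaton collapses all five classes), so the separating statements are to be read as ``witnessed by an appropriate $S$'', and I would in fact aim to build one universal $S$ realizing all four separations at once.

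For the inclusions I would exhibit, at each level, a simulation of the weaker device by the stronger one that leaves the accepted language unchanged. A memoryless $\pi\colon Q\to C$ is the special finite-memory policy with a singleton memory set $M=\{m_0\}$, trivial update $\delta$, and output $\gamma(q,m_0)=\pi(q)$, giving $L_0(S)\subseteq L_F(S)$. A finite-memory policy is in turn the special stack-augmented policy whose stack always holds a single symbol encoding the current memory state, replaced on each step, yielding $L_F(S)\subseteq L_{\text{Stack}}(S)$. The inclusion $L_{\text{Stack}}(S)\subseteq L_{\text{Tape}}(S)$ is the classical simulation of a stack by a Turing tape with the head at the top-of-stack cell (pushes extend the used portion, pops retract it); since the per-step push length is bounded by the finite description of $\delta$, each stack operation is realized by a bounded number of tape moves. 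Finally every finite-memory, stack-, or tape-augmented policy defines a concrete function $(Q\times A)^*\to C$, so each is a history-dependent policy and $L_{\text{Tape}}(S)\subseteq L_H(S)$. In every case the run induced on $S$ by the simulating policy coincides step-for-step with the original, so the accepted language is preserved.

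For strictness I would build a \emph{driver} automaton $S$ whose states and transitions do little more than forward the input activities to the policy while letting the chosen control actions drive progress toward and into an accepting state; the design makes $S$'s own state set too weak to carry memory, so that the achievable language under a class of policies coincides exactly with the language class recognized by the corresponding device. With such an $S$ one gets $L_F(S)=\mathrm{REG}$, $L_{\text{Stack}}(S)=\mathrm{CFL}$, and $L_{\text{Tape}}(S)=\mathrm{RE}$, whence the classical proper inclusions $\mathrm{REG}\subsetneq\mathrm{CFL}\subsetneq\mathrm{RE}$ deliver the middle two strict inclusions. The top inclusion $L_{\text{Tape}}(S)\subsetneq L_H(S)$ follows from a computability/cardinality argument: tape-augmented policies are finitely described and hence countable, so $L_{\text{Tape}}(S)$ is countable, whereas the uncountably many history-dependent (arbitrary) policies realize, for a suitable $S$, uncountably many languages, including non-recursively-enumerable ones outside $L_{\text{Tape}}(S)$.

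The main obstacle is the first strict inclusion, $L_0(S)\subsetneq L_F(S)$, because both memoryless and finite-memory policies live at the regular level and, crucially, the states of $S$ already provide a bounded amount of memory that a memoryless policy may exploit. The separation must therefore come from a carefully minimal $S$ whose state set is provably insufficient to realize some behavior that one extra bit of policy memory supplies, for instance a language whose acceptance depends on a parity of past activities that $S$ cannot track internally because distinct histories are forced to share a state. Establishing rigorously that \emph{no} memoryless policy on that fixed $S$ yields the target language, a non-existence claim quantified over all $\pi\colon Q\to C$, is the delicate step; I would argue it by a pumping/indistinguishability argument showing that any memoryless policy collapses the relevant histories and hence cannot separate the accepted from the rejected inputs. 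Assembling a single $S$ that simultaneously witnesses all four separations, rather than four different automata, is the remaining bookkeeping challenge.
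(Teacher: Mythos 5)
Your overall strategy coincides with the paper's: the paper proves Proposition~\ref{prop3.5} simply by pointing to the proof sketch of Theorem~\ref{thm.global.finite}, which establishes the inclusions by exactly your simulations (singleton memory set, single-symbol stack, tape simulation of a stack, and the observation that every augmented policy induces a history-dependent one) and obtains strictness from classical witnesses — the mod-$k$ counter family of Lemma~\ref{lem.finite.memory.finite}, the languages $\{a^n b^n \mid n \ge 0\}$ and $\{a^n b^n c^n \mid n \ge 0\}$, and the computable versus non-computable divide at the top level. Your ``driver automaton'' realizing $L_F(S)=\mathrm{REG}$, $L_{\text{Stack}}(S)=\mathrm{CFL}$, $L_{\text{Tape}}(S)=\mathrm{RE}$ is the same idea packaged uniformly in one plant, and your countability argument for $L_{\text{Tape}}(S) \subset L_H(S)$ matches the paper's. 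You are also right — and more careful than the paper — that strictness cannot hold for literally every $S$ (a degenerate automaton collapses all five classes), so the separations must be read as witnessed by suitable automata; the paper's deferral to the global theorem silently adopts the same reading.

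The one genuine weakness is the step you yourself flag as unresolved: the properness of $L_0(S) \subset L_F(S)$. You propose a pumping/indistinguishability argument quantified over all memoryless policies, which is both harder than necessary and left unfinished. The paper's Lemma~\ref{lem.finite.memory.finite} closes this step by a counting argument for which you already have every ingredient: on a fixed automaton with finite $Q$ and $C$ there are at most $|C|^{|Q|}$ memoryless policies, so $L_0(S)$ is a \emph{finite} set of languages, whereas $L_F(S)$ contains a countably infinite family of distinct regular languages, e.g.\ $\{(a^k)^* \mid k \ge 1\}$ via mod-$k$ counters — indeed your own driver automaton gives $L_F(S)=\mathrm{REG}$, which is infinite. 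Hence $L_0(S) \ne L_F(S)$ with no analysis of individual memoryless policies at all. Substituting this observation for your ``delicate step'' makes your proof complete and brings it fully in line with the paper's argument.
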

\begin{definition} \label{equivfinite}
Let $S = (Q, A, C, \rightarrow, Q_{0}, F)$ and $S' = (Q', A', C', \rightarrow', Q'_{0}, F')$ be two controlled automata with accepting states $F$ and $F'$, respectively, and
the same activity alphabet and set of control actions (i.e., $ A=A'$ and $ C=C'$).  Let $(Q, A, C,  \rightarrow, Q_{0})$ and $(Q', A', C',  \rightarrow', Q'_{0})$ be two 
equivalent controlled automata under a bisimulation $\gamma$ as in Definition ~\ref{def3.5}.  Then, $S$ and $S'$ are said to be {\it equivalent} if:
\begin{itemize}
\item for any $q \in F$, there exists $q' \in F'$ where $(q,q') \in \gamma$; also for any $q' \in F'$, there exists $q \in F$ where $(q'',q) \in \gamma$.  
\end{itemize}
\end{definition}
\begin{proposition}
Let $S$ and $S'$ be two equivalent controlled automata with accepting states as in Definition~\ref{equivfinite}.  Consider the classes of languages as defined in Proposition~\ref {prop3.5}. Then, we have:
\begin{align*}
L_0(S) &= L_0(S') \\ 
L_F(S) &= L_F(S') \\
L_{\text{Stack}}(S) &= L_{\text{Stack}}(S') \\
L_{\text{Tape}}(S) &= L_{\text{Tape}}(S') \\
L_H(S) &= L_H(S') \\
\end{align*}
Also, for equivalent polices $\pi$ and $\pi'$ of $S$ and $S'$, respectively,  as in Definition~\ref{equivpolicy}, we have
\begin{align*}
L_0^{\pi}(S) &= L_0^{\pi'}(S') \\ 
L_F^{\pi}(S) &= L_F^{\pi'}(S') \\
L_{\text{Stack}}^{\pi}(S) &= L_{\text{Stack}}^{\pi'}(S') \\
L_{\text{Tape}}(S) &= L_{\text{Tape}}^{\pi'}(S') \\
L_H^{\pi}(S) &= L_H^{\pi'}(S') \\
\end{align*}
\end{proposition}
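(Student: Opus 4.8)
The plan is to prove the individual-policy statement first and then lift it to the statement about language classes. Fix the bisimulation $\gamma$ witnessing the equivalence of $S$ and $S'$ in the sense of Definition~\ref{equivfinite}, so that $\gamma$ relates initial states to initial states, accepting states to accepting states, and matches transitions back and forth. For equivalent policies $\pi$ on $S$ and $\pi'$ on $S'$ under $\gamma$ (Definition~\ref{equivpolicy}), I would prove $L_\pi(S)=L_{\pi'}(S')$ by a symmetric run-transfer argument. Take $w=a_0\cdots a_{n-1}\in L_\pi(S)$, witnessed by an accepting run $q_0\xrightarrow{a_0,c_0}\cdots\xrightarrow{a_{n-1},c_{n-1}}q_n$ with $q_0\in Q_0$, $q_n\in F$ and $c_{i-1}=\pi(h_i)$. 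Using the initial-state clause of the bisimulation I pick $q'_0\in Q'_0$ with $(q_0,q'_0)\in\gamma$, and then inductively apply the forward transition-matching clause of Definition~\ref{def3.5}: given $(q_i,q'_i)\in\gamma$ and the transition $(q_i,a_i,c_i,q_{i+1})$, I obtain $q'_{i+1}$ with $(q_{i+1},q'_{i+1})\in\gamma$ and $(q'_i,a_i,c_i,q'_{i+1})\in\,\rightarrow'$. The resulting run of $S'$ carries exactly the controls $c_0,\dots,c_{n-1}$; the equivalence of $\pi$ and $\pi'$ under $\gamma$ then gives $\pi'(h'_i)=\pi(h_i)=c_{i-1}$, so this is genuinely a run under $\pi'$. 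Finally, $q_n\in F$ together with the accepting-state clause yields $q'_n\in F'$, hence $w\in L_{\pi'}(S')$. The reverse inclusion is identical, using the symmetric clauses of $\gamma$. This settles the second block of equalities, since each $L_t^\pi(S)=L_{\pi}(S)$ for a policy $\pi$ of type $t$.

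For the first block I would show, for each policy type $t\in\{0,F,\mathrm{Stack},\mathrm{Tape},H\}$, that every language in $L_t(S)$ lies in $L_t(S')$ and vice versa. Given a type-$t$ policy $\pi$ on $S$, I construct a type-$t$ policy $\pi'$ on $S'$ that is equivalent to $\pi$ under $\gamma$, and then invoke Step~1 to conclude $L_\pi(S)=L_{\pi'}(S')\in L_t(S')$; symmetry gives the converse, so $L_t(S)=L_t(S')$. The construction reinterprets $\pi$ through $\gamma$: $\pi'$ reads an $S'$-history (or, for the augmented types, maintains the \emph{same} finite-state, stack, or tape memory as $\pi$) and outputs the control that $\pi$ would output on a $\gamma$-matched $S$-history, the latter being produced by the same inductive back-and-forth as in Step~1. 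Type preservation is then immediate, because the memory component of $\pi$ is copied verbatim and only the state-reading is composed with $\gamma$.

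The main obstacle has two parts. First, $\gamma$ is a relation rather than a bijection (Definition~\ref{def3.5} asks only that $\gamma$-partners exist), so the ``$\gamma$-matched $S$-history'' used to drive $\pi'$ is not canonically determined; I would remove the ambiguity by fixing once and for all a selection that chooses, for each related pair reached along a run, a single $S$-partner, and by checking that this selection is compatible with the transition structure through the back-and-forth clauses, so that $\pi'$ is well defined and genuinely equivalent to $\pi$ under $\gamma$. Second, and decisively, the whole argument hinges on acceptance being preserved along $\gamma$-matched runs, i.e.\ that a transferred run lands in $F'$ exactly when the original lands in $F$. This is the step where the accepting-state clause of Definition~\ref{equivfinite} must be used in its two-sided form, so that $\gamma$-related states are simultaneously accepting or non-accepting; otherwise a matched run could terminate in a non-accepting partner of an accepting state and the equalities would break. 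I expect establishing this acceptance-respecting matching to be the crux of the proof, with the selection-consistency argument being the main technical bookkeeping around it.
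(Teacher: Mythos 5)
The paper states this proposition without proof, so your attempt can only be measured against the definitions themselves, and there it runs into two genuine problems. The first concerns the step you yourself call the crux: acceptance preservation along $\gamma$. You hope to \emph{establish} that $\gamma$-related states are simultaneously accepting or non-accepting, but this cannot be derived from Definition~\ref{equivfinite}, whose accepting-state clause only demands that every $q\in F$ have \emph{some} partner in $F'$ and conversely. That weaker condition genuinely does not suffice: take $S$ with initial state $q_0$, accepting state $q_1$ and single transition $(q_0,a,c,q_1)$, and $S'$ with initial state $p_0$, transition $(p_0,a,c,p_1)$, an additional isolated state $p_2$, and $F'=\{p_2\}$; then $\gamma$ relating $(q_0,p_0)$, $(q_1,p_1)$, $(q_1,p_2)$ is a bisimulation satisfying Definition~\ref{equivfinite}, the unique policies over the singleton control set are trivially equivalent, yet $L_\pi(S)=\{a\}$ while $L_{\pi'}(S')=\emptyset$. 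So acceptance-respecting matching ($(q,q')\in\gamma$ implies $q\in F \Leftrightarrow q'\in F'$) is an additional hypothesis you must impose, not a lemma you can prove; once it is imposed, your run-transfer argument for the second block of equalities is correct.

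The second problem is fatal to your Step~2 as described. Even under the acceptance-respecting reading, the class-level equality $L_0(S)=L_0(S')$ is false in general, so no construction of an equivalent memoryless policy can exist. Concretely, let $S$ have states $q_0,q_1,q_2,q_f,q_g$ with $Q_0=\{q_0\}$, $F=\{q_f\}$ and transitions $(q_0,a,c,q_1)$, $(q_0,b,c,q_2)$, $(q_i,a,c_1,q_f)$, $(q_i,a,c_2,q_g)$ for $i=1,2$; let $S'$ be its bisimulation quotient merging $q_1,q_2$ into one state $p$, i.e.\ transitions $(p_0,a,c,p)$, $(p_0,b,c,p)$, $(p,a,c_1,p_f)$, $(p,a,c_2,p_g)$ with $F'=\{p_f\}$. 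The quotient relation is an acceptance-respecting bisimulation. The memoryless policy $\pi$ with $\pi(q_0)=c$, $\pi(q_1)=c_1$, $\pi(q_2)=c_2$ gives $L_\pi(S)=\{aa\}$, whereas every memoryless policy on $S'$ yields either $\emptyset$ or $\{aa,ba\}$; hence $\{aa\}\in L_0(S)\setminus L_0(S')$. This is exactly the obstruction you dismiss as technical bookkeeping: a single $S'$-state may be $\gamma$-related to several $S$-states at which $\pi$ prescribes different controls, and then no fixed selection can repair your construction — for the memoryless class the statement itself fails, and for the finite-memory, stack-augmented and tape-augmented classes the static selection would have to be replaced by a controller that simulates a $\gamma$-matched run of $S$ inside its own memory, which is a substantially different argument from the one you outline and still requires care in the direction $L_\pi(S)\subseteq L_{\pi'}(S')$ because of the nondeterministic choice of matched successors and initial states.
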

\begin{proposition}
Let \( L_H^{\mathrm{comp}}(S) \) denote the class of languages accepted a controlled automaton  $S = (Q, A, C, \rightarrow, Q_{0})$ 
under computable history-dependent policies \( \pi : (Q \times A)^* \to C \). Then any such policy can be simulated by a tape-augmented policy. Therefore,
\[
L_{\text{Tape}}(S) = L_H^{\mathrm{comp}}(S).
\]
\end{proposition}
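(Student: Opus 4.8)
The plan is to prove that the two classes coincide by establishing mutual inclusion, and in fact to show something slightly stronger: each individual policy of one kind realizes \emph{literally the same} control function as a policy of the other kind, so that the induced accepted languages agree term by term. Recall that a tape-augmented policy $\pi = (\Gamma, m_0, \delta, \gamma)$ is, in effect, a single-tape Turing machine that consumes the observed history $(q_0,a_0)(q_1,a_1)\cdots$ one pair at a time and, after reading $(q_{i-1},a_{i-1})$, emits the control $c_{i-1} = \gamma(q_{i-1}, s)$, where $s$ is the currently scanned symbol and the tape has been updated by $\delta$. A computable history-dependent policy is a total function $\pi : (Q\times A)^* \to C$ computed by some Turing machine $M$. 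Equality of the two language classes will then follow, since each tape-augmented policy produces the same accepted language as a matching computable history-dependent policy and conversely.

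For the inclusion $L_{\text{Tape}}(S) \subseteq L_H^{\mathrm{comp}}(S)$, I would fix a tape-augmented policy $\pi$ and argue that the function it realizes is computable. I would describe a Turing machine $M_\pi$ that, given an encoding of a finite history $h_i = (q_0,a_0)\cdots(q_{i-1},a_{i-1})$, initializes its work tape to $m_0$ and then replays the tape-augmented machine deterministically: for $j = 0, \ldots, i-1$ it applies $\delta$ to the scanned symbol together with $(q_j,a_j)$, rewriting the cell and moving the head. Since $\Gamma$, $\delta$, $\gamma$ are finite data and the history is finite, this simulation halts after finitely many steps, and $M_\pi$ outputs $\gamma(q_{i-1}, s)$ for the final scanned symbol $s$. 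Thus $\pi$ is a computable history-dependent policy; as it realizes the same control function in both interpretations, the accepted language is unchanged, which gives the inclusion.

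The substantive direction is $L_H^{\mathrm{comp}}(S) \subseteq L_{\text{Tape}}(S)$. Here I would fix a computable history-dependent policy $\pi$ computed by a Turing machine $M$ and build a tape-augmented policy $\pi'$ reproducing its decisions. The construction keeps, on the work tape, a verbatim record of the history observed so far; when the next pair $(q_{i-1}, a_{i-1})$ arrives, $\pi'$ appends it to the recorded history via $\delta$, then runs a simulation of $M$ on the stored history $h_i$ to compute $\pi(h_i)$, and finally emits that value through $\gamma$. Because $\pi$ is \emph{total} and \emph{computable}, $M$ halts on every finite history, so each per-step simulation terminates and $\pi'$ is a well-defined total tape-augmented policy with $\pi'(h_i) = \pi(h_i)$ for all $i$. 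Consequently the executions of $S$ admitted under $\pi'$ are exactly those admitted under $\pi$, whence $L_{\pi'}(S) = L_\pi(S)$ and the inclusion follows. Combining the two inclusions yields $L_{\text{Tape}}(S) = L_H^{\mathrm{comp}}(S)$.

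I anticipate the main obstacle to be reconciling the apparently ``one tape move per input pair'' reading of the tape-augmented definition with the unbounded internal computation that simulating $M$ requires. I would resolve this by interpreting the tape-augmented mechanism as a genuine Turing machine equipped with an input channel---permitting finitely many internal tape steps, governed in the usual way by the finite control and scanned symbol, between successive reads of history pairs---rather than as a one-step transducer. Under this reading the equivalence is, at heart, the Church--Turing thesis specialized to policies: the tape-augmented policies are precisely the Turing-computable ones, and the only genuine content to verify is the totality/halting argument in the backward direction, ensuring $\pi'$ never diverges and hence defines a legitimate policy.
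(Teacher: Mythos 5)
Your proposal is correct in substance, and it supplies considerably more than the paper does: the paper states this proposition with no proof at all, its only justification being the one-line remark in the proof sketch of Theorem~\ref{thm.global.finite} that ``computationally augmented (Turing-power) controllers correspond exactly to effectively realizable history-dependent policies''---precisely the Church--Turing-style correspondence you invoke. Your two explicit inclusions make that remark rigorous: the forward direction (replaying the finite tape machine on a given finite history is a halting computation, so the realized control function is computable) and the backward direction (storing the history verbatim on the work tape and re-running the machine for $\pi$ at each step, with totality of $\pi$ guaranteeing that every per-step simulation halts) are both sound. The most valuable part of your argument is the obstacle you flag at the end: as literally written, the paper's definition of a tape-augmented policy allows exactly one tape operation per observed pair $(q,a)$, which makes it a real-time, one-tape online transducer; such machines realize only a strict subclass of the computable history functions, since unbounded internal computation is impossible when the total number of tape steps is bounded by the length of the history. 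Under that literal reading the backward inclusion---and hence the stated equality---would fail. Your reinterpretation, permitting finitely many internal tape steps between successive inputs, is clearly what the paper intends (it introduces these policies as obtained ``by augmenting these models with \ldots Turing machines'' and repeatedly calls them Turing-power controllers), and with that reading your proof is complete. The paper would be strengthened by adopting your corrected formulation of the tape-augmented definition explicitly, since the proposition is otherwise false as stated.
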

\subsection*{Global Language Families for Controlled Automata on Finite Words}
We now explore the structure and hierarchy of language families defined by classes of control
policies in controlled automata. We define, interpret, and compare these families, with an emphasis
on the expressive power granted by diﬀerent forms of memory and history-awareness in controllers.
More specifically, we define global language families $\mathbb{L}_\Pi$ for several classes of policies $\Pi$, each corresponding to
increasing expressive power in controller design.
\begin{definition} \label{PolicyTypes}
Let $\mathcal{S} $ denote the set of all controlled automata. For each policy class $\Pi \in \{0, F, \text{Stack}, \text{Tape}, H\}$, define:
\[
\mathbb{L}_\Pi := \{ L_\Pi(S) \mid S \in \mathcal{S} \} \subseteq \mathcal{P}(2^{A^*})
\]
where $A$ is the set of activities and each $L_\Pi(S) = \{ L_\pi(S) \mid \pi \in \Pi \}$ is the class of languages accepted by controlled automaton $S$ under policies of type $\Pi$.
\end{definition}
\subsection*{Interpretation}
Each element of $\mathbb{L}_\Pi$ is a \emph{set of languages} induced by a specific automaton $S$ operating under a family of policies $\Pi$. Thus, $\mathbb{L}_\Pi$ is a \emph{set of language families}. While the union of all such languages may correspond to a classical complexity class (e.g., $\bigcup \mathbb{L}_F = \textbf{REG}$), we emphasize:
\begin{quote}
\text{$\mathbb{L}_\Pi$ consists of language families—not individual languages.}
\end{quote}
We now characterize the union over all language families within each global class $\mathbb{L}_\Pi$.
\begin{proposition} \label{union.global.finite}
Let
\[
\mathcal{C}_{\Pi}
=
\bigcup_{S\in\mathcal{S}} L_{\Pi}(S),
\]
where $\Pi\in\{F,\mathrm{Stack},\mathrm{Tape}\}$, and assume that the
finite-memory, stack-augmented, and tape-augmented controllers are
deterministic.

Then
\[
\mathcal{C}_{F}=REG,
\]
\[
\mathcal{C}_{\mathrm{Stack}}=DCFL,
\]
and
\[
\mathcal{C}_{\mathrm{Tape}}
=
\mathcal{C}_{H}^{\mathrm{comp}}
=
RE,
\]
where $REG$, $DCFL$, and $RE$ denote the classes of regular,
deterministic context-free, and recursively enumerable languages,
respectively.

Thus deterministic finite-memory controllers have the expressive power
of deterministic finite automata, deterministic stack-augmented
controllers have the expressive power of deterministic pushdown
automata, and deterministic tape-augmented controllers have the
expressive power of deterministic Turing-machine recognizers.
\end{proposition}
\begin{remark}
The occurrence of $RE$ rather than $REC$ in Proposition~3.29 reflects
the recognizer semantics of tape-augmented controllers. A deterministic
Turing-machine controller may fail to halt on histories that do not lead
to acceptance. If total halting were required on every input history,
the corresponding language class would instead be $REC$.
\end{remark}
These results reflect the alignment between policy memory models and classical language classes: stack memory corresponds to context-free expressiveness, while tape memory corresponds to Turing-recognizable power.
\subsection*{Hierarchy}
There is a strict hierarchy of global language families:
\begin{theorem} \label{thm.global.finite}
Consider the language classes as defined in Definition~\ref {PolicyTypes} above.  Then, we have:
\[
\mathbb{L}_0 \subset \mathbb{L}_F \subset \mathbb{L}_{\text{Stack}} \subset \mathbb{L}_{\text{Tape}}  = \mathbb{L}_H^{\mathrm{comp}} \subset  \mathbb{L}_H
\]
\end{theorem}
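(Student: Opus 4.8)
The plan is to split the chain into three kinds of obligations: the consecutive containments $\mathbb{L}_0 \subseteq \mathbb{L}_F \subseteq \mathbb{L}_{\text{Stack}} \subseteq \mathbb{L}_{\text{Tape}}$ and $\mathbb{L}_H^{\mathrm{comp}} \subseteq \mathbb{L}_H$; the middle equality $\mathbb{L}_{\text{Tape}} = \mathbb{L}_H^{\mathrm{comp}}$; and the four strictness assertions. The equality is the cheapest: the preceding proposition gives $L_{\text{Tape}}(S) = L_H^{\mathrm{comp}}(S)$ for every controlled automaton $S$, so the two families coincide automaton-by-automaton, and therefore the sets of families $\{L_{\text{Tape}}(S)\}_S$ and $\{L_H^{\mathrm{comp}}(S)\}_S$ are literally identical. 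It is worth stressing throughout that an element of each $\mathbb{L}_\Pi$ is not a single language but an entire family $L_\Pi(S) \in \mathcal{P}(2^{A^*})$, so ``$\subseteq$'' must be read as: every family realizable under the weaker policy class is realizable, \emph{as a full family}, under the stronger one.

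For the containments I would argue by policy simulation together with a realization gadget. Each weaker policy is a special case of the next: a memoryless policy is a finite-memory policy with a single memory state, a finite-memory policy is a stack-augmented policy that keeps the stack at height one (storing the current memory state in the top symbol), a stack-augmented policy is a tape-augmented policy that restricts the tape to a pushdown discipline, tape-augmented equals computable-history by the cited proposition, and a computable history-dependent policy is in particular a history-dependent one. This yields the local inclusions $L_\Pi(S) \subseteq L_{\Pi'}(S)$ for each fixed $S$. To lift these to the global statement I must realize the weaker family \emph{exactly}. For $\mathbb{L}_0 \subseteq \mathbb{L}_F$ this is clean: since $Q$ and $C$ are finite there are at most $|C|^{|Q|}$ memoryless policies, so $L_0(S) = \{L_1,\dots,L_m\}$ is a finite family of regular languages; I would build a mode-selector automaton $S'$ whose first control action commits to one of $m$ control-free branches, the $i$-th branch being an ordinary acceptor for $L_i$. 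Then any policy — memoryless or finite-memory — fixes the first control and is thereafter indifferent, giving $L_F(S') = L_0(S') = \{L_1,\dots,L_m\} = L_0(S)$. For the higher levels I would replace the finite selector by a stack- (respectively tape-) driven selector that first lets the controller write an encoding of a chosen weaker policy and then simulates it.

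Strictness then splits by technique. For $\mathbb{L}_0 \subsetneq \mathbb{L}_F$ I would use cardinality of the family: every $L_0(S)$ has at most $|C|^{|Q|}$ members and is hence finite, whereas one can exhibit an automaton whose finite-memory policies yield infinitely many pairwise-distinct regular languages; any such infinite family lies in $\mathbb{L}_F \setminus \mathbb{L}_0$. For $\mathbb{L}_F \subsetneq \mathbb{L}_{\text{Stack}} \subsetneq \mathbb{L}_{\text{Tape}}$ I would invoke Proposition~\ref{union.global.finite}: equality of two family-sets would force equality of their unions, but $\bigcup \mathbb{L}_F = \textbf{REG} \subsetneq \textbf{CFL} = \bigcup \mathbb{L}_{\text{Stack}} \subsetneq \textbf{RE} = \bigcup \mathbb{L}_{\text{Tape}}$ by the classical strict inclusions, a contradiction. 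For $\mathbb{L}_H^{\mathrm{comp}} \subsetneq \mathbb{L}_H$ I would use computability: since $\bigcup \mathbb{L}_H^{\mathrm{comp}} = \textbf{RE}$, every family in $\mathbb{L}_H^{\mathrm{comp}}$ consists solely of recursively enumerable languages, while the uncountably many arbitrary history-dependent policies can accept a non-RE language on a suitable automaton, producing a family in $\mathbb{L}_H \setminus \mathbb{L}_H^{\mathrm{comp}}$.

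The hard part will be the faithful realization required by the containments at the level of families, that is, producing an automaton whose stronger-class family equals the weaker family \emph{exactly} rather than strictly containing it. The obstruction is that the controlled automaton cannot inspect the controller's private memory (its stack or tape), so it cannot force a more powerful controller to behave like a weaker one; the finite mode-selector succeeds precisely because the relevant choice is externalized into the automaton's state in a single step. Extending this to the infinite families arising at the stack and tape levels, while keeping the realized family free of spurious extra languages, is the delicate step I expect to demand the most care — and it is also the construction from which the local hierarchy of Proposition~\ref{prop3.5} is meant to follow.
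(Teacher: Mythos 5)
Your proposal follows essentially the same route as the paper's own proof sketch: inclusions via policy simulation (memoryless as one-state finite memory, finite memory as a height-one stack, stack as a restricted tape), the equality $\mathbb{L}_{\text{Tape}} = \mathbb{L}_H^{\mathrm{comp}}$ read off from the preceding proposition, strictness of the first inclusion from Lemma~\ref{lem.finite.memory.finite}, and separation of the middle levels by the $\textbf{REG} \subset \textbf{CFL} \subset \textbf{RE}$ gap---your union argument via Proposition~\ref{union.global.finite} and the paper's witness languages $\{a^nb^n\}$, $\{a^nb^nc^n\}$ are the same argument in different clothing. Your two refinements are genuine improvements in rigor: the explicit observation that $L_0(S)$ is always a finite family (so that the infinite family of Lemma~\ref{lem.finite.memory.finite} really does lie outside $\mathbb{L}_0$), and, more importantly, the recognition that the global inclusions are second-order statements: $\mathbb{L}_\Pi \subseteq \mathbb{L}_{\Pi'}$ requires, for every $S$, some $S'$ with $L_{\Pi'}(S') = L_\Pi(S)$ \emph{exactly}, not merely $L_\Pi(S) \subseteq L_{\Pi'}(S)$. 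The paper's sketch proves only the latter, pointwise statement (which is what Proposition~\ref{prop3.5} needs) and silently conflates the two; your mode-selector construction closes this hole at the bottom level.

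The remaining weakness is the one you flag yourself: at the stack and tape levels your realization gadget is only a sketch, and the idea of letting the controller ``write an encoding of a weaker policy and then simulate it'' collides with the very obstruction you name---the plant cannot inspect the controller's stack or tape, so nothing confines a stack-augmented policy on $S'$ to behave like a finite-memory one, and $L_{\text{Stack}}(S')$ threatens to contain languages outside $L_F(S)$. So your proposal does not actually close the containments $\mathbb{L}_F \subseteq \mathbb{L}_{\text{Stack}} \subseteq \mathbb{L}_{\text{Tape}}$ under the literal reading of Definition~\ref{PolicyTypes}. Be aware, however, that this gap is inherited from, not introduced by, your reading: the paper's own proof never addresses exact realization at any level, so on this point your attempt is strictly more careful than the published sketch. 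A clean resolution would either supply the missing plant-level construction or weaken the theorem to the pointwise form of Proposition~\ref{prop3.5}.
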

\begin{proof}
We outline why each inclusion holds and why it is strict.

1.~\(\mathbb{L}_0 \subset \mathbb{L}_F\):
Every memoryless control policy can be regarded as a degenerate finite-memory controller with a single internal state.
However, finite-memory controllers can encode bounded protocol phases that memoryless ones cannot distinguish.
Strictness follows from Lemma~\ref{lem.finite.memory.finite}.

2.~\(\mathbb{L}_F \subset \mathbb{L}_{\text{Stack}} \):
Finite-memory control is subsumed by stack-based control, since a pushdown controller can simulate any finite-memory controller and
further recognize non-regular patterns such as \(\{a^n b^n \mid n \ge 0\}\).
Hence, the inclusion is strict.

3.~\(\mathbb{L}_{\text{Stack}} \subset \mathbb{L}_{\text{Tape}}  \):
A Turing-power controller can simulate every pushdown controller,
but not vice versa; languages such as \(\{a^n b^n c^n \mid n \ge 0\}\) separate the two classes.

4.~\( \mathbb{L}_{\text{Tape}}  = \mathbb{L}_H^{\mathrm{comp}} \subset  \mathbb{L}_H\):
Computationally augmented (Turing-power) controllers correspond exactly to effectively realizable
history-dependent policies; they form a subset of the most general history-dependent ones,
which may be non-computable.

Thus,
\[
\mathbb{L}_0 \subset \mathbb{L}_F \subset \mathbb{L}_{\text{Stack}} \subset \mathbb{L}_{\text{Tape}}  = \mathbb{L}_H^{\mathrm{comp}} \subset  \mathbb{L}_H
\]
\end{proof}

\begin{lemma}\label{lem.finite.memory.finite}
There exists a controlled automaton $S$ such that
$L_F(S)$ contains a countably infinite family of distinct regular languages.
\end{lemma}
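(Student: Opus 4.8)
The plan is to exhibit a single fixed controlled automaton $S$ together with an explicit countable sequence of finite-memory policies $\pi_1,\pi_2,\dots$ that steer $S$ into pairwise distinct regular languages. The automaton I would use is the smallest one whose transitions are genuinely sensitive to the control action. Take $A=\{a\}$, $C=\{c_0,c_1\}$, $Q=\{q_{\mathrm{acc}},q_{\mathrm{rej}}\}$, $Q_0=\{q_{\mathrm{rej}}\}$, and $F=\{q_{\mathrm{acc}}\}$, with the (completely deterministic) transition relation that sends any state, on $a$ under $c_1$, to $q_{\mathrm{acc}}$, and on $a$ under $c_0$, to $q_{\mathrm{rej}}$. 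Thus after reading the word $a^n$ the run ends in $q_{\mathrm{acc}}$ precisely when the last control issued, $c_{n-1}$, equals $c_1$; the entire acceptance question for $a^n$ is therefore reduced to whether the policy outputs $c_1$ at step $n-1$.

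Next I would define, for each $k\ge 1$, the finite-memory policy $\pi_k=(M_k,m_0,\delta_k,\gamma_k)$ implementing a saturating counter: $M_k=\{0,1,\dots,k\}$, $m_0=0$, $\delta_k(m,(q,a))=\min(m+1,k)$, and $\gamma_k(q,m)=c_1$ iff $m=k-1$ and $c_0$ otherwise. Running this policy on $a^n$, the memory takes the values $m_i=\min(i,k)$, so the control $c_i=\gamma_k(q_i,m_i)$ equals $c_1$ at exactly the single step $i=k-1$. Combining with the observation above, $a^n$ is accepted iff $n-1=k-1$, i.e.\ iff $n=k$, whence $L_{\pi_k}(S)=\{a^k\}$. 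Since each $\pi_k$ has the finite memory set $M_k$, every $\pi_k$ is a legitimate finite-memory policy, so each singleton $\{a^k\}$ lies in $L_F(S)$; being finite, each is regular. The languages $\{a^k\}$, $k\ge 1$, are obviously pairwise distinct, giving the required countably infinite family. (If one prefers infinite languages, replacing the saturating counter by a cyclic one, $\delta_k(m,\cdot)=(m+1)\bmod k$ with $\gamma_k(q,m)=c_1$ iff $m=k-1$, yields instead $L_{\pi_k}(S)=\{a^{jk}\mid j\ge 1\}$, again pairwise distinct and regular.)

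The construction is essentially routine once $S$ is chosen, and I expect the only delicate point to be the off-by-one bookkeeping forced by the convention $c_{i-1}=\pi(h_i)$: one must line up the step at which the counter emits $c_1$ with the state reached after the final activity, and verify that no shorter or longer word slips into the language. The conceptual crux---and the reason a single automaton suffices for all $k$---is that $S$ is control-sensitive, so the unbounded choice of counter modulus is relegated entirely to the policy's finite (but unboundedly large across the family) memory, while $Q$, $A$, $C$, and $\rightarrow$ stay fixed. Regularity of every member of $L_F(S)$ is automatic in any case, since the synchronous product of the finite automaton $S$ with the finite memory of any finite-memory policy is again a finite-state device; this is exactly the content underlying $\bigcup\mathbb{L}_F=\textbf{REG}$ in Proposition~\ref{union.global.finite}.
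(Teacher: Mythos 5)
Your proposal is correct and follows essentially the same strategy as the paper's proof: a single fixed plant over a one-letter activity alphabet, together with a countable family of finite-memory counting policies $\pi_k$ indexed by $k$, each carving out a distinct regular language in $L_F(S)$. The differences are cosmetic---the paper uses a cyclic (mod-$k$) counter gating a ``terminate'' action to obtain the languages $(a^k)^*$, while your main construction uses a saturating counter and acceptance determined by the last control action to obtain the singletons $\{a^k\}$; your cyclic-counter variant at the end is precisely the paper's construction.
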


\begin{proof}
Fix the one-letter alphabet $\Sigma = \{a\}$.
For each integer $k \ge 1$, define the regular language
\[
L_k \;=\; (a^k)^* \;=\; \{\, (a^k)^m \mid m \ge 0 \,\},
\]
i.e., the set of all finite words whose length is a multiple of $k$.
These languages are pairwise distinct (for example,
$a^2 \in L_2$ but $a^2 \notin L_3$), and
the family $\{L_k \mid k \ge 1\}$ is countably infinite.

Now construct a controlled automaton $S$ that can repeatedly emit
the symbol $a$ and, at any step, may either continue or stop
(i.e., terminate the word).  Intuitively, $S$ offers two types of
actions at each step: ``emit $a$'' or ``terminate.''

For each $k \ge 1$, define a finite-memory control policy $\pi_k$
whose internal memory is a $k$-state counter cycling through
$\{0,1,\dots,k-1\}$ modulo $k$.
Under $\pi_k$, the controller:
(i) always allows the ``emit $a$'' action, which increments the counter
modulo~$k$, and
(ii) permits the ``terminate'' action \emph{only} when the counter
is $0$ (i.e., after emitting a multiple of $k$ symbols since the start).
Therefore, under $\pi_k$, the set of all possible terminated traces
of $S$ is exactly $L_k = (a^k)^*$.

Because each $\pi_k$ uses only finitely many internal memory states,
$\pi_k$ is a finite-memory policy; hence $L_k \in L_F(S)$ for every $k$.
Since the languages $L_k$ are all regular, are pairwise distinct,
and range over all $k \ge 1$, it follows that $L_F(S)$ contains
a countably infinite family of distinct regular languages.
\end{proof}
The next theorem shows that the expressive power of finite-memory control is inherently constrained: not every infinite regular family can be realized by a single controlled automaton.
\begin{theorem}\label{prop.nonrealizable.family}
There exists a countably infinite family of pairwise distinct regular
languages that is not equal to $L_F(S)$ for any controlled automaton $S$.
\end{theorem}

\begin{proof}
For each integer $k \ge 1$, let
\[
L_k \;=\; (a^k)^* \;=\; \{ (a^k)^m \mid m \ge 0 \},
\]
the set of all finite words over $\{a\}$ whose length is a multiple
of $k$.  Each $L_k$ is regular, and the languages $L_k$ are pairwise
distinct (e.g.\ $a^2 \in L_2$ but $a^2 \notin L_3$).  In particular,
the subfamily
\[
\mathcal{F} \;=\; \{\, L_p \mid p \text{ prime} \,\}
 \;=\; \{\, (a^p)^* \mid p \text{ prime} \,\}
\]
is a countably infinite family of distinct regular languages.

Assume, toward contradiction, that there is a controlled automaton
$S$ such that $L_F(S) = \mathcal{F}$.  By definition, $L_F(S)$ is the
collection of all languages realizable by $S$ under some finite-memory
control policy.  In particular, for each prime $p$, there is some
finite-memory policy $\pi_p$ under which $S$ generates exactly
$L_p = (a^p)^*$.

Now observe that a finite-memory controller that enforces $L_p$
necessarily counts steps modulo $p$ and only allows termination
after emitting a multiple of $p$ copies of $a$.  From such a
controller, we can build another finite-memory controller that
repeats the counting cycle $p$ times before allowing termination.
This modified controller permits termination only after emitting
a multiple of $p^2$ copies of $a$, and therefore enforces exactly
$L_{p^2} = (a^{p^2})^*$.

Since this modified controller still has only finitely many internal
states, it is also a finite-memory control policy for $S$.  Hence
$L_{p^2} \in L_F(S)$ for each prime $p$.  But $p^2$ is not prime,
so $L_{p^2} \notin \mathcal{F}$ by construction.  This contradicts
our assumption that $L_F(S) = \mathcal{F}$.

Therefore, no single controlled automaton $S$ can satisfy
$L_F(S) = \mathcal{F}$.  We conclude that there exists a countably
infinite family of pairwise distinct regular languages (namely,
$\mathcal{F}$) that is not equal to $L_F(S)$ for any $S$.
\end{proof}
We next formalize the closure properties of language families induced by
controlled automata and classes of control policies.  Closure under standard
language operations such as union, intersection, complement, and concatenation
provides a natural means of comparing expressive power across policy classes.
To capture both model-level and language-level perspectives, we distinguish
between two related notions of closure.  The first, called \emph{uniform closure},
requires that the same constructive transformation of plants realizes a given
operation across all admissible languages.  The second, called \emph{standard} or
\emph{pointwise closure}, coincides with the classical definition from automata
and language theory, requiring only that the resulting language belongs to the
same global family.  Both notions are presented below in a unified framework.
\subsection*{Closure Properties of Controlled Language Families}
Having characterized the expressive hierarchy of controlled automata, we now examine the algebraic behavior of the induced language families 
under standard language operations.  Closure properties provide a compositional view of how control capabilities interact with the structural 
transformations of languages such as union, intersection, concatenation, and complement.
\begin{definition}
\label{def:closure.uniform.standard}
Let $\mathcal{S}$ denote the class of controlled automata (plants), and for each
policy class $\Pi$ let $L_{\Pi}(S)$ be the family of languages
realizable by policies $\pi \in \Pi$ on a plant $S \in \mathcal{S}$.

\smallskip
\noindent\textbf{(a) Uniform Closure.}
The family $\mathbb{L}_{\Pi} = \{L_{\Pi}(S) \mid S \in \mathcal{S}\}$
is \emph{uniformly closed} under a $k$-ary language operation
$Op : (P(A^{*}))^{k} \!\to\! P(A^{*})$
if there exists a computable constructor
\[
\Phi_{Op} : \mathcal{S}^{k} \longrightarrow \mathcal{S}
\]
such that for all plants $S_{1},\dots,S_{k} \in \mathcal{S}$ and
for all $L_{i} \in  L_{\Pi}(S_{i})$ ($1\!\le\! i\!\le\! k$),
\[
Op(L_{1},\dots,L_{k}) \in
L_{\Pi}\bigl(\Phi_{Op}(S_{1},\dots,S_{k})\bigr).
\]
Uniform closure thus requires a constructive correspondence between the plant
structures: the same syntactic transformation $\Phi_{Op}$ must realize the
operation $Op$ for all admissible languages.

\smallskip
\noindent\textbf{(b) Standard (Pointwise) Closure.}
The global language family
\[
\mathcal{C}_{\Pi} = \bigcup_{S\in\mathcal{S}} L_{\Pi}(S)
\]
is \emph{closed} under $Op$ if for all
$L_{1},\dots,L_{k} \in \mathcal{C}_{\Pi}$,
\[
Op(L_{1},\dots,L_{k}) \in \mathcal{C}_{\Pi}.
\]
Equivalently, for every choice of plants
$S_{1},\dots,S_{k}\in\mathcal{S}$ and policies
$\pi_{1},\dots,\pi_{k}\in\Pi$,
there exist a plant $S' \in \mathcal{S}$ and a policy $\pi'\in\Pi$ such that
\[
Op\!\bigl(L_{\pi_{1}}(S_{1}),\dots,L_{\pi_{k}}(S_{k})\bigr)
   = L_{\pi'}(S').
\]
Unlike the uniform case, no structural relation between $S'$ and the
original plants are required.

\end{definition}
\begin{remark}
\label{rem:closure.types}
The distinction between \emph{standard} and \emph{uniform} closure
reflects two complementary levels of compositional reasoning within the
finite-word setting.  
\emph{Standard closure} is a language-level property: it asserts that the
family of all realizable languages under a given policy class~$\Pi$
is closed under a specified language operation, independently of how the
underlying plant structures are related.  
\emph{Uniform closure}, by contrast, is a stronger model-level property:
it requires the existence of a single constructive transformation
$\Phi_{\mathrm{Op}}$ on plants that systematically realizes the effect of
the operation across all admissible policies.  In this sense, uniform
closure captures the syntactic compositionality of the modeling formalism,
whereas standard closure captures only its semantic stability.

Unless stated otherwise, all closure results established in this section
(e.g., Propositions~\ref{prop.regular.closure}–\ref{prop.re.closure})
refer to the standard notion.
The stronger uniform version is discussed separately when explicit
constructive transformations~$\Phi_{\mathrm{Op}}$ are available, typically
at the regular or context-free levels.
\end{remark}
\begin{proposition}
\label{prop.regular.closure}
$\mathcal{C}_0 = \mathcal{C}_F = \mathrm{REG}$ is closed under union, intersection,
complement, concatenation, and Kleene star.
\end{proposition}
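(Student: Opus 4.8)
The plan is to reduce the statement to the classical closure properties of the regular languages, leaning on the identifications $\mathcal{C}_0 = \mathcal{C}_F = \mathrm{REG}$ that are essentially already available. First I would pin down the two equalities. The identity $\mathcal{C}_F = \mathrm{REG}$ is immediate from Proposition~\ref{union.global.finite}, since by Definition~\ref{PolicyTypes} and Definition~\ref{def:closure.uniform.standard} we have $\mathcal{C}_F = \bigcup_{S} L_F(S) = \bigcup \mathbb{L}_F = \mathrm{REG}$. For the memoryless class, the inclusion $\mathcal{C}_0 \subseteq \mathcal{C}_F$ holds because every memoryless policy $\pi : Q \to C$ is a degenerate finite-memory policy with a single memory state, so $L_0(S) \subseteq L_F(S)$ for every plant $S$ and hence $\mathcal{C}_0 \subseteq \mathcal{C}_F = \mathrm{REG}$. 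For the reverse inclusion $\mathrm{REG} \subseteq \mathcal{C}_0$, I would encode an arbitrary DFA as a controlled automaton: take the DFA's states as $Q$, its alphabet as $A$, a singleton control alphabet $C = \{c_0\}$, its transition function as a deterministic transition relation labeled uniformly by $c_0$, its start state as $Q_0$, and its final states as $F$. Under the constant memoryless policy $\pi(q) = c_0$, the runs of this controlled automaton coincide with the accepting computations of the DFA, so $L_\pi(S)$ is exactly the given regular language. This yields $\mathcal{C}_0 = \mathcal{C}_F = \mathrm{REG}$.

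Second, I would invoke the standard (pointwise) notion of closure from Definition~\ref{def:closure.uniform.standard}(b), which by Remark~\ref{rem:closure.types} is the operative notion here: the claim asks that whenever $L_1,\dots,L_k \in \mathcal{C}_\Pi$, the language $Op(L_1,\dots,L_k)$ again lies in $\mathcal{C}_\Pi$. Since $\mathcal{C}_0 = \mathcal{C}_F = \mathrm{REG}$, each of the five operations maps regular languages to regular languages by the classical results of automata theory: the product construction for union and intersection, the determinization-and-flip construction for complement, and the standard $\varepsilon$-transition constructions for concatenation and Kleene star. The resulting language therefore again belongs to $\mathrm{REG} = \mathcal{C}_\Pi$, which is precisely standard closure.

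The only genuine content beyond citation is the memoryless identification $\mathcal{C}_0 = \mathrm{REG}$, since Proposition~\ref{union.global.finite} records the equality explicitly only at the finite-memory level; I expect the DFA-encoding step to be the main (though still routine) obstacle. There I would take care to verify that the existential acceptance condition of a controlled automaton under a \emph{fixed} policy collapses to ordinary DFA/NFA acceptance, so that residual nondeterminism in the transition relation or multiple initial states cannot enlarge the accepted language beyond the regular class. I would also emphasize that closure under complement need not be realized by a direct transformation of controlled automata: because standard closure only requires $Op(L_1,\dots,L_k)\in\mathrm{REG}$, the reduction sidesteps any apparent difficulty arising from the existential (nondeterministic) acceptance semantics, which is exactly what makes this reduction-based route clean.
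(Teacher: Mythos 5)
Your proposal is correct, and at its core it takes the same route the paper does: the paper offers no argument at all for this proposition beyond Remark~\ref{rem:finite.standard.closure}, which states that the closure properties ``follow immediately from the classical results of the Chomsky hierarchy,'' exactly the reduction you invoke in your second step. Where you go beyond the paper is in actually justifying the identification $\mathcal{C}_0 = \mathcal{C}_F = \mathrm{REG}$ that the proposition's statement presupposes. The paper's Proposition~\ref{union.global.finite} only records $\bigcup \mathbb{L}_F = \mathrm{REG}$ (together with the Stack and Tape cases); the memoryless equality $\mathcal{C}_0 = \mathrm{REG}$ is asserted nowhere else, so your DFA encoding with a singleton control alphabet $C = \{c_0\}$ and the constant policy $\pi(q) = c_0$ — together with the observation that a fixed policy collapses the existential acceptance semantics to ordinary automaton acceptance — supplies a step the paper genuinely omits. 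Your closing points are also well taken: standard (pointwise) closure in the sense of Definition~\ref{def:closure.uniform.standard}(b) only requires membership of $Op(L_1,\dots,L_k)$ in $\mathrm{REG}$, so no plant-level construction is needed, and in particular complement poses no difficulty here (the paper reserves the harder, construction-based claims for the separate uniform-closure propositions, e.g.\ Proposition~\ref{prop:uniform.complement.deterministic}, where determinism is required for exactly the reason you flag).
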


\begin{proposition}
\label{prop:dcfl-closure}
Assume that stack-augmented controllers are deterministic. Then
\[
\mathcal{C}_{\mathrm{Stack}}=\mathrm{DCFL}.
\]
Consequently, $\mathcal{C}_{\mathrm{Stack}}$ is closed under complement
and under intersection with regular languages. In particular, if
\[
L\in\mathcal{C}_{\mathrm{Stack}}
\qquad\text{and}\qquad
R\in\mathrm{REG},
\]
then
\[
\overline{L}\in\mathcal{C}_{\mathrm{Stack}},
\qquad
L\cap R\in\mathcal{C}_{\mathrm{Stack}},
\qquad
L\setminus R\in\mathcal{C}_{\mathrm{Stack}}.
\]

However, $\mathcal{C}_{\mathrm{Stack}}$ is not closed, in general, under
union, intersection, concatenation, or Kleene star.
\end{proposition}
\begin{proposition}
\label{prop.re.closure}
$\mathcal{C}_{\mathrm{Tape}} = \mathrm{RE}$ is closed under union,
concatenation, and Kleene star, but not under complement.
\end{proposition}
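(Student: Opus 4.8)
The plan is to reduce each assertion to a classical closure property of the recursively enumerable languages, exploiting the identification $\mathcal{C}_{\mathrm{Tape}} = \bigcup \mathbb{L}_{\mathrm{Tape}} = \mathrm{RE}$ established in Proposition~\ref{union.global.finite}. Once this identification is in place, closure of $\mathcal{C}_{\mathrm{Tape}}$ under a language operation is equivalent to closure of $\mathrm{RE}$ under that operation, and the argument separates into three positive claims (union, concatenation, Kleene star) and one negative claim (failure under complement). For each positive case I would construct, from recognizers for the operand languages, a recognizer for the resulting language; concretely, one may take tape-augmented controlled automata realizing the operands (as guaranteed by the characterization) and assemble a new plant together with a tape-augmented policy that carries out the required simulation on its work tape.

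For union, given $L_1, L_2 \in \mathrm{RE}$ with recognizers $M_1, M_2$, I would build a recognizer for $L_1 \cup L_2$ that \emph{dovetails} the two computations on the input $w$ and accepts as soon as either $M_i$ accepts; since acceptance is only semi-decidable, interleaving the two searches is exactly what keeps the construction within $\mathrm{RE}$. For concatenation, I would recognize $L_1 \cdot L_2$ by nondeterministically guessing a split $w = uv$ and verifying $u \in L_1$ and $v \in L_2$ via dovetailed simulations of $M_1$ and $M_2$; there are only finitely many splits of $w$, and each membership test is semi-decidable, so the overall procedure semi-decides $L_1 \cdot L_2$. For Kleene star, I would recognize $L^*$ by accepting the empty word outright and otherwise guessing a factorization $w = w_1 \cdots w_k$ and checking $w_i \in L$ for every factor, again by dovetailing. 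In all three cases the construction never requires deciding non-membership, so the result stays in $\mathrm{RE}$, and the corresponding tape-augmented policy merely encodes the guessing and interleaving bookkeeping.

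The main obstacle, and the genuine content of the proposition, is the failure of closure under complement. Here I would invoke the existence of a recursively enumerable but undecidable language—canonically the universal (halting) language $L_u$, which is $\mathrm{RE}$-complete—and reason as follows. A language is decidable precisely when both it and its complement are semi-decidable; hence $\mathrm{RE}$ is closed under complement if and only if $\mathrm{RE}$ coincides with its complement class co-RE, which holds if and only if $\mathrm{RE}$ equals the class $\mathrm{REC}$ of recursive (decidable) languages. Since $\mathrm{REC} \subsetneq \mathrm{RE}$, as witnessed by $L_u$, it follows that $\mathrm{RE}$ is not closed under complement: the complement $\overline{L_u}$ cannot be recursively enumerable, for otherwise $L_u$ would be decidable, contradicting its undecidability. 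To phrase the counterexample entirely within our framework, I would realize $L_u$ as $L_{\pi}(S)$ for a suitable controlled automaton $S$ and tape-augmented policy $\pi$ (possible by the characterization $\mathcal{C}_{\mathrm{Tape}} = \mathrm{RE}$), exhibiting a language in $\mathcal{C}_{\mathrm{Tape}}$ whose complement provably lies outside $\mathcal{C}_{\mathrm{Tape}}$. The subtlety to handle carefully is that closure must fail at the level of the whole family $\mathcal{C}_{\mathrm{Tape}}$, not merely for a single fixed plant, which is exactly what the $\mathrm{RE}$-completeness of $L_u$ secures.
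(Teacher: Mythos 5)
Your proposal is correct and takes essentially the same route as the paper: the paper (see Remark~\ref{rem:finite.standard.closure}) simply observes that Proposition~\ref{prop.re.closure} follows immediately from the classical Chomsky-hierarchy closure properties of $\mathrm{RE}$, given the identification $\mathcal{C}_{\mathrm{Tape}} = \bigcup \mathbb{L}_{\mathrm{Tape}} = \mathrm{RE}$ from Proposition~\ref{union.global.finite}, and offers no further proof. Your dovetailing constructions for union, concatenation, and Kleene star, and the halting-language argument for non-closure under complement, are precisely the standard arguments the paper is implicitly citing, so you have merely spelled out what the paper leaves to the literature.
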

\begin{remark}
The closure properties in Propositions~3.36--3.38 follow from the
corresponding classical language-theoretic results for
$\mathrm{REG}$, $\mathrm{DCFL}$, and $\mathrm{RE}$.
In particular, the determinism assumption on stack-augmented controllers
is essential: it yields $\mathcal{C}_{\mathrm{Stack}}=\mathrm{DCFL}$,
whose closure properties differ from those of the full class
$\mathrm{CFL}$.
\end{remark}
\medskip
The preceding propositions summarize the standard closure properties enjoyed
by the global language families generated under memoryless, finite-memory,
stack-augmented and tape-augmented policies.  These results follow directly
from the classical Chomsky hierarchy and therefore require no additional
assumptions about the structure of the underlying controlled automata.

In many applications, however, one is also interested in the stronger
question of whether such closure properties can be realized \emph{uniformly}
at the level of the plant structure itself.  That is, one asks whether
operations such as union or intersection admit canonical automata-level
constructions that preserve the control interface and remain compatible with
the corresponding policy class.

The next theorem shows that, at the regular level
($\Pi\in\{0,F\}$), this stronger form of uniform closure indeed holds via the
standard disjoint-union and synchronous-product constructions.
\medskip
\begin{theorem}
\label{prop:uniform.regular.closure}
For each $\Pi \in \{0,F\}$, the family
\[
\mathbb{L}_{\Pi} = \{\, L_{\Pi}(S) \mid S \in \mathcal{S} \,\}
\]
of languages generated by controlled finite automata under memoryless
and finite-memory policies is \emph{uniformly closed} under union and
intersection in the sense of
Definition~\ref{def:closure.uniform.standard}(a).

\end{theorem}

\begin{proof}
Fix $\Pi \in \{0,F\}$.  
Let $S_1,S_2 \in \mathcal{S}$ be controlled automata with
$L_i \in L_{\Pi}(S_i)$ realized by policies $\pi_i \in \Pi$.

\smallskip
\noindent
\textbf{(Union).}
Construct the disjoint-union plant
\[
S_{\cup} = S_1 \uplus S_2,
\]
whose control and input interfaces coincide with those of $S_1$ and $S_2$.
A policy $\pi_{\cup}$ on $S_{\cup}$ operates by internally simulating
$\pi_1$ on the $S_1$-component and $\pi_2$ on the $S_2$-component,
yielding
\[
L_{\pi_{\cup}}(S_{\cup}) =
L_{\pi_1}(S_1) \cup L_{\pi_2}(S_2) = L_1 \cup L_2.
\]
Since the construction of $S_{\cup}$ depends only on $(S_1,S_2)$
and not on the specific policies, the corresponding constructor
$\Phi_{\cup}$ realizes \emph{uniform} closure under union.

\smallskip
\noindent
\textbf{(Intersection).}
For intersection, define the synchronous product
\[
S_{\cap} = S_1 \otimes S_2,
\]
with composite state space $Q_1 \times Q_2$,
shared input alphabet~$A$, and joint control space~$C_1 \times C_2$.
The transition relation is
\[
((q_1,q_2),a,(c_1,c_2),(q_1',q_2'))
  \in \rightarrow_{\cap}
  \;\;\text{iff}\;\;
(q_1,a,c_1,q_1') \in \rightarrow_1
\;\text{and}\;
(q_2,a,c_2,q_2') \in \rightarrow_2.
\]
The product policy
$\pi_{\cap}$ is given by
\[
\gamma_{\cap}((m_1,m_2),(q_1,q_2))
   = (\gamma_1(m_1,q_1),\, \gamma_2(m_2,q_2)),
\]
and
\[
\eta_{\cap}((m_1,m_2),(q_1,q_2),a,(m_1',m_2'))
   \;\text{iff}\;
(m_i,q_i,a,m_i') \in \eta_i
\text{ for } i=1,2.
\]
It follows that
\[
L_{\pi_{\cap}}(S_{\cap})
  = L_{\pi_1}(S_1) \cap L_{\pi_2}(S_2)
  = L_1 \cap L_2.
\]
Thus, intersection is uniformly realizable at the plant level.

\smallskip
In both cases, the constructive mappings
$\Phi_{\cup}$ and $\Phi_{\cap}$ are independent of specific policies
and depend only on the component plants. Hence, $\mathbb{L}_{\Pi}$
is uniformly closed under union and intersection.
\end{proof}
\begin{remark}
The disjoint-union and synchronous-product constructions on controlled automata
preserve the control interface, ensuring that the resulting product plant
$S_1 \otimes S_2$ remains controllable by a product policy in the same class~$\Pi$.
\end{remark}
\begin{theorem}
\label{prop:uniform.complement.deterministic}
Let $\mathcal{S}^{\det}\subseteq\mathcal{S}$ be the class of
\emph{deterministic} controlled finite automata
$S=(Q,A,C,\rightarrow,q_0,F)$ in the sense of
Definition~\ref{def:deterministic.controlled.automaton}.
For each $\Pi\in\{0,F\}$, the family
\[
\mathbb{L}^{\det}_{\Pi}
\;=\;
\{\,L_{\Pi}(S)\mid S\in\mathcal{S}^{\det}\,\}
\]
is \emph{uniformly closed} under complement in the sense of
Definition~\ref{def:closure.uniform.standard}(a).
More precisely, there is a plant-level constructor
\[
\Phi_{\complement}:\mathcal{S}^{\det}\to\mathcal{S}^{\det},\qquad
\Phi_{\complement}\bigl(Q,A,C,\to,q_0,F\bigr)
  \;=\;\bigl(Q,A,C,\to,q_0,\,Q\setminus F\bigr),
\]
such that for every policy $\pi\in\Pi$,
\[
L_{\pi}\bigl(\Phi_{\complement}(S)\bigr)
\;=\;
A^{*}\setminus L_{\pi}(S).
\]
\end{theorem}

\begin{proof}
Fix a deterministic plant $S=(Q,A,C,\to,q_0,F)$ and $\Pi\in\{0,F\}$.
Determinism means that the relation $\to$ induces a (partial) function
$\delta:Q\times A\times C\rightharpoonup Q$; if desired, first
\emph{complete} $S$ by adding a nonaccepting sink so that
$\delta$ becomes total—this does not affect languages except for
undefined cases, which are mapped to the sink.

Define $S^{\complement}=\Phi_{\complement}(S)$ by keeping the same
transition structure and flipping the accepting set to $Q\setminus F$.
Consider any policy $\pi\in\Pi$ (memoryless or finite-memory).
Because the closed-loop next state is uniquely determined at each step by
$(q,a,\gamma(\cdot))$, the unique run of $S$ under $\pi$ on any input
word $w\in A^{*}$ visits $F$ at the terminal position iff the unique run
of $S^{\complement}$ under the \emph{same} $\pi$ visits $Q\setminus F$.
Thus $w\in L_{\pi}(S)$ iff $w\notin L_{\pi}(S^{\complement})$, yielding
$L_{\pi}(S^{\complement})=A^{*}\setminus L_{\pi}(S)$.
Since $\Phi_{\complement}$ depends only on the plant (not on~$\pi$),
this is a \emph{uniform} plant-level realization of complement on
$\mathcal{S}^{\det}$.
\end{proof}

\begin{remark}
\label{rem:why-determinism-for-complement}
Uniform complement at the plant level fails in general for
\emph{nondeterministic} plants: flipping $F$ does not invert each
$L_{\pi}(S)$ when multiple runs on the same input/control exist.
Determinism (and completion, if needed) ensures a single run per input/control,
so, complementing acceptance yields language complement for all
$\pi\in\{0,F\}$.
\end{remark}
\begin{remark}
The preceding two propositions show that, at the regular level
($\Pi \in \{0,F\}$), the classical closure properties of regular languages
can be lifted to the stronger, model-level notion of uniform closure in
Definition~\ref{def:closure.uniform.standard}(a). For the stack- and tape-
augmented levels ($\Pi \in \{\mathrm{Stack},\mathrm{Tape}\}$), the analogous
uniform constructions would have to operate on pushdown- and Turing-like
plants and simultaneously account for all admissible control policies on
those plants. Developing such plant-level constructors is considerably more
involved and lies beyond the scope of this paper. Accordingly, all closure
results for $\mathcal{C}_{\mathrm{Stack}}$ and $\mathcal{C}_{\mathrm{Tape}}$ in
Propositions~\ref{prop.cfl.closure}--\ref{prop.re.closure} are interpreted
in the standard (global) sense.
\end{remark}
\subsection*{Determinization under Restricted Policy Types for Finite Words}

We first record an analogous observation for finite words.

\begin{theorem}
Let $S$ be a finite controlled automaton over finite words, and let the policy
type be memoryless (type~$0$). Then there exists a deterministic controlled
automaton $S_{\det}$ over the same input alphabet such that
\[
   \{\,L_{\pi}(S) : \pi \in \Pi_{0}(S)\,\}
   \;=\;
   \{\,L_{\pi'}(S_{\det}) : \pi' \in \Pi_{0}(S_{\det})\,\}.
\]
In particular, for type~$0$ we may, without loss of closed-loop expressiveness
over finite words, restrict attention to deterministic plants.
\end{theorem}

\begin{proof}
For a fixed plant $S$, the set of memoryless policies $\Pi_{0}(S)$ is finite.
Each policy $\pi \in \Pi_{0}(S)$ induces a closed-loop system ``$S$ under
$\pi$'', which is a finite-state automaton over the input alphabet, and hence
the induced language $L_{\pi}(S)$ is regular. Therefore, for each
$\pi \in \Pi_{0}(S)$ there exists a deterministic finite automaton $D_{\pi}$
recognizing $L_{\pi}(S)$.

We now combine these finitely many deterministic automata into a single
deterministic plant $S_{\det}$ whose control alphabet contains one
distinguished action for each policy $\pi \in \Pi_{0}(S)$, intuitively
selecting the corresponding component $D_{\pi}$. A memoryless policy on
$S_{\det}$ chooses one such action in the initial state and has no further
essential influence on the run. By construction, for each
$\pi\in\Pi_{0}(S)$ there is a corresponding memoryless policy
$\pi' \in \Pi_{0}(S_{\det})$ such that $L_{\pi}(S) = L_{\pi'}(S_{\det})$, and
conversely every closed-loop language of $S_{\det}$ under a memoryless policy
arises in this way. This yields the claimed equality of families of languages.
\end{proof}

\begin{remark}
For the finite-memory policy type (type~$F$), the situation is more subtle.
On a fixed plant $S$, the class of admissible finite-memory controllers may be
infinite, so the above finitary construction (which explicitly enumerates all
policies) does not directly apply. While each closed-loop language
$L_{\pi}(S)$ is still regular and can be recognized by a deterministic finite
automaton, it is not clear in general whether there always exists a
\emph{single} deterministic controlled automaton $S_{\det}$ whose finite-memory policies generate the same family of closed-loop
languages as the original plant under all finite-memory policies. Clarifying
this question for type~$F$ appears to be an interesting open problem and is
left for future investigation.
\end{remark}

\subsection*{Emptiness Problem for Controlled Automata on Finite Words}
\begin{definition}
The \emph{emptiness problem} for a controlled automaton $S$ under a policy class $\Pi$ asks whether there exists a policy $\pi \in \Pi$ such that the language $L_\pi(S)$ is nonempty, i.e.,
\[
\exists \pi \in \Pi \text{ such that } L_\pi(S) \neq \emptyset.
\]
This problem determines whether there exists at least one finite execution accepted by the controlled system under some control strategy. 
\end{definition}
\begin{theorem}\label{prop:emptiness-complexity}
The emptiness problem for controlled automata depends on the policy class:
it is decidable in polynomial time for $\Pi \in \{0,F\}$,
decidable for $\Pi = \textit{Stack}$,
and undecidable for $\Pi \in \{\textit{Tape}, H\}$.
\end{theorem}
\begin{proof}
For $\Pi \in \{0, F\}$, controlled automata reduce to nondeterministic finite automata (NFAs).
The emptiness problem for NFAs can be decided in polynomial time via a reachability check over the underlying state graph \cite{Hopcroft2007}.

For $\Pi = \textit{Stack}$, the model corresponds to a pushdown automaton, and emptiness is known to be decidable using standard PDA algorithms \cite{Hopcroft2007}.

For $\Pi \in \{\textit{Tape}, H\}$, the policy mechanism can simulate a Turing machine, and the emptiness problem reduces to the halting problem, which is undecidable~\cite{Sipser2013, Hopcroft1979}.
\end{proof}
\subsection*{Nondeterministic Languages on Infinite Words}
Next, we consider the reactive aspect of a controlled automaton with a finite set of states by considering its behavior under an infinite sequence of activity completions as follows.
\begin{definition}
A {\it controlled Büchi automaton} is a 6-tuple $S_\omega = (Q, A, C, \rightarrow, Q_0, F) $ where:
\begin{itemize}
  \item  $ (Q, A, C, \rightarrow, Q_0) $ is a controlled automaton with a finite set of states,
  \item $ F \subseteq Q $: set of {\it accepting states} (visited infinitely often).
\end{itemize}
\end{definition}
\begin{definition} \label{def.language.prob.finite}
 Let $S_\omega = (Q, A, C, \rightarrow, Q_0, F) $  be a controlled Büchi automaton and $\pi$ a policy under \( S_\omega\).
The $\omega$-language {\it accepted } by $S_\omega$ under \(\pi \) is defined as:
\[
\begin{split}
L^\omega_{\pi}(S_\omega)  =  \{ a_0 a_1 a_2 \dots \in A^\omega \mid  \exists  \rho  = & (q_0, a_0)(q_1,a_1) \dots  \text{ in }  S_\omega \text{ under policy } \pi
 \\
 & \text{ such that }  q_0 \in Q_0 \text{ and }
\text{Inf}(\rho) \cap F \neq \emptyset \}
\end{split}
\]
where $ \text{Inf}(\rho) $ denotes the set of states that occur infinitely often in the projection of $\rho$ onto
$Q$.
\end{definition}
\begin{proposition} \label{prop3.51}
Define the following $\omega$-language classes:
\begin{align*}
L^\omega_0(S_\omega) &= \{ L^\omega_{\pi}(S_\omega) \mid \pi \text{ is a memoryless policy}\}, \\
L^\omega_F(S_\omega) &= \{ L^\omega_{\pi}(S_\omega) \mid \pi \text{ is a finite-memory policy} \}, \\
L^\omega_{\text{Stack}}(S_\omega) &= \{ L^\omega_{\pi}(S_\omega) \mid \pi \text{ is a stack-augmented policy} \}, \\
L^\omega_{\text{Tape}}(S_\omega) &= \{ L^\omega_{\pi}(S_\omega) \mid \pi \text{ is a tape-augmented policy} \}, \\
L^\omega_H(S_\omega) &= \{ L^\omega_{\pi}(S_\omega) \mid \pi \text{ is a history-dependent policy} \}.
\end{align*}
Then, we have:
\[
L^\omega_0(S_\omega) \subseteq L^\omega_F(S_\omega) \subseteq L^\omega_{\text{Stack}}(S_\omega) \subseteq L^\omega_{\text{Tape}}(S_\omega) \subseteq L^\omega_H(S_\omega)
\]
\end{proposition}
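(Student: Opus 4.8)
The plan is to transport the finite-word hierarchy of Proposition~\ref{prop3.5} and the global separations of Theorem~\ref{thm.global.finite} to the $\omega$-setting, handling the four inclusions by policy-class subsumption and the four strictness claims by $\omega$-word separating witnesses obtained through infinitary iteration of the corresponding finite-word examples.

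First I would settle the inclusions. Each policy class is a syntactic degeneration of the next: a memoryless policy is a finite-memory policy with a single memory state; a finite-memory policy is a stack-augmented policy that neither reads nor alters the stack; a stack-augmented policy is a tape-augmented policy whose head discipline emulates a stack; and a tape-augmented policy is in particular a (computable) history-dependent function $(Q\times A)^*\to C$. In each case the degenerate encoding induces exactly the same closed-loop runs $\rho$ on $S_\omega$, hence the same limit set $\Inf(\rho)$, and therefore the same accepted $\omega$-language by Definition~\ref{def.language.prob.finite}. This yields
\[
L^\omega_0(S_\omega)\subseteq L^\omega_F(S_\omega)\subseteq L^\omega_{\text{Stack}}(S_\omega)\subseteq L^\omega_{\text{Tape}}(S_\omega)\subseteq L^\omega_H(S_\omega).
\]

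For strictness I would fix a sufficiently expressive controlled automaton $S_\omega$ (an $\omega$-analogue of the plant used in the proof of Theorem~\ref{thm.global.finite}) that can emit arbitrary finite blocks over $\{a,b,c\}$ and offer, at each block boundary, a choice to continue or to pass through the distinguished accepting state in $F$, so that acceptance records that infinitely many blocks were completed correctly. The separating witnesses are infinitary iterations of the finite-word examples: for $L^\omega_F\subsetneq L^\omega_{\text{Stack}}$ the language $W_2=\{a^{n_1}b^{n_1}a^{n_2}b^{n_2}\cdots\mid n_i\ge 1\}$, realizable by a stack policy that matches each $b$-block against the preceding $a$-block but not $\omega$-regular; for $L^\omega_{\text{Stack}}\subsetneq L^\omega_{\text{Tape}}$ the triple-counting language $W_3=\{a^{n_1}b^{n_1}c^{n_1}a^{n_2}b^{n_2}c^{n_2}\cdots\mid n_i\ge 1\}$, realizable by a tape policy but beyond $\omega$-context-free power; for $L^\omega_0\subsetneq L^\omega_F$ a bounded-phase $\omega$-language whose acceptance depends on a fixed modular count that the plant states alone cannot supply, exactly as in Lemma~\ref{lem.finite.memory.finite}; and for $L^\omega_{\text{Tape}}\subsetneq L^\omega_H$ a cardinality argument, namely that there are only countably many computable (tape-augmented) policies but uncountably many history-dependent functions $(Q\times A)^*\to C$, so some $\omega$-language realized by a non-computable policy escapes $L^\omega_{\text{Tape}}(S_\omega)$.

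The main obstacle is the negative direction at the stack and tape levels: proving that $W_2$ is not accepted under any finite-memory policy and that $W_3$ is not accepted under any stack-augmented policy, now for $\omega$-words under the $\Inf(\rho)\cap F\neq\emptyset$ condition rather than for finite acceptance. This requires the $\omega$-analogues of the pumping arguments, an $\omega$-regular pumping lemma applied to the B\"uchi automaton induced by $(S_\omega,\pi)$ to defeat finite memory, and the pumping/closure properties of $\omega$-context-free languages to defeat the stack, together with care that B\"uchi acceptance interacts correctly with the controller's memory (so that pumping a block preserves both membership and the infinitely-often visit to $F$). Once these classical $\omega$-language separations are in place, the strict chain follows in parallel with Proposition~\ref{prop3.5}.
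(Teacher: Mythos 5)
Your proposal takes essentially the same route as the paper: the paper's (sketch-level) justification of this proposition simply points to the arguments of Theorem~\ref{thm.global.finite} adapted to B\"uchi acceptance (cf.\ Theorem~\ref{thm.global.infinite} and Lemma~\ref{lem.finite.memory.infinite}), namely inclusions by degenerate policy encodings, strictness via separating witnesses of the $\{a^n b^n\}$ / $\{a^n b^n c^n\}$ type and modular counting, and computability/cardinality for the last step. Your write-up is in fact somewhat more careful than the paper's, since it explicitly flags the need for a sufficiently expressive plant $S_\omega$ and for the $\omega$-pumping arguments in the negative directions, which the paper leaves implicit.
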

\begin{definition} \label{equivinfinite}
Let $S_\omega = (Q, A, C, \rightarrow, Q_{0}, F)$ and $S_\omega' = (Q', A', C', \rightarrow', Q'_{0}, F')$ be two equivalent controlled Büchi automata 
with the same activity alphabet and set of control actions (i.e., $ A=A'$ and $ C=C'$).  Let $(Q, A, C,  \rightarrow, Q_{0})$ and $(Q', A', C',  \rightarrow', Q'_{0})$ be two 
equivalent controlled automata under a bisimulation $\gamma$ as in Definition ~\ref{def3.5}.  Then, $S_\omega$ and $S'_\omega$ are said to be {\it equivalent} if: 
\begin{itemize}
\item for any $q \in F$, there exists $q' \in F'$ where $(q,q') \in \gamma$; also for any $q' \in F'$, there exists $q \in F$ where $(q'',q) \in \gamma$.  
\end{itemize}
\end{definition}
\begin{proposition}
Let $ S_\omega$ and $S'_\omega$ be two equivalent controlled Büchi automata with accepting states as in Definition~\ref{equivinfinite}.  Consider the classes of languages as defined in Proposition~\ref {prop3.51}. Then, we have:
\begin{align*}
L^\omega_0(S_\omega) &=  L^\omega_0(S'_\omega)\\ 
 L^\omega_F(S_\omega) &= L^\omega_F(S'_\omega) \\
 L^\omega_{\text{Stack}}(S_\omega) &=  L^\omega_{\text{Stack}}(S'_\omega) \\
 L^\omega_{\text{Tape}}(S_\omega) &= L^\omega_{\text{Tape}}(S'_\omega)\\
 L^\omega_H(S_\omega) &= L^\omega_H(S'_\omega) \\
\end{align*}
Also, for equivalent polices $\pi$ and $\pi'$ of $S_\omega$ and $S'_\omega$, respectively,  as in Definition~\ref{equivpolicy}, we have:
\begin{align*}
L^\omega_{\pi,0}(S_\omega) &= L^\omega_{\pi',0}(S'_\omega) \\ 
L^\omega_{\pi,F}(S_\omega) &= L^\omega_{\pi',F}(S'_\omega) \\
L^\omega_{\pi,\text{Stack}}(S_\omega) &= L^\omega_{\pi',\text{Stack}}(S'_\omega) \\
 L^\omega_{\pi,\text{Tape}}(S_\omega) &= L^\omega_{\pi',\text{Tape}}(S'_\omega)  \\
 L^\omega_{\pi,H}(S_\omega) &= L^\omega_{\pi',H}(S'_\omega) \\
\end{align*}
\end{proposition}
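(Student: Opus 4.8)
The plan is to mirror the finite-word argument (the preceding proposition on finite words), replacing the finite-execution matching by an infinite-execution matching and then verifying that the Büchi acceptance condition is transported along the bisimulation $\gamma$ of Definition~\ref{equivinfinite}. Throughout I would prove the per-policy equalities $L^\omega_\pi(S_\omega)=L^\omega_{\pi'}(S'_\omega)$ for equivalent policies $\pi,\pi'$ first, and then lift these to the set-level equalities $L^\omega_\Pi(S_\omega)=L^\omega_\Pi(S'_\omega)$ by transporting policies across $\gamma$ while preserving their type. By the symmetry of $\gamma$ it suffices to prove one inclusion in each case.

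First I would lift the matching of runs to the infinite setting. Given $w=a_0a_1\cdots\in L^\omega_\pi(S_\omega)$ with a witnessing run $\rho=(q_0,a_0)(q_1,a_1)\cdots$ under $\pi$ satisfying $q_0\in Q_0$ and $\mathrm{Inf}(\rho)\cap F\neq\emptyset$, I would build a matched run $\rho'$ in $S'_\omega$ inductively: the initial-state clause of Definition~\ref{def3.5} supplies $q_0'\in Q_0'$ with $(q_0,q_0')\in\gamma$, and the forward transfer clause, applied to $q_i\xrightarrow{a_i,c_i}q_{i+1}$, supplies $q_{i+1}'$ with $(q_{i+1},q_{i+1}')\in\gamma$ and $q_i'\xrightarrow{a_i,c_i}q_{i+1}'$. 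Since $Q'$ is finite the successor choices are finitely branching, so the tree of matched finite prefixes is infinite and finitely branching, and König's lemma yields a single infinite matched run $\rho'$ reading the same word $w$, with $(q_i,q_i')\in\gamma$ for all $i$. By Definition~\ref{equivpolicy}, applied to every finite prefix of the pair $(\rho,\rho')$, the controls selected by $\pi'$ along $\rho'$ coincide with the $c_i$ selected by $\pi$ along $\rho$; hence $\rho'$ is a genuine run under $\pi'$.

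The crux is showing that $\rho'$ is Büchi-accepting, i.e.\ $\mathrm{Inf}(\rho')\cap F'\neq\emptyset$. Fix $q^\ast\in\mathrm{Inf}(\rho)\cap F$. At each of the infinitely many positions $i$ with $q_i=q^\ast$ the matched state satisfies $(q^\ast,q_i')\in\gamma$, and since $Q'$ is finite some $q'^\ast\in Q'$ recurs at infinitely many such positions; thus $q'^\ast\in\mathrm{Inf}(\rho')$ and $(q^\ast,q'^\ast)\in\gamma$. It remains to conclude $q'^\ast\in F'$, and this is the delicate step: the accepting-state clause of Definition~\ref{equivinfinite} must be used so that a state $\gamma$-related to the accepting state $q^\ast$ is itself accepting. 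Concretely I would argue that the bisimulation together with the two-sided covering of accepting states forces $\gamma$ to respect the accepting partition on the reachable portion of the synchronous product, so that $(q^\ast,q'^\ast)\in\gamma$ with $q^\ast\in F$ yields $q'^\ast\in F'$. I expect this to be the main obstacle: the covering condition ``every $F$-state is $\gamma$-related to some $F'$-state'' is weaker than ``$\gamma$ maps $F$-states to $F'$-states,'' so care is needed to ensure that the recurrent matched state is accepting rather than merely some $\gamma$-partner of an accepting state. Finiteness of the state space is essential here, and without the acceptance-respecting property the implication can genuinely fail. Granting this, $w\in L^\omega_{\pi'}(S'_\omega)$; the reverse inclusion follows by symmetry of $\gamma$, giving the per-policy equality.

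Finally I would pass from fixed policies to the policy classes. For each $\Pi\in\{0,F,\mathrm{Stack},\mathrm{Tape},H\}$ and each $\pi\in\Pi$ on $S_\omega$ I would construct an equivalent policy $\pi'\in\Pi$ on $S'_\omega$ by transporting $\pi$ along $\gamma$: choosing for every state of $S$ a $\gamma$-partner in $S'$ (a genuine choice only when $\gamma$ is not a bijection; immediate in the isomorphic case), one relabels the control rule while leaving the memory component untouched, so a memoryless policy stays memoryless, a finite-memory policy keeps the same $M,m_0,\delta$, and similarly the stack and tape disciplines are preserved. Equivalence of $\pi$ and $\pi'$ in the sense of Definition~\ref{equivpolicy} holds by construction, so the per-policy equality of the previous paragraph gives $L^\omega_\pi(S_\omega)\in L^\omega_\Pi(S'_\omega)$; the symmetric transport gives the reverse containment, and hence $L^\omega_\Pi(S_\omega)=L^\omega_\Pi(S'_\omega)$ for every $\Pi$. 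The only routine check deferred here is that the transported controller retains its stack or tape discipline for $\Pi\in\{\mathrm{Stack},\mathrm{Tape}\}$, since $\gamma$ acts only on the plant states and not on the auxiliary storage.
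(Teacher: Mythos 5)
Your reduction to a per-policy equality, the greedy $\gamma$-matching of infinite runs (the appeal to K\"onig's lemma is unnecessary: the transfer clause of Definition~\ref{def3.5} lets you extend the matched run step by step), and the extraction of a recurrent matched state $q'^\ast$ with $(q^\ast,q'^\ast)\in\gamma$ are all fine. The genuine gap is exactly the step you flagged and then waved through: you assert that bisimulation plus the two-sided covering clause of Definition~\ref{equivinfinite} ``forces $\gamma$ to respect the accepting partition on the reachable portion of the synchronous product,'' and proceed by ``granting this.'' That assertion is false, and no argument can close the gap from the stated hypotheses. Counterexample: let $S_\omega$ have one state $s$ with self-loop $s\xrightarrow{a,c}s$ and $Q_0=F=\{s\}$; let $S'_\omega$ have states $t_1,t_2$ with $t_1\xrightarrow{a,c}t_2$, $t_2\xrightarrow{a,c}t_2$ and $Q'_0=F'=\{t_1\}$. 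The symmetric relation $\gamma$ relating $s$ to both $t_1$ and $t_2$ satisfies every clause of Definition~\ref{def3.5}, and the covering clause of Definition~\ref{equivinfinite} holds ($s\in F$ is related to $t_1\in F'$ and conversely); the unique (trivial) policies are equivalent in the sense of Definition~\ref{equivpolicy}. Yet $L^\omega_\pi(S_\omega)=\{a^\omega\}$ while $L^\omega_{\pi'}(S'_\omega)=\emptyset$, since the only run of $S'_\omega$ visits $F'$ once and then stays in $t_2\notin F'$. So the proposition, read literally, is false; it becomes provable (by essentially your argument) only if $\gamma$ is additionally required to \emph{respect} acceptance, i.e.\ $(q,q')\in\gamma$ implies ($q\in F$ iff $q'\in F'$). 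Note that the paper states this proposition without proof, so the defect traces back to Definition~\ref{equivinfinite} itself; your write-up correctly located the weak point but did not (and could not) repair it.

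A secondary gap sits in your lifting step. Transporting a memoryless (or finite-memory, stack-, tape-augmented) policy by fixing, for each state of $S'$, one $\gamma$-partner in $S$ and relabeling need not produce an \emph{equivalent} policy in the sense of Definition~\ref{equivpolicy}: when a state $q'$ of $S'$ has several $\gamma$-partners on which $\pi$ prescribes different controls, a matched run may pass through a partner other than the chosen one, and indeed an equivalent policy of the \emph{same type} may simply fail to exist (e.g.\ when $S'$ collapses two states of $S$ that $\pi$ treats differently, any policy equivalent to $\pi$ must remember which shadow state it is in, which can exceed memoryless power). So the set-level equalities $L^\omega_\Pi(S_\omega)=L^\omega_\Pi(S'_\omega)$ cannot be obtained purely by ``transport along $\gamma$ plus the per-policy equality''; one must either argue directly that the same activity-language is achievable by some (possibly different, non-equivalent) policy of type $\Pi$ on $S'_\omega$, or build the transported controller so that it simulates a $\gamma$-matched shadow run of $S_\omega$ and show this simulation stays within the resource bounds of the class $\Pi$.
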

\subsection*{Global Language Families for Controlled B\"uchi Automata}
We now summarize the structure and properties of global language families defined by controlled B\"uchi
automata, extending the hierarchy developed for controlled automata on finite words.
\begin{definition} \label{PolicyTypesOmega}
Let $\mathcal{S}_\omega$ be the set of all controlled B\"uchi automata. For each class of control policies $\Pi \in \{0, F, \text{Stack}, \text{Tape}, H\}$, define:
\[
\mathbb{L}^\omega_\Pi := \left\{L^\omega_\pi(S_\omega) \mid \pi \in \Pi,\ S_\omega \in \mathcal{S}_\omega \right\} \subseteq \mathcal{P}(2^{A^\omega})
\]
where each $L^\omega_\pi(S_\omega)$ is the set of infinite words $w \in A^\omega$ generated by $S_\omega$ under policy $\pi$ that satisfy the B\"uchi acceptance condition.
\end{definition}
\subsection*{Interpretation}
Each $\mathbb{L}^\omega_\Pi$ is a \emph{set of infinite-word language classes}, indexed by the automaton $S_\omega$ and parameterized by the control policy class $\Pi$. Just as in the finite-word setting, this forms a second-order structure:
\[
\mathbb{L}^\omega_\Pi \subseteq \mathcal{P}(2^{A^\omega})
\]
where each element is a set of infinite-word languages.
\subsection*{Hierarchy}
We now lift our analysis to the setting of infinite words. Analogous to the finite-word hierarchy of language classes under various controller types, we obtain a similar strict hierarchy for $\omega$-languages under Büchi acceptance. The following theorem formalizes this structure.

\begin{theorem} \label{thm.global.infinite}
For the $\omega$-language classes defined in Definition~\ref{PolicyTypesOmega}, we have
\[
\mathbb{L}^\omega_0 \subset \mathbb{L}^\omega_F \subset
\mathbb{L}^\omega_{\text{Stack}} \subset
\mathbb{L}^\omega_{\text{Tape}}
= \mathbb{L}^\omega_{H^{\text{comp}}}
\subset
\mathbb{L}^\omega_H \, .
\]
\end{theorem}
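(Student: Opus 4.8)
The plan is to transfer the finite-word hierarchy of Theorem~\ref{thm.global.finite} to the Büchi setting, replacing finite acceptance by the condition $\Inf(\rho)\cap F\neq\emptyset$ and each finite-word separating language by an $\omega$-counterpart. As in the finite case, the policy class fixes the computational power layered on top of a finite controlled Büchi plant, so that the closed-loop system recognizes, respectively, the $\omega$-regular, $\omega$-context-free, and Turing-recognizable $\omega$-languages as $\Pi$ ranges over $\{0,F\}$, $\{\mathrm{Stack}\}$, and $\{\mathrm{Tape}\}$. The proof splits into three parts: the inclusions (by policy subsumption), the equality $\mathbb{L}^\omega_{\text{Tape}}=\mathbb{L}^\omega_{H^{\text{comp}}}$ (by computability of policies), and the strictness of the first three inclusions (by $\omega$-separators).

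First I would establish all inclusions by the same policy-subsumption arguments used for finite words. A memoryless policy is a finite-memory policy with a single memory state; a finite-memory policy is a stack-augmented policy that never consults below a bounded stack; and a stack-augmented policy is a tape-augmented policy whose tape simulates a pushdown store. In each case the induced run of the plant—and hence the set $\Inf(\rho)$ of states visited infinitely often—is unchanged, so the accepted $\omega$-language is preserved, giving
\[
\mathbb{L}^\omega_0 \subseteq \mathbb{L}^\omega_F \subseteq \mathbb{L}^\omega_{\text{Stack}} \subseteq \mathbb{L}^\omega_{\text{Tape}}.
\]
The equality $\mathbb{L}^\omega_{\text{Tape}}=\mathbb{L}^\omega_{H^{\text{comp}}}$ follows exactly as in the finite case: a tape-augmented controller is a Turing machine that reads the history $h_i$ and emits $\pi(h_i)$, and conversely every computable $\pi:(Q\times A)^*\to C$ is realized by such a machine; since the Büchi condition depends only on the plant's state sequence, this equivalence is orthogonal to acceptance. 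Finally, $\mathbb{L}^\omega_{H^{\text{comp}}}\subseteq\mathbb{L}^\omega_H$ is immediate, since every computable history-dependent policy is in particular history-dependent.

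Next I would prove strictness of the first three inclusions by exhibiting $\omega$-separators. For $\mathbb{L}^\omega_0\subsetneq\mathbb{L}^\omega_F$ I would adapt the counting construction of Lemma~\ref{lem.finite.memory.finite}: on a plant emitting symbols over $\{a,b\}$ that marks an accepting state at block boundaries, a $k$-state modular counter enforces the periodic $\omega$-language $(a^{k}b)^\omega$, and varying $k$ yields a countable family of distinct singleton $\omega$-regular languages in $L^\omega_F(S)$; no memoryless controller can track position modulo $k$, so this family lies outside $\mathbb{L}^\omega_0$. For $\mathbb{L}^\omega_F\subsetneq\mathbb{L}^\omega_{\text{Stack}}$ I would use the $\omega$-context-free language $\{a^{n}b^{n}\mid n\ge1\}^{\omega}$, recognized by a Büchi pushdown controller (push on $a$, pop on matching $b$, visit the accepting state at each block boundary) but not $\omega$-regular. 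For $\mathbb{L}^\omega_{\text{Stack}}\subsetneq\mathbb{L}^\omega_{\text{Tape}}$ I would use $\{a^{n}b^{n}c^{n}\mid n\ge1\}^{\omega}$, recognized by a tape-augmented controller but not $\omega$-context-free.

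The main obstacle I anticipate lies in the strictness arguments rather than the inclusions: one must confirm that each classical finite-word separator genuinely lifts to an $\omega$-separator, i.e.\ that no $\omega$-regular (respectively Büchi-pushdown) acceptor recognizes the infinitely-blocked language. This requires the Ramsey/pumping-type arguments for $\omega$-regular languages and the pumping lemma for $\omega$-context-free languages, applied to the finitary block structure so that recognizing $\{a^{n}b^{n}\mid n\ge1\}^\omega$ (resp.\ $\{a^{n}b^{n}c^{n}\mid n\ge1\}^\omega$) would force $\{a^nb^n\}$ to be regular (resp.\ $\{a^nb^nc^n\}$ to be context-free), a contradiction. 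A secondary technical point is a careful plant design in which the controller simultaneously enforces the block-counting constraint and guarantees that an accepting state is revisited infinitely often, so that the Büchi condition is met on exactly the intended $\omega$-words. Once these separators are in place, chaining the strict inclusions with the equality and the final containment yields the stated hierarchy.
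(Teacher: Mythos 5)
Your proposal is correct and follows essentially the same route as the paper: the paper's proof sketch simply states that the argument parallels Theorem~\ref{thm.global.finite}, with inclusions given by policy subsumption, the tape/computable-history equality carried over, and strictness witnessed by $\omega$-analogues of the finite-word separators (indeed the paper's Lemma~\ref{lem.finite.memory.infinite} uses exactly your $(a^k b)^\omega$ family). Your version is in fact more detailed than the paper's sketch, since you name the concrete $\omega$-context-free and non-$\omega$-context-free separators and flag the pumping-type arguments needed to certify them.
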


\begin{proof}
The argument parallels that of Theorem~\ref{thm.global.finite}  for finite-word
languages, adapted to $\omega$-languages under Büchi acceptance using Lemma~\ref{lem.finite.memory.infinite}.
Each inclusion corresponds to a strictly increasing level of controller power:
memoryless~$\subset$~finite-memory~$\subset$~stack-augmented~$\subset$~tape-augmented
$=$ computable history-dependent~$\subset$~arbitrary history-dependent.
Strictness follows from analogous examples where finite memory distinguishes
finite phases, stacks enforce well-nested obligations, and tapes simulate
arbitrary computations.  Hence,
\[
\mathbb{L}^\omega_0 \subset \mathbb{L}^\omega_F \subset
\mathbb{L}^\omega_{\text{Stack}} \subset
\mathbb{L}^\omega_{\text{Tape}}
= \mathbb{L}^\omega_{H^{\text{comp}}}
\subset
\mathbb{L}^\omega_H \, .
\]
\end{proof}

\begin{lemma}\label{lem.finite.memory.infinite}
There exists a controlled B\"uchi automaton $S_\omega$ such that 
$\mathbb{L}^\omega_F(S_\omega)$ contains a countably infinite 
family of distinct $\omega$-regular languages.
\end{lemma}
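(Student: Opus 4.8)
The plan is to mirror the construction of Lemma~\ref{lem.finite.memory.finite}, transporting it from finite words to $\omega$-words under the Büchi acceptance condition of Definition~\ref{def.language.prob.finite}. The single essential change is the alphabet: over a one-letter alphabet $\{a\}$ the set $\{a\}^\omega$ is the singleton $\{a^\omega\}$, so no controlled Büchi automaton can realize more than one nonempty $\omega$-language, and the one-letter counting trick collapses. I would therefore work over the two-letter alphabet $A=\{a,b\}$, using $b$ as a \emph{separator} through which acceptance is routed.

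For each $k\ge 1$ I would take as witness the $\omega$-regular language $L_k=\bigl((a^k)^{*}\,b\bigr)^\omega$, i.e.\ the set of infinite words consisting of infinitely many blocks of $a$'s, each of length a multiple of $k$, separated by single occurrences of $b$. Each $L_k$ is $\omega$-regular, being the $\omega$-iteration of a regular language. The family $\{L_k\mid k\ge 1\}$ is countably infinite and pairwise distinct: whenever $0<j<k$, the word $(a^{j}b)^\omega$ lies in $L_j$ but not in $L_k$, since $j$ is not a multiple of $k$; hence $L_j\ne L_k$.

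Next I would fix one controlled Büchi automaton $S_\omega$ independent of $k$. Take $Q=\{q_0,q_1\}$ with initial set $\{q_0\}$ and accepting set $F=\{q_1\}$, control actions $C=\{c_{\mathrm{cont}},c_{\mathrm{free}}\}$, and transitions so that $c_{\mathrm{cont}}$ enables only the letter $a$ (sending either state to $q_0$), while $c_{\mathrm{free}}$ enables both $a$ (to $q_0$) and $b$ (to the accepting state $q_1$). Thus a run visits $q_1$ infinitely often iff it emits $b$ infinitely often. For each $k$ I would define the finite-memory policy $\pi_k$ whose memory is a counter on $\{0,1,\dots,k-1\}$ tracking, modulo $k$, the number of $a$'s since the last $b$; its output function returns $c_{\mathrm{free}}$ when the counter is $0$ and $c_{\mathrm{cont}}$ otherwise, its update increments the counter modulo $k$ on reading $a$ and resets it on reading $b$. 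Since $\pi_k$ uses only $k$ memory states, it is a finite-memory policy in the sense of the earlier definition, and the single control action returned at each step is consistent with the formal output function $\gamma$, with the activity itself chosen nondeterministically among those enabled by the returned control action.

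The final step is to verify $L^\omega_{\pi_k}(S_\omega)=L_k$ (notation as in Proposition~\ref{prop3.51}). For the inclusion $\subseteq$: under $\pi_k$ a $b$ can be emitted only when the counter is $0$, i.e.\ after a multiple of $k$ many $a$'s, so every accepted word has the block structure of $L_k$; and since $q_1$ is reached exactly on $b$, the Büchi condition forces infinitely many separators, placing the word in $L_k$. For $\supseteq$: every word of $L_k$ is generated by choosing $b$ at the permitted ($c_{\mathrm{free}}$) steps and $a$ elsewhere, and its run visits $q_1$ infinitely often. Combined with the distinctness and $\omega$-regularity from the second step, this shows that $L^\omega_F(S_\omega)$ contains the countably infinite family $\{L_k\mid k\ge 1\}$. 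The main obstacle, mild but genuine, is the Büchi bookkeeping: unlike the finite-word case, where termination is an explicit action, here acceptance is an infinitary condition, so I must route it through the separator $b$ and check that runs emitting only finitely many $b$'s (in particular $a^\omega$, which the policy does permit) are correctly excluded by the acceptance condition rather than by the policy.
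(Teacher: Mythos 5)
Your proof is correct and takes essentially the same approach as the paper's: a two-letter plant over $\{a,b\}$ controlled by finite-memory mod-$k$ counter policies, with the B\"uchi condition supplying the infinitary acceptance. The only cosmetic difference is that your controller \emph{permits} rather than \emph{forces} $b$ when the counter is $0$, so your witness languages are $\bigl((a^k)^*b\bigr)^\omega$ instead of the paper's singletons $(a^k b)^\omega$; both yield a countably infinite family of distinct $\omega$-regular languages, and your explicit check that runs with finitely many $b$'s (e.g.\ $a^\omega$) are rejected by the acceptance condition is a point the paper's sketch sidesteps by forcing $b$.
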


\begin{proof}
This construction is the $\omega$-word analogue of Lemma~\ref{lem.finite.memory.finite}.
We build a controlled Büchi automaton $S_\omega$ over $\Sigma = \{a,b\}$
whose transitions allow two phases: an $a$-phase of length $k$, followed by
a $b$-reset, repeating forever. A finite-memory controller can store a
counter modulo $k$ and permit exactly $k$ $a$-steps before forcing a $b$,
then repeat. Under such a controller, the accepted $\omega$-language is
$(a^k b)^\omega$, which is $\omega$-regular.

By varying $k \ge 1$, we obtain a countably infinite family
$\{ (a^k b)^\omega \mid k \ge 1 \}$ of pairwise distinct $\omega$-regular
languages, each realizable by some finite-memory policy on the same
plant $S_\omega$. Thus $L_F^\omega(S_\omega)$ contains a countably
infinite family of distinct $\omega$-regular languages.
\end{proof}
As in the case of finite-word languages, the set of all $\omega$-regular languages realizable by finite-memory controllers on a fixed plant does not capture all infinite families of interest. In particular, even among countably infinite families of simple $\omega$-regular languages, realizability may require varying the underlying automaton. The following theorem makes this limitation precise.
\begin{theorem}\label{prop.finite.memory.nonrealizable.infinite}
There exists a countably infinite family of pairwise distinct $\ omega$-regular languages that is not equal to $L^\omega_F(S_\omega)$ 
for any single controlled B\"uchi automaton $S_\omega$.
\end{theorem}
\begin{proof}
This argument is the $\omega$-word analogue of Theorem~\ref{prop.nonrealizable.family}.
For each integer $k \ge 1$, define the $\omega$-regular language
$W_k = (a^k b)^\omega$, i.e., the ultimately periodic infinite word
consisting of blocks of $a^k$ followed by $b$, repeated forever.
For distinct $k$, these $W_k$ are distinct $\omega$-regular languages.

Let
\[
\mathcal{F}^\omega \;=\; \{\, W_p \mid p \text{ prime} \,\}
 \;=\; \{\, (a^p b)^\omega \mid p \text{ prime} \,\}.
\]
Clearly $\mathcal{F}^\omega$ is a countably infinite family.

Suppose, toward contradiction, that there exists a controlled B\"uchi
automaton $S_\omega$ such that $L_F^\omega(S_\omega) = \mathcal{F}^\omega$,
i.e., that by varying finite-memory policies on $S_\omega$, we obtain exactly
the languages in $\mathcal{F}^\omega$ and no others.

Fix a prime $p$. A finite-memory controller that enforces
$W_p = (a^p b)^\omega$ can be seen as a finite-state machine whose cycle
has period $p$: it allows exactly $p$ $a$'s, then a $b$, then repeats.
But from such a controller, we can build another finite-memory controller
that simply iterates that internal cycle $p$ times before emitting $b$.
The resulting policy produces $(a^{p^2} b)^\omega = W_{p^2}$.
Since $p^2$ is not prime, $W_{p^2} \notin \mathcal{F}^\omega$ by definition.
Yet $W_{p^2}$ would still lie in $L_F^\omega(S_\omega)$ because the modified
controller is still finite-memory.

This contradicts the assumption that $L_F^\omega(S_\omega)$ equals
$\mathcal{F}^\omega$. Hence, there is a countably infinite family of distinct
$\omega$-regular languages (namely $\mathcal{F}^\omega$) that is not equal
to $L_F^\omega(S_\omega)$ for any $S_\omega$.
\end{proof}
\begin{remark}
These results parallel Lemma~\ref {lem.finite.memory.finite} and Theorem~\ref{prop.nonrealizable.family} for finite words.
The class $\mathbb{L}^\omega_F$ is strictly between $\mathbb{L}^\omega_0$ and $\mathbb{L}^\omega_{\mathrm{Stack}}$
and inherits the expressive power and closure properties of $\omega$-regular languages.
\end{remark}
The preceding results establish the structural hierarchy of the $\omega$-language
families and illustrate how finite-memory controllers give rise to countably
infinite yet well-characterized $\omega$-regular languages. We now turn to further
properties of these language classes, focusing on their algebraic behavior
under standard operations and their closure characteristics across policy
types. The following four propositions formally capture these closure relations.

The closure properties established in the preceding subsection for finite
languages naturally extend to the $\omega$-language setting once the
acceptance semantics of controlled B\"uchi automata are taken into account.
In particular, the same distinction between \emph{uniform} (constructive)
and \emph{standard} (semantic) closure remains meaningful, as the basic
operations on $\omega$-languages—such as union, intersection, concatenation
with finite prefixes, and B\"uchi projection—can likewise be viewed either
as syntactic transformations on plant structures or as semantic operations
on the induced language families.  To make this correspondence precise, we
now introduce the formal definition of uniform and standard closure under
$\omega$-language operations.

\subsection*{Closure Properties of Controlled $\omega$-Language Families}
The closure analysis naturally extends to the $\omega$-language setting, where infinite behaviors are captured by 
controlled B\"uchi automata.  We formalize the corresponding closure notions for these models and relate them to the 
classical $\omega$-language hierarchy established by B\"uchi, McNaughton, and Landweber.  
\begin{definition}
\label{def:omega.closure.uniform.standard}
Let $\mathcal{S}_{\omega}$ denote the class of controlled B\"uchi automata
(plants accepting $\omega$-languages), and for each policy class
$\Pi \in \{0, F, \mathrm{Stack}, \mathrm{Tape}\}$ let
$L^{\omega}_{\Pi}(S)$ denote the $\omega$-languages realizable by
policies $\pi \in \Pi$ on a plant $S \in \mathcal{S}_{\omega}$.

\smallskip
\noindent\textbf{(a) Uniform Closure.}
The family
\[
\mathbb{L}^{\omega}_{\Pi} =
 \{\, L^{\omega}_{\Pi}(S) \mid S \in \mathcal{S}_{\omega} \,\}
\]
is said to be \emph{uniformly closed} under a $k$-ary
$\omega$-language operation
$Op : (P(A^{\omega}))^{k} \!\to\! P(A^{\omega})$
if there exists a computable constructor
\[
\Phi^{\omega}_{Op} : \mathcal{S}_{\omega}^{k} \longrightarrow \mathcal{S}_{\omega}
\]
such that for all plants $S_{1},\dots,S_{k} \in \mathcal{S}_{\omega}$ and
all $L_{i} \in L^{\omega}_{\Pi}(S_{i})$ ($1\!\le\! i\!\le\! k$),
\[
Op(L_{1},\dots,L_{k})
   \in L^{\omega}_{\Pi}\bigl(\Phi^{\omega}_{Op}(S_{1},\dots,S_{k})\bigr).
\]
Thus, the same syntactic transformation
$\Phi^{\omega}_{Op}$ must realize the operation $Op$
for all admissible $\omega$-languages in the class.

\smallskip
\noindent\textbf{(b) Standard (Pointwise) Closure.}
The global $\omega$-language family
\[
\mathcal{C}^{\omega}_{\Pi}
   = \bigcup_{S\in\mathcal{S}_{\omega}} L^{\omega}_{\Pi}(S)
\]
is \emph{closed} under $Op$ if for all
$L_{1},\dots,L_{k} \in \mathcal{C}^{\omega}_{\Pi}$,
\[
Op(L_{1},\dots,L_{k}) \in \mathcal{C}^{\omega}_{\Pi}.
\]
Equivalently, for every collection of plants
$S_{1},\dots,S_{k}\in\mathcal{S}_{\omega}$ and policies
$\pi_{1},\dots,\pi_{k}\in\Pi$, there exist a plant
$S' \in \mathcal{S}_{\omega}$ and a policy $\pi'\in\Pi$ such that
\[
Op\!\bigl(L^{\omega}_{\pi_{1}}(S_{1}),\dots,L^{\omega}_{\pi_{k}}(S_{k})\bigr)
   = L^{\omega}_{\pi'}(S').
\]
In contrast with the uniform case, no explicit structural correspondence
between $S'$ and the original plants is required.
\end{definition}
Having introduced the notion of closure for controlled B\"uchi automata,
we now summarize the corresponding algebraic properties of the induced
$\omega$-language families.  
As in the finite-word setting, these results characterize how the expressive
classes $\mathcal{C}^{\omega}_{0}$, $\mathcal{C}^{\omega}_{F}$,
$\mathcal{C}^{\omega}_{\mathrm{Stack}}$, and
$\mathcal{C}^{\omega}_{\mathrm{Tape}}$ behave under fundamental operations
such as union, intersection, and complement.
Two complementary notions of closure are relevant here:
the \emph{standard} (language-level) closure, which concerns the semantic
stability of these families under the operations, and the
\emph{uniform} (model-level) closure, which requires the existence of
explicit constructive transformations on controlled B\"uchi automata that
realize those operations uniformly across all admissible policies.
Unless stated otherwise, the propositions below refer to the
standard notion, while the uniform counterpart is discussed later when
constructive realizations can be defined explicitly.
\begin{remark}
\label{rem:omega.closure.types}
The distinction between \emph{standard} and \emph{uniform} closure introduced
above applies equally to the $\omega$-language setting.  
In this context, uniform closure corresponds to the existence of explicit
constructive transformations~$\Phi_{\mathrm{Op}}^{\omega}$ on controlled
B\"uchi automata that realize each operation at the model level.
Unless stated otherwise, the closure results in this subsection are stated
with respect to the \emph{standard} notion.
\end{remark}
\begin{proposition}
\label{prop.omega.regular.closure}
$\mathcal{C}^{\omega}_0 = \mathcal{C}^{\omega}_F = \mathrm{REG}^{\omega}$ is closed under
finite union, intersection, and complementation.
\end{proposition}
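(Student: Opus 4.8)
The plan is to establish the identity $\mathcal{C}^{\omega}_0 = \mathcal{C}^{\omega}_F = \mathrm{REG}^{\omega}$ by a short chain of inclusions, and then to derive the three closure properties from the classical theory of $\omega$-regular languages. Before anything else I would emphasize the conceptual point that keeps this statement consistent with Theorem~\ref{thm.global.infinite}: the equality asserted here concerns the \emph{global} (pointwise) families $\mathcal{C}^{\omega}_\Pi = \bigcup_{S_\omega \in \mathcal{S}_\omega} L^{\omega}_\Pi(S_\omega)$, whereas the strict separation $\mathbb{L}^{\omega}_0 \subsetneq \mathbb{L}^{\omega}_F$ concerns the second-order families of Definition~\ref{PolicyTypesOmega}, i.e.\ \emph{which sets of languages} a single plant can realize. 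The collapse of the unions therefore does not contradict the strictness of the hierarchy one level up; this is exactly the $\omega$-analogue of the finite-word situation $\bigcup \mathbb{L}_F = \mathbf{REG}$ recorded in Proposition~\ref{union.global.finite}.

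For the chain of inclusions I would proceed as follows. The inclusion $\mathcal{C}^{\omega}_0 \subseteq \mathcal{C}^{\omega}_F$ is immediate, since a memoryless policy is a finite-memory policy whose memory set is a singleton. For $\mathcal{C}^{\omega}_F \subseteq \mathrm{REG}^{\omega}$, I take an arbitrary controlled B\"uchi automaton $S_\omega = (Q,A,C,\rightarrow,Q_0,F)$ together with a finite-memory policy $\pi = (M,m_0,\delta,\gamma)$ and form the closed-loop product on the state space $Q \times M$: from $(q,m)$ the fixed control $\gamma(q,m)$ selects exactly those transitions $(q,a,\gamma(q,m),q') \in \rightarrow$, while the memory evolves by $\delta(m,(q,a))$. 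Since $Q$ and $M$ are finite, this is an ordinary nondeterministic B\"uchi automaton over $A$ with accepting set $F \times M$, so $L^{\omega}_\pi(S_\omega)$ is $\omega$-regular. Finally, for $\mathrm{REG}^{\omega} \subseteq \mathcal{C}^{\omega}_0$ I take any B\"uchi automaton recognizing a given $\omega$-regular language and reinterpret it as a controlled B\"uchi automaton with a singleton control set $C = \{c_0\}$, lifting each transition $(q,a,q')$ to $(q,a,c_0,q')$; the unique memoryless policy $\pi(q)=c_0$ imposes no restriction on runs, so the accepted $\omega$-language coincides with the original. Composing $\mathrm{REG}^{\omega} \subseteq \mathcal{C}^{\omega}_0 \subseteq \mathcal{C}^{\omega}_F \subseteq \mathrm{REG}^{\omega}$ yields the two stated equalities, after which closure need only be verified for $\mathrm{REG}^{\omega}$.

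For the closure claims I would then reduce entirely to the classical $\omega$-regular theory. Closure under finite union follows from the disjoint-union (sum) construction on B\"uchi automata, and closure under intersection follows from the standard synchronous product equipped with a two-phase flag that alternately tracks the acceptance condition of each factor. Closure under complementation is the one nontrivial ingredient: it rests on the B\"uchi complementation theorem, obtainable via McNaughton determinization into Rabin/Muller automata (or Safra's construction), which I would cite rather than reprove.

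The main obstacle is not any single calculation but the need to keep the two levels of aggregation cleanly separated throughout: one must argue the pointwise collapse $\mathcal{C}^{\omega}_0 = \mathcal{C}^{\omega}_F$ without appearing to contradict the strict second-order hierarchy $\mathbb{L}^{\omega}_0 \subsetneq \mathbb{L}^{\omega}_F$ of Theorem~\ref{thm.global.infinite}. Beyond that bookkeeping, the only genuinely deep fact invoked is $\omega$-regular complementation, which is classical and may be cited directly.
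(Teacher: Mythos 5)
Your proof is correct, and it is actually more complete than the paper's own argument. The paper's proof sketch addresses only the closure half of the statement: it cites B\"uchi's theorem and McNaughton's determinization to deterministic Muller automata and concludes that $\mathrm{REG}^{\omega}$ enjoys full Boolean closure, leaving the identity $\mathcal{C}^{\omega}_0 = \mathcal{C}^{\omega}_F = \mathrm{REG}^{\omega}$ entirely implicit. You, by contrast, establish that identity explicitly via the chain $\mathrm{REG}^{\omega} \subseteq \mathcal{C}^{\omega}_0 \subseteq \mathcal{C}^{\omega}_F \subseteq \mathrm{REG}^{\omega}$: the trivial-control embedding (singleton $C=\{c_0\}$ with the unique memoryless policy) gives the first inclusion, the degenerate-memory observation gives the second, and the closed-loop product on $Q \times M$ with acceptance set $F \times M$ gives the third. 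That last step is the only place where care is needed: it relies on the paper's standing assumption that language-recognizing controlled automata have finitely many states, which you use implicitly and would do well to flag, since the general definition of a controlled automaton allows infinite $Q$. Once the collapse is in hand, your closure argument (disjoint union for union, synchronous product with an alternating flag for intersection, and cited McNaughton/Safra complementation) is the same classical reduction the paper makes, so the two proofs agree on that half. Your opening point distinguishing the pointwise families $\mathcal{C}^{\omega}_\Pi$ from the second-order families $\mathbb{L}^{\omega}_\Pi$ of Theorem~\ref{thm.global.infinite} is also a correct and useful clarification; the paper makes the analogous observation only in the finite-word setting (Proposition~\ref{union.global.finite}), so spelling it out here removes a potential appearance of contradiction that the paper's own sketch does not address.
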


\begin{proof}
By Büchi’s theorem~\cite{Buchi1962} and McNaughton’s determinization
result~\cite{McNaughton1966}, every $\omega$-regular language is accepted by
a deterministic Muller automaton, and deterministic Muller automata are
effectively closed under union, intersection, and complementation.  Hence,
$\mathrm{REG}^{\omega}$ enjoys full Boolean closure.
\end{proof}

\begin{proposition}
\label{prop.omega.cfl.closure}
$\mathcal{C}^{\omega}_{\mathrm{Stack}} = \mathrm{CFL}^{\omega}$ is closed under
finite union and under intersection with $\omega$-regular languages, but not
under general intersection or complementation.
\end{proposition}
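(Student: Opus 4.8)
The plan is to split the argument into an identification step followed by the four closure claims. First I would establish the stated identity $\mathcal{C}^{\omega}_{\mathrm{Stack}} = \mathrm{CFL}^{\omega}$, where $\mathrm{CFL}^{\omega}$ denotes the class of $\omega$-context-free languages recognized by pushdown automata with a B\"uchi acceptance condition. This is the $\omega$-analogue of the finite-word characterization in Proposition~\ref{union.global.finite}: a controlled B\"uchi automaton running under a stack-augmented policy is, after absorbing the policy's finite control and stack into the plant, exactly a B\"uchi pushdown automaton, and conversely every B\"uchi pushdown automaton is realized by a suitable plant driven by a stack-augmented policy. Taking the union over all plants then yields $\mathrm{CFL}^{\omega}$. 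Closure under finite union is the easiest claim: given $L_1,L_2\in\mathrm{CFL}^{\omega}$ realized on plants $S_1,S_2$, the disjoint-union plant $S_1\uplus S_2$ together with a stack-augmented policy that makes an initial nondeterministic choice of component and thereafter simulates $\pi_i$ accepts $L_1\cup L_2$, and the combined controller is still stack-augmented.

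For intersection with an $\omega$-regular language $R$, let $P$ be a B\"uchi pushdown automaton for $L\in\mathrm{CFL}^{\omega}$ and $B$ a B\"uchi automaton for $R$. I would form the synchronous product whose control state is the pair consisting of $P$'s control and $B$'s state while retaining $P$'s stack, so that the composite is again a pushdown device. Acceptance must now require that both the $P$-component and the $B$-component visit their accepting sets infinitely often; I would collapse this generalized B\"uchi condition into a single B\"uchi condition by the usual two-phase flag that advances each time first $P$'s and then $B$'s accepting set is encountered. The resulting B\"uchi pushdown automaton accepts $L\cap R$, giving $L\cap R\in\mathcal{C}^{\omega}_{\mathrm{Stack}}$.

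Non-closure under general intersection I would obtain by lifting the classical $\mathrm{CFL}$ witness to the $\omega$-setting. Take $L_1=\{\,a^{n}b^{n}c^{m}d^{\omega}\mid n,m\ge 1\,\}$ and $L_2=\{\,a^{m}b^{n}c^{n}d^{\omega}\mid n,m\ge 1\,\}$; each is accepted by a B\"uchi pushdown automaton (one counter matching $a$'s against $b$'s, the other $b$'s against $c$'s, followed by an accepting $d^{\omega}$ loop), so both lie in $\mathrm{CFL}^{\omega}$. Their intersection is $\{\,a^{n}b^{n}c^{n}d^{\omega}\mid n\ge 1\,\}$, whose finite $abc$-prefixes reproduce the non-context-free language $\{a^{n}b^{n}c^{n}\}$; a pumping argument for $\omega$-context-free languages then shows the intersection is not in $\mathrm{CFL}^{\omega}$. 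Non-closure under complementation follows immediately by De~Morgan: if $\mathrm{CFL}^{\omega}$ were closed under complement, then, being closed under finite union, it would also be closed under intersection, contradicting the preceding witness.

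I expect the main obstacle to be the soundness of the intersection-with-$\omega$-regular construction, specifically the generalized-to-single B\"uchi flag on an infinite-state (pushdown) system: one must verify that the flag advances infinitely often exactly when both component accepting sets are hit infinitely often along a run, while the stack evolves solely under $P$'s moves so that the product is a genuine B\"uchi pushdown automaton rather than a device with two coupled stacks. The remaining delicate point is to state and apply a correct pumping (iteration) lemma for B\"uchi pushdown acceptance in the non-closure step; once that lemma is in hand, the separation witnessed by $\{a^{n}b^{n}c^{n}d^{\omega}\}$ is routine.
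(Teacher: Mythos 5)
Your proposal is correct in substance, but it follows a genuinely different, more constructive route than the paper. The paper's proof sketch is citation-based: union follows from the Cohen--Gold characterization of $\omega$-CFLs as finite unions of languages $U V^{\omega}$ with $U,V$ context-free \cite{CohenGold1977}, non-closure under intersection is invoked as a classical fact (with pointers to Staiger and Finkel), and complementation fails by De Morgan exactly as you argue. You instead build everything at the automaton level: absorbing the stack-augmented policy into the plant to identify $\mathcal{C}^{\omega}_{\mathrm{Stack}}$ with B\"uchi pushdown languages, a disjoint-union construction for union, and an explicit product of a B\"uchi PDA with a finite B\"uchi automaton—using the standard generalized-to-single B\"uchi flag—for intersection with $\omega$-regular languages. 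Notably, that last construction establishes a part of the statement that the paper's sketch does not address at all, so in this respect your argument is more complete than the paper's; your observation that the flag trick is sound on infinite-state systems (it concerns only the acceptance condition, not finiteness of the state space) is also right. Two small cautions. First, since policies in this framework are deterministic, the ``initial nondeterministic choice of component'' in your union construction must reside in the plant (e.g., in the initial-state set of $S_1 \uplus S_2$), not in the stack-augmented policy itself; the policy then simulates $\pi_1$ or $\pi_2$ according to which component the run inhabits. Second, your witness $\{a^{n}b^{n}c^{n}d^{\omega} \mid n \ge 1\}$ for non-closure under intersection is sound, but the iteration/pumping lemma for B\"uchi pushdown acceptance that you defer is precisely the step the paper sidesteps by citing the classical literature; you could legitimately do the same rather than develop that lemma from scratch.
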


\begin{proof}
Closure under union follows from the characterization of $\omega$-CFLs as
finite unions of languages of the form $U V^{\omega}$ with $U,V$ context-free,
as shown by Cohen and Gold~\cite{CohenGold1977}.  Non-closure under
intersection and complement is classical: there exist $\omega$-CFLs
$L_{1},L_{2}$ such that $L_{1} \cap L_{2}$ is not $\omega$-context-free,
implying, by De Morgan’s laws, failure of closure under complement
(see also Staiger~\cite{Staiger1997} and Finkel~\cite{Finkel1997}).
\end{proof}

\begin{proposition}
\label{prop.omega.re.closure}
$\mathcal{C}^{\omega}_{\mathrm{Tape}} = \mathrm{RE}^{\omega}$ is closed under
finite union and intersection, but not under complementation.
\end{proposition}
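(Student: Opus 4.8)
The plan is to work throughout with the identification $\mathcal{C}^\omega_{\mathrm{Tape}} = \mathrm{RE}^\omega$ already asserted in the statement (the infinite-word analogue of $\bigcup\mathbb{L}_{\mathrm{Tape}} = \mathbf{RE}$ from the finite-word setting), interpreting $\mathrm{RE}^\omega$ as the class of $\omega$-languages recognized by nondeterministic Turing machines under a B\"uchi acceptance condition — equivalently, the analytic class $\Sigma^1_1$, via the characterization of Cohen and Gold~\cite{CohenGold1977} (see also Staiger~\cite{Staiger1997}). Under this reading the three assertions separate cleanly: closure under finite union and under intersection follow from direct constructions on controlled B\"uchi plants together with their tape-augmented policies (and are confirmed by the fact that $\Sigma^1_1$ is closed under finite, indeed countable, unions and intersections), whereas non-closure under complement is a genuine separation. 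First I would fix recognizers: given $L_1,L_2\in\mathrm{RE}^\omega$, let $S_i$ be finite controlled B\"uchi automata and $\pi_i$ tape-augmented policies with $L^\omega_{\pi_i}(S_i)=L_i$.

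For union I would form the disjoint sum $S_\cup = S_1\uplus S_2$ with accepting set $F_1\uplus F_2$, together with a tape policy $\pi_\cup$ that, after detecting which initial component the run enters, simulates the corresponding $\pi_i$ on that component's history; since an input is accepted iff some run in either copy meets its B\"uchi condition, $L^\omega_{\pi_\cup}(S_\cup)=L_1\cup L_2$. For intersection I would use the standard B\"uchi product $S_\cap=S_1\otimes S_2$ augmented with a two-phase toggling flag that advances from phase $1$ to phase $2$ upon visiting $F_1$ and resets upon visiting $F_2$, declaring as accepting exactly the flag-resets, thereby forcing both $F_1$ and $F_2$ to be hit infinitely often. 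The product policy $\pi_\cap$ runs $\pi_1$ and $\pi_2$ in parallel on the two state-projections of the shared history, which a single tape-augmented controller can simulate. Both constructors $\Phi_\cup$ and $\Phi_\cap$ are effective, so the outputs remain in $\mathrm{RE}^\omega$; these steps are routine and mirror the $\omega$-regular product constructions underlying B\"uchi's theorem.

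The main obstacle is non-closure under complement. The cleanest route, which I would take, is descriptive-set-theoretic: since $\mathrm{RE}^\omega$ coincides with $\Sigma^1_1$, closure under complement would force $\Sigma^1_1=\Pi^1_1$, hence $\Sigma^1_1=\Delta^1_1$, contradicting the classical fact that there exist analytic $\omega$-languages that are not co-analytic (e.g.\ a $\Sigma^1_1$-complete set). Thus I would exhibit a single $L\in\mathrm{RE}^\omega$ whose complement is $\Pi^1_1$-complete and therefore not in $\mathrm{RE}^\omega$. I would explicitly flag why the more elementary reduction — encoding the halting set $K$ as $\widehat{L}=\{\,w\#^\omega : w\in K\,\}$ and arguing that $\overline{\widehat{L}}$ would yield a co-r.e.\ decision procedure for $K$ — is the delicate point: B\"uchi acceptance is an infinitary condition, so membership in an $\mathrm{RE}^\omega$ recognizer cannot be semi-decided by finite simulation, and the naive ``run and wait'' extraction of a finite-word co-r.e.\ procedure fails. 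Reconciling this gap is precisely what the analytic-hierarchy argument bypasses, and pinning down the exact level of $\mathrm{RE}^\omega$ — verifying that it is genuinely $\Sigma^1_1$ rather than a lower arithmetical class, given that the policy is deterministic while the underlying plant is nondeterministic — is the one technical point I would need to confirm before invoking non-self-duality.
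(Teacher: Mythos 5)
Your proposal is correct, and its crucial component---non-closure under complementation---is exactly the paper's argument: identify $\mathrm{RE}^{\omega}$ with the analytic ($\Sigma^{1}_{1}$) subsets of $A^{\omega}$ and invoke the classical fact that the analytic class is not self-dual, i.e., there exist analytic sets whose complements are not analytic (Kechris~\cite{Kechris1995}, Moschovakis~\cite{Moschovakis2009}). Where you diverge is in the positive claims: the paper disposes of finite union and intersection by the \emph{same} citation (analytic sets are closed under finite, indeed countable, unions and intersections), whereas you give explicit plant-level constructions---disjoint sum for union, a flagged B\"uchi product for intersection, with the two tape-augmented policies combined into a single controller. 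Your route is more self-contained and effective, and it in fact edges toward the stronger \emph{uniform} (model-level) closure of Definition~\ref{def:omega.closure.uniform.standard}(a), which the paper explicitly declines to pursue at the Tape level, stating that its closure results for $\mathcal{C}^{\omega}_{\mathrm{Stack}}$ and $\mathcal{C}^{\omega}_{\mathrm{Tape}}$ are to be read only in the standard (semantic) sense; the price you pay is having to verify the routine but nontrivial bookkeeping of simulating two tape policies on one tape and detecting the active component, details the purely descriptive-set-theoretic citation avoids. Your closing caveats are also well placed: the naive halting-set reduction for complement indeed fails because B\"uchi acceptance is infinitary, and the identification of $\mathrm{RE}^{\omega}$ with $\Sigma^{1}_{1}$ (rather than some arithmetical level) is exactly the point the paper, like the proposition's statement itself, takes as given rather than proves.
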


\begin{proof}
The class $\mathrm{RE}^{\omega}$ coincides with the family of analytic
($\Sigma^{1}_{1}$) subsets of $A^{\omega}$, known to be closed under finite
union and intersection but not under complement
(Kechris~\cite{Kechris1995}, Moschovakis~\cite{Moschovakis2009}).
\end{proof}

The preceding proposition summarizes the standard closure properties of the global $\omega$-language families generated by controlled B\"uchi automata. These results rely solely on the expressive power of the underlying acceptance model and do not require any structural relationship between the plants that realize the participating languages.

For compositional reasoning, however, one is often interested in a stronger notion: whether the same operations can be implemented {\it uniformly} by plant-level constructions that work for all memoryless and finite-memory policies. In the finite-word setting, this was achieved using disjoint union and the synchronous product of plants. The following proposition shows that these constructions extend naturally to controlled B\"uchi automata, yielding uniform closure under finite union and intersection at the $\omega$-regular level.

\begin{theorem}
\label{prop:uniform.omega.regular.closure}
For each $\Pi \in \{0,F\}$, the family
\[
\mathbb{L}^{\omega}_{\Pi}
   = \{\, L^{\omega}_{\Pi}(S) \mid S \in \mathcal{S}_{\omega} \,\}
\]
of $\omega$-languages generated by controlled B\"uchi automata under
memoryless and finite-memory policies is \emph{uniformly closed} under
finite union and intersection in the sense of
Definition~\ref{def:omega.closure.uniform.standard}(a).
\end{theorem}
\begin{proof}
Fix $\Pi \in \{0,F\}$ and let $S_1,S_2 \in \mathcal{S}_{\omega}$ be controlled
B\"uchi automata with $L_i = L^{\omega}_{\pi_i}(S_i)$ realized by policies
$\pi_i \in \Pi$.

\smallskip
\noindent
\textbf{(Union).}
Construct the disjoint-union plant
\[
S_{\cup} = S_1 \uplus S_2,
\]
whose input alphabet and control interface coincide with those of $S_1$ and $S_2$.
A policy $\pi_{\cup}$ on $S_{\cup}$ operates by internally simulating $\pi_1$
on the $S_1$-component and $\pi_2$ on the $S_2$-component, with the acceptance
set $F_{\cup} = F_1 \cup F_2$.
Hence,
\[
L^{\omega}_{\pi_{\cup}}(S_{\cup})
  = L^{\omega}_{\pi_1}(S_1) \cup L^{\omega}_{\pi_2}(S_2)
  = L_1 \cup L_2.
\]
Since the construction of $S_{\cup}$ depends only on the plants $(S_1,S_2)$,
this defines a uniform constructor $\Phi^{\omega}_{\cup}$ for union.

\smallskip
\noindent
\textbf{(Intersection).}
For intersection, define the synchronous product
\[
S_{\cap} = S_1 \otimes S_2
\]
with state space $Q_1 \times Q_2$, shared input alphabet~$A$, and joint control
space $C_1 \times C_2$.
The transition relation is given by
\[
((q_1,q_2),a,(c_1,c_2),(q_1',q_2'))
  \in \rightarrow_{\cap}
  \quad\text{iff}\quad
(q_1,a,c_1,q_1') \in \rightarrow_1
\text{ and }
(q_2,a,c_2,q_2') \in \rightarrow_2.
\]
The product policy $\pi_{\cap}$ acts componentwise,
\[
\gamma_{\cap}((m_1,m_2),(q_1,q_2))
   = (\gamma_1(m_1,q_1),\, \gamma_2(m_2,q_2)),
\]
and the B\"uchi acceptance set is
\[
F_{\cap} = F_1 \times F_2.
\]
Then,
\[
L^{\omega}_{\pi_{\cap}}(S_{\cap})
   = L^{\omega}_{\pi_1}(S_1) \cap L^{\omega}_{\pi_2}(S_2)
   = L_1 \cap L_2.
\]
The plant $S_{\cap}$ depends only on $S_1$ and $S_2$, and the policy
$\pi_{\cap}$ remains within the same class~$\Pi$, demonstrating uniform
closure under intersection.

\smallskip
In both cases, the constructors $\Phi^{\omega}_{\cup}$ and
$\Phi^{\omega}_{\cap}$ are independent of the specific policies and depend only
on the plant structures.  Hence, $\mathbb{L}^{\omega}_{\Pi}$ is uniformly closed
under union and intersection at the $\omega$-regular level.
\end{proof}
\begin{remark}
Disjoint-union and synchronous-product constructions on controlled B\"uchi
automata preserve the control interface, ensuring that the resulting product
plant $S_1 \otimes S_2$ remains controllable by a product policy in the same
class~$\Pi$.
\end{remark}

\begin{theorem}
\label{prop:uniform.omega.complement.deterministic}
Let $\mathcal{S}_\omega^{\mathrm{det}}$ be the class of deterministic controlled
$\omega$-automata $S=(Q,A,C,\to,q_0,\mathsf{Acc})$ where $\mathsf{Acc}$ is
one of: parity, Muller, Rabin, or Streett. For each $\Pi\in\{0,F\}$, the family
\[
\mathbb{L}^{\omega,\mathrm{det}}_{\Pi}
=\bigl\{\,L^\omega_{\Pi}(S)\;\bigm|\; S\in\mathcal{S}_\omega^{\mathrm{det}}\,\bigr\}
\]
is \emph{uniformly closed} under complement. More precisely, there is a
plant-level constructor
\[
\Phi^{\omega}_{\complement}:
\mathcal{S}_\omega^{\mathrm{det}}\longrightarrow\mathcal{S}_\omega^{\mathrm{det}}
\]
that keeps $(Q,A,C,\to,q_0)$ unchanged and replaces $\mathsf{Acc}$ by its dual:
\begin{itemize}
  \item \textbf{Parity:} priorities $\Omega:Q\to\{0,\dots,d\}$ become
        $\Omega'(q)=\Omega(q)+1$ (i.e., flip even/odd).
  \item \textbf{Muller:} $\mathcal{F}\subseteq 2^Q$ becomes
        $\mathcal{F}'=2^Q\setminus\mathcal{F}$.
  \item \textbf{Rabin/Streett:} swap each pair $(E_i,F_i)$ to $(F_i,E_i)$,
        turning Rabin $\leftrightarrow$ Streett.
\end{itemize}
Then for every policy $\pi\in\Pi$,
\[
L^\omega_{\pi}\!\bigl(\Phi^{\omega}_{\complement}(S)\bigr)
= A^\omega \setminus L^\omega_{\pi}(S).
\]
\end{theorem}

\begin{proof}
Fix $S \in \mathcal{S}_\omega^{\mathrm{det}}$ and $\pi \in \Pi$.
Determinism ensures that for any input word there is a unique run, hence a unique
set $\Inf(\rho) \subseteq Q$ of states visited infinitely often under $\pi$.
Each listed dualization (parity flip, Muller complement, Rabin/Streett swap)
is the classical acceptance dual that accepts exactly when the original rejects
on the same run. Since $\Phi^{\omega}_{\complement}$ depends only on $S$ (not on $\pi$),
this is a uniform plant-level construction yielding language complement for all
$\pi\in\{0,F\}$.
\end{proof}

\begin{remark}
\label{rem:no-buchi-complement}
We deliberately exclude deterministic B\"uchi acceptance: deterministic
B\"uchi automata are not closed under complement. If one models plants with
deterministic parity/Muller/Rabin/Streett, the above uniform complement
applies without changing the transition graph.
\end{remark}
\begin{remark}
\label{rem:uniform.omega.complement}
If, in addition, $\mathcal{S}_{\omega}$ is restricted to controlled
\emph{deterministic} parity automata (or any equivalent complement-closed
$\omega$-regular formalism), then the same framework yields uniform closure
under complementation as well, by replacing the parity condition with its
dual on a fixed deterministic plant.  Since we do not impose this structural
restriction in general, Proposition~\ref{prop:uniform.omega.regular.closure}
is stated only for union and intersection.  It is well known that deterministic Büchi automata are not
closed under complement, whereas deterministic parity or Rabin
automata are.  
Hence, uniform complementation requires augmenting the acceptance
structure beyond the Büchi property (e.g., to parity, Rabin/Street, or Muller conditions \cite{Safra1988, Rabin1969, FrancezPnueli1980, Muller1963}).
\end{remark}
\begin{remark}
The results show that the standard closure properties of
$\omega$-regular languages (as established by
Büchi~\cite{Buchi1962}, McNaughton~\cite{McNaughton1966},
and Landweber~\cite{Landweber1969}) remain valid at the
language level for controlled Büchi automata.
However, uniform syntactic realizations exist only in the
regular (deterministic) fragment.
This highlights the same hierarchy observed in the finite
setting: uniform (model-level) closure becomes strictly
weaker than standard (semantic) closure once
non-determinism or infinite-memory control is introduced.
\end{remark}
\begin{remark}
The preceding proposition shows that, at the $\omega$-regular level
($\Pi \in \{0,F\}$), the classical closure constructions for
$\omega$-regular languages can be lifted to the stronger, model-level
notion of uniform closure in Definition~\ref{def:omega.closure.uniform.standard}(a).
For the stack- and tape-augmented levels
($\Pi \in \{\mathrm{Stack},\mathrm{Tape}\}$), extending these uniform
constructions to pushdown- and Turing-like B\"uchi plants would require
substantially more involved plant-level transformations, and is not
pursued here.  Accordingly, the closure properties for
$\mathcal{C}^{\omega}_{\mathrm{Stack}}$ and
$\mathcal{C}^{\omega}_{\mathrm{Tape}}$ stated in
Propositions~\ref{prop.omega.cfl.closure}--\ref{prop.omega.re.closure} are interpreted in the standard (global) sense.
\end{remark}
\subsection*{Determinization under Restricted Policy Types for $\omega$-Languages}

We briefly discuss the extent to which the determinization observation from the
finite-word setting carries over to the case of $\omega$-languages.

\begin{theorem}
Let $S$ be a finite controlled automaton over infinite words, equipped with an
$\omega$-regular acceptance condition (e.g., B\"uchi, parity, Rabin, or Muller),
and let the policy type be memoryless (type~$0$). Then there exists a
deterministic controlled automaton $S_{\det}$ over the same input alphabet such
that
\[
   \{\,L^{\omega}_{\pi}(S) : \pi \in \Pi_{0}(S)\,\}
   \;=\;
   \{\,L^{\omega}_{\pi'}(S_{\det}) : \pi' \in \Pi_{0}(S_{\det})\,\}.
\]
In particular, for type~$0$ we may, without loss of closed-loop expressiveness
over $\omega$-words, restrict attention to deterministic plants.
\end{theorem}

\begin{proof}
For a fixed plant $S$, the set of memoryless policies $\Pi_{0}(S)$ is finite.
Each policy $\pi \in \Pi_{0}(S)$ produces a closed-loop system ``$S$ under
$\pi$'', which is a finite-state $\omega$-automaton over the input alphabet.
By standard results on $\omega$-automata, the induced language
$L^{\omega}_{\pi}(S)$ is $\omega$-regular, and there exists a deterministic
$\omega$-automaton (for instance, a deterministic parity automaton) $P_{\pi}$
recognizing $L^{\omega}_{\pi}(S)$.

We now combine these finitely many deterministic $\omega$-automata into a
single deterministic plant $S_{\det}$ whose control alphabet contains one
distinguished action for each policy $\pi \in \Pi_{0}(S)$, intuitively
selecting the corresponding component $P_{\pi}$. A memoryless policy on
$S_{\det}$ chooses one such action in the initial state and has no further
essential influence on the run. By construction, for each $\pi\in\Pi_{0}(S)$
there is a corresponding memoryless policy $\pi' \in \Pi_{0}(S_{\det})$ such
that $L^{\omega}_{\pi}(S) = L^{\omega}_{\pi'}(S_{\det})$, and conversely every
closed-loop language of $S_{\det}$ under a memoryless policy arises in this
way. This yields the claimed equality of families of $\omega$-languages.
\end{proof}

\begin{remark}
For the finite-memory policy type (type~$F$), the situation over
$\omega$-languages is more delicate. On a fixed plant $S$, the class of
admissible finite-memory controllers may be infinite, so the above finitary
construction (which enumerates all policies) does not directly apply. While
each individual closed-loop language $L^{\omega}_{\pi}(S)$ is still
$\omega$-regular and can be recognized by a deterministic $\omega$-automaton,
it is not clear in general whether there always exists a \emph{single}
deterministic controlled automaton $S_{\det}$ whose 
finite-memory policies generate the same family of closed-loop
$\omega$-languages as the original plant under all finite-memory policies.
Clarifying this question for type~$F$ appears to be an interesting open problem
and is left for future investigation.
\end{remark}
\subsection*{Emptiness Problem for  Controlled B\"uchi Automata}
\begin{definition}
This problem determines whether there exists at least one infinite execution accepted by the controlled B\"uchi automaton under some control strategy. The  \emph{emptiness problem}  for controlled B\"uchi automata $S_\omega \in \mathcal{S}_\omega$ asks whether there exists a policy $\pi \in \Pi$ such that the infinite-word language $L_\pi^\omega(S_\omega)$ is nonempty:
\[
\exists \pi \in \Pi \text{ such that } L^\omega_\pi(S_\omega) \neq \emptyset.
\]
\end{definition}
\begin{theorem}\label{prop:buchi-emptiness}
For the emptiness problem for controlled Büchi automata, we have:
\begin{itemize}
    \item For $\Pi \in \{0, F\}$, the problem is PSPACE-complete and decidable.
    \item For $\Pi = \textit{Stack}$, the problem remains decidable using pushdown Büchi automata techniques.
    \item For $\Pi \in \{\textit{Tape}, H\}$, the problem is undecidable.
\end{itemize}
\end{theorem}
\begin{proof}
 For $\Pi \in \{0, F\}$, controlled Büchi automata reduce to classical nondeterministic Büchi automata. 
 The emptiness problem for these automata is PSPACE-complete. It can be decided using graph-search algorithms that detect reachable accepting cycles, such as nested depth-first search (DFS)~\cite{baier2008, Vardi1985, Buchi1962}.
 
 For $\Pi = \textit{Stack}$, the model corresponds to a pushdown Büchi automaton. The emptiness problem remains decidable via saturation techniques and automata-theoretic constructions developed for pushdown systems~\cite{Esparza1997PDABuchi, Walukiewicz2001, Bouajjani2000}. Decidability assumes a finite stack alphabet and a finite-state plant; the Büchi set is evaluated on the standard product pushdown system.

For $\Pi \in \{\textit{Tape}, H\}$, the control mechanism can simulate a Turing machine on infinite inputs, and the emptiness problem becomes equivalent to the nonemptiness problem for $\omega$-languages accepted by Turing machines, which is undecidable~\cite{Sipser2013, Hopcroft1979}.
\end{proof}
\begin{remark}
The PSPACE-completeness for $\Pi \in \{0, F\}$ follows directly from the classical result on the nonemptiness problem for nondeterministic Büchi automata. 
In this case, the complexity arises from the need to explore accepting strongly connected components. Still, it remains tractable within PSPACE due to the linear-space nature of DFS-based algorithms.
\end{remark}

\section{Probabilistic Models}
The following models 
are extensions of controlled activity networks where nondeterminacy is 
specified probabilistically.  This is accomplished by assigning probabilities
to various instantaneous activities.  
\subsection*{Model Structure} 
\begin{definition} \label{def3.1}
A {\it controlled probabilistic activity network} is a tuple
$(K, IP)$ where:
\begin{itemize}
\item  $K = (P, IA, TA, C, IG, OG, IR, IOR, TOR)$ is 
a controlled activity network,
\item $IP: {\mathcal N}^{n} \times IA \longrightarrow [0, \; \; 1]$ is the  
{\it instantaneous activity probability distribution}, where $n = |P|$ and $IP$ is a probability distribution over ${\mathcal N}^{n} \times IA $.
\end{itemize}
\end{definition}
\subsection*{Model Behavior}
The behavior of the above model is similar to that of a controlled activity 
network, except that when there is more than one enabled 
instantaneous activity in an unstable marking, the choice of 
which activity completes first is made probabilistically.
More specifically, let     
$ L $
be a controlled probabilistic activity network as in Definition ~\ref{def3.1}
Suppose $L$ is in an unstable marking $\mu$.
Let $A'$ be the set of enabled instantaneous activities 
of $L$ in $\mu$.
Then, $ a \in A'$ completes with probability $\alpha$, where
\[      
\alpha = \frac{IP(\mu , a)}{ \sum_{a' \in A'} IP(\mu, a')}.
\]
The above summarizes the behavior of a controlled probabilistic activity network.
\subsection*{Semantic Models}
To study this behavior more formally, we need to define 
the notion of a controlled probabilistic automaton.
\begin{definition} \label{def3.2}
A {\it controlled probabilistic automaton} is a 5-tuple $(Q, A, C, P, Q_{0})$ where:
\begin{itemize}
\item $Q$ is a set of {\it states},
\item $A$ is the {\it activity} alphabet,  
\item $C$ is the {\it control action} alphabet,
\item $P: Q \times A \times C \times Q \to [0,1]$ is the  {\it probabilistic transition function}, such that:
  \[
    \sum_{q' \in Q} P(q, a, c, q') = 1 \quad \text{for all } (q, a, c) \in Q \times A \times C,
  \]
\item
$Q_{0}$ is the {\it initial state distribution}
which is a probability distribution over $Q$.
\end{itemize}
For $a \in A$ and 
$q, q' \in Q$, $q'$ is said to be {\it immediately reachable} from  
$q$ under $a$ with probability 
$\alpha$, if $P (q,a,q') = \alpha$.  
\end{definition}
\subsection*{Bismulation}
We now present a notion of equivalence for controlled probabilistic automata, which is similar to the one proposed for probabilistic automata \cite{Larsen1991}.
\begin{definition} \label{def4.3}
Let $U = (Q, A, C, P, Q_{0})$ and $U' = (Q', A', C', P', Q'_{0})$
be two probabilistic automata with the same activity 
alphabet  and set of control actions(i.e., 
$A = A'$ and $C=C'$).  $U$ and $U'$ are said to be {\it equivalent} if there 
exists a symmetric binary relation $\gamma$ on $Q \cup Q'$ such that:
\begin{itemize}
\item  for any $q \in Q$, there exists a $q' \in Q'$ such that $(q,q') \in \gamma$; also, for any $q' \in Q'$, there exists a $q \in Q$ such that $(q',q) \in \gamma$,
\item for any $q_{0} \in Q$ and $q'_{0} \in Q'$ such that
$(q_{0}, q'_{0} ) \in \gamma $,
\[
\sum_{(q ,q'_{0}) \in \gamma } Q_{0} (q) = 
\sum_{(q', q_{0}) \in \gamma } Q'_{0} (q'),
\]
\item for any $ a \in A$, $c \in C$, $ q_{1} , q_{2} \in Q$,
and $q'_{1}, q'_{2} \in Q' $ such that 
$(q_{1} , q'_{1} ) \in \gamma$ and $(q_{2} , q'_{2} ) \in \gamma$,
\[
\sum_{(q,q'_{2})  \in \gamma} P(q_{1} , a, c, q) =        
\sum_{(q',q_{2}) \in \gamma} P'(q'_{1} , a, c, q').
\] 
\end{itemize}
$\gamma$ above is said to be a {\it bisimulation} between 
$U$ and $U'$.
$U$ and $U'$ are {\it isomorphic} if $\gamma$  is a bijection.
\end{definition}
\begin{proposition}
Let $\mathcal {E_U}$  denote a relation on the set of all controlled probabilistic automata such that $(U_1, U_2) \in \mathcal {E_U}$ if and only if $U_1$ and $U_2$ are equivalent controlled probabilistic automata in the sense of 
Definition ~\ref{def4.3}.  Then 
$\mathcal {E_U}$ will be an equivalence relation.
\end{proposition}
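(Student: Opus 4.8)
The plan is to verify that $\mathcal{E_U}$ is reflexive, symmetric, and transitive, following the same three-part template as the nondeterministic case (Proposition for $\mathcal{E_S}$) but now carrying along the two probabilistic matching clauses of Definition~\ref{def4.3}. For reflexivity I would exhibit an explicit bisimulation from $U=(Q,A,C,P,Q_0)$ to itself, namely the diagonal $\Delta=\{(q,q)\mid q\in Q\}$. It is symmetric and total, and under $\Delta$ the ``partner set'' of any state is the singleton containing it, so the initial-distribution clause reduces to $Q_0(q_0)=Q_0(q_0)$ and the transition clause to $P(q_1,a,c,q_2)=P(q_1,a,c,q_2)$, both trivially true. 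Symmetry is immediate: a witnessing relation $\gamma$ is by definition a symmetric relation on $Q\cup Q'$, and all three clauses of Definition~\ref{def4.3} are stated symmetrically in the two automata, so the very same $\gamma$ witnesses $(U',U)\in\mathcal{E_U}$ with no modification.

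The substantive step is transitivity. Suppose $\gamma$ witnesses $(U,U')\in\mathcal{E_U}$ and $\delta$ witnesses $(U',U'')\in\mathcal{E_U}$. I would form the relational composition and symmetrize it: on $Q\cup Q''$ define $\eta$ to contain $(q,q'')$, together with its reverse, whenever there is an intermediate $r'\in Q'$ with $(q,r')\in\gamma$ and $(r',q'')\in\delta$. Totality of $\eta$ follows by chaining the totality of $\gamma$ and of $\delta$. The two probabilistic clauses are then proved by telescoping through the intermediate automaton $U'$: the $\gamma$-matching equates the mass $P(q_1,a,c,\cdot)$ sent into a $\gamma$-partner set with the corresponding mass of $P'(q_1',a,c,\cdot)$, and the $\delta$-matching transports that mass further from $U'$ into $U''$; composing the two equalities moves mass from $U$ to $U''$ into an $\eta$-partner set, and the identical computation handles the initial-distribution clause. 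Since every sum involved consists of nonnegative terms bounded by $1$, the series are unconditionally convergent and the rearrangements in the telescoping are justified even when $Q,Q',Q''$ are countably infinite.

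I expect the main obstacle to be precisely this composition of the matching clauses. An $\eta$-partner set on the $Q$-side, $\{q\mid (q,q''_2)\in\eta\}$, is a \emph{union} of several $\gamma$-partner sets $\{q\mid (q,r')\in\gamma\}$ indexed by the intermediaries $r'$ with $(r',q''_2)\in\delta$, whereas the bare Definition~\ref{def4.3} equates mass into only one partner set at a time. To make the sums add up, the plan is first to reduce to the case in which the witnessing relations partition each side into genuine disjoint classes---for instance by replacing $\gamma$ and $\delta$ with the cross-restrictions of the equivalence closure $\Theta$ of $\gamma\cup\delta$ on the disjoint union $Q\sqcup Q'\sqcup Q''$---so that distinct partner sets are disjoint and ``probability into a class'' becomes additive. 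The crux is then the standard lumping argument showing that this closure still satisfies the matching clauses: mass into each $\Theta$-class is invariant along every $\gamma$- and $\delta$-edge because a $\Theta$-class meets each state space in a union of original partner sets and the per-partner-set equalities add up. Once this additivity is established, the telescoping through $U'$ yields both clauses for $\eta$, and hence $(U,U'')\in\mathcal{E_U}$, completing transitivity and the proof that $\mathcal{E_U}$ is an equivalence relation.
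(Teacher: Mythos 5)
The paper states this proposition without any proof, so your proposal has to stand on its own. Your reflexivity (diagonal relation) and symmetry (the definition and the witnessing relation are already symmetric) arguments are fine, and your overall architecture for transitivity—compose through $U'$, pass to the equivalence closure $\Theta$ of $\gamma\cup\delta$ on $Q\sqcup Q'\sqcup Q''$, and take $\eta=\Theta\cap(Q\times Q'')$—is the right one. The genuine gap is at the step you yourself call the crux: you justify the lumping lemma by saying that a $\Theta$-class meets each state space in a union of original partner sets and that ``the per-partner-set equalities add up.'' They do not: the $\gamma$-partner sets making up $E\cap Q$ for a class $E$ are in general \emph{not} pairwise disjoint (this overlap is exactly the obstacle you identified two sentences earlier), and Definition~\ref{def4.3} gives mass equalities only for individual, possibly overlapping partner sets. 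Passing to the closure makes distinct \emph{classes} disjoint, but does nothing about overlaps \emph{inside} a class, so the claimed additivity is unsupported and the lumping lemma is asserted rather than proved.

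The lemma is in fact true, but it needs a different argument. Fix $(q_1,q'_1)\in\gamma$ and $a,c$; write $p(\cdot)=P(q_1,a,c,\cdot)$, $p'(\cdot)=P'(q'_1,a,c,\cdot)$, let $E$ be a connected component of the bipartite graph $(Q\sqcup Q',\gamma)$, and write $N(\cdot)$ for $\gamma$-neighborhoods. Clause 3 says $p(N(y))=p'(N(x))$ for every edge $(x,y)$; since edges sharing a vertex force equal values, connectivity makes all these neighborhood masses equal to one constant $\alpha_E$. If $\alpha_E=0$, every state of $E$ has mass $0$ and both component masses vanish. If $\alpha_E>0$, double counting over the edges of $E$ (all terms nonnegative, so rearrangement is safe) gives
\[
\alpha_E\,p(E\cap Q)\;=\;\sum_{x\in E\cap Q}p(x)\,p'(N(x))\;=\;\sum_{(x,y)\in\gamma,\;x\in E\cap Q}p(x)\,p'(y)\;=\;\sum_{y\in E\cap Q'}p'(y)\,p(N(y))\;=\;\alpha_E\,p'(E\cap Q'),
\]
hence $p(E\cap Q)=p'(E\cap Q')$; summing over the disjoint components contained in a $\Theta$-class and chaining along $\gamma$- and $\delta$-edges then completes your transitivity proof (the same computation handles the initial-distribution clause). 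Separately, you should note that under the paper's definition taken literally there is a one-line proof: because every $P(q,a,c,\cdot)$ and every initial distribution sums to $1$, the universal relation $(Q\times Q')\cup(Q'\times Q)$ always satisfies all three clauses, so any two CPAs over the same alphabets are equivalent and $\mathcal{E_U}$ is an equivalence relation for degenerate reasons. This both short-circuits the entire composition argument and signals that Definition~\ref{def4.3}—unlike the Larsen--Skou formulation it cites, where the relation is an equivalence and sums range over its classes—does not pin down the intended notion.
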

\subsection*{Operational Semantics}
The behavior of a probabilistic activity network may now be formalized 
as follows.
\begin{definition} 
Let $(L, \mu_{0})$ denote a 
controlled probabilistic activity network $L$ with an initial marking $\mu_{0}$
where $L$ is defined as in Definition ~\ref{def3.1}.
$(L, \mu_{0})$ is said to {\it realize} a probabilistic automaton
$U = (Q, A, C, P, Q_{0})$ where:
\begin{itemize}
\item $Q$ is the set of all stable markings of $L$ which are reachable 
from $\mu_{0}$ and a state $\Delta$ if, in $L$, 
an infinite sequence of instantaneous 
activities can be completed in a marking reachable from 
$\mu_{0}$,
\item $ A $ is the set of timed activities of $L$,
\item $C$ is the set of control actions of $L$,
\item For any  
$ \mu , \mu' \in Q$,  $ a \in A $ and $c \in C$ such that 
$a$ is enabled in $\mu$, $P(\mu, a, c, \mu')$ is the 
probability that, in $L$, $\mu'$ is the next stable marking to be reached
upon completion of $a$ in $\mu$ via $c$; $P(\mu, a, c, \Delta)$ is 
the probability that, 
in $L$, a sequence of activities $ax$ completes in $\mu$, where $x$ is 
an infinite sequence of instantaneous activities that can be completed after the completion of timed activity $a$ via $c$,
\item 
For any 
$ \mu \in Q$, $Q_{0} (\mu)$ is the probability that, in $L$, 
$\mu$ is reached upon completion of a (possibly an empty) string of 
instantaneous activities in $\mu_{0}$; $Q_{0} (\Delta)$ is the probability 
that, in $L$, an infinite sequence of instantaneous activities completes
in $\mu_{0}$. 
\end{itemize}
\end{definition}
A notion of equivalence for 
controlled probabilistic activity networks may now be given as follows.
\begin{definition} \label{defprob4.5}
Two controlled probabilistic activity networks are {\it equivalent} if 
they realize equivalent controlled probabilistic automata.
\end{definition}
\begin{proposition}
Let $\mathcal {E_L}$  denote a relation on the set of all controlled probabilistic activity networks such that 
$(L_1, L_2) \in \mathcal {E_L}$ if and only if $L_1$ and $L_2$ are equivalent controlled probabilistic activity networks in the sense of Definition ~\ref{defprob4.5}.  Then 
$\mathcal {E_L}$ will be an equivalence relation.
\end{proposition}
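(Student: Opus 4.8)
The plan is to derive this directly from the already-established fact that equivalence of controlled probabilistic automata in the sense of Definition~\ref{def4.3} (the relation $\mathcal{E_U}$) is an equivalence relation, exactly as the earlier proposition for controlled activity networks was derived from the corresponding statement for controlled automata. The essential observation is that $\mathcal{E_L}$ is nothing but the \emph{pullback} of $\mathcal{E_U}$ along the realization map $r : L \mapsto U_L$ sending each controlled probabilistic activity network (with its initial marking) to the probabilistic automaton it realizes: by Definition~\ref{defprob4.5}, $(L_1,L_2)\in\mathcal{E_L}$ if and only if $(U_{L_1},U_{L_2})\in\mathcal{E_U}$. Since the preimage of an equivalence relation under any map is again an equivalence relation, the statement follows once the map $r$ is seen to be well-defined.

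First I would check that $r$ is well-defined, i.e.\ that each $L$ determines its realized automaton $U_L$ unambiguously. This is immediate from the operational semantics: the state set $Q$, the alphabets $A$ and $C$, the initial distribution $Q_0$, and the transition function $P$ are all prescribed deterministically from $L$ and $\mu_0$. If one prefers to regard $U_L$ only up to isomorphism, it suffices to note that isomorphism is the special case of the bisimulation of Definition~\ref{def4.3} in which $\gamma$ is a bijection; hence equivalence of the realized automata does not depend on the chosen representative, and $r$ descends to a genuine map on isomorphism classes.

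With $r$ in hand, the three defining properties transfer verbatim. For reflexivity, $(U_L,U_L)\in\mathcal{E_U}$ holds via the identity relation as bisimulation, so $(L,L)\in\mathcal{E_L}$. For symmetry, $(L_1,L_2)\in\mathcal{E_L}$ gives $(U_{L_1},U_{L_2})\in\mathcal{E_U}$, whence $(U_{L_2},U_{L_1})\in\mathcal{E_U}$ by symmetry of $\mathcal{E_U}$, and therefore $(L_2,L_1)\in\mathcal{E_L}$. For transitivity, $(L_1,L_2),(L_2,L_3)\in\mathcal{E_L}$ yield $(U_{L_1},U_{L_2}),(U_{L_2},U_{L_3})\in\mathcal{E_U}$, and transitivity of $\mathcal{E_U}$ gives $(U_{L_1},U_{L_3})\in\mathcal{E_U}$, i.e.\ $(L_1,L_3)\in\mathcal{E_L}$.

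The only point demanding any care is that the composition of probabilistic bisimulations required for transitivity stays within Definition~\ref{def4.3}—in particular that the marginal-sum conditions on $Q_0$ and $P$ are preserved under composition and inverse. But this is precisely what the already-proved equivalence-relation property of $\mathcal{E_U}$ supplies, so no fresh construction of bisimulations at the network level is needed; those conditions are statements about the realized automata $U_L$ alone and never touch the network $L$ directly. I therefore expect the argument to be routine, with the sole conceptual content being the recognition of $\mathcal{E_L}$ as a pullback and the well-definedness of the realization map.
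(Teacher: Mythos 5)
Your proposal is correct and follows exactly the route the paper intends: the paper states this proposition without an explicit proof, treating it as an immediate consequence of Definition~\ref{defprob4.5} together with the preceding proposition that $\mathcal{E_U}$ is an equivalence relation on controlled probabilistic automata, which is precisely your pullback-along-the-realization-map argument. Your additional care about well-definedness of the realization map and about transitivity being absorbed at the automaton level is sound and, if anything, more explicit than what the paper records.
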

\subsection*{Policy Types}
We now consider the notion of policy in a controlled probabilistic automaton.  A policy may be deterministic or probabilistic, depending on whether control actions are defined deterministically or probabilistically.  In the previous section, we defined four types of policies, which were deterministic. We can adopt similar policy types in this section as well.  However, we introduce some additional probabilistic policy types as follows.  
\begin{definition}
Let $(Q, A, C, P, Q_{0})$ be a controlled probabilistic automaton.  A (history-based) {\it probabilistic policy} $\pi$ is defined as:
\[
\pi : (Q \times A)^* \to Dist(C),
\]
where $Dist(C)$ denotes the set of all probability distributions on the set of control actions $C$.
$\pi$ is called  a {\it memoryless} probabilstic policy if $\pi : Q \to Dist(C)$.   
\end{definition}
\begin{definition} \label{equivprobpolicy}
Let  \( U = (Q, A, C, P, Q_0) \) and  \( M' = (Q', A', C', P', Q'_0) \) be two equivalent controlled automata with the same activity alphabet and set of control actions (i.e., 
$A = A'$ and $C = C'$). Let $\gamma$ be the corresponding bisimulation relation.  Two (history-dependent) policies $\pi$ and $\pi'$ in $U$ and $U'$, respectively, 
are said to be {\it equivalent} under  $\gamma$  if: 
\begin{itemize}
\item
For runs 
\[
\rho  = (q_0,a_0)(q_1,a_1)\cdots(q_{n-1},a_{n-1})q_n,
\;
\rho' = (q'_0,a_0)(q'_1,a_1)\cdots(q'_{n-1},a_{n-1})q'_n,
\]
in $U$ and $U'$, respectively, such that 
$(q_i,q'_i)\in\gamma$ for $0\le i\le n$, the following holds:
\[
\pi'(h'_i)=\pi(h_i),
\quad\text{where}\quad
\begin{aligned}[t]
h_i  &= (q_0,a_0)\ldots(q_{i-1},a_{i-1}),\\
h'_i &= (q'_0,a_0)\ldots(q'_{i-1},a_{i-1}),
\end{aligned}
\quad 0<i\le n.
\]
\end{itemize}
\end{definition}
\begin{definition}
A finite-memory probabilistic policy is a tuple $\pi = (M, m_0, \delta, \gamma)$ where:
\begin{itemize}[noitemsep]
  \item $M$: finite set of memory states,
  \item $m_0 \in M$: initial memory state,
  \item $\delta: M \times (Q \times A) \to M$: memory update function,
  \item $\gamma: Q \times M \to Dist(C)$: probabilistic control output function.
\end{itemize}
\end{definition}
\begin{definition}
Let
\[
U=(Q,A,C,P,Q_0)
\]
be a controlled probabilistic automaton.
A \emph{stack-augmented probabilistic policy} is a tuple
\[
\pi=(M,\Gamma,\bot,m_0,\delta,\gamma),
\]
where:
\begin{itemize}
    \item $M$ is a finite set of controller states;

    \item $\Gamma$ is a finite stack alphabet containing the
    bottom-of-stack symbol $\bot$;

    \item $m_0\in M$ is the initial controller state, and the initial
    stack consists of the single symbol $\bot$;

    \item
    \[
    \delta:
    M\times\Gamma\times(Q\times A)
    \longrightarrow
    M\times\Gamma^*
    \]
    is the controller-state and stack-update function;

    \item
    \[
    \gamma:
    Q\times M\times\Gamma
    \longrightarrow
    \operatorname{Dist}(C)
    \]
    is the probabilistic control-output function, where
    $\operatorname{Dist}(C)$ denotes the set of probability
    distributions over $C$.
\end{itemize}

If the current controller state is $m$, the top stack symbol is $Z$,
and the observed plant state--activity pair is $(q,a)$, then
\[
\delta(m,Z,(q,a))=(m',w)
\]
replaces $Z$ by $w\in\Gamma^*$ and updates the controller state to
$m'$. The next control action is selected according to the probability
distribution
\[
\gamma(q,m,Z)\in\operatorname{Dist}(C).
\]
Thus, for each $c\in C$,
\[
\gamma(q,m,Z)(c)
\]
is the probability of selecting control action $c$.
\end{definition}
\begin{definition}
Let
\[
U=(Q,A,C,P,Q_0)
\]
be a controlled probabilistic automaton.
A \emph{tape-augmented probabilistic policy} is a tuple
\[
\pi=(M,\Gamma,\blank,m_0,\tau_0,h_0,\delta,\gamma),
\]
where:
\begin{itemize}
    \item $M$ is a finite set of controller states;

    \item $\Gamma$ is a finite tape alphabet containing the blank
    symbol $\blank$;

    \item $m_0\in M$ is the initial controller state;

    \item
    \[
    \tau_0:\mathbb{N}\longrightarrow\Gamma
    \]
    specifies the initial tape contents and has finite nonblank support;

    \item $h_0\in\mathbb{N}$ is the initial tape-head position;

    \item
    \[
    \delta:
    M\times\Gamma\times(Q\times A)
    \longrightarrow
    M\times\Gamma\times\{L,R,S\}
    \]
    is the controller-state, tape-symbol, and head-movement update
    function;

    \item
    \[
    \gamma:
    Q\times M\times\Gamma
    \longrightarrow
    \operatorname{Dist}(C)
    \]
    is the probabilistic control-output function.
\end{itemize}

Suppose that the current controller state is $m$, the tape head scans
symbol $Z$, and the observed plant state--activity pair is $(q,a)$.
If
\[
\delta(m,Z,(q,a))=(m',Z',d),
\]
then the controller state is updated to $m'$, the scanned tape cell is
overwritten by $Z'$, and the tape head moves in direction
\[
d\in\{L,R,S\}.
\]
The next control action is selected according to the probability
distribution
\[
\gamma(q,m,Z)\in\operatorname{Dist}(C).
\]
Thus, for each $c\in C$,
\[
\gamma(q,m,Z)(c)
\]
is the probability of selecting control action $c$.
\end{definition}
\begin{definition} \label{ControlledProbabilisticAutomata}
Let $\pi$ be a probabilistic policy and $U$ a controlled probabilistic automaton. The probability of a run $\rho = (q_0, a_0)(q_1 a_1) \ldots (q_{n-1}, a_{n-1}) q_n$ under $\pi$ is:
\[
\Pr_\pi(\rho) = \prod_{i=0}^{n-1} \left( \sum_{c \in C} \pi(h_{i+1})(c) \cdot P(q_i, a_i, c, q_{i+1}) \right)
\]
where $h_i = (q_0, a_0) \ldots (q_{i-1}, a_{i-1})$ is the history up to step $i$.
\end{definition}
\subsection*{Probabilistic Languages on Finite Words}
In the remainder of this section, to further consider the aspect of formal language recognition by controlled probabilistic automata, let us assume that a controlled automaton has a finite set of
states and an additional set of states, F, referred to as the set of {\it accepting states}. 
\begin{definition} \label{LanguageControlledAutomata}
Let $U = (Q, A, C, P, Q_{0}, F)$ be a controlled probabilistic automaton with a set of accepting states ($F \subseteq Q$).  
For \( w = a_0 a_1 \cdots a_{n-1} \in A^* \), let 
\(\mathrm{AR}_{\pi}(w)\) denote the set of all {\it accepting} runs, if any, generating a given  action sequence (word) \( w \) under policy \( \pi \) in \( U \), i.e.,
\begin{multline}
\mathrm{AR}_{\pi}(w)
= \bigl\{ \rho \mid 
\rho = (q_0,a_0)\cdots(q_{n-1},a_{n-1})q_n \text{ is a run under } \pi \text{ in } U, \\
q_i \in Q, \, 0 \le i \le n \text{ such that } q_0 \in Q_0 \text{ and } q_n \in F \bigr\}.
\end{multline}
Given a threshold $\theta \in [0,1]$, the probabilistic language {\it accepted} under a probabilistic policy $\pi$  by $U$ is defined as:
\[
L^\theta_\pi(U)
= \left\{\, w \in A^* \ \middle|\ 
\sum_{\rho \in \mathrm{AR}_{\pi}(w)} \Pr_\pi(\rho) \; \;  \ge \theta
\right\}.
\]
where $\Pr_\pi(\rho)$ is defined as in Definition \ref{ControlledProbabilisticAutomata} above. When $\theta = 0^+ $ or $\theta = 1$, where $0^+$ denotes a sufficiently small positive real number, the model is said to have a {\it positive acceptance} or {\it almost-sure acceptance} semantics, respectively  \cite{baier2008}.
\end{definition}
\begin{proposition} \label{def4.33}
Let $U$ be a probabilistic automaton.  We define the following classes of languages:
\begin{align*}
L^\theta_{0}(U) &= \{ L^\theta_\pi(U) \mid \pi  \text{ is a memoryless deterministic policy}\}, \\
L^\theta_{F}(U) &= \{ L^\theta_\pi(U) \mid \pi \text{ is a finite-memory deterministic policy}\}, \\
L^\theta_{\text{Stack}}(U) &= \{ L^\theta_\pi(U) \mid \pi  \text{ is a stack-augmented deterministic policy}\}, \\
L^\theta_{\text{Tape}}(U) &= \{ L^\theta_\pi(U) \mid \pi  \text{ is a tape-augmented deterministic policy}\}, \\
L^\theta_{H}(U) &= \{ L^\theta_\pi(U) \mid \pi  \text{ is a history-dependent deterministic policy}\}, \\
L^\theta_{0,P}(U) &= \{ L^\theta_\pi(U) \mid \pi \text{is a memoryless probabilistic policy}\}, \\
L^\theta_{F,P}(U) &= \{ L^\theta_\pi(U) \mid \pi \text{is a finite-memory probabilistic policy}\}, \\
L^\theta_{\text{Stack},P}(U) &= \{ L^\theta_\pi(U) \mid \pi \text{is a stack-augmented probabilistic policy}\},\\
L^\theta_{\text{Tape},P}(U) &= \{ L^\theta_\pi(U) \mid \pi \text{is a tape-augmented probabilistic policy}\}, \\
L^\theta_{H,P}(U) &= \{ L^\theta_\pi(U) \mid \pi \text{is a history-dependent probabilistic policy}\}.
\end{align*}
We have:
\[
L^\theta_0 (U) \subseteq  L^\theta_F (U) \subseteq L^\theta_{\text{Stack}} (U) \subseteq L^\theta_{\text{Tape}} (U) \subseteq L^\theta_H (U),
\]
\[
L^\theta_{0,P} (U)  \subseteq L^\theta_{F,P} (U) \subseteq L^\theta_{\text{Stack},P} (U) \subseteq L^\theta_{\text{Tape},P} (U) \subseteq L^\theta_{H,P} (U).
\]
\end{proposition}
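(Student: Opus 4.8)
The plan is to treat both chains as the probabilistic analogue of the finite-word hierarchy in Proposition~\ref{prop3.5}, and to reduce both the inclusions and their strictness to the nondeterministic arguments already sketched for Theorem~\ref{thm.global.finite}. I would prove the two displayed chains in parallel, since the deterministic and probabilistic policy families share the same layered structure (memoryless inside finite-memory inside stack-augmented inside tape-augmented inside history-dependent).

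First I would establish the (non-strict) inclusions by structural subsumption. A memoryless policy is the degenerate case of a finite-memory policy with a single memory state; a finite-memory policy is simulated by a stack-augmented one that maintains the memory state in a bounded region of the stack; a stack-augmented policy is simulated by a tape-augmented one whose tape implements the LIFO discipline of the stack; and every tape-augmented (hence computable) policy is a special case of an arbitrary history-dependent policy. Each simulation reproduces, step by step, the control action selected at every history, so it leaves the run probabilities $\Pr_\pi(\rho)$ of Definition~\ref{ControlledProbabilisticAutomata} and therefore the threshold language $L^\theta_\pi(U)$ unchanged. This argument is uniform across the deterministic chain and the probabilistic chain, because in each case the richer class contains the poorer one as a syntactic restriction.

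Second, for strictness I would transfer the separating witnesses from the nondeterministic setting. The key device is to specialize $U$ to have point-mass ($0/1$) transition probabilities and to take the threshold $\theta = 1$; then a word is accepted precisely when it admits an accepting run under $\pi$, so $L^{1}_\pi(U)$ coincides with the nondeterministic accepted language $L_\pi(S)$ of the underlying controlled automaton $S$. Under this embedding the witnesses already used for Theorem~\ref{thm.global.finite} apply directly: the family $\{(a^k)^*\}_{k\ge 1}$ of Lemma~\ref{lem.finite.memory.finite} separates $L^\theta_0$ from $L^\theta_F$ (counting modulo $k$ requires finite memory), the language $\{a^n b^n\}$ separates $L^\theta_F$ from $L^\theta_{\mathrm{Stack}}$, the language $\{a^n b^n c^n\}$ separates $L^\theta_{\mathrm{Stack}}$ from $L^\theta_{\mathrm{Tape}}$, and a non-computable history-dependent choice of controls separates $L^\theta_{\mathrm{Tape}}$ from $L^\theta_H$.

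The main obstacle lies in the strictness of the probabilistic chain, specifically the impossibility direction: I must rule out that randomization in a lower policy class realizes a language reserved for a higher class. Here I would again exploit the threshold-$1$ embedding: on a deterministic-transition automaton with $\theta = 1$, any control action receiving positive probability along a path that fails to reach an accepting state strictly lowers the total acceptance mass below $1$. Hence achieving acceptance probability $\ge \theta = 1$ forces the probabilistic policy to place all its mass on control sequences that behave like a successful deterministic policy, collapsing each probabilistic lower-class impossibility to the deterministic one already established. Once this collapse is justified, the separations for the probabilistic chain follow from the same witnesses, and strictness in both chains is complete.
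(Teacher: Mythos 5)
Your proposal is correct and follows essentially the same route as the paper: the paper treats this proposition as the probabilistic lifting of the deterministic hierarchy (cf.\ the proof sketch of Theorem~\ref{thm.prob.global.finite}), obtaining the inclusions by structural subsumption of the policy classes and the strictness by reinterpreting the deterministic witnesses of Theorem~\ref{thm.global.finite} and Lemma~\ref{lem.finite.memory.finite} as degenerate probabilistic systems with $0/1$ transition probabilities and threshold acceptance. Your additional collapse argument—that under $\theta=1$ on a $0/1$-transition plant any randomized policy must concentrate all its mass on control sequences that behave like a successful deterministic policy—is a useful sharpening of a step the paper leaves implicit.
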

\begin{definition} \label{equivprobfinite}
Let $U = (Q, A, C, P, Q_{0}, F)$ and $U' = (Q', A', C', P', Q'_{0}, F')$ be two controlled probabilistic automata 
with accepting sets $F$ and $F'$, respectively, and the same activity alphabet and set of control actions (i.e., $ A=A'$ and $ C=C'$).  Let $(Q, A, C, P, Q_{0})$ and $(Q', A', C', P', Q'_{0})$ 
be two equivalent controlled probabilistic automata under 
a bisimulation $\gamma$, 
as in Definition~\ref{def4.3}. Then, $U$ and $U'$ are said to be {\it equivalent} if:
\begin{itemize}
\item for any $q \in F$, there exists $q' \in F'$ where $(q,q') \in \gamma$; also for any $q' \in F'$, there exists $q \in Q$ where $(q'',q) \in \gamma$.  
\end{itemize}
\end{definition}
\begin{proposition}
Let $U$ and $U'$ be two equivalent controlled probabilistic automata as in Definition~\ref{equivprobfinite}.
Consider the classes of languages as defined in Proposition~\ref {def4.33}. Then, we have:
\begin{align*}
L^\theta_{0}(U) &= L^\theta_{0}(U'),  \\ 
L^\theta_{F}(U) &= L^\theta_{F}(U'), \\
L^\theta_{\text{Stack}}(U) &= L^\theta_{\text{Stack}}(U'), \\
L^\theta_{\text{Tape}}(U) &= L^\theta_{\text{Tape}}(U'), \\
L^\theta_{H}(U) &= L^\theta_{H}(U'), \\
L^\theta_{0,P}(U) &= L^\theta_{0,P}(U'), \\
L^\theta_{F,P}(U) &= L^\theta_{F,P}(U'), \\
L^\theta_{\text{Stack},P}(U) &= L^\theta_{\text{Stack},P}(U'). \\
L^\theta_{\text{Tape},P}(U) &= L^\theta_{\text{Tape},P}(U'). \\
L^\theta_{H,P}(U) &= L^\theta_{H,P}(U'). \\
\end{align*}
Also, for equivalent polices $\pi$ and $\pi'$ of $U$ and $U'$, respectively,  as in Definition~\ref{equivprobpolicy}, we have:
\begin{align*}
L^\theta_{\pi, 0}(U) &= L^\theta_{\pi', 0}(U'),  \\ 
L^\theta_{\pi,F}(U) &= L^\theta_{\pi',F}(U'), \\
L^\theta_{\pi,\text{Stack}}(U) &= L^\theta_{\pi',\text{Stack}}(U'), \\
L^\theta_{\pi,\text{Tape}}(U) &= L^\theta_{\pi',\text{Tape}}(U'), \\
L^\theta_{\pi,H}(U) &= L^\theta_{\pi',H}(U'), \\
L^\theta_{\pi,0,P}(U) &= L^\theta_{\pi',0,P}(U'), \\
L^\theta_{\pi,F,P}(U) &= L^\theta_{\pi',F,P}(U'), \\
L^\theta_{\pi,\text{Stack},P}(U) &= L^\theta_{\pi',\text{Stack},P}(U'). \\
L^\theta_{\pi,\text{Tape},P}(U) &= L^\theta_{\pi',\text{Tape},P}(U'). \\
L^\theta_{\pi,H,P}(U) &= L^\theta_{\pi',H,P}(U'). \\
\end{align*}
\end{proposition}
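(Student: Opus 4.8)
The plan is to reduce both blocks of equalities to a single probability-preservation lemma: \emph{for equivalent policies $\pi$ on $U$ and $\pi'$ on $U'$ under $\gamma$, and for every word $w \in A^*$, the total accepting mass agrees}, i.e.\ $\sum_{\rho \in \mathrm{AR}_\pi(w)} \Pr_\pi(\rho) = \sum_{\rho' \in \mathrm{AR}_{\pi'}(w)} \Pr_{\pi'}(\rho')$. Since $L^\theta_\pi(U)$ and $L^\theta_{\pi'}(U')$ are obtained from these masses by the same threshold test ``$\ge \theta$'', equality of the masses for every $w$ yields $L^\theta_\pi(U) = L^\theta_{\pi'}(U')$; this is exactly the second block, and restricting the policy type gives each of its lines.

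First I would promote the symmetric relation $\gamma$ of Definition~\ref{def4.3} to the equivalence relation $\sim$ it generates on $Q \cup Q'$, and record the \emph{lumpability} consequence of the bisimulation conditions: for every $\sim$-class $B$, control $c$, action $a$, and any two $\sim$-related states, the probability mass sent into $B$ under $(a,c)$ is the same whether measured in $U$ or in $U'$. The accepting-state clause of Definition~\ref{equivprobfinite} further implies that each class lies entirely inside $F \cup F'$ or entirely outside it, so acceptance is class-determined; and the initial-distribution clause of Definition~\ref{def4.3} gives the base case, namely that $Q_0$ and $Q'_0$ assign equal mass to each class.

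The heart of the argument is a step-by-step \emph{coupling} of the two closed-loop processes along a common input word $w = a_0 \cdots a_{n-1}$. I would build, by induction on prefix length, a joint distribution on pairs of $\gamma$-parallel partial runs $(\rho,\rho')$ --- those with $(q_i,q'_i)\in\gamma$ at every step --- whose marginals are $\Pr_\pi$ and $\Pr_{\pi'}$. At each step the two partial histories are $\gamma$-parallel, so equivalence of policies (Definition~\ref{equivprobpolicy}) forces $\pi(h_{i+1}) = \pi'(h'_{i+1})$ as distributions over $C$; drawing a single control $c$ from this shared distribution and invoking lumpability lets me couple the successor distributions $P(q_i,a_i,c,\cdot)$ and $P'(q'_i,a_i,c,\cdot)$ so that the paired successors land in the same $\sim$-class almost surely. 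Summing $c$ out recovers exactly the step weights of Definition~\ref{ControlledProbabilisticAutomata}, so the marginals are preserved. Because $(q_n,q'_n)$ remain $\sim$-related and acceptance is class-determined, $q_n \in F$ iff $q'_n \in F'$ under the coupling, and integrating yields equality of the accepting masses, proving the lemma.

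For the family-level equalities (first block) I would, given any policy $\pi$ of a fixed type on $U$, transport it across $\gamma$ to a policy $\pi'$ of the \emph{same} type on $U'$: the memory component (a finite set, a stack, a tape, and in the probabilistic variants a $\mathit{Dist}(C)$-valued output) is copied verbatim, while the state it reads is matched through $\gamma$. This $\pi'$ is equivalent to $\pi$ in the sense of Definition~\ref{equivprobpolicy}, so the lemma gives $L^\theta_{\pi'}(U') = L^\theta_\pi(U)$, and symmetry of $\gamma$ yields the reverse inclusion; carrying this out uniformly for the deterministic types $\{0,F,\mathrm{Stack},\mathrm{Tape},H\}$ and their probabilistic counterparts produces all ten family equalities at once. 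I expect the main obstacle to be precisely the well-definedness and genuine equivalence of the transported policy when $\gamma$ relates one state of $U'$ to several states of $U$ (or vice versa): the control prescription must then be independent of the chosen $\gamma$-representative, which is where the lumpability of the first step and the class-determinacy of the control must be used carefully, and the history-dependent case additionally requires verifying that the per-step identity $\pi(h_{i+1}) = \pi'(h'_{i+1})$ persists even though distinct concrete histories may project to the same class-history.
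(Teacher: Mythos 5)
The paper states this proposition without any proof, so there is no official argument to compare yours against; your proposal has to be judged on its own merits. Its architecture---a coupling-based mass-preservation lemma for equivalent policies, then a transport of policies across $\gamma$ for the family-level equalities---is the natural route, but it rests on two claims that do not follow from the paper's definitions, and under a literal reading of those definitions the proposition itself is false, so no proof can close the gaps. The decisive false claim is that Definition~\ref{equivprobfinite} makes acceptance class-determined. It does not: that clause only requires every $q\in F$ to be $\gamma$-related to \emph{some} $q'\in F'$ and conversely; it does not forbid a non-accepting state from being $\gamma$-related to an accepting one. Concretely, let $U$ have states $q_1\in F$, $q_2\notin F$ with $P(q_1,a,c_2,q_2)=P(q_2,a,c_2,q_1)=1$, $P(q_i,a,c_1,q_i)=1$, and initial mass $1$ on $q_1$; let $U'$ have a single accepting state $q'$ with $P'(q',a,c,q')=1$ for all $c$, and let $\gamma$ relate both $q_1,q_2$ to $q'$. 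Every clause of Definition~\ref{def4.3} and Definition~\ref{equivprobfinite} holds, since all transition mass stays inside the unique $\gamma$-class. Yet the constant policies $\pi\equiv c_2$ on $U$ and $\pi'\equiv c_2$ on $U'$ are equivalent in the sense of Definition~\ref{equivprobpolicy}, while $L^\theta_\pi(U)=(aa)^*$ and $L^\theta_{\pi'}(U')=a^*$. Your coupling indeed pairs the runs $q_1\to q_2\to q_1\to\cdots$ and $q'\to q'\to\cdots$ and preserves per-class mass, but the accepting mass is not preserved because $q_2$ and $q'$ are $\gamma$-related with opposite acceptance status; so the lemma, and with it both blocks of the proposition, collapses on this example.

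The second gap is the transport step for the first block, exactly at the obstacle you flagged: nothing in the definitions supplies the ``class-determinacy of the control'' you hope to invoke. A legitimate policy on $U$ may assign different controls to $\gamma$-related states or histories, in which case no policy on $U'$ is equivalent to it under Definition~\ref{equivprobpolicy}, and the language it realizes may be unattainable on $U'$ at all. Even after strengthening acceptance to be class-determined, take $U$ with a class $\{q_1,q_2\}$, an accepting sink $r$, transitions $P(q_i,a,c_1,r)=1$, $P(q_1,a,c_2,q_2)=P(q_2,a,c_2,q_1)=1$, $r$ absorbing, and let $U'$ be the quotient ($q'$ with $P'(q',a,c_1,r')=1$, $P'(q',a,c_2,q')=1$, $r'$ absorbing and accepting). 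The memoryless deterministic policy $\pi(q_1)=c_2$, $\pi(q_2)=c_1$ realizes $\{a^n\mid n\ge 2\}$ on $U$, which no memoryless deterministic policy on $U'$ can realize (those give only $\emptyset$ and $aa^*$); hence $L^\theta_0(U)\neq L^\theta_0(U')$. The moral is that the internal structure of a bisimulation class can act as controller memory that the quotient lacks, so the family-level equalities fail intrinsically, not merely in your write-up. For your argument to become sound one needs hypotheses strictly stronger than the paper's: acceptance constant on $\gamma$-classes (a Larsen--Skou-style condition) for the coupling lemma, and either a restriction to $\gamma$-respecting policies or a bijective $\gamma$ for the family-level claims; the lumpability transfer from the merely symmetric $\gamma$ to its generated equivalence relation would also need an explicit argument.
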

\subsection*{Global Language Families for Controlled Probabilistic Automata on Finite Words}
We now introduce the global language families associated with controlled probabilistic automata as formalized in Definition \ref{LanguageControlledAutomata}. We define sets of language classes indexed by probabilistic systems and stratified by policy type.
\begin{definition} \label{PolicyTypesProb}
Let $\mathcal{U}$ denote the set of all controlled probabilistic automata (as in Definition \ref{def3.2}). For each policy class $\Pi \in \{0, F, \text{Stack}, \text{Tape}, H\}$, define:
\[
\mathbb{L}^{\theta}_{\Pi,P} := \{ L^{\theta}_{\Pi,P}(U) \mid U \in \mathcal{U} \} \subseteq \mathcal{P}(2^{A^*})
\]
where $L^{\theta}_{\Pi,P}(U) := \{ L^{\theta}_{\pi,P}(U) \mid \pi \in \Pi \}$ is the family of languages on finite words of a controlled probabilistic automaton $U$ accepted with probability above a fixed threshold $\theta \in (0, 1]$ under a probabilistic policy $\pi$.
\end{definition}
\subsection*{Hierarchy}
\begin{theorem} \label{thm.prob.global.finite}
As in the deterministic setting, we obtain a strict inclusion hierarchy:
\[
\mathbb{L}^{\theta}_{0,P} \subset \mathbb{L}^{\theta}_{F,P} \subset \mathbb{L}^{\theta}_{\text{Stack},P} \subset \mathbb{L}^{\theta}_{\text{Tape},P} \subset \mathbb{L}^{\theta}_{H,P}.
\]
\end{theorem}

\begin{proof}
We only outline why the inclusions hold and are strict; the argument is a
probabilistic lifting of the deterministic hierarchy in Theorem~\ref{thm.global.finite}.

\emph{Inclusions.} For any controlled probabilistic automaton $U$:
\begin{itemize}
  \item A memoryless policy is a special case of a finite-memory policy,
        so $L^\theta_{0,P}(U) \subseteq L^\theta_{F,P}(U)$.
  \item A finite-memory policy can be encoded by a pushdown controller with
        bounded stack height, giving
        $L^\theta _{F,P}(U) \subseteq L^\theta_{\mathrm{Stack},P}(U)$.
  \item A Turing-power controller can simulate a pushdown controller,
        yielding
        $L^\theta_{\mathrm{Stack},P}(U) \subseteq L^\theta_{\mathrm{Tape},P}(U)$.
  \item Finally, Turing-power controllers are a special case of general
        history-dependent policies, so
        $L^\theta_{\mathrm{Tape},P}(U) \subseteq L^\theta_{H,P}(U)$.
\end{itemize}
Taking unions over all $U$ gives the stated chain of inclusions for the
global families.

\emph{Strictness.} For each step, we reuse the deterministic witnesses
from Theorem~\ref{thm.global.finite}, viewed as \emph{degenerate} probabilistic systems
whose transition probabilities are $0$ or $1$:
\begin{itemize}
  \item Lemma~\ref{lem.prob.finite.memory.finite} provides a single probabilistic plant $U$ and a
        countable family of finite-memory policies $\{\pi_k\}$ with
        distinct regular languages, showing
        $\mathbb{L}^{\theta}_{0,P} \subset \mathbb{L}^{\theta}_{F,P}$.
  \item The pushdown vs.\ finite-memory and Turing-power vs.\ pushdown
        separations are obtained by taking the deterministic plants and
        controllers from Chapter~3 and interpreting their transitions as
        probabilistic with probability~$1$; the acceptance probabilities
        are still $0$ or $1$, so the same separating languages appear in
        $\mathbb{L}^{\theta}_{\text{Stack},P}$ and 
        $\mathbb{L}^{\theta}_{\text{Tape},P}$ but not in the smaller classes.
  \item Finally, the history-dependent vs.\ Turing-power separation is
        established by the same kind of non-realizable language family
        as in Theorem~\ref{thm.global.finite}, again interpreted in the probabilistic
        setting via transitions of probability~$1$.
\end{itemize}
Thus, each inclusion in the chain is proper, yielding the hierarchy.
\end{proof}

The following lemma provides a constructive witness for the strictness claimed in Theorem~\ref{thm.prob.global.finite}.
\begin{lemma}
\label{lem.prob.finite.memory.finite}
There exists a controlled probabilistic automaton $U$ such that
$L_F^{\theta}(U)$ contains a countably infinite family of distinct regular languages.
\end{lemma}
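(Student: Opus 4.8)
The plan is to lift the deterministic construction of Lemma~\ref{lem.finite.memory.finite} into the probabilistic setting by exhibiting a \emph{degenerate} controlled probabilistic automaton whose transition probabilities all lie in $\{0,1\}$. Concretely, I would fix the one-letter activity alphabet $A=\{a\}$ and take $U=(Q,A,C,P,Q_0,F)$ to be the probabilistic counterpart of the plant $S$ from Lemma~\ref{lem.finite.memory.finite}: the same state set $Q$, the same two control actions (one emitting $a$ and one terminating the word), and the same accepting set $F$, with the initial distribution $Q_0$ a point mass on the start state. Each transition that existed in $S$ is assigned probability $1$ and all others probability $0$, so the normalization condition $\sum_{q'}P(q,a,c,q')=1$ of Definition~\ref{def3.2} holds automatically.

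Next I would reuse verbatim the family of finite-memory \emph{deterministic} policies $\pi_k$ (for $k\ge 1$) from Lemma~\ref{lem.finite.memory.finite}, each carrying a counter modulo $k$ that always permits the emit action (incrementing the counter) and permits the terminate action only when the counter equals $0$. Since $U$ is degenerate and each $\pi_k$ is deterministic, every word $w\in A^*$ induces at most one run $\rho$; moreover, by Definition~\ref{ControlledProbabilisticAutomata} the value $\Pr_{\pi_k}(\rho)$ is a product of factors each equal to $0$ or $1$, so $\sum_{\rho\in\mathrm{AR}_{\pi_k}(w)}\Pr_{\pi_k}(\rho)\in\{0,1\}$. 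Hence, for every threshold $\theta\in(0,1]$, the acceptance condition of Definition~\ref{LanguageControlledAutomata} reduces to the requirement that the unique run of $w$ end in $F$, giving $L^\theta_{\pi_k}(U)=L_{\pi_k}(S)=(a^k)^*=:L_k$.

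Finally I would record that the languages $L_k=(a^k)^*$ are pairwise distinct (for instance $a^2\in L_2$ but $a^2\notin L_3$) and that $\{L_k\mid k\ge 1\}$ is countably infinite. Since each $\pi_k$ is a finite-memory deterministic policy, every $L_k$ belongs to $L^\theta_F(U)=L_F^\theta(U)$, establishing that this class contains a countably infinite family of distinct regular languages.

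The construction itself is routine, being a direct copy of Lemma~\ref{lem.finite.memory.finite}. The one genuinely new point---and the step I would check most carefully---is the collapse of the threshold semantics: one must confirm that confining the probabilities to $\{0,1\}$ together with deterministic control forces every factor in the product of Definition~\ref{ControlledProbabilisticAutomata} to be $0$ or $1$, so that no intermediate acceptance value ever arises. This is what makes $L^\theta_F(U)$ independent of the particular $\theta\in(0,1]$ and identifies it with the deterministic finite-memory family of Lemma~\ref{lem.finite.memory.finite}.
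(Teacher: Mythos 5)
Your proposal is correct and matches the paper's own argument in essence: the paper likewise takes a degenerate controlled probabilistic automaton with transition probabilities in $\{0,1\}$, reuses the modulo-$k$ finite-memory counter policies from Lemma~\ref{lem.finite.memory.finite}, and observes that each acceptance probability is $0$ or $1$, so every $(a^k)^*$ lies in $L_F^{\theta}(U)$ regardless of $\theta$. Your explicit verification that the threshold semantics of Definition~\ref{LanguageControlledAutomata} collapse under deterministic control and degenerate probabilities is a welcome elaboration of a step the paper's sketch states without justification.
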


\begin{proof}
Consider a plant $U$ with a single probabilistic state $q$ and self-loop on $a$
for every control. For each $k\ge 1$, define a finite-memory policy $\pi_k$ with
a $k$-state deterministic counter controlling acceptance: the plant is absorbing,
and acceptance is determined entirely by the controller.

The acceptance probability under $\pi_k$ is $1$ iff the input belongs to $(a^k)^*$,
and $0$ otherwise. Since probabilistic policies subsume deterministic ones,
\((a^k)^*\in L_F^{\theta}(U)\) for all $k$, and the resulting family is countably infinite.
\end{proof}
\begin{remark}
Lemma~\ref{lem.prob.finite.memory.finite} 
is a probabilistic analog of Lemma~\ref{lem.finite.memory.finite}  for
deterministic control.
\end{remark}
\subsection*{Emptiness Problem for Controlled Probabilistic Automata}
\begin{definition}
\label{def:CPA-emptiness}
Let 
\(
U = (Q, A, C, P, Q_0, F)
\)
be a controlled probabilistic automaton (CPA),
let 
\(
\Pi
\)
be a class of admissible policies,
and let
\(
\theta \in [0,1]
\)
be a probability threshold.
The \emph{emptiness problem for CPAs} is to decide whether
\[
\exists\, \pi \in \Pi\ \text{such that}\ 
L_{\theta}^\pi(U) \neq \emptyset,
\]
equivalently,
\[
\exists\, \pi \in \Pi,\ \exists\, w \in A^*\ \text{such that}\ 
\Pr_\pi(\mathrm{AR}(w)) \geq \theta,
\]
where 
\(
\mathrm{AR}(w)
\)
denotes the set of accepting runs, if any, that generate a given finite
action sequence (word) w under policy $\pi$  in $U$ (see definition~\ref{LanguageControlledAutomata}).
\end{definition}
\begin{theorem}
\label{prop:CPA-emptiness-undecidable}
The emptiness problem for controlled probabilistic automata (CPAs) is undecidable for
\(
\theta > 0
\)
under general (e.g., history-dependent) policies.
\end{theorem}
\begin{proof}
The undecidability of the emptiness problem for probabilistic automata with a non-strict cutpoint was first established by Rabin~\cite{Rabin1963} through a reduction from the Post correspondence problem. 
Paz~\cite{Paz1971} provides a comprehensive exposition of this result and related decision problems, including the structure of probabilistic languages and the behavior of cutpoints. 
Madani, Hanks, and Condon~\cite{madani2003} later extended these ideas to probabilistic planning, showing that undecidability persists in control-oriented settings. Together, these results
establish that no algorithm can decide emptiness for probabilistic automata
under a non-strict cutpoint semantics.

Since a controlled probabilistic automaton with a singleton control set and a trivial policy behaves identically to a probabilistic finite automaton, the emptiness problem for CPAs reduces directly to Rabin’s classical result, proving undecidability.
\end{proof}
\begin{remark}
For strict or isolated cutpoints, certain decision problems for probabilistic automata become decidable; however, these restrictions are not considered in this work.
\end{remark}

\subsection*{Probabilistic Languages on Infinite Words}
Finally, we consider the reactive aspect of a controlled probabilistic automaton with a finite set of states by considering its 
behavior under an infinite sequence of activity completions as follows.
\begin{definition} \label{ControlledProbabilisticBuchiAutomata}
A {\it controlled probabilistic Büchi automaton} is a  6-tuple $U_\omega = (Q, A, C, P, Q_0, F) $ where:
\begin{itemize}
  \item  $ (Q, A, C, P, Q_0) $ is a controlled probabilistic automaton with a finite set of states,
  \item $ F \subseteq Q $ is the set of accepting states (visited infinitely often).
\end{itemize}
\end{definition}
\begin{definition} \label{CPBArun}
Let \(U_\omega = (Q, A, C, P, Q_0, F) \) be a controlled probabilistic Büchi automaton,
and let \( \pi : (Q \times A)^* \to \text{Dist}(C) \) be a probabilistic policy. We can construct a probability space $(\Omega, \mathcal{F},  \mathbb{P}_\pi)$ under policiy $\pi$ as follows.
Define the sample space of infinite runs as \( \Omega = (Q \times A)^\omega \),
and let \( \mathcal{F} \) be the $\sigma$-algebra generated by cylinder sets \cite{billingsley1995probability} of the form:
\[
\left\{ (Q \times A)^n \times Q \;\middle|\; n > 0 \right\}.
\]
For any finite run
\[
\rho[0..n] = (q_0, a_0)(q_1, a_1) \cdots (q_{n-1}, a_{n-1})q_n,
\]
define the probability of this finite run under policy \( \pi \) as:
\[
\mathbb{P}_\pi(\rho[0..n]) = \prod_{i=0}^{n-1} \left( \sum_{c \in C} \pi(h_{i+1})(c) \cdot P(q_i, a_i, c, q_{i+1}) \right),
\]
where \( h_i = (q_0, a_0)(q_1, a_1) \cdots (q_{i-1}, a_{i-1}) \)
is the prefix history leading up to time step \( i \).
By Kolmogorov’s extension theorem~\cite{Kolmogorov1933,billingsley1995probability},
this defines a unique probability measure \( \mathbb{P}_\pi \) on \( (\Omega, \mathcal{F}) \).
\end{definition}
\begin{definition} \label{LanguageCPBA}
Let $U_\omega = (Q, A, C, P, Q_{0}, F)$ be a controlled probabilistic Büchi automaton with a set of accepting states ($F \subseteq Q$).  
For \( w = a_0 a_1 a_2 \cdots \in A^{\omega} \), let 
\(\mathrm{AIR}_{\pi}(w)\) denote the set of all {\it accepting} infinite runs, if any, generating a given action sequence \( w \) under policy \( \pi \) in \( U_\omega \), i.e.,
\[
\mathrm{AIR}_{\pi}(w)
= \left\{ \rho   \;\middle|\;
\begin{array}{l}
\rho = (q_0,a_0) (q_1, a_1) (q_2, a_2) \cdots \text{ is an infinite run under } \pi \text{ in } U_\omega \\
\text{such that } q_0 \in Q_0 \text{ and } \text{Inf}(\rho) \cap F \neq \emptyset
\end{array}
\right\}
\]
where \( \text{Inf}(\rho) \) is the set of states that appear infinitely often in the projection of \( \rho \) onto \( Q \).
Given a threshold $\theta \in (0,1]$, the probabilistic $\omega$-language {\it accepted} under a probabilistic policy $\pi$  by $U_\omega$ is defined as:
\[
L^{\omega, \theta}_\pi(U_\omega)
= \left\{\, w \in A^{\omega} \ \middle|\ 
\sum_{\rho \in \mathrm{AIR}_{\pi}(w)} \mathbb{P}_\pi(\rho)  \; \;  \ge \theta
\right\}.
\]
When $\theta = 0^+ $ or $\theta = 1$, where $0^+$ denotes a sufficiently small positive real number, the model is said to have a {\it positive acceptance} or {\it almost-sure acceptance} semantics, respectively  \cite{baier2012probabilistic}.
\end{definition}
\begin{proposition} \label{LanguageClasses3.6}
Let $U_\omega$ be a controlled probabilistic Buchi automaton.  We define the following classes of probabilistic $\omega$-languages:
\begin{align*}
L^{\omega,\theta}_{0}(U_\omega) &= \{ L^{\omega,\theta}_\pi(U_\omega) \mid \pi  \text{ is a memoryless deterministic policy}\}, \\
L^{\omega,\theta}_{F}(U_\omega) &= \{ L^{\omega,\theta}_\pi(U_\omega) \mid \pi \text{ is a finite-memory deterministic policy}\}, \\
L^{\omega,\theta}_{\text{Stack}}(U_\omega) &= \{ L^{\omega,\theta}_\pi(U_\omega) \mid \pi \text{ is a stack-augmanted deterministic policy}\}, \\
L^{\omega,\theta}_{\text{Tape}}(U_\omega) &= \{ L^{\omega,\theta}_\pi(U_\omega) \mid \pi \text{ is a tape-augemted deterministic policy}\}, \\
L^{\omega,\theta}_{H}(U_\omega) &= \{ L^{\omega,\theta}_\pi(U_\omega) \mid \pi  \text{ is a history-dependent deterministic policy}\}, \\
L^{\omega,\theta}_{0,P}(U_\omega) &= \{ L^{\omega,\theta}_\pi(U_\omega) \mid \pi \text{ is a memoryless probabilistic policy}\}, \\
L^{\omega,\theta}_{F,P}(U_\omega) &= \{ L^{\omega,\theta}_\pi(U_\omega) \mid \pi \text{ is a finite-memory probabilistic policy}\}, \\
L^{\omega,\theta}_{\text{Stack,P}}(U_\omega) &= \{ L^{\omega,\theta}_\pi(U_\omega) \mid \pi \text{ is a stack-augmanted probabilistic policy}\}, \\
L^{\omega,\theta}_{\text{Tape,P}}(U_\omega) &= \{ L^{\omega,\theta}_\pi(U_\omega) \mid \pi \text{ is a tape-augemted probabilistic policy}\}, \\
L^{\omega,\theta}_{H,P}(U_\omega) &= \{ L^{\omega,\theta}_\pi(U_\omega) \mid \pi \text{is a history-dependent probabilistic policy}\}.
\end{align*}
We have:
\[
L^{\omega,\theta}_0 (U_\omega) \subseteq  L^{\omega,\theta}_F (U_\omega) \subseteq L^{\omega,\theta}_{\text{Stack}}(U_\omega \subseteq L^{\omega,\theta}_{\text{Tape}}(U_\omega)  \subset L^{\omega,\theta}_H (U_\omega),
\]
\[
L^{\omega,\theta}_{0,P} (U_\omega)  \subseteq L^{\omega,\theta}_{F,P} (U_\omega) \subseteq L^{\omega,\theta}_{\text{Stack,P}}(U_\omega) \subseteq L^{\omega,\theta}_{\text{Tape,P}}(U_\omega)  \subset L^{\omega,\theta}_{H,P} (U_\omega).
\]
\end{proposition}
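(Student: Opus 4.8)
The plan is to fuse the two hierarchies already in hand---the deterministic $\omega$-word chain of Theorem~\ref{thm.global.infinite} and the probabilistic finite-word chain of Theorem~\ref{thm.prob.global.finite}---into a single statement for controlled probabilistic B\"uchi automata, handling both the deterministic-policy and probabilistic-policy chains in parallel. Fix $U_\omega$ and a threshold $\theta \in (0,1]$. The inclusions $\subseteq$ are the easy direction and hold for every $U_\omega$: a memoryless policy is a one-state finite-memory policy; a finite-memory policy is a stack-augmented policy that never grows its stack; a stack-augmented policy is a tape-augmented (Turing-power) controller; and a tape-augmented controller, being computable, is a particular history-dependent policy. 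Because the accepted $\omega$-language $L^{\omega,\theta}_{\pi}(U_\omega)$ of Definition~\ref{LanguageCPBA} depends only on the stream of control actions emitted along each history, a richer controller that reproduces those controls induces literally the same measure $\mathbb{P}_\pi$ (Definition~\ref{CPBArun}) and hence the same language. The identical embedding gives the probabilistic-policy inclusions, a deterministic control choice being the point mass in $\mathrm{Dist}(C)$. As in Lemma~\ref{lem.finite.memory.infinite}, the strict inclusions are to be read existentially: I would exhibit a single sufficiently expressive plant on which each level strictly extends its predecessor.

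The key reduction for strictness is to build every separating witness from a degenerate automaton whose transition probabilities lie in $\{0,1\}$, exactly as in the proof sketch of Theorem~\ref{thm.prob.global.finite}. On such a plant each infinite word has a unique run under a given policy, so the acceptance mass $\sum_{\rho \in \mathrm{AIR}_\pi(w)} \mathbb{P}_\pi(\rho)$ is always $0$ or $1$; therefore $L^{\omega,\theta}_{\pi}(U_\omega)$ coincides with the qualitative B\"uchi language of $\pi$ for every $\theta \in (0,1]$, and the threshold plays no discriminating role. Each strictness claim then reduces to the corresponding deterministic $\omega$-separation. Lemma~\ref{lem.finite.memory.infinite} supplies a plant on which finite-memory policies realize the infinite family $\{(a^k b)^\omega \mid k \ge 1\}$, no member of which a memoryless controller can realize, giving $L^{\omega,\theta}_{0} \subset L^{\omega,\theta}_{F}$. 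A pushdown controller enforcing a non-regular matching obligation (e.g.\ insisting on $a^n b^n$ within each cycle before revisiting an accepting state) separates finite-memory from stack; a Turing-power controller enforcing a non-context-free pattern such as $a^n b^n c^n$ per cycle separates stack from tape; and a non-computable history-dependent policy, produced by the cardinality/diagonalization argument underlying Proposition~\ref{prop.finite.memory.nonrealizable.infinite}, yields an $\omega$-language outside the computable (tape) class, giving $L^{\omega,\theta}_{\mathrm{Tape}} \subset L^{\omega,\theta}_{H}$.

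The probabilistic-policy chain then follows by rereading the very same $\{0,1\}$-valued witnesses as probabilistic policies with point-mass distributions. Since the acceptance masses stay in $\{0,1\}$, each separating $\omega$-language continues to witness strictness, now between $L^{\omega,\theta}_{\Pi,P}$ and the next class; combined with the inclusions above this yields the second displayed chain.

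The main obstacle I expect lies not in the inclusions but in making the B\"uchi acceptance condition $\mathrm{Inf}(\rho) \cap F \neq \emptyset$ interact correctly with the controller's memory in each witness, so that the realized $\omega$-language is exactly the intended one and not a spurious superset. One must arrange that the accepting states are revisited infinitely often precisely along runs honoring the controller's counting or matching obligation, which means threading the acceptance condition through the repeated phases of the construction. The restriction to degenerate $0/1$ plants is what prevents the genuinely measure-theoretic acceptance of Definition~\ref{LanguageCPBA} from blurring these separations; confirming that this restriction costs nothing for the strictness claims---while the inclusions remain valid for arbitrary probabilistic plants---is the one point that deserves explicit verification.
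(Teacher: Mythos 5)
Your proposal is correct and follows essentially the same route as the paper: the inclusions come from viewing each policy class as a special case of the next (with deterministic choices read as point masses in $\mathrm{Dist}(C)$), and strictness is obtained by lifting the deterministic $\omega$-word separations of Theorem~\ref{thm.global.infinite} into the probabilistic B\"uchi setting via degenerate plants with transition probabilities in $\{0,1\}$, exactly as the paper does in its sketches for the probabilistic hierarchies and in Lemma~\ref{lem.finite.memory.prob.infinite}. Your added remarks—that acceptance masses stay in $\{0,1\}$ so the threshold $\theta$ plays no discriminating role, and that the strict inclusions must be read existentially over sufficiently expressive plants—make explicit two points the paper leaves implicit, but they do not change the argument.
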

\begin{definition} \label{equivprobinfinite}
Let $U_\omega = (Q, A, C, P, Q_{0}, F)$ and $U_\omega' = (Q', A', C', P', Q'_{0}, F')$ be two controlled probabilistic Buchi automata 
with the same activity alphabet and set of control actions (i.e., $ A=A'$ and $ C=C'$).  Let $(Q, A, C, P, Q_{0})$ and $(Q', A', C', P', Q'_{0})$ 
be two equivalent controlled probabilistic automata under a bisimulation $\gamma$, as in Definition~\ref{def4.3}.  Then, $U_\omega$ and $U'_\omega$ are said to be {\it equivalent} if:
be the sets of accepting states for $\mathcal{U}_\omega$ and $\mathcal{U}_\omega'$, respectively, such that
\begin{itemize}
\item for any $q \in F$, there exists $q' \in F'$ where $(q,q') \in \gamma$; also for any $q' \in F'$, there exists $q \in Q$ where $(q'',q) \in \gamma$.  
\end{itemize}
\end{definition}
\begin{proposition}
Let $U_\omega$ and $U'_\omega$ be two equivalent controlled probabilistic Buchi automata as in Definition~\ref{equivprobinfinite}.  Consider the classes of probabilistic $\omega$-languages as defined in Proposition \ref{LanguageClasses3.6}. Then, we have:
\begin{align*}
L^{\omega,\theta}_{0}(U_\omega) &= L^{\omega,\theta}_{0}(U_\omega'),  \\ 
L^{\omega,\theta}_{F}(U_\omega) &= L^{\omega,\theta}_{F}(U_\omega'), \\
L^{\omega,\theta}_{\text{Stack}}(U_\omega) &= L^{\omega,\theta}_{H}(U_\omega'), \\
L^{\omega,\theta}_{\text{Tape}}(U_\omega) &= L^{\omega,\theta}_{H}(U_\omega'), \\
L^{\omega,\theta}_{H}(U_\omega) &= L^{\omega,\theta}_{H}(U_\omega'), \\
L^{\omega,\theta}_{0,P}(U_\omega) &= L^{\omega,\theta}_{0,P}(U_\omega'), \\
L^{\omega,\theta}_{F,P}(U_\omega) &= L^{\omega,\theta}_{F,P}(U_\omega'), \\
L^{\omega,\theta}_{\text{Stack},P}(U_\omega) &= L^{\omega,\theta}_{H,P}(U_\omega'), \\
L^{\omega,\theta}_{\text{Tape},P}(U_\omega) &= L^{\omega,\theta}_{H,P}(U_\omega'), \\
L^{\omega,\theta}_{H,P}(U_\omega) &= L^{\omega,\theta}_{H,P}(U_\omega'). \\
\end{align*}
Also, for equivalent polices $\pi$ and $\pi'$ of $U_\omega$ and $U_\omega'$, respectively,  as in Definition~\ref{equivprobpolicy}, we have:
\begin{align*}
L^{\omega,\theta}_{\pi, 0}(U_\omega) &= L^{\omega,\theta}_{\pi', 0}(U_\omega'),  \\ 
L^{\omega,\theta}_{\pi,F}(U_\omega) &= L^{\omega,\theta}_{\pi',F}(U_\omega'), \\
L^{\omega,\theta}_{\pi,\text{Stack}}(U_\omega) &= L^{\omega,\theta}_{\pi',H}(U_\omega'), \\
L^{\omega,\theta}_{\pi,\text{Tape}}(U_\omega) &= L^{\omega,\theta}_{\pi',H}(U_\omega'), \\
L^{\omega,\theta}_{\pi,H}(U_\omega) &= L^{\omega,\theta}_{\pi',H}(U_\omega'), \\
L^{\omega,\theta}_{\pi,0,P}(U_\omega) &= L^{\omega,\theta}_{\pi',0,P}(U_\omega'), \\
L^{\omega,\theta}_{\pi,F,P}(U_\omega) &= L^{\omega,\theta}_{\pi',F,P}(U_\omega'), \\
L^{\omega,\theta}_{\pi,\text{Stack},P}(U_\omega) &= L^{\omega,\theta}_{\pi',H,P}(U_\omega'), \\
L^{\omega,\theta}_{\pi,\text{Tape},P}(U_\omega) &= L^{\omega,\theta}_{\pi',H,P}(U_\omega'), \\
L^{\omega,\theta}_{\pi,H,P}(U_\omega) &= L^{\omega,\theta}_{\pi',H,P}(U_\omega'). \\
\end{align*}
\end{proposition}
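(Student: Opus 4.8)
The plan is to use the witnessing bisimulation $\gamma$ to transport both policies and their induced probability measures between $U_\omega$ and $U'_\omega$, so that corresponding acceptance probabilities, and hence the accepted $\omega$-languages, coincide for every threshold $\theta$ and every policy class. I would split the argument into a \emph{pointwise} part, fixing a $\gamma$-equivalent pair of policies, and a \emph{class-level} part, ranging over an entire policy family $\Pi$; this mirrors the deterministic analogue but adds the measure-theoretic bookkeeping demanded by Definition~\ref{CPBArun}. First I would record the two facts the hypotheses give for free: by Definition~\ref{def4.3} the relation $\gamma$ equates the initial distributions and the one-step transition mass at the level of $\gamma$-classes, and by Definition~\ref{equivprobinfinite} it maps $F$ onto $F'$, so $F$ is $\gamma$-saturated. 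Lumping states by $\gamma$ therefore yields a single quotient structure through which both automata factor, and the Büchi predicate $\mathrm{Inf}(\rho)\cap F\neq\emptyset$ descends to this quotient because $F$ is $\gamma$-closed.

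For the pointwise equalities I would fix policies $\pi,\pi'$ equivalent under $\gamma$ in the sense of Definition~\ref{equivprobpolicy} and prove by induction on the prefix length $n$ that the finite-run probabilities $\mathbb{P}_\pi$ and $\mathbb{P}'_{\pi'}$ agree on every cylinder specified by a word prefix $a_0\cdots a_{n-1}$ together with a terminal $\gamma$-class. The base case is the matching of initial distributions; the inductive step combines the matching of transition mass into a $\gamma$-class with the identity $\pi(h_i)=\pi'(h'_i)$ along $\gamma$-related histories, which together make the closed-loop lumped one-step probabilities coincide. Since (as already invoked in Definition~\ref{CPBArun}) Kolmogorov's extension theorem determines the full measure from these cylinder probabilities, the two automata induce the same measure on $\gamma$-classes of infinite runs; combined with the $\gamma$-invariance of acceptance this gives $\mathbb{P}_\pi(\mathrm{AIR}_\pi(w))=\mathbb{P}'_{\pi'}(\mathrm{AIR}_{\pi'}(w))$ for every $w\in A^\omega$, and comparing with $\theta$ yields $L^{\omega,\theta}_\pi(U_\omega)=L^{\omega,\theta}_{\pi'}(U'_\omega)$ for each policy type.

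For the class-level equalities I would show each family is closed under transport along $\gamma$. Given any $\pi\in\Pi$ on $U_\omega$, I define $\pi'$ on $U'_\omega$ by precomposing with a fixed $\gamma$-representative map on states while copying the internal memory apparatus verbatim; this keeps $\pi'$ in the same class (a memoryless map stays memoryless, the memory set $M$, stack alphabet $\Gamma$, or tape is reused unchanged, and a history-dependent map is merely reindexed). The resulting $\pi'$ is $\gamma$-equivalent to $\pi$, so the pointwise result gives $L^{\omega,\theta}_\pi(U_\omega)=L^{\omega,\theta}_{\pi'}(U'_\omega)$; ranging over all $\pi$ yields one inclusion, and the symmetry of $\gamma$ yields the reverse, establishing $L^{\omega,\theta}_\Pi(U_\omega)=L^{\omega,\theta}_\Pi(U'_\omega)$ for every $\Pi\in\{0,F,\mathrm{Stack},\mathrm{Tape},H\}$, in both the deterministic-policy and probabilistic-policy columns.

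The hard part will be the passage from finite-prefix agreement to preservation of the infinite Büchi-acceptance event, which is a tail property that no single finite cylinder pins down. I would handle this entirely inside the $\gamma$-lumped quotient: cylinder agreement shows the two automata induce identical measures on quotient runs, and $\gamma$-saturation of $F$ shows that acceptance is the preimage of one fixed measurable event of that quotient, so equal quotient measures force equal acceptance probabilities. A secondary difficulty is that $\gamma$ is a genuine relation rather than a bijection, so the policy transport must be shown well defined; I would fix one representative-selection once and for all and verify that $\gamma$-equivalence of the transported policies (Definition~\ref{equivprobpolicy}) is insensitive to this choice, which is precisely what the pointwise lemma needs.
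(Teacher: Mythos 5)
Your pointwise half is sound: fixing $\gamma$-equivalent policies $\pi,\pi'$ as in Definition~\ref{equivprobpolicy}, proving by induction on prefix length that the closed-loop measures agree on cylinders indexed by a word prefix together with a sequence of $\gamma$-classes, and then using Kolmogorov extension plus the $\gamma$-saturation of $F$ (Definition~\ref{equivprobinfinite}) to equate the B\"uchi acceptance probabilities is exactly the standard probabilistic-bisimulation lumping argument, and it establishes the second block of equalities. Two small repairs are needed: $\gamma$ in Definition~\ref{def4.3} is only required to be symmetric, so you must pass to its equivalence closure before you may ``lump'' states into classes, and you have silently corrected the paper's typographical errors (the statement literally equates $L^{\omega,\theta}_{\text{Stack}}(U_\omega)$ and $L^{\omega,\theta}_{\text{Tape}}(U_\omega)$ with $L^{\omega,\theta}_{H}(U_\omega')$) to the intended class-by-class form. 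Note also that the paper states this proposition without any proof, so there is no official argument to compare yours against.

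The genuine gap is in your class-level transport. You assert that precomposing $\pi$ with a fixed $\gamma$-representative map yields a policy $\pi'$ on $U'_\omega$ that is $\gamma$-equivalent to $\pi$. This fails whenever $\gamma$ relates two distinct states $q_1,q_2\in Q$ to a single state $q'\in Q'$ and $\pi$ distinguishes them: Definition~\ref{equivprobpolicy} would then force $\pi'(\cdots q')$ to equal both $\pi(\cdots q_1)$ and $\pi(\cdots q_2)$, so \emph{no} $\gamma$-equivalent policy exists at all, irrespective of which representatives you fix. You flagged this as a ``secondary difficulty'' curable by a representative selection, but the obstruction is the non-$\gamma$-invariance of $\pi$ itself, not well-definedness of the selection. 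Worse, the gap cannot be closed by any argument, because the class-level equalities can actually fail under a many-to-one bisimulation: let $U_\omega$ have $s_0$ moving on activity $a$ to $q_1$ or $q_2$ with probability $\tfrac12$ each, where from either $q_i$ control $c_1$ sends activity $b$ to an accepting sink and activity $e$ to a rejecting sink, while $c_2$ does the opposite; let $U'_\omega$ be the quotient identifying $q_1$ with $q_2$ into $q'$ (this $\gamma$ satisfies Definitions~\ref{def4.3} and~\ref{equivprobinfinite}). The memoryless deterministic policy with $\pi(q_1)=c_1$, $\pi(q_2)=c_2$ accepts exactly $a\{b,e\}A^\omega$ at threshold $\theta=\tfrac12$, whereas every memoryless deterministic policy on $U'_\omega$ yields $abA^\omega$ or $aeA^\omega$; hence $L^{\omega,1/2}_{0}(U_\omega)\neq L^{\omega,1/2}_{0}(U'_\omega)$. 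Your strategy (and the proposition itself) can only be rescued by restricting to policies that respect $\gamma$, or by strengthening the hypothesis to $\gamma$ being a bijection (isomorphism).
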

 We now consider the global language families associated with \emph{controlled probabilistic B\"uchi automata} as formalized in Definition ~\ref{LanguageCPBA}. We define sets of language classes indexed by controlled probabilistic systems and stratified by policy type.
\subsection*{Global Language Families for Controlled Probabilistic B\"uchi Automata}
\begin{definition} \label{PolicyTypesControlledProbabilisticBuchiAutomata}
Let $\mathcal{U}_{\omega}$  be the set of all probabilistic controlled B\"uchi automata (Definition ~\ref{ControlledProbabilisticBuchiAutomata}). For each policy class $\Pi \in \{0, F, \text{Stack}, \text{Tape}, H\}$, define:
\[
\mathbb{L}^{\omega,\theta}_{\Pi,P} := \left\{ L^{\omega,\theta}_\Pi(U_\omega) \mid U_\omega \in \mathcal{U}_\omega \right\} \subseteq \mathcal{P}(2^{A^\omega})
\]
where $L^{\omega,\theta}_{\Pi,P}(U_\omega) := \{ L^{\omega}_\pi(U_\omega) \mid \pi \in \Pi \}$ is the class of languages of infinite words accepted with probability at least $\theta > 0$ under policy $\pi$.
\end{definition}
\subsection*{Semantics}
Each $L^{\omega,\theta}_\pi(U_\omega)$ contains infinite words that lead to accepting runs with probability exceeding the threshold, satisfying the B\"uchi condition under probabilistic execution.
\subsection*{Hierarchy}
\begin{theorem}
There is again a strict containment structure:
\[
\mathbb{L}^{\omega,\theta}_{0,P} \subset \mathbb{L}^{\omega,\theta}_{F,P} \subset \mathbb{L}^{\omega,\theta}_{\text{Stack},P} \subset \mathbb{L}^{\omega,\theta}_{\text{Tape},P}  \subset \mathbb{L}^{\omega,\theta}_{H,P}.
\]
\end{theorem}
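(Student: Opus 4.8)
The plan is to lift the deterministic $\omega$-word hierarchy of Theorem~\ref{thm.global.infinite} to the probabilistic setting, exactly as Theorem~\ref{thm.prob.global.finite} lifted the deterministic finite-word hierarchy. Each of the four links in the chain is treated separately, first as a (non-strict) containment and then as a strict separation. The two ingredients are: (i) every controller of a weaker class embeds, at the level of the control mechanism alone, into the next stronger class, and this embedding is orthogonal to the probabilistic transition structure of the plant; and (ii) the deterministic separating witnesses of the $\omega$-regular hierarchy reappear inside the probabilistic families once transition probabilities are restricted to $\{0,1\}$.

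For the inclusions I would argue as follows. A memoryless probabilistic policy $\pi : Q \to \mathrm{Dist}(C)$ is the special case of a finite-memory probabilistic policy whose memory set $M$ is a singleton; a finite-memory policy is in turn a stack-augmented policy that never pushes below a fixed bounded height; a stack-augmented policy is simulated by a tape-augmented (Turing-power) controller that stores the stack on its tape; and a tape-augmented policy is a particular \emph{computable} history-dependent policy. Because each of these simulations alters only the controller component and leaves the plant's probabilistic transition function $P$ and the induced measure $\mathbb{P}_\pi$ of Definition~\ref{CPBArun} unchanged, the accepted $\omega$-language $L^{\omega,\theta}_\pi(U_\omega)$ of Definition~\ref{LanguageCPBA} is preserved verbatim. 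Taking the union over all plants $U_\omega \in \mathcal{U}_\omega$ then yields the chain of inclusions among the global families $\mathbb{L}^{\omega,\theta}_{\Pi,P}$.

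For strictness I would reuse the degenerate-plant device from the proof sketch of Theorem~\ref{thm.prob.global.finite}: interpret each deterministic separating Büchi automaton as a controlled probabilistic Büchi automaton all of whose transitions carry probability $1$. Under such a plant the measure $\mathbb{P}_\pi$ concentrates on the unique run compatible with a given input word, so $\sum_{\rho \in \mathrm{AIR}_{\pi}(w)} \mathbb{P}_\pi(\rho) \in \{0,1\}$ and the threshold acceptance of Definition~\ref{LanguageCPBA} collapses to ordinary Büchi acceptance for every $\theta \in (0,1]$. The witness family $\{(a^k b)^\omega \mid k \ge 1\}$ of Lemma~\ref{lem.finite.memory.infinite} separates $\mathbb{L}^{\omega,\theta}_{0,P}$ from $\mathbb{L}^{\omega,\theta}_{F,P}$; $\omega$-lifts of $\{a^n b^n \mid n \ge 0\}$ and $\{a^n b^n c^n \mid n \ge 0\}$ (languages of the form $U V^\omega$ enforcing, respectively, one well-nested or one three-counter obligation per block) separate the finite-memory, stack, and tape levels; and a non-computable history-dependent policy separates $\mathbb{L}^{\omega,\theta}_{\text{Tape},P}$ from $\mathbb{L}^{\omega,\theta}_{H,P}$.

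The main obstacle is measure-theoretic bookkeeping at the Büchi level rather than any new combinatorial content. One must verify that, for a probability-$1$ plant, the event $\{\rho : \Inf(\rho) \cap F \neq \emptyset\}$ is $\mathcal{F}$-measurable and has probability equal to the indicator of deterministic Büchi acceptance, so that the collapse to the deterministic case is exact and each separating family retains both its membership in the larger class and its non-membership in the smaller one. The finite-word argument of Theorem~\ref{thm.prob.global.finite} sidesteps the infinite-run measure entirely, so this verification—together with confirming, via the $\omega$-analog of Proposition~\ref{prop.finite.memory.nonrealizable.infinite}, that the stack-versus-tape witness genuinely requires unbounded counting in the $\omega$-regime—is the only place where the infinite-word, probabilistic setting demands extra care beyond the earlier proofs.
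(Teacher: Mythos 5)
Your proposal is correct and follows essentially the same route as the paper's own proof: inclusions via controller-class embeddings that leave the plant's probabilistic structure untouched, and strictness by reinterpreting the deterministic separating witnesses as degenerate (probability-$0$/$1$) controlled probabilistic B\"uchi automata, under which threshold acceptance collapses to ordinary B\"uchi acceptance. Your added attention to the measurability of the acceptance event and to the $(a^k b)^\omega$ witness family matches the paper's supporting Lemma~\ref{lem.finite.memory.prob.infinite}, just spelled out in more detail.
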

\begin{proof}
The argument parallels Theorem~\ref{thm.global.finite}, now lifted to the 
probabilistic Büchi setting. Each inclusion follows because stronger controller
classes strictly extend weaker ones: memoryless $\subset$ finite-memory $\subset$
stack-augmented $\subset$ tape-based $\subset$ history-dependent.  Strictness 
is witnessed by adapting the corresponding deterministic constructions of 
Theorem~\ref{thm.global.finite} using Lemma~\ref{lem.finite.memory.prob.infinite}, which remains valid under degenerate 
probabilistic transitions (probability $0$ or $1$). Thus, each containment in the 
stated chain is proper.
\end{proof}
\begin{lemma}
\label{lem.finite.memory.prob.infinite}
There exists a controlled probabilistic Büchi automaton $U_\omega$
such that $L_{F,P}^{\omega,\theta}(U_{\omega})$ contains a countably infinite family of distinct $\omega$-regular languages.
\end{lemma}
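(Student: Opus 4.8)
The plan is to combine the $\omega$-word construction from Lemma~\ref{lem.finite.memory.infinite} with the degenerate-probability reinterpretation used in Lemma~\ref{lem.prob.finite.memory.finite}. Concretely, I would fix the two-letter alphabet $\Sigma = \{a,b\}$ and, for each integer $k \ge 1$, target the ultimately periodic $\omega$-language $W_k = (a^k b)^\omega$, which is $\omega$-regular; these are pairwise distinct across values of $k$ since $(a^k b)^\omega \ne (a^{k'} b)^\omega$ whenever $k \ne k'$. The goal is to exhibit a single controlled probabilistic Büchi automaton $U_\omega$ on which each $W_k$ is realized by some finite-memory probabilistic policy $\pi_k$.

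First I would build $U_\omega$ as a \emph{degenerate} probabilistic plant whose transition function $P$ takes only the values $0$ and $1$, so that every closed-loop run is in fact deterministic. The plant offers, at each state, an $a$-transition and a $b$-transition together with control actions that either advance the current block or force the block-terminating $b$ and reset; a designated state is marked accepting and is entered precisely at each $b$-completion. A mismatch between the activity dictated by the controller and the actual input symbol is routed to a non-accepting absorbing sink. This is exactly the plant of Lemma~\ref{lem.finite.memory.infinite}, now read through the probabilistic transition function of Definition~\ref{ControlledProbabilisticBuchiAutomata}.

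Next I would define, for each $k$, the finite-memory probabilistic policy $\pi_k$ whose memory is a counter modulo $k$ and whose control output assigns probability $1$ to \emph{advance} while the counter is below $k$ and probability $1$ to \emph{force $b$ and reset} when the counter reaches $k$. Since $\pi_k$ places all its mass on a single control action at every history, it is a legitimate (degenerate) finite-memory probabilistic policy in the sense of the paper. Under $\pi_k$, the measure $\mathbb{P}_{\pi_k}$ of Definition~\ref{CPBArun} concentrates on the unique run compatible with the input: for $w = (a^k b)^\omega$ this run visits the accepting state infinitely often and carries probability $1$, whereas for every other $w$ the run reaches the rejecting sink, so the accepting-run mass is $0$.

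The key point---and the only place requiring care---is the interaction with the threshold semantics of Definition~\ref{LanguageCPBA}. Because all acceptance probabilities are either $0$ or $1$, for \emph{every} threshold $\theta \in (0,1]$ the accepted language $L^{\omega,\theta}_{\pi_k}(U_\omega)$ equals the set of inputs of acceptance probability $1$, namely $W_k$. Hence $W_k \in L^{\omega,\theta}_{F,P}(U_\omega)$ for all $k \ge 1$, and since the $W_k$ are pairwise distinct $\omega$-regular languages ranging over all $k \ge 1$, the family $\{W_k \mid k \ge 1\} \subseteq L^{\omega,\theta}_{F,P}(U_\omega)$ is countably infinite---exactly the assertion of the lemma. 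The main obstacle is verifying that the degenerate probabilistic policy genuinely collapses to the deterministic counter behavior under $\mathbb{P}_{\pi_k}$ and that the threshold comparison is insensitive to the specific value of $\theta \in (0,1]$; both follow immediately once the $0/1$-valued transition structure is fixed.
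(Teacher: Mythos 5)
Your proposal is correct and follows essentially the same route as the paper's proof: the same two-state plant with an accepting state and rejecting sink, the same modulo-$k$ finite-memory counter policies realizing the family $\{(a^k b)^\omega \mid k \ge 1\}$, and the same degenerate probability-$1$ reinterpretation borrowed from the finite-word probabilistic lemma. Your explicit observation that the $0/1$ acceptance probabilities make the result insensitive to the choice of $\theta \in (0,1]$ is a slightly more careful treatment of the threshold semantics than the paper's appeal to almost-sure acceptance, but it is the same argument in substance.
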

\begin{proof}
This is the probabilistic B\"uchi analogue of Lemma~\ref{lem.finite.memory.infinite}.
We use the same two-state plant with accepting state $q_{\mathrm{ok}}$
and rejecting sink $q_{\mathrm{sink}}$, and the family of finite-memory
controllers $\{\pi_k\}_{k\ge1}$ that count the number of consecutive $a$'s
before each $b$ modulo~$k$.
Each controller $\pi_k$ enforces the $\omega$-regular language $(a^k b)^\omega$
under the almost-sure semantics ($\theta=1$), since runs remain in
$q_{\mathrm{ok}}$ precisely when every block of $a$'s has length~$k$.
Because transitions may be taken as degenerate (probability~1),
the probabilistic interpretation does not change the behavior.
Hence $L_{F,P}^{\omega,\theta}(U_\omega)$ contains the countably infinite
family $\{(a^k b)^\omega \mid k \ge 1\}$.
\end{proof}
\subsection*{Emptiness Problem for Controlled Probabilistic B\"uchi Automata}
\begin{definition}
\label{def:CPBA-emptiness}
Let \( U_\omega = (Q, A, C, P, Q_0, F) \) be a controlled probabilistic Büchi automaton,
let \( \theta \in (0,1] \) be a fixed threshold, and let \( \Pi \) denote a class of admissible policies.
The \emph{emptiness problem} asks whether there exists a policy \( \pi \in \Pi \) and an infinite word
\( w = a_0 a_1 a_2 \cdots \in A^\omega \) such that the set of accepting runs over \( w \) under \( \pi \)
has probability at least \( \theta \); formally,
\[
\exists\, \pi \in \Pi,\ \exists\, w \in A^\omega \ \text{such that}\
 \mathbb{P}_\pi(\mathrm{AR}(w)) \ge \theta,
\]
where \(\mathrm{AR}(w)\) denotes the set of accepting runs labeled by \( w \) as in  Definition~\ref{CPBArun}.
Equivalently, the problem asks whether the threshold language
\( L_{\omega,\theta}^\pi(U_\omega) \) is nonempty for some \( \pi \in \Pi \), i.e.,
\[
\exists\, \pi \in \Pi \ \text{such that}\ L_{\omega,\theta}^\pi(U_\omega) \neq \emptyset.
\]
\end{definition}
\begin{theorem}
\label{prop:CPBA-emptiness-undecidable}
The emptiness problem for controlled probabilistic Büchi automata
(Definition~\ref{def:CPBA-emptiness}) is undecidable in general.
That is, given a CPBA
\(
U_\omega = (Q, A, C, P, Q_0, F)
\),
a threshold \( \theta \in (0,1] \),
and a class of admissible policies \( \Pi \),
no algorithm can decide whether
\[
\exists\, \pi \in \Pi \ \text{such that}\ 
L_{\omega,\theta}^\pi(U_\omega) \neq \emptyset,
\]
equivalently,
\(
\exists\, \pi \in \Pi,\ \exists\, w \in A^\omega
\)
with
\(
\Pr_\pi(\mathrm{AR}(w)) \ge \theta.
\)
The result holds even for deterministic history-dependent policies.
\end{theorem}
\begin{proof}
The undecidability of the emptiness problem for probabilistic automata with a non-strict cutpoint was established by Rabin~\cite{Rabin1963} for probabilistic finite automata. 
This result extends to probabilistic Büchi automata (PBAs) as shown in~\cite{baier2012probabilistic,courcoubetis1995complexity}. 
Specifically, given a PBA \( A = (Q, A, \delta, q_0, F) \) and a threshold \( \theta \in (0,1] \), deciding whether there exists \( w \in A^\omega \) with \( \Pr(A \text{ accepts } w) \ge \theta \) is undecidable.

To lift this to controlled probabilistic Büchi automata (CPBAs), we construct 
\( U_\omega = (Q, A, C, P, Q_0, F) \) with \( C = \{ c_0 \} \) and 
\( P(q,a,c_0,q') = \delta(q,a,q') \). 
A trivial policy \( \pi \) that always selects \( c_0 \) ensures that
\( U_\omega \) under \( \pi \) has the same accepting behavior as \( A \).

Hence,
\[
L^\pi_{\omega,\theta}(U_\omega) \neq \emptyset
\quad \Leftrightarrow \quad
L_{\omega,\theta}(A) \neq \emptyset.
\]
Since the latter problem is undecidable for PBAs,
the emptiness problem for CPBAs is also undecidable. 
The result continues to hold for history-dependent policies and fixed thresholds.
\end{proof}
\begin{remark}[Positive vs.\ Almost-Sure Semantics]
The undecidability of the emptiness problem stated above holds for 
\emph{positive semantics}, where a word is accepted if its acceptance 
probability is strictly greater than zero.
In contrast, under \emph{almost-sure semantics}, where acceptance requires 
probability $1$, several decision problems become decidable for important 
subclasses of probabilistic automata~\cite{baier2012probabilistic,courcoubetis1995complexity}.
This sharp contrast highlights the subtle role of cutpoints in probabilistic 
verification.
\end{remark}
These results also highlight the sharp contrast between deterministic and probabilistic settings: while emptiness is tractable for regular and pushdown systems, the introduction of probabilistic thresholds leads quickly to undecidability.

To analyze decision-making in controlled probabilistic systems, it is essential to extend the structural semantics of controlled probabilistic automata with quantitative measures. The reward model introduced below provides such a foundation by associating costs or benefits with both control decisions and activity completions. This framework allows us to formalize performance and optimality criteria, preparing the ground for policy synthesis and complexity analyses in the discrete-time and continuous-time settings explored later in this paper.

\subsection*{Reward Model} \label{RewardModelProb}
\begin{definition}
Let $U = (Q, A, C, P, Q_{0})$ be a controlled probabilistic automaton.
We can define  a reward structure $r = (r', r'')$, where $r'$ is a bounded Borel measurable function
\[
r' : Q \times C \to \mathbb{R}_+,
\]
referred to as the \emph{reward rate function}, and $r''$ is a bounded Borel measurable function
\[
r'' : Q \times A \times C \times Q \to \mathbb{R}_+,
\]
called the \emph{activity reward function}.
\end{definition}
The model can be viewed to evolve over discrete time, where the completion of an activity in a state triggers a control decision that determines the next state. Let $X(i)$, $Y(i)$, and $Z(i)$ represent the state, control action, and activity completed at time $i$, respectively, for $i \in \mathbb{N}_+$. 
The reward incurred during a period $[k',k]$, for $k',k \in \mathbb{N}_+$ with $k' < k$, is defined as the random variable
\begin{equation} \label {eqdreward}
R = \sum_{i = k'}^{k-1} r'(X(i), Y(i)) + \sum_{i = k'}^{k-1} r''(X(i),Z(i),Y(i),X(i+1)).
\end{equation}
The random variable $R$ is well-defined under the measurability and boundedness conditions above.  
To compare rewards, we use the \emph{stochastic ordering} $\leq_{st}$ on random variables: for $X,Y$, we write $X \leq_{st} Y$ if and only if $\Pr[X > x] \leq \Pr[Y > x]$ for all $x \in \mathbb{R}_+$.

Let $U = (Q,A,C,P,Q_0)$ be a controlled probabilistic automaton (CPA) with reward structure $r$, and let $\mathcal{P}$ denote a class of admissible policies. For any $k',k \in \mathbb{N}_+$ with $k' \leq k$, $\pi \in \mathcal{P}$, and $q \in Q$, define
\[
R_{k'}^k(\pi|q) \equiv \text{the reward accrued during } [k',k] \text{ given } X(k') = q.
\]
A policy $\pi^* \in \mathcal{P}$ is called an \emph{optimal policy} for $(V,r)$ over $[k',k]$ if for all $\pi \in \mathcal{P}$ and $q \in Q$,
\[
R_{k'}^k(\pi|q) \leq_{st} R_{k'}^k(\pi^*|q).
\]
However, such an optimal policy is quite strong and may not always exist.  
Accordingly,  we may consider other relaxed versions of the notion of optimality as follows.

Let \( R_{k'}^{k}(\pi \mid q_0) \) denote the random variable representing
the total accumulated reward over the time interval \([k', k]\),
when starting from state \( q_0 \) and following policy \( \pi \).
We assume that rewards are bounded and measurable, and that the
underlying stochastic process is non-explosive, so that the expected
total reward is well defined for all admissible policies.

Formally, for a run
\(\rho = (q_0, a_0)(q_1, a_1)\cdots(q_{n-1}, a_{n-1}) q_n\)
generated under \(\pi\),
the total reward is computed as the integral of the reward rate
associated with the visited states and control actions over time, as in Equation~\ref{eqdreward}.
For expected-total and discounted-reward objectives,
the controller’s goal is to select a policy that maximizes
\(\mathbb{E}[R_{k'}^{k}(\pi \mid q_0)]\) or its discounted analogue.

For models with finite state and control action spaces, the existence of
stationary deterministic optimal policies is well established for
expected-total and discounted-reward objectives. For time-bounded
objectives, optimal policies may require history dependence.
In more general settings (e.g., infinite state spaces or
general timing distributions), the existence of exact optimal policies
depends on regularity assumptions such as boundedness, continuity,
and compactness of action spaces. Even when these conditions do not hold,
\(\varepsilon\)-optimal policies can be obtained under broad conditions.
These results will be further elaborated in more detail in this section, as follows, while  providing the mathematical basis for the reward-based
formulation of control problems in the following section.

While controlled probabilistic automata offer a general representation of decision-driven dynamics, many theoretical and algorithmic results are most naturally expressed in the language of discrete-time Markov decision processes (DTMDPs). The following definition specializes this classical framework to our setting by incorporating the activity set and explicit control structure, ensuring consistency with the preceding reward semantics and policy definitions.
\subsection*{Discrete-Time Markov Decision Process (DTMDP)}
We now introduce a popular and general framework commonly used in most theoretical decision-making problems that evolve over discrete time, and its relation to our proposed models is as follows.
\begin{definition}  \label{defDTMDP}
A {\it discrete-time Markov decision process (DTMDP)} is defined as a 5-tuple $D = (Q, C,  Q_0,  P)$, where:
\begin{itemize}
\item Q is a countable set of {\it states},
\item C is a finite set of {\it control actions},
\item $Q_0$ is the {\it initial state distribution} where $\sum_{q \in Q} Q_0(q) =1$, 
\item $P: Q \times C \times Q \to [0,1]$ is the {\it transition probability function}, where for any $q \in Q$ and $c \in C$, 
\[
\sum_{q' \in Q} P(q, c, q') = 1.
\]
\end{itemize}
\end{definition}
\begin{definition} \label{CPAasDTMDP}
Let $U = (Q, A, C, P, Q_0)$ be a controlled probabilistic automaton (CPA). Then, $U$ is said to be {\it represented} by a discrete-time Markov decision process (DTMDP) $D = (Q, C, Q_0, P')$ defined as:
\[
 P'(q,c,q') = \sum_{a \in A} P(q,a,c,q').
\]
\end{definition}
In the above model representation, the activity component becomes implicit, and the DTMDP models the same transition dynamics by summarizing the rate and probability information from the CPA. Thus, a DTMDP model can be viewed as an abstract representation of a CPA model.
More formally,  we have:
\begin{proposition}
Let $U = (Q, A, C, P, Q_0)$ and $D = (Q, C, Q_0, \lambda, P')$ be a controlled probabilistic automaton (CPA) and a discrete-time Markov decision process (DTMDP), respectively, as defined in Definition 
~\ref{CPAasDTMDP} above.  Then, there exist policies $\pi_U$ and $\pi_D$ in $U$ and $D$, respectively, such that the state processes of $U$ and $D$ under policies $\pi_U$ and $\pi_D$ will be isomorphic respectively. 
\end{proposition}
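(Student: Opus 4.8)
The plan is to exhibit explicit policies under which the two state processes have identical finite-dimensional distributions, and then to observe that, since $U$ and $D$ share the same state space $Q$ and the same initial distribution $Q_0$, the identity map $\mathrm{id}_Q$ on $Q$—lifted coordinatewise to state trajectories in $Q^{\omega}$—is the desired isomorphism. Writing $X(0), X(1), X(2), \ldots$ for the state sequence (as in the reward model, with $Y(i)$ the control and $Z(i)$ the completed activity), I would reduce the claim that ``the state processes are isomorphic'' to the statement that the law of $\{X(i)\}_{i\ge 0}$ induced on $U$ by $\pi_U$ coincides with the law of $\{X(i)\}_{i\ge 0}$ induced on $D$ by $\pi_D$. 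Because $\mathrm{id}_Q$ is a bijection carrying $Q_0$ to $Q_0$, equality of these two path measures is precisely a bijective, measure-preserving correspondence of the processes, i.e.\ an isomorphism in the bijective-bisimulation sense of Definition~\ref{def4.3} specialized to the common state space. Each path measure itself is well defined from its one-step kernels by the Ionescu--Tulcea construction, so it suffices to match finite-dimensional distributions.

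First I would fix a policy $\pi_D$ on $D$ and construct a matching $\pi_U$ on $U$. In the memoryless case this is immediate: set $\pi_U(q) := \pi_D(q) \in Dist(C)$ for every $q \in Q$, so both controllers issue the same control-action distribution from each state. For history-dependent policies the subtlety is that a $U$-history records activities, $h_i = (q_0,a_0)\cdots(q_{i-1},a_{i-1})$, whereas a $D$-history records only states. I would resolve this by letting $\pi_U$ depend on its history solely through the state projection $q_0 q_1 \cdots q_{i-1}$, setting $\pi_U(h_i) := \pi_D(q_0 q_1 \cdots q_{i-1})$; the reverse construction (obtaining $\pi_D$ from a $\pi_U$ that already factors through state histories) is symmetric. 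This guarantees that at every step the two controllers sample the control $Y(i)$ from the same distribution conditioned on the common state history, regardless of the intervening activity labels.

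Next I would compute the one-step state-transition kernels induced by these policies and show they agree. On $D$ the kernel is $K_D(q,q') = \sum_{c \in C} \pi_D(q)(c)\,P'(q,c,q')$, while on $U$, marginalizing over the unobserved completed activity and the sampled control, the kernel is $K_U(q,q') = \sum_{c \in C}\pi_U(q)(c)\sum_{a\in A}P(q,a,c,q')$. Substituting the defining identity $P'(q,c,q') = \sum_{a \in A} P(q,a,c,q')$ from Definition~\ref{CPAasDTMDP} together with $\pi_U(q)=\pi_D(q)$ yields $K_U = K_D$ pointwise. A straightforward induction on $n$ then shows that $\Pr[X(0)=q_0,\ldots,X(n)=q_n]$ coincides for the two models: the base case is $Q_0$ in both, and the inductive step multiplies by the common kernel. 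Hence the two path measures on $Q^{\omega}$ are equal and $\mathrm{id}_Q$ realizes the claimed isomorphism of state processes.

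The main obstacle is the treatment of the activity component, which is present in $U$ but absent in $D$. Two points require care. First, one must verify that marginalizing activities out of $U$ genuinely reproduces the DTMDP kernel $P'$: this relies on interpreting $P(q,a,c,\cdot)$ together with the activity-resolution mechanism so that $\sum_{a}P(q,a,c,q')$ is exactly the probability of the state move $q \to q'$ under control $c$ when the activity label is not observed, and on $P'$ being a bona fide stochastic kernel—which is guaranteed by $D$ being a DTMDP in the sense of Definition~\ref{defDTMDP}. Precisely this well-posedness is what Definition~\ref{CPAasDTMDP} asserts, so I would isolate it as the coherence hypothesis underlying the representation. Second, for history-dependent controllers the alignment of $U$-histories (which carry activities) with $D$-histories (which do not) must be shown to be well defined, i.e.\ that forcing $\pi_U$ to factor through state projections of histories loses no behavior attainable by $\pi_D$; this is the step where the bijectivity of $\mathrm{id}_Q$ and the activity-insensitivity of the constructed policies are essential. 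Once these points are secured, the isomorphism of the induced state processes follows from the equality of their laws established above.
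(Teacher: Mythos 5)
The paper states this proposition without any proof, so there is no in-paper argument to compare against; judged on its own merits, your proof is correct under the proposition's hypotheses and is almost certainly the intended argument. Your route---match the policies through the state projection of histories, marginalize the activity label to obtain the closed-loop kernels $K_U(q,q')=\sum_{c\in C}\pi_U(q)(c)\sum_{a\in A}P(q,a,c,q')$ and $K_D(q,q')=\sum_{c\in C}\pi_D(q)(c)\,P'(q,c,q')$, identify them via the defining equation $P'(q,c,q')=\sum_{a\in A}P(q,a,c,q')$ of Definition~\ref{CPAasDTMDP}, and then induct on finite-dimensional distributions so that $\mathrm{id}_Q$ realizes the correspondence---is the natural proof, and it instantiates the paper's bijection-based notion of stochastic isomorphism (Section~5) with the bijection $\gamma=\mathrm{id}_Q$. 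You also correctly isolated the one genuine subtlety, which lies in the paper's definitions rather than in your argument: Definition~\ref{def3.2} normalizes $P$ per triple $(q,a,c)$, i.e.\ $\sum_{q'}P(q,a,c,q')=1$ for every $a$, so that $\sum_{q'}P'(q,c,q')=\sum_{a}\sum_{q'}P(q,a,c,q')=|A|$; hence $P'$ as written is a bona fide stochastic kernel (as Definition~\ref{defDTMDP} requires) only under the joint reading of $P(q,a,c,q')$ as the probability that activity $a$ completes \emph{and} the state moves to $q'$. Treating that well-posedness as a standing coherence hypothesis, as you do, is the right move: without it neither the proposition nor any proof of it is well formed, and with it your kernel identification and the induction on path measures go through.
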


With the reward and policy structures in place, we now turn to the computational aspects of policy synthesis in finite-state DTMDPs. Understanding the complexity of finding optimal strategies is crucial for evaluating the tractability of controller synthesis in our framework, and it directly parallels the results known for continuous-time models derived from Controlled SANs as discussed in the following section.
\subsection*{Optimal Policies for Finite-State DTMDPs}
For a controlled probabilistic automaton $U = (Q,A,C,P,Q_0)$ with finite state space $Q$ and control action set $C$, the induced discrete-time Markov decision process (DTMDP) admits standard algorithmic solutions for synthesizing optimal policies under bounded reward structures. The computational complexity depends on the quantitative objective, as illustrated by the following property:
\begin{theorem}
\label{prop:dtmdp-complexity1}
Let $D$ be a finite-state discrete-time Markov decision process with bounded rewards and a finite control action set. Then:
\begin{enumerate}
    \item An optimal stationary deterministic policy exists for both expected total reward and discounted reward objectives.
    \item Such a policy can be computed in polynomial time.
    \item For time-bounded reachability or reward objectives, the optimal value can be computed in PSPACE, and the problem is PSPACE-complete in general.
    \item In the time-bounded case, optimal policies may require history dependence.
\end{enumerate}
\end{theorem}
\begin{proof}
Items~(1) and~(2) follow from classical results on finite-state
discrete-time Markov decision processes.
For both expected total reward and discounted reward objectives,
there always exists an optimal stationary deterministic policy.
Moreover, such a policy can be computed in polynomial time using
dynamic programming or linear programming techniques, since the
state and action spaces are finite
\cite{Howard71, puterman1994markov}.

We now consider time-bounded reachability or reward objectives, i.e.,
finite-horizon MDPs. When the time bound is a fixed constant or given explicitly,
optimal values can be computed efficiently by backward induction over the horizon.
However, if the time bound is part of the input and encoded in binary, the numeric value
of the horizon may be exponentially large relative to the size of the input.
In such cases, even though backward induction remains conceptually valid,
evaluating the full dynamic programming recursion may require exponential time.

In this general setting, the problem of computing the optimal value
of a finite-horizon MDP can be solved in PSPACE by evaluating the
dynamic programming recursion using polynomial space.
PSPACE-hardness is established by reductions from space-bounded
alternating Turing machines, showing that finite-horizon MDP
optimization problems are PSPACE-complete in general
\cite{Mundhenk2000}.
Therefore, time-bounded reachability and reward optimization in
finite-state discrete-time MDPs are PSPACE-complete.

Finally, in the time-bounded case, optimal policies may require
history dependence, since the choice of an optimal action may depend
on the remaining time horizon.
Consequently, stationary policies are not sufficient in general for
time-bounded objectives, and non-stationary (history-dependent)
policies may be required to achieve optimality
\cite{puterman1994markov}.
\end{proof}
\begin{remark}
For time-bounded objectives, optimality may require history-dependent policies because the choice of action can depend on the number of remaining steps or time units until the deadline. This dependence introduces a need to track time explicitly, which cannot be captured by stationary policies. In particular, when the optimal decision at a given state varies based on how close the system is to the time bound, a stationary policy—being time-invariant—cannot achieve optimality. Therefore, history-dependent (or time-aware) policies are necessary in general for time-bounded reachability or reward maximization problems.
\end{remark}
\noindent
Table~\ref{tab:dtmdp-complexity1} summarizes classical results on the existence of optimal policies and their computational characteristics for finite-state DTMDPs. 
It outlines typical quantitative objectives, the corresponding structure of optimal policies, and the algorithmic methods commonly used to compute them efficiently.
\begin{table}[h]
\centering
\setlength{\tabcolsep}{4pt}    
\renewcommand{\arraystretch}{1.1} 
\caption{Complexity and policy characteristics for optimal control in finite-state DTMDPs, categorized by quantitative objective.}
\label{tab:dtmdp-complexity1}

\resizebox{\textwidth}{!}{
\begin{tabular}{|p{3.8cm}|p{3.2cm}|p{3.2cm}|p{3.8cm}|}
\hline
\textbf{Objective} & \textbf{Optimal Policy Type} & \textbf{Complexity Class} & \textbf{Solution Method} \\
\hline
Expected Total Reward &
Memoryless deterministic &
Polynomial time &
Linear programming~\cite{Howard71, puterman1994markov} \\
\hline
Discounted Reward &
Memoryless deterministic &
Polynomial time &
Value iteration / policy iteration~\cite{Howard71, puterman1994markov} \\
\hline
Finite-Horizon Reachability / Reward &
May require history dependence &
PSPACE-complete &
Dynamic programming over horizon~\cite{Mundhenk2000} \\
\hline
\end{tabular}
}
\end{table}

These complexity results show that expected total and discounted reward problems are efficiently solvable with simple policy structures, while time-bounded objectives are computationally harder and may require policies with memory. These findings provide the foundation for subsequent analysis of controlled probabilistic automata and their reductions to DTMDPs for policy synthesis.

Although many decision-making problems involve finite-state abstractions, real-world systems often yield DTMDPs with countably infinite state spaces—for example, when queue lengths or resource counters are unbounded. The following discussion outlines the conditions under which stationary or history-dependent optimal policies can be guaranteed, and highlights approximation techniques used when exact solutions are infeasible.
\subsection*{Optimal Policies for Infinite-State DTMDPs}
When a controlled probabilistic automaton $U = (Q, A, C, P, Q_0)$ induces a discrete-time Markov decision process (DTMDP) $D = (Q, C, Q_0, P')$ with a \emph{countably infinite} state space $Q$, the existence and synthesis of optimal policies present significant challenges beyond those seen in the finite-state case. While finite-state DTMDPs always admit memoryless deterministic optimal policies for expected total and discounted rewards, additional regularity conditions are required to guarantee well-defined solutions in the infinite-state setting. We have the following property: 
\begin{theorem}
Consider a discrete-time Markov decision process 
\(
D = (Q, C, Q_0, P')
\)
induced by a controlled probabilistic automaton with a countably infinite state space \( Q \) and bounded, measurable reward function 
\( r : Q \times C \to \mathbb{R} \).
Assuming non-explosive dynamics, the following hold:
\begin{enumerate}
    \item The optimal value function \( V^* \) is finite and well-defined.
    \item There exists an optimal policy \( \pi^* \) attaining the supremum expected reward.
    \item For discounted rewards (\( 0 < \gamma < 1 \)), \( V^* \) can be obtained via value iteration or policy iteration.
    \item Exact computation of \( \pi^* \) is generally infeasible, but \(\varepsilon\)-optimal policies can be approximated through methods such as state-space truncation, uniformization, or grid-based abstraction.
    \item Under additional continuity and compactness assumptions on the state and action spaces, measurable selectors and stationary \(\varepsilon\)-optimal policies exist.
\end{enumerate}
\end{theorem}

\begin{proof}
These results follow from the classical theory of Markov decision processes over countable and Borel state spaces.  
Bounded rewards and non-explosiveness guarantee that the Bellman optimality operator is well-defined and that the value function is finite.  
For discounted reward criteria, the Bellman operator is a contraction mapping, so convergence of value iteration follows from the Banach fixed-point theorem.

Boundedness of the reward function ensures that the Bellman operator maps bounded functions to bounded functions.
Under the assumption of non-explosive dynamics, expected rewards are well-defined.
In the case of discounted rewards, the Bellman operator is a $\gamma$-contraction under the supremum norm, and thus value iteration converges to the unique fixed point \( V^* \), the optimal value function.
Approximation techniques for \(\varepsilon\)-optimal policies include finite-state truncation and quantized abstraction methods,
which converge under standard continuity assumptions.

Existence of optimal stationary deterministic policies under continuity and compactness assumptions is established using measurable selection theorems.  
Approximation schemes for infinite-state MDPs are standard techniques for constructing \(\varepsilon\)-optimal policies.
For details, see \cite{puterman1994markov, feinberg2002handbook, HernandezLerma1996}.
\end{proof}
\begin{remark}
Unlike the finite-state case, optimal policies in infinite-state DTMDPs may require additional regularity assumptions, and exact computation is often infeasible. 
Approximation methods are therefore essential in practical settings, particularly for safety-critical or AI-driven control applications.
\end{remark}
\noindent
While exact computation of optimal policies for infinite-state DTMDPs is often intractable, a variety of approximation methods can yield $\varepsilon$-optimal solutions in practice. 
\noindent
Table~\ref{tab:dtmdp-infinite} summarizes standard methods for synthesizing optimal or $\varepsilon$-optimal policies for both finite-state and infinite-state DTMDPs. 
It highlights key objectives, policy existence guarantees, and common approximation techniques used when exact computation is infeasible in infinite-state settings.
\begin{table}[h]
\centering
\setlength{\tabcolsep}{4pt}    
\renewcommand{\arraystretch}{1.1} 
\caption{Comparison of optimal policy properties and solvability for finite- and infinite-state DTMDPs.}
\label{tab:dtmdp-infinite}

\resizebox{\textwidth}{!}{
\begin{tabular}{|p{3.8cm}|p{3.2cm}|p{3.2cm}|p{3.8cm}|}
\hline
\textbf{Setting / Objective} & \textbf{Optimal Policy Type} & \textbf{Existence / Complexity} & \textbf{Solution Method} \\
\hline
Finite-State, Expected Total Reward &
Memoryless deterministic &
Polynomial time &
Linear programming~\cite{puterman1994markov} \\
\hline
Finite-State, Discounted Reward &
Memoryless deterministic &
Polynomial time &
Value or policy iteration~\cite{puterman1994markov} \\
\hline
Finite-State, Time-Bounded Objectives &
May require history dependence &
PSPACE-complete &
Dynamic programming over horizon~\cite{Mundhenk2000} \\
\hline
Infinite-State, Expected or Discounted Reward &
Stationary deterministic (under continuity) &
Existence guaranteed; exact computation typically infeasible &
Value iteration with truncation, uniformization, or grid-based approximation~\cite{puterman1994markov} \\
\hline
Infinite-State, Time-Bounded Objectives &
Often history-dependent or approximate only &
No general complexity bounds; $\varepsilon$-optimal policies achievable &
State abstraction and approximation~\cite{Mundhenk2000,feinberg2002handbook,HernandezLerma1996} \\
\hline
\end{tabular}
}
\end{table}
\section{Stochastic Models}
In the previous section, nondeterminacy has been treated in a
probabilistic manner. We now present models that 
represent both nondeterminacy and parallelism 
probabilistically. This is accomplished by assigning certain parameters to timed activities
and viewing the model in a stochastic setting.   
\subsection*{Model Structure}
\begin{definition} \label{def5.1}
A {\it controlled stochastic activity network} is an  13-tuple 
$ (P, IA, TA, CA, IG, OG, IR, IOR, TOR, IP, \overline{F}, \overline{\rho}, \overline{\Pi})$ 
where:
\begin{itemize}
\item  (   P, IA, TA, CA, IG, OG, IR, IOR, TOR, IP) is a probabilistic 
activity network,
\item $\overline{F} = \{ \overline{F}(. | \mu, a); \mu \in {\mathcal N}^{n}, a \in TA \}$
is the set of  {\it activity time distribution functions},
where 
$n = |P|$ and, for any $\mu \in {\mathcal N}^{n}$ and $a \in TA$, $\overline{F}(. | \mu, a)$ is 
a probability distribution function,
\item $\overline{\rho} : \; {\mathcal N}^{n} \times TA \longrightarrow {\mathcal R_{+}}$
is the {\it enabling rate function}, where $n$ is defined as before.
\item $\overline{\Pi}: \; {\mathcal N}^{n} \times TA \longrightarrow \{ true, false \}$
is the {\it reactivation predicate}, 
\end{itemize} 
\end{definition}
\subsection*{Model Behavior}
The behavior of the above model is similar to that of a controlled probabilistic activity 
network, except that here the notion of timing is explicitly considered.  When 
instantaneous activities are enabled, they complete instantaneously.  
Enabled timed activities, on the other hand, require
some time to complete. 
A timed activity becomes {\it active} as soon as it is enabled 
and remains so until it completes; otherwise, it is {\it inactive}.
Consider a controlled stochastic activity network $M$ as in Definition ~\ref{def5.1}.
Suppose, at time $t$, a timed activity completes, and 
$\mu$ is the stable marking of $M$
immediately after $t$.
A timed activity $a$ is {\it activated} at $t$, if 
$a$ is enabled in $\mu$ and one of the following 
occurs:
\begin{itemize}
\item $a$ is inactive immediately before $t$, 
\item $a$ completes at $t$,
\item  
$\overline{\Pi}(\mu , a) = true$.
\end{itemize}
Whenever the above happens $a$ is assigned an {\it activity time} $\tau$,
where $\tau$ is a random variable 
with a probability distribution function 
$\overline{F}(. | \mu , a )$. When a timed activity $a$ 
is enabled in a stable marking $\mu$, it is {\it processed} with a rate $\overline{\rho}(\mu, a)$.
A timed activity {\it completes} whenever it is 
processed for its activity time. Upon completion of an activity, 
the next marking occurs immediately.
\subsection*{Semantic Model}
The above summarizes the behavior of a controlled stochastic activity 
network. This behavior may be studied more formally using the 
following concepts.

\begin{definition}\label{defCSA}
A \textit{controlled stochastic automaton} is a 7\textendash tuple
\[
(Q,\,A,\,C,\,P,\,Q_{0},\,F,\,\rho,\,\Pi),
\]
where:
\begin{itemize}
\item  \( (Q, A, C, P, Q_{0}), \)  is a controlled probabilistic automaton,
\item \( F =  \{F(. | q,a): q \in Q, a \in A\} \) is the set of   {\it activity time distribution functions} where for any $q \in Q$ and $a \in A$, 
$F(. | q,a)$ is a probability distribution function,
\item  $\rho: Q \times A \longrightarrow {\mathcal R_{+}}$ is the {\it enabling rate function,}
\item  $\Pi:  Q \times A \longrightarrow  \{false, true\}$ is the {\it reactivation predicate}.
\end{itemize}
\end{definition}
Controlled stochastic automata are employed to abstractly model both the nondeterminism and parallelism inherent in distributed real-time systems. In the proposed definition, the first case captures nondeterminism in a probabilistic manner, within the framework of controlled probabilistic automata. Notably, this formulation is closely tied to the control actions that govern system behavior. In contrast, the remaining three cases address the modeling of parallelism by incorporating the system’s real-time characteristics. These timing aspects are explicitly defined to be orthogonal to the control actions, ensuring a modular and compositional representation of timing and control. These abstract models are used as semantic models for controlled stochastic activity networks.  They can be viewed as dynamic systems similar to these latter models, with a similar behavior as follows.

Consider a controlled stochastic automaton \( V = (Q, A, C, P, Q_{0}, F,  \rho, \Pi) \) as in the above definition.   An activity becomes {\it active} as soon as it is enabled 
and remains so until it completes; otherwise, it is {\it inactive}.Suppose, at time $t$, an activity completes, and 
$q$ is the state of $V$
Immediately after $t$. 
$a$ is {\it activated} at $t$ if one of the following events
occurs:
\begin{itemize}
\item $a$ is inactive immediately before $t$, 
\item $a$ completes at $t$,
\item  
$\Pi(q , a) = true$.
\end{itemize}
Whenever the above happens $a$ is assigned an {\it activity time} $\tau$,
where $\tau$ is a random variable 
with a probability distribution function 
$F(. | q , a )$. When an activity $a$ 
is enabled in a s state $q$, it is {\it processed} with a rate $\rho(q, a)$.
An activity {\it completes} whenever it is 
processed for its activity time. Upon completion of an activity, 
the next marking occurs immediately.
\subsection*{Bisimulation}
Next, we define a notion of equivalence for  Controlled stochastic automata, which is similar to the one proposed for the uncontrolled model \cite{movaghar2001stochastic}.
\begin{definition} \label{def.CSA.bisimulation}
Two controlled stochastic automata \(  V =(Q, A, C, P, Q_{0}. F,  \rho, \Pi) \) and \( V' = (Q', A', C',\) 
\(P', Q'_{0}. F',  \rho', \Pi') \) with the same activity alphabet 
and set of control actions ($A = A'$ and $C = C'$) are said to be {\it equivalent}
if:
\begin{itemize}
\item  there is a bisimulation $\gamma$ between two controlled probabilistic automata \( (Q, A, C, P, Q_{0}) \) and \( (Q', A', C', P, Q'_{0}) \),
\item for any $q \in Q$, $q' \in Q'$ such that $(q,q') \in \gamma$ and $a \in A$, we have $F(.|q,a) = F'(.|a,q')$, $\rho(q,a) = \rho'(q')$, and $\Pi(q,a) = \Pi'(q',a)$.
\end {itemize}
We say that $\gamma$ is also a {\it bisimulation} between $V$ and $V'$.
\end{definition}
\begin{proposition}
Let $\mathcal {E_V}$  denote a relation on the set of all controlled stochastic automata such that $(V_1, V_2) \in \mathcal {E_V}$ if and only if $V_1$ and $V_2$ are equivalent controlled stochastic automata in the sense of Definition ~\ref{def.CSA.bisimulation}.  Then 
$\mathcal {E_V}$ will be an equivalence relation.
\end{proposition}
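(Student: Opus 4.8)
The plan is to verify that $\mathcal{E_V}$ is reflexive, symmetric, and transitive, reducing each case to the corresponding property of the underlying controlled probabilistic automata---already known to form the equivalence relation $\mathcal{E_U}$---together with a check that the timing data $(F,\rho,\Pi)$ is compatible. Throughout, recall that by Definition~\ref{def.CSA.bisimulation} a witness for $(V,V')\in\mathcal{E_V}$ is a bisimulation $\gamma$ between the underlying controlled probabilistic automata satisfying, for every $(q,q')\in\gamma$ and every $a\in A$, the three equalities $F(\cdot\mid q,a)=F'(\cdot\mid q',a)$, $\rho(q,a)=\rho'(q',a)$, and $\Pi(q,a)=\Pi'(q',a)$.

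For reflexivity I would take $\gamma$ to be the diagonal relation on $Q$. Since the diagonal is a bisimulation of the probabilistic reduct with itself (this is the reflexivity half of the fact that $\mathcal{E_U}$ is an equivalence relation), it remains only to observe that the three timing equalities hold trivially, each reducing to $F(\cdot\mid q,a)=F(\cdot\mid q,a)$ and similarly for $\rho$ and $\Pi$. For symmetry, suppose $(V_1,V_2)\in\mathcal{E_V}$ via $\gamma$. The relation $\gamma$ is symmetric as a set of pairs, so it equally witnesses $\mathcal{E_U}$-equivalence of the probabilistic reducts with the roles of $V_1$ and $V_2$ interchanged; and because equality of distribution functions, rates, and predicate values is symmetric, the three timing conditions for $(q',q)\in\gamma$ follow at once from those for $(q,q')\in\gamma$.

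The substantive case is transitivity. Suppose $(V_1,V_2)\in\mathcal{E_V}$ via $\gamma_{12}$ and $(V_2,V_3)\in\mathcal{E_V}$ via $\gamma_{23}$. I would form the composite relation $\gamma_{13}$ on $Q_1\cup Q_3$ by declaring $(q_1,q_3)\in\gamma_{13}$, together with its symmetric counterpart, exactly when there is an intermediate $q_2\in Q_2$ with $(q_1,q_2)\in\gamma_{12}$ and $(q_2,q_3)\in\gamma_{23}$. For the probabilistic part, $\gamma_{13}$ is a bisimulation of the reducts by the transitivity half of the fact that $\mathcal{E_U}$ is an equivalence relation. For the timing part, fix $(q_1,q_3)\in\gamma_{13}$ with witness $q_2$ and any $a\in A$: the timing conditions of $\gamma_{12}$ give $F_1(\cdot\mid q_1,a)=F_2(\cdot\mid q_2,a)$, those of $\gamma_{23}$ give $F_2(\cdot\mid q_2,a)=F_3(\cdot\mid q_3,a)$, and transitivity of equality yields $F_1(\cdot\mid q_1,a)=F_3(\cdot\mid q_3,a)$; the identical chaining handles $\rho$ and $\Pi$, so $\gamma_{13}$ witnesses $(V_1,V_3)\in\mathcal{E_V}$.

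The one point demanding care---and which I expect to be the main obstacle---is ensuring that the bisimulation certifying transitivity at the probabilistic level is precisely the relational composition $\gamma_{13}$, rather than some other bisimulation produced abstractly by $\mathcal{E_U}$. This matters because the timing argument relies essentially on the intermediate state $q_2$ threading $q_1$ to $q_3$; a bisimulation that merely exists need not expose such a witness. I would resolve this by either re-proving the composition step explicitly for $\gamma_{13}$---verifying the initial-distribution and transition-matching clauses of Definition~\ref{def4.3} for the composite relation---or by observing that the standard proof that $\mathcal{E_U}$ is transitive is itself carried out via this same relational composition, so that $\gamma_{13}$ is exactly the witness it supplies. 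Either route makes the intermediate $q_2$ available, the timing equalities transfer through it as above, and reflexivity, symmetry, and transitivity together establish that $\mathcal{E_V}$ is an equivalence relation.
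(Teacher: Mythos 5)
Your reflexivity and symmetry arguments are fine, and your treatment of the timing data $(F,\rho,\Pi)$ — chaining the three equalities through an intermediate state — is the right idea. The gap is exactly at the point you flagged but did not close: transitivity of the probabilistic part. The claim you need, namely that the relational composition $\gamma_{13}=\gamma_{23}\circ\gamma_{12}$ of two bisimulations in the sense of Definition~\ref{def4.3} is again such a bisimulation, is \emph{false}, so neither of your proposed repairs can succeed. Concretely (one activity, one control action): let $U_1$ and $U_2$ both be the uniform walk on three states, $P(x,a,c,y)=1/3$ for all $x,y$, with uniform initial distribution, related by the ``six-cycle'' relation $\gamma_{12}=\{(p_1,n_1),(p_2,n_1),(p_2,n_2),(p_3,n_2),(p_3,n_3),(p_1,n_3)\}$; every class has two states, so every class carries mass $2/3$ from every source, and all clauses of Definition~\ref{def4.3} hold. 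Let $U_3$ be the two-state quotient of $U_2$ merging $n_1,n_2$: states $w,z$ with $P_3(\cdot,a,c,w)=2/3$, $P_3(\cdot,a,c,z)=1/3$ and initial distribution $(2/3,1/3)$, and take $\gamma_{23}=\{(n_1,w),(n_2,w),(n_3,z)\}$, which is also a bisimulation. The composition is $\gamma_{13}=\{(p_1,w),(p_1,z),(p_2,w),(p_3,w),(p_3,z)\}$, and testing the transition clause on the related pairs $(p_2,w)$ and $(p_1,z)$ gives $\sum_{(q,z)\in\gamma_{13}}P_1(p_2,a,c,q)=P_1(p_2,a,c,p_1)+P_1(p_2,a,c,p_3)=2/3$ on one side but $\sum_{(q',p_1)\in\gamma_{13}}P_3(w,a,c,q')=P_3(w,a,c,w)+P_3(w,a,c,z)=1$ on the other. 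So $\gamma_{13}$ is not a bisimulation even though $\gamma_{12}$ and $\gamma_{23}$ are, and the same failure persists at the CSA level (give all states identical timing data, so the timing clauses are vacuously satisfied by every relation).

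The proposition itself remains true, but the witness for $(V_1,V_3)$ must be coarser than the composition. The natural repair is to relate $q\in Q_1$ and $z\in Q_3$ when they lie in the same connected component of the graph $\gamma_{12}\cup\gamma_{23}$ on $Q_1\cup Q_2\cup Q_3$; in the example above this is the full relation, which \emph{is} a bisimulation. Proving that this component relation satisfies Definition~\ref{def4.3} requires a genuine mass-transfer lemma, not just transitivity of equality: for instance, if $B$ is the bipartite adjacency matrix of a connected component of a bisimulation and $x,y\ge 0$ satisfy $Bx=\kappa\mathbf{1}$ and $B^{T}y=\kappa\mathbf{1}$ (the class-mass conditions), then evaluating $y^{T}Bx$ in two ways yields $\mathbf{1}^{T}x=\mathbf{1}^{T}y$, i.e.\ the two halves of a component carry equal mass; summing over components and chaining along alternating $\gamma_{12}$/$\gamma_{23}$ paths then gives the transition and initial-distribution clauses. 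Your timing argument survives this coarsening essentially unchanged, since $F$, $\rho$, and $\Pi$ are constant along every $\gamma_{12}$- or $\gamma_{23}$-edge and hence on whole components. Note also that the paper states both this proposition and the corresponding one for $\mathcal{E_U}$ without proof, so your fallback of citing ``the standard proof that $\mathcal{E_U}$ is transitive'' is unavailable; as written, your transitivity step rests on a false lemma.
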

\subsection*{Operational Semantics}
\begin{definition}
Let 
$ N = (P, IA, TA, CA, IG, OG, IR, IOR, TOR, IP, \overline{F}, \overline{\rho}, \overline{\Pi})$ be a controlled stochastic activity network with an initial marking $\mu_{0}$.
$(N, \mu_{0})$ is said to {\it realize} a controlled stochastic automaton
$V = (Q, A, C, P, Q_{0, }, F,  \rho, \Pi)$ where:
\begin{itemize}
\item $U = (Q, A, C, P, Q_{0 })$ is a controlled probabilistic automaton realized by  $(L, \mu_{0})$, where $ L = (P, IA, TA, $
$CA, IG, OG, IR, IOR, TOR, IP)$,
\item  for any $q \in Q$ and $a \in A$, $F(.|q,a) = \overline{F}(.|q,a),$ $\rho(q,a) = \overline{\rho}(q,a)$ and $\Pi(q,a) = \overline{\Pi}(q,a)$.
\end{itemize}
\end{definition}
A notion of equivalence for 
controlled stochastic activity networks may now be given as follows.
\begin{definition} \label{def4.5}
Two controlled stochastic activity networks are {\it equivalent} if 
they realize equivalent controlled stochastic automata.
\end{definition}
\begin{proposition}
Let $\mathcal {E_N}$  denote a relation on the set of all controlled stochastic activity networks such that $(N_1, N_2) \in \mathcal {E_N}$ if and only if $N_1$ and $N_2$ are equivalent controlled stochastic activity networks in the sense of Definition ~\ref{def4.5}.  Then 
$\mathcal {E_N}$ will be an equivalence relation.
\end{proposition}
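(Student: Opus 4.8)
The plan is to reduce the claim to the immediately preceding proposition, which asserts that $\mathcal{E_V}$ — equivalence of controlled stochastic automata in the sense of Definition~\ref{def.CSA.bisimulation} — is an equivalence relation, and then to transport reflexivity, symmetry, and transitivity along the realization map. Concretely, each controlled stochastic activity network $N$ (together with its initial marking) realizes, via the operational-semantics definition, a controlled stochastic automaton $V_N = (Q, A, C, P, Q_0, F, \rho, \Pi)$. By Definition~\ref{def4.5}, the pair $(N_1, N_2)$ lies in $\mathcal{E_N}$ precisely when $V_{N_1}$ and $V_{N_2}$ are equivalent controlled stochastic automata, i.e.\ $(V_{N_1}, V_{N_2}) \in \mathcal{E_V}$. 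Thus $\mathcal{E_N}$ is the preimage of $\mathcal{E_V}$ under the map $N \mapsto V_N$, and the three equivalence-relation axioms pull back directly.

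First I would record that the realization map is well-defined: given $N$ and its initial marking, the set of reachable stable markings, the activity time distributions, the enabling rates, and the reactivation predicate are all determined by $N$, so $V_N$ is fixed up to the renaming of states. This justifies reading Definition~\ref{def4.5} as the biconditional $N_1 \mathcal{E_N} N_2 \iff V_{N_1} \mathcal{E_V} V_{N_2}$.

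With this reformulation the axioms are immediate. For reflexivity, reflexivity of $\mathcal{E_V}$ gives $V_N \mathcal{E_V} V_N$, hence $N \mathcal{E_N} N$. For symmetry, $N_1 \mathcal{E_N} N_2$ yields $V_{N_1} \mathcal{E_V} V_{N_2}$, and symmetry of $\mathcal{E_V}$ gives $V_{N_2} \mathcal{E_V} V_{N_1}$, so $N_2 \mathcal{E_N} N_1$. For transitivity, $N_1 \mathcal{E_N} N_2$ and $N_2 \mathcal{E_N} N_3$ give $V_{N_1} \mathcal{E_V} V_{N_2}$ and $V_{N_2} \mathcal{E_V} V_{N_3}$; transitivity of $\mathcal{E_V}$ then yields $V_{N_1} \mathcal{E_V} V_{N_3}$, i.e.\ $N_1 \mathcal{E_N} N_3$.

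The only genuine obstacle is the well-definedness step: one must check that the controlled stochastic automaton realized by a given network is canonical enough that ``realizing equivalent automata'' is a property of the pair $(N_1, N_2)$ rather than of particular chosen realizations. Since bisimulation equivalence of controlled stochastic automata is insensitive to the concrete naming of states and already subsumes isomorphism, any two automata realized by the same network are $\mathcal{E_V}$-equivalent, so the pullback argument goes through unchanged. Everything else is a direct transfer from the preceding proposition and requires no new construction; indeed, the argument is structurally identical to the analogous results for $\mathcal{E_K}$, $\mathcal{E_L}$, and $\mathcal{E_U}$ established earlier.
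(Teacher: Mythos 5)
Your proof is correct and follows exactly the route the paper intends: the paper states this proposition without an explicit proof, treating it as the routine transfer of reflexivity, symmetry, and transitivity from the preceding proposition on $\mathcal{E_V}$ along the realization map, which is precisely your argument. Your extra care about well-definedness of realization (the realized automaton is determined by the network and its initial marking, and bisimulation absorbs any renaming of states) is a sound and worthwhile addition rather than a deviation.
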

\subsection*{Policy Types}
Given a controlled stochastic automaton $V = (Q, A, C, P, Q_{0, }, F,  \rho, \Pi)$. 
The model can be viewed to evolve over continuous time, where the completion of an activity in a state triggers a control decision that determines the next state. Let $X(t)$, $Y(t)$, and $Z(t)$ represent the state, control action, and activity completed (if any) at time $t$, respectively. ($Z(t) = \emptyset$ if no activity completes at time $t$.)  Let $H(t) = \{X(t'); 0 \leq t' < t, t' \in  \mathcal{R}_{+}\}$. $H(t )$ is called the {\it history} of $V$ before $t$. A {\it (history-depenent) policy} for $V$ is a set $\pi = \{\pi(t); t \in \mathcal{R}_+\}$ where, for any time $t,$ $\pi(t)(.|X(t), H(t))$ is a conditional probability measure over $C$. Suppose activity {\it a} completes immediately before time $t$. 
Then, at time $t$, policy $\pi$ chooses a control action $c$ with probability $\pi(t)(c | X(t), H(t))$. $\pi$ is said to be a {\it Markov policy} if $\pi(t)(. | X(t), H(t))$ is independent of $H(t)$ for any time $t$. A Markov policy $\pi$ is said to be a {\it memoryless policy} if $\pi(t)(. | X(t))$ is further independent of $t.$. It immediately follows that any memoryless policy $\pi$ may be represented as a set $\pi = \{\pi(. | q); q \in Q\} $where $\pi(. | q)$ is a conditional probability over $C$ for any $q \in Q$

Having introduced the behavioral and structural semantics of CSAs, we now turn to the quantitative aspects of decision-making. In order to analyze performance and synthesize optimal control strategies, we must augment CSAs with a reward framework that assigns costs or benefits to states, control actions, and activity completions. This reward structure serves as the foundation for comparing policies, establishing optimality criteria, and studying complexity results across both discrete- and continuous-time settings.
\subsection*{Reward Model and PAC-Optimality}
\label{RewardModel}

We impose a reward structure $r = (r', r'')$, where $r'$ is a bounded Borel measurable function
\[
r' : Q \times C \to \mathbb{R}_+,
\]
called the \emph{reward rate function}, and $r''$ is a bounded Borel measurable function
\[
r'' : Q \times A \times C \times Q \to \mathbb{R}_+,
\]
called the \emph{activity reward function}. Let $D_t$ denote the number of activity completions by time $t$ and $t^+$ a time immediately after $t$. The reward incurred during a period $[\tau', \tau]$, for $\tau', \tau \in \mathbb{R}_+$ with $\tau' \leq \tau$, is defined by the random variable
\begin{equation}
R = \int_{\tau'}^{\tau} r'(X(t),Y(t))\,dt 
  + \int_{\tau'}^{\tau} r''(X(t),Z(t),Y(t),X(t^+))\,dD_t.
\end{equation}
The variable $R$ is well-defined under the boundedness and measurability conditions above.  

To compare rewards, we use the \emph{stochastic ordering} $\leq_{st}$ on random variables: for $X, Y$, we write $X \leq_{st} Y$ iff $\Pr[X > x] \leq \Pr[Y > x]$ for all $x \in \mathbb{R}_+$.

Let $V = (Q, A, C, P, Q_0, F, \rho, \Pi)$ be a controlled stochastic automaton (CSA) with reward structure $r$, and let $\mathcal{P}$ denote a class of admissible policies. For $\tau', \tau \in \mathbb{R}_+$ with $\tau' \leq \tau$, $\pi \in \mathcal{P}$, and $q \in Q$, define
\[
R_{\tau'}^{\tau}(\pi \mid q)
\equiv
\text{the reward accrued during } [\tau', \tau] \text{ given } X(\tau') = q.
\]
A policy $\pi^* \in \mathcal{P}$ is called an \emph{optimal policy} for $(V, r)$ over $[\tau', \tau]$ if for all $\pi \in \mathcal{P}$ and $q \in Q$,
\[
R_{\tau'}^{\tau}(\pi \mid q) \leq_{st} R_{\tau'}^{\tau}(\pi^* \mid q).
\]
Such an optimal policy, however, is a strong requirement and may not always exist.

In many realistic applications, the exact computation of optimal policies for
continuous-time decision processes are infeasible due to the size and
complexity of the underlying state space or stochastic dynamics.
This motivates the use of \emph{Probably Approximately Correct (PAC)}
methods~\cite{valiant1984theory}, which provide theoretical guarantees of
$\varepsilon$-optimality with high probability using sampling- or
learning-based procedures.
Such guarantees make PAC algorithms particularly attractive for systems
in which exact analysis is intractable, but near-optimal performance can
be learned efficiently.
\begin{remark}
For any $\varepsilon, \delta > 0$, there exist PAC reinforcement learning
algorithms that return a policy $\pi$ such that, with probability at least
$1 - \delta$, the expected performance of $\pi$ is within $\varepsilon$ of
the optimal value. Foundational results for finite-state MDPs were
established in~\cite{kearns2002near,strehl2009reinforcement},
with later refinements providing tight bounds for discounted MDPs
\cite{lattimore2013pac}.
These methods can be adapted to continuous-time decision processes
through uniformization or related discretization techniques, extending PAC
guarantees to stochastic systems modeled by Controlled SANs.
\end{remark}
Integrating PAC guarantees into the framework of Controlled SANs
provides a principled way to synthesize near-optimal policies when
traditional analytic optimization methods are infeasible.
Unlike exact approaches, PAC algorithms offer performance bounds
that scales with problem complexity, enabling learning-based control
for high-dimensional or partially specified systems~\cite{SuttonBarto2018}.
This connects the classical theory of optimal and $\varepsilon$-optimal
policies~\cite{Howard71, puterman1994markov} with modern
sample-efficient learning frameworks
\cite{valiant1984theory,kearns2002near,strehl2009reinforcement,lattimore2013pac}.
This bridge is particularly relevant for AI-driven applications
such as autonomous systems, adaptive scheduling, and dependable real-time control,
where rigorous performance guarantees and learning-based adaptability
must coexist within a unified formal framework.

The notion of stochastic equivalence in Definition~\ref{def.CSA.bisimulation}
 allows us to compare the behavior of 
different random processes realized by controlled stochastic automata.  
In many practical applications, however, it is crucial to identify when such an automaton 
exhibits the simpler Markovian behavior, as this enables more tractable analysis and 
controller synthesis.  
We next characterize conditions under which a controlled stochastic automaton can be 
classified as Markovian. To compare formally the behavior of various models, we need the following concepts. 
\begin{definition}
The {\it state process} 
of a model is a random process $\{X(t); t \in 
{\mathcal R_{+}}\}$
where $X(t)$ denotes the state of the model at time $t$.
\end{definition}
\begin{definition}
Let $X = \{ X(t); t \in 
{\mathcal R_{+}}\}$
and $X' = \{ X' (t); t \in 
{\mathcal R_{+}}\}$
be two random processes with the set of states $Q$ and $Q'$,
respectively.  $X$ and $X'$ are said to be 
{\it stochastically equivalent} if there exists a  symmetric
binary relation $\gamma$ on $Q \cup Q'$ such that:
\begin{itemize}
\item for any $q \in Q$, there exists $q' \in Q'$ sich that, $ (q,q') \in \gamma$; also for any $q' \in Q'$, there exists $q \in Q$ such that $ (q',q) \in \gamma$,
\item for any $ t_{i} \in 
[0,\; \infty)$, $Q_{i} \subseteq Q$, and 
$ Q'_{i} \subseteq Q'$, such that for any $q \in Q_i$, there exists $q' \in Q'_i$ such that $ (q,q') \in \gamma$; also and for any $q' \in Q'_i$, there exists $q \in Q_i$ such that $ (q',q) \in \gamma$, 
$ i = 0, \ldots, n$, 
$ n \in {\mathcal N}$, 
\[
p [ X(t_{i}) \in Q_{i} ; i = 0, \ldots, n ] = 
p [ X'(t_{i}) \in Q'_{i} ; i = 0, \ldots, n ].
\] 
\end{itemize}
$X$ and $X'$ are {\it stochastically  
isomorphic (equal)} if
$\gamma$ is a bijection (an equality).  
\end{definition}
The ability to synthesize policies depends heavily on the underlying process structure. Certain CSAs exhibit Markovian dynamics, which allow for more tractable analysis and standard solution techniques. Identifying and exploiting such structure bridges our framework with classical models such as Continuous-Time Markov Decision Processes (CTMDPs), enabling the application of established algorithms while retaining the expressive power of CSAs.
\subsection*{Controlled Markovian Automata}
\begin{proposition}
Let $V = (Q, A, C, P, Q_0, F, \rho, \Pi)$ be a controlled stochastic automaton.  The state process of $V$ is a Markov process if for any policy $\pi$ and activity $a$ which is enabled in a state $q$, and any state $q_{ac}$ 
in which $a$ was last activated before being enabled in $q$, and $\tau \geq 0$,
\[
F(\tau | q_{ac}, a) = 1 - e^{- \alpha(q,a) \tau}
\]
 where $\alpha(q,a)$ is a positive real number, referred to as the {\it activity time rate} of a in q. Then, $V$ is called to be {\it Markovian}.
\end{proposition}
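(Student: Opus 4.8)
The plan is to show that the hypothesis forces the timing mechanism of $V$ to be \emph{memoryless}, so that the elapsed processing of each active activity becomes irrelevant and the current state $X(t)$ is by itself a sufficient statistic for the future. First I would recall that, for a controlled stochastic automaton with \emph{arbitrary} activity-time distributions, the natural Markovian description is not $X(t)$ alone but the enlarged clock process $\widetilde{X}(t) = (X(t), (w_a)_{a\ \text{active}})$, where $w_a$ records the amount of processing already accrued by each currently active activity $a$. This enlarged process is Markov by construction (it is the piecewise-deterministic, GSMP-style representation), and the substance of the proposition is that, under the exponential hypothesis, the supplementary coordinates $(w_a)$ can be discarded.

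The key step is the memorylessness of the exponential law. For an activity $a$ that was activated in $q_{ac}$ and is currently enabled in $q$, write $\alpha = \alpha(q,a)$ for its effective completion rate (the exponential rate in $F(\cdot\mid q_{ac},a)=1-e^{-\alpha\tau}$, modulated by the processing rate $\rho(q,a)$). I would invoke the identity
\[
\Pr[\tau > w+\delta \mid \tau > w] \;=\; e^{-\alpha\delta},
\]
valid for every accrued amount $w\ge 0$, to conclude that the instantaneous completion rate of $a$ in state $q$ is $\alpha(q,a)$ independently of $w_a$. A useful corollary I would record here is that the reactivation predicate $\Pi$ is immaterial under this hypothesis: whether $\Pi(q,a)$ triggers a fresh sample or the activity retains its age, the residual-time law is the same exponential, so $\Pi$ cannot affect the distribution of future completions.

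Next I would assemble the one-step kernel. In a stable state $q$ the set $A_q$ of enabled timed activities race as independent exponential clocks, so the holding time in $q$ is exponential with rate $\sum_{a\in A_q}\alpha(q,a)$, and the probability that a given $a$ fires first is $\alpha(q,a)\big/\sum_{a'\in A_q}\alpha(q,a')$ --- quantities that depend on $q$ only. When $a$ fires, a control action $c$ is selected and the next stable state is drawn by composing $P(q,a,c,\cdot)$ with the instantaneous-activity resolution; for a memoryless policy $\pi(\cdot\mid q)$ this post-jump law again depends on $q$ alone. Combining this with the age-independence from the previous step shows that the conditional law of $\{X(t+s):s\ge 0\}$ given the entire history $H(t)$ coincides with its law given $X(t)$, which is precisely the Markov property; since none of the rates depend on $t$, the resulting process is in fact a time-homogeneous continuous-time Markov chain, consistent with the CTMDP correspondence announced earlier.

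The main obstacle I anticipate is the rigorous treatment of \emph{concurrency}: several activities with distinct ages and possibly distinct activation states $q_{ac}$ are simultaneously active, and one must verify that the joint distribution of their residual times factorizes through the current state rather than through the full clock vector. This requires applying the memorylessness argument to each active clock simultaneously and showing that the competing-exponentials computation is unaffected by the (now irrelevant) ages and by $\Pi$. A secondary subtlety is the policy: the clean Markov conclusion holds for memoryless (Markov) policies, whereas for genuinely history-dependent policies one must either read $X(t)$ as Markov relative to the chosen action process or augment the state with the controller's internal memory. I would flag this explicitly, so that the exponential hypothesis is seen to govern the timing layer, which is exactly the content the proposition isolates.
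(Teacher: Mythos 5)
The paper states this proposition \emph{without any proof}---no proof environment or sketch follows it, and the appendix proves only Theorem~\ref{theorem2.1}---so there is no paper argument to compare yours against; your proposal supplies what the paper omits. On its own merits, your argument is the standard and correct one: pass to the GSMP-style supplementary-variable description $(X(t),(w_a)_a)$, which is Markov by construction; use memorylessness, $\Pr[\tau > w+\delta \mid \tau > w] = e^{-\alpha\delta}$, to show that the joint law of the residual clocks is independent of the accrued ages (your observation that the reactivation predicate $\Pi$ thereby becomes immaterial is a genuinely useful point the paper never makes); and then run the competing-exponentials computation to see that holding times, firing probabilities, and post-jump distributions are functions of the current state alone, with effective rate $\rho(q,a)\,\alpha(q,a)$---exactly the quantity $\sigma(q,a)$ that the paper introduces in Proposition~\ref{prop.Markovian}, so your reading of how $\rho$ modulates $\alpha$ is consistent with the surrounding text.

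Two caveats in your proposal deserve sharper statement. First, the factorization step for concurrent activities tacitly assumes that the activity times of distinct active activities are sampled independently; the paper's semantics never asserts this, so it should be named explicitly as a standing assumption rather than verified. Second, what you call a ``secondary subtlety'' is in fact a defect in the proposition as stated: the statement quantifies over \emph{any} policy, but under a genuinely history-dependent policy the controller reinjects history dependence at every completion, and the bare state process $X(t)$ is then not Markov. Your restriction to memoryless (Markov) policies---or, alternatively, augmenting the state with the controller's internal memory---is the only mathematically defensible reading, so your flag amounts to a correction of the statement, not merely a refinement of its proof.
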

This characterization enables us to represent Markovian controlled stochastic automata 
more succinctly.  
In particular, we can exploit their structure to define a compact formalism, which will 
facilitate the subsequent development of value iteration and policy synthesis.  
We formalize this representation in the following definition, where we present a compact representation of a Markovian controlled stochastic automaton.
\begin{definition}
A {\it controlled Markovian automaton (CMA}) is a 6-tuple $W = (Q, A, C, P, Q_0, \sigma)$ where:
\begin{itemize}
\item $(Q, A, C, P, Q_0)$ is a controlled probabilistic automaton,
\item $\sigma: Q \times A \to \mathcal{R}_+$ is the {\it actvity rate function},  
\end{itemize}
\end{definition}
\begin{proposition} \label{prop.Markovian}
Let $V = (Q, A, C, P, Q_0, F, \rho, \Pi)$ be a Markovian controlled stochastic automaton with a policy $\pi_V$. There exists a controlled Markovian automaton $W = (Q, A, C, P, Q_0, \sigma)$, where
\[
\sigma(q,a)= \rho(q,a) \alpha(q,a)
\]
with a policy $\pi_W$ such that the state process of $V$ and $W$ are stochastically isomorphic under policies $\pi_V$ and $\pi_W$, respectively.
$V$ is said to be {\it represented} by $W$.
\end{proposition}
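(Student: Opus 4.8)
The plan is to establish stochastic isomorphism through the identity relation $\gamma = \{(q,q) : q \in Q\}$ on the common state space $Q$, by showing that the two state processes are governed by identical exponential jump dynamics. The core observation is that the activity rate $\sigma(q,a) = \rho(q,a)\,\alpha(q,a)$ is precisely the effective completion rate of activity $a$ in state $q$ within the Markovian CSA $V$. Once this is verified, both $V$ and $W$ reduce to continuous-time jump processes over $Q$ whose embedded jump chains are governed by the same probabilistic transition function $P$ and the same control decisions, so that the policy $\pi_W$ can be taken to replicate $\pi_V$ on identical histories.

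First I would compute the per-state completion rate in $V$. By the Markovian hypothesis, whenever $a$ is enabled in $q$ its activity time $\tau$ is exponentially distributed with rate $\alpha(q,a)$, and while the process sojourns in $q$ the activity is processed at the constant rate $\rho(q,a)$, so accumulated work grows linearly at rate $\rho(q,a)$. Using the scaling property of the exponential law—if $\tau \sim \mathrm{Exp}(\alpha)$ and work accrues at rate $\rho$, then the completion time $\tau/\rho \sim \mathrm{Exp}(\alpha\rho)$—together with the memorylessness of the exponential distribution, the remaining time for $a$ to complete in state $q$ is $\mathrm{Exp}(\sigma(q,a))$, independent of the work already accumulated and of the activation state $q_{ac}$. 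This is the crucial step that collapses the history dependence of activity times into a purely state-dependent rate.

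Next I would describe the state process of $V$ as a continuous-time jump process. In a stable state $q$ the enabled activities compete via independent exponential clocks with rates $\sigma(q,a)$; hence the sojourn time in $q$ is $\mathrm{Exp}\!\left(\sum_{a}\sigma(q,a)\right)$, the first activity to complete is $a$ with probability $\sigma(q,a)\big/\sum_{a'}\sigma(q,a')$, and upon completion the policy $\pi_V$ selects a control action $c$ while the successor is drawn from $P(q,a,c,\cdot)$. Because the controlled Markovian automaton $W$ is defined so that $\sigma(q,a)$ is exactly the completion rate of $a$ in $q$, its state process has the identical sojourn-time law, identical activity-selection probabilities, and identical embedded transitions once we set $\pi_W := \pi_V$ (both policies act on the same space of state/activity histories and share $C$ and $P$). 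A coupling argument then realizes the two state processes pathwise equal, so their finite-dimensional distributions coincide and the identity relation $\gamma$ witnesses stochastic isomorphism.

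The main obstacle is the rigorous justification of the rate collapse when an activity remains continuously enabled across several state changes, where the processing rate $\rho(\cdot,a)$ varies and work accumulates over a heterogeneous trajectory. One must reapply memorylessness afresh at each state entry and argue that the joint law of the competing clocks, conditioned on the current state, is unaffected by the accumulated work, so that the possibly history-dependent policy sees the same conditional jump distribution in both models. Confirming that this reconstruction is consistent with the reactivation predicate $\Pi$—and that $\pi_W$ genuinely induces the same measure on trajectories as $\pi_V$—is the delicate part; the finite-dimensional distribution calculation itself is then routine.
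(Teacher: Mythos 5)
The paper states this proposition without any proof (its appendix proves only Theorem~\ref{theorem2.1}), so there is no authorial argument to compare against; your proposal supplies the justification the paper leaves implicit, and it is correct. The two key steps—(i) the scaling-plus-memorylessness computation showing that an exponential activity time of rate $\alpha(q,a)$ processed at rate $\rho(q,a)$ yields a completion hazard $\sigma(q,a)=\rho(q,a)\,\alpha(q,a)$ independent of accumulated work and of the activation state, and (ii) the resulting description of both $V$ and $W$ as jump processes over $Q$ with identical conditional sojourn laws, activity-selection probabilities, and embedded transitions under $\pi_W:=\pi_V$—are exactly what the definition of $\sigma$ and the definition of stochastic isomorphism require, and the identity relation then witnesses the isomorphism (indeed an equality in the paper's terminology). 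One remark on the delicate point you flag: the paper's definition of ``Markovian'' already indexes the exponential rate by the \emph{current} state $q$ rather than the activation state $q_{ac}$, i.e.\ $F(\tau \mid q_{ac},a)=1-e^{-\alpha(q,a)\tau}$, which is precisely the consistency condition needed to reapply memorylessness afresh at each state entry when an activity stays enabled across jumps with varying $\rho(\cdot,a)$; similarly, reactivation under $\Pi$ merely resamples an exponential clock and so leaves the conditional law of the competing clocks unchanged. With those observations your coupling/finite-dimensional-distribution argument goes through as stated.
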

\begin{definition}
A {\it controlled Markovian activity network} is a controlled stochastic activity network that, in any initial marking, realizes a Markovian controlled stochastic automaton.  Using Proposition ~\ref{prop.Markovian} above, 
a controlled Markovian network is, equivalently, represented by a controlled Markovian automaton.
\end{definition}
\subsection*{Continuous-Time Markov Decision Process (CTMDP)}
We now introduce a popular and general framework commonly used in most theoretical decision-making problems that evolve over continuous time, and its relation to our proposed models is as follows.
\begin{definition}  \label{defCTMDP}
A {\it continuous-time Markov decision process (CTMDP)} is defined as a 5-tuple $D = (Q, C,  Q_0, \lambda, P)$, where:
\begin{itemize}
\item Q is a countable set of {\it states},
\item C is a finite set of {\it control actions},
\item $Q_0$ is the {\it initial state distribution} where $\sum_{q \in Q} Q_0(q) =1$,
\item $\lambda: Q \times C \to \mathcal{R}_+$ is the {\it transition rate function}, 
\item $P: Q \times C \times Q \to [0,1]$ is the {\it transition probability function}, where for any $q \in Q$ and $c \in C$, 
\[
\sum_{q' \in Q} P(q, c, q') = 1.
\]
\end{itemize}
\end{definition}
\begin{definition} \label{CMAasCTMDP}
Let $W = (Q, A, C, P, Q_0, \sigma)$ be a controlled Markovian automaton (CMA). Then, $W$ is said to be {\it represented} by a continuous-time Markov decision process (CTMDP) $D = (Q, C, Q_0, \lambda, P')$ defined as:
\[
\lambda(q,c) = \sum_{a \in A} \sigma(q,a) \sum_{q' \in Q} P(q,a,c,q')
\]
\[
P'(q,c,q') = \frac{\sum_{a \in A} \sigma(q,a) P(q,a,c,q')}{\sum_{a \in A} \sigma(q,a) \sum_{q' \in Q} P(q,a,c,q')}
\]
\end{definition}
In the above model representation, the activity component becomes implicit, and the CTMDP models the same transition dynamics by summarizing the rate and probability information from the CMA. Thus, a CTMDP model can be viewed as an abstract representation of a CMA model.
More formally,  we have:
\begin{proposition}
Let $W = (Q, A, C, P, Q_0, \sigma)$ and $D = (Q, C, Q_0, \lambda, P')$ be a controlled Markovian automaton (CMA) and a continuous-time Markov decision process (CTMDP), respectively, as defined in Definition 
~\ref{CMAasCTMDP} above.  Then, there exist policies $\pi_W$ and $\pi_D$ in $W$ and $D$, respectively, such that the state processes of $W$ and $D$ under policies $\pi_W$ and $\pi_D$, respectively, will be isomorphic. 
\end{proposition}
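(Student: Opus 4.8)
The plan is to exhibit a single pair of matching \emph{memoryless} policies and to show that, under them, both state processes are continuous-time Markov chains (CTMCs) on the common state space $Q$ with identical generators and initial law. Concretely, I would fix an arbitrary memoryless policy $\pi(\cdot\mid q)$, i.e.\ a conditional distribution over $C$ for each $q\in Q$, and use it simultaneously as $\pi_W$ on $W$ and as $\pi_D$ on $D$. Stochastic isomorphism then follows by taking the relation $\gamma$ to be the identity on $Q$, which is trivially a bijection; it remains only to verify that the two processes have the same finite-dimensional distributions, and for CTMCs this reduces to equality of initial distribution and infinitesimal generator.

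First I would characterize the state process of $W$. In a state $q$, each activity $a$ races with an independent exponential clock of rate $\sigma(q,a)$, so the holding time is exponential with total exit rate $\Lambda(q):=\sum_{a\in A}\sigma(q,a)$, and the probability that $a$ is the completing activity is $\sigma(q,a)/\Lambda(q)$. Upon completion of $a$ the policy draws $c\sim\pi_W(\cdot\mid q)$ and the next state is $q'\sim P(q,a,c,\cdot)$. Hence the state process of $W$ is a CTMC whose off-diagonal generator entries are
\[
G_W(q,q') \;=\; \sum_{a\in A}\sigma(q,a)\sum_{c\in C}\pi_W(c\mid q)\,P(q,a,c,q').
\]

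The decisive step is the CTMDP side, where one key observation unlocks the whole argument. Because $P$ is a stochastic transition function, $\sum_{q'\in Q}P(q,a,c,q')=1$, so the CTMDP rate of Definition~\ref{CMAasCTMDP} collapses to
\[
\lambda(q,c)\;=\;\sum_{a\in A}\sigma(q,a)\sum_{q'\in Q}P(q,a,c,q')\;=\;\sum_{a\in A}\sigma(q,a)\;=\;\Lambda(q),
\]
\emph{independently of the control action $c$}. This is exactly what makes the state process of $D$ a genuine CTMC even under a randomized stationary policy: since the sojourn rate in $q$ is $\Lambda(q)$ no matter which $c$ is sampled, the holding time is memoryless and $X(t)$ alone is Markov. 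Setting $\pi_D=\pi_W=\pi$ and substituting $P'(q,c,q')=\bigl(\sum_{a}\sigma(q,a)P(q,a,c,q')\bigr)/\Lambda(q)$ yields
\[
G_D(q,q')\;=\;\Lambda(q)\sum_{c\in C}\pi(c\mid q)\,P'(q,c,q')\;=\;\sum_{c\in C}\pi(c\mid q)\sum_{a\in A}\sigma(q,a)\,P(q,a,c,q'),
\]
which coincides with $G_W(q,q')$ after interchanging the two finite sums.

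With $G_W=G_D$ and the shared initial distribution $Q_0$, both state processes are CTMCs with the same generator and initial law, so their finite-dimensional distributions agree, and taking $\gamma=\mathrm{id}_Q$ verifies the conditions for stochastic isomorphism. I expect the main obstacle to be conceptual rather than computational: the two models choose the control action at different moments---in $W$ the policy reacts \emph{after} an activity completes, whereas in $D$ the action is committed \emph{upon entering} the state---and a naive CTMDP with action-dependent rates would produce a hyperexponential, hence non-Markovian, sojourn under randomization. Pinning down why this mismatch is harmless is precisely the content of the action-independence identity $\lambda(q,c)=\Lambda(q)$, so I would make that identity the technical centerpiece; should one wish to avoid any subtlety with randomized stationary policies altogether, one may simply restrict to deterministic memoryless $\pi$, for which both state processes are manifestly CTMCs and the generator computation above is unchanged.
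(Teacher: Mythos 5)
Your proposal is correct, but there is nothing in the paper to compare it against: the proposition is stated without any proof, preceded only by the informal remark that the CTMDP ``summarizes the rate and probability information'' of the CMA. Your argument supplies exactly the formalization that remark gestures at, and it does so cleanly: the identity $\lambda(q,c)=\sum_{a\in A}\sigma(q,a)=\Lambda(q)$, which follows from stochasticity of $P$ in Definition~\ref{def3.2}, is indeed the crux, since it makes the sojourn rate action-independent and hence keeps the state process of $D$ Markov even under a randomized memoryless policy; the generator computation then shows $G_W=G_D$, and taking $\gamma$ to be the identity meets the paper's definition of stochastic isomorphism (in fact stochastic equality). Your closing observation about the timing mismatch—policy invoked after completion in $W$ versus upon entry in $D$—identifies a genuine subtlety that the paper silently ignores, and your fallback to deterministic memoryless policies is a legitimate way to sidestep it. Two minor caveats you could flag: the paper's formula for $P'(q,c,q')$ is ill-defined when $\Lambda(q)=0$ (absorbing states), so such states need to be treated separately or excluded; and for countably infinite $Q$ the claim that equal generators plus equal initial laws give equal finite-dimensional distributions should be read as applying to the minimal processes (or under a non-explosiveness assumption), a technicality the paper likewise leaves implicit.
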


Having established the correspondence between Controlled SANs and continuous-time Markov decision processes (CTMDPs) via controlled Markovian automata, we now shift our focus to the computational aspects of controller synthesis. While the preceding definitions clarified how CSA-derived systems can be abstracted as CTMDPs, practical applications require understanding when optimal policies exist and how they can be computed efficiently. We therefore analyze the complexity of finding optimal policies under standard quantitative objectives, beginning with the case of finite-state CTMDPs, where classical algorithmic techniques can be leveraged directly.
\subsection*{Optimal Policies for Finite-State CTMDPs}
\label{subsec:finiteCTMDP}

We examine the synthesis of optimal policies for continuous-time Markov decision processes (CTMDPs) that arise from Controlled SANs when all activities are exponentially timed, assuming finite state and control action sets. By the bounded reward assumptions introduced in the Reward Model and PAC-Optimality Subsection earlier, and standard results for finite-state CTMDPs (see, e.g., \cite{puterman1994markov,feinberg2002handbook}), an optimal stationary deterministic policy is guaranteed to exist. The computational complexity of finding such a policy depends on the specific quantitative objective, as illustrated in the following property:
\begin{theorem}
\label{prop:finite-ctmdp}
Consider a finite-state continuous-time Markov decision process (CTMDP) induced by a Controlled SAN in which all activities are exponentially distributed, and assume bounded reward functions. Then:
\begin{enumerate}
    \item For \emph{expected total reward}, there always exists an optimal memoryless deterministic policy, and the corresponding value function can be computed in polynomial time.
    \item For \emph{discounted reward}, an optimal memoryless deterministic policy exists, and the value function can be obtained by solving a linear system in polynomial time.
    \item For \emph{time-bounded reachability or reward}, the problem is PSPACE-complete in general. Optimal policies may require history dependence, and existing numerical algorithms exhibit exponential dependence on the required precision.
\end{enumerate}
\end{theorem}

\begin{proof}
We address each part of the theorem separately.

\smallskip
\noindent 
(1) For finite-state CTMDPs with bounded rewards, the expected total reward problem can be formulated as a linear program. This follows from classical results for continuous-time MDPs (see~\cite{puterman1994markov}). Linear programming guarantees the existence of an optimal \emph{stationary deterministic} policy, and the optimization can be performed in time polynomial in the size of the state-action space.

\smallskip
\noindent 
(2) In the discounted setting, the value function satisfies a system of linear equations derived from the Bellman fixed-point operator. Given a discount rate $\beta > 0$ and bounded reward functions, the Bellman equations admit a unique solution. The optimal value function can be computed in polynomial time using standard linear algebra techniques (e.g., Gaussian elimination or iterative solvers), and a memoryless deterministic policy achieves the optimal value~\cite{puterman1994markov}.

\smallskip
\noindent 
(3) This case is substantially more complex. The objective is to maximize the probability (or expected reward) of reaching a target set within a fixed time horizon $T > 0$. Unlike the previous two objectives, optimal policies may require \emph{history dependence} (i.e., memory of past decisions or elapsed time). Numerical methods typically rely on uniformization or discretization, leading to systems of ordinary differential equations (ODEs) whose solution must be approximated. The associated decision problem is known to be \textsc{PSPACE}-complete~\cite{Mundhenk2000}, and the complexity of exact or high-precision computation grows exponentially with the desired numerical accuracy.

\end{proof}
\begin{remark}
The tractability gap between the discounted/total reward and the time-bounded objectives reflects the fundamental increase in computational complexity introduced by finite-horizon constraints in continuous time.  
While polynomial-time methods suffice for the first two cases, practical algorithms for the third objective rely on numerical approximations with potentially high computational cost.
\end{remark}
\noindent
Table~\ref{tab:ctmdp-complexity} summarizes the computational complexity and policy characteristics for optimal control in finite-state CTMDPs under the three standard objectives discussed in Theorem~\ref{prop:finite-ctmdp}. These results establish that when Controlled SANs yield CTMDPs through exponentially timed activities, existing algorithmic techniques from the CTMDP literature can be directly applied for controller synthesis in safety-critical applications. Such reductions form the basis for extending these methods to more general CSA models, where non-exponential activities lead beyond the classical CTMDP framework.
\begin{table}[ht]
\centering
\footnotesize
\renewcommand{\arraystretch}{1.1}
\setlength{\tabcolsep}{3.5pt}  
\caption{Complexity and policy characteristics for optimal control in finite-state CTMDPs, categorized by quantitative objective.}
\label{tab:ctmdp-complexity}
\begin{tabular}{|p{2.6cm}|p{2.6cm}|p{2.6cm}|>{\raggedright\arraybackslash}p{4.6cm}|}
\hline
\textbf{Objective} & \textbf{Optimal Policy Type} & \textbf{Complexity Class} & \textbf{Solution Method} \\
\hline
Expected Total Reward &
Memoryless deterministic &
Polynomial time &
Linear programming~\cite{puterman1994markov} \\
\hline
Discounted Reward &
Memoryless deterministic &
Polynomial time &
Linear system or value iteration~\cite{puterman1994markov} \\
\hline
Time-Bounded Reachability / Reward &
May require history dependence &
PSPACE-complete &
ODE solving or uniformization~\cite{Mundhenk2000} \\
\hline
\end{tabular}
\end{table}
\noindent
In summary, for finite-state CTMDPs, both expected total and discounted reward objectives admit efficient algorithms and memoryless optimal policies, whereas time-bounded objectives lead to significantly higher complexity and may require history-dependent strategies. These results provide a clear computational boundary between classical tractable control objectives and those that require more sophisticated approximation or synthesis techniques.

The finite-state results provide the foundation for handling Controlled SANs with unbounded or continuous state spaces, where CTMDPs can no longer be solved directly. We now turn to the synthesis of optimal policies for \emph{infinite-state} CTMDPs, highlighting the challenges and methods for approximating $\epsilon$-optimal controllers in this broader setting.

The results for finite-state models, summarized in the preceding subsection, provide a foundation for analyzing more expressive Controlled SAN instances where the underlying CTMDP may exhibit an infinite, though countable, state space. Such cases arise naturally from complex systems with unbounded counters, queues, or dynamically generated processes. In these settings, standard CTMDP algorithms no longer apply directly, and both the existence and approximation of optimal policies become more challenging. We now extend the discussion to infinite-state CTMDPs, identifying the conditions for policy existence and techniques for constructing $\epsilon$-optimal controllers.
\subsection*{Optimal Policies for Infinite-State CTMDPs}

In many applications, the state space of a Controlled SAN—when interpreted as a continuous-time Markov decision process (CTMDP)—may be infinite but countable. We now address the question of whether optimal policies exist in this more general setting.

The results for finite-state models, summarized in the previous subsection, provide a foundation for analyzing more expressive Controlled SAN instances where the underlying CTMDP may exhibit an infinite, though countable, state space. Such cases arise naturally from complex systems with unbounded counters, queues, or dynamically generated processes. In these settings, standard CTMDP algorithms no longer apply directly, and both the existence and approximation of optimal policies become more challenging. We now extend the discussion to infinite-state CTMDPs, identifying the conditions for policy existence and techniques for constructing $\epsilon$-optimal controllers.

\begin{theorem}
\label{thm:existence-countable}
Let $D = (Q, C, Q_0, \lambda, P)$ be a CTMDP derived from a Controlled SAN, where $Q$ is a countably infinite set. Recall the
reward structure imposed in our reward model mentioned earlier in this section, namely,  $r= (r', r'')$, where $r' $, the reward rate function, and $r''$, the activity reward function, are both assumed to be bounded Borel measurable functions.  Our goal is to maximize the expected total accumulated reward. Then, we have:
\begin{enumerate}
  \item The value function is well-defined and bounded;
  \item There exists an optimal policy attaining the supremum reward;
  \item If  $r'$ is continuous (in a measurable structure) and $r''$ is weakly continuous, then we can have a stationary deterministic policy.
\end{enumerate}
If the reward is discounted by a factor $\beta > 0$, the same conclusions hold under even weaker assumptions, and the optimal policy can be computed via value iteration or fixed-point techniques.
\end{theorem}
\begin{proof}
We prove each part in turn.

\smallskip
\noindent 
(1) Since both reward components $r'$ and $r''$ are assumed to be bounded and Borel measurable, the expected total accumulated reward along any feasible policy remains finite as long as the CTMDP is non-explosive. In the case of Controlled SANs, the exponential distributions and structural constraints ensure that the underlying stochastic process is non-explosive. Thus, the value function is well-defined and bounded from above by a finite constant.

\smallskip
\noindent
(2) We reduce the CTMDP to an equivalent discrete-time MDP using the classical \emph{uniformization} technique, which transforms the continuous-time process into a discrete-time one with equivalent behavior in terms of expected reward. This technique preserves optimality and allows us to apply well-established results from infinite-state MDP theory. In particular, under bounded reward assumptions and measurability of the transition and reward functions, it is known that an optimal policy exists, possibly requiring randomization or history dependence (see \cite{puterman1994markov,HernandezLerma1996}).

\smallskip
\noindent
(3) When $r'$ is continuous and $r''$ is weakly continuous (i.e., the expected reward varies continuously with respect to the policy and transition kernel), measurable selection theorems apply. In this case, the supremum in the Bellman functional can be attained by a deterministic stationary policy. Existence follows from measurable selection arguments and upper semi-continuity of the Bellman operator over compact action spaces. Therefore, under these mild continuity assumptions, a stationary deterministic policy attains the optimal value.

\smallskip
\noindent
When a discount factor $\beta > 0$ is introduced, the Bellman operator becomes a contraction in the space of bounded value functions. By the Banach fixed-point theorem, there exists a unique value function, and value iteration converges to it from any initial function. The discounted reward setting thus admits simpler analysis and weaker assumptions, while still ensuring the existence of optimal stationary deterministic policies and convergence of numerical approximation algorithms.

\end{proof}

\begin{remark}
If the reward function $r$ is unbounded or if the system admits Zeno behavior (i.e., an infinite number of transitions in finite time), then the expected reward may diverge, and an optimal policy may not exist. Moreover, for time-bounded objectives, only $\varepsilon$-optimal policies are generally guaranteed under standard measurability and compactness assumptions.
\end{remark}

Having established the existence and computability of optimal policies for both finite- and infinite-state CTMDPs derived from Controlled SANs, we now turn to the expressive power of our modeling framework. This subsection synthesizes prior results on policy structures, reward semantics, and complexity to situate Controlled Stochastic Automata (CSA) in the broader landscape of stochastic modeling, including CTMDPs, GSMPs, and GSMDPs. By clarifying these relationships, we aim to offer a comprehensive view of CSA’s expressiveness and implications for policy synthesis.

\subsection*{Expressiveness and Optimality of CSA, CTMDPs, and GSMDPs}

This subsection consolidates the main results on Controlled Stochastic Automata (CSA), relating the policy types and reward semantics introduced earlier to the broader spectrum of continuous-time and semi-Markov models. We first formalize the expressiveness hierarchy between CSA and classical models such as CTMDPs, GSMPs, and GSMDPs, and then discuss their relative implications for synthesis and verification.

\begin{theorem} \label{thm:expressiveness}
Controlled Stochastic Automata (CSA) with general timing distributions strictly generalize continuous-time Markov decision processes (CTMDPs), generalized semi-Markov processes (GSMPs), and even generalized semi-Markov decision processes (GSMDPs). Specifically:
\begin{enumerate}
  \item Any GSMP can be encoded as a CSA by representing concurrently enabled events via local clocks and race-based timing mechanisms.
  \item CSA subsumes CTMDPs when all timing distributions are exponential, eliminating residual-time tracking via the memoryless property.
  \item GSMDPs can be simulated by CSAs through explicit representation of general sojourn times and policy-dependent transitions, but CSA provides greater modularity and compositional semantics.
  \item CSA cannot, in general, express stochastic hybrid systems (SHS) or piecewise deterministic Markov processes (PDMPs), due to their continuous-state differential dynamics.
\end{enumerate}
\end{theorem}

\begin{proof}
For (1), GSMPs permit general timing distributions for multiple simultaneously enabled transitions and resolve behavior through the minimum firing time (race semantics). A CSA with general sojourn distributions and gate-based enabling conditions can simulate such behavior by explicitly encoding local clocks and triggering transitions via races. This equivalence is grounded in classical GSMP semantics~\cite{Matthes1962, Schassberger1977}.

For (2), CSA collapses to a CTMDP when all sojourn times are exponential. In this case, the system evolves without residual-time memory, and standard CTMDP uniformization and Bellman equation techniques apply~\cite{puterman1994markov}.

For (3), GSMDPs extend GSMPs by associating control decisions with transition distributions. CSAs can emulate this behavior by encoding control policies as gate-enabled transitions and representing general distributions via explicit activity semantics. However, CSAs provide additional structure, such as compositional modeling, guard synchronization, and direct alignment with automata-based formal methods, which are absent in classical GSMDP formulations~\cite{YounesSimmons2004}.

For (4), stochastic hybrid systems (SHS) and PDMPs allow continuous-valued state variables evolving under differential equations punctuated by stochastic events. CSA models, based on discrete-event formalisms, cannot directly represent continuous-time flows or hybrid transitions~\cite{Cassandras2008, Davis1984, Tomlin1998}, though approximations are possible in event-driven subclasses.
\end{proof}
\begin{theorem} \label{thm:CSA-optimal}
Let $V$ be a Controlled Stochastic Automaton (CSA) with bounded reward rates,
non-explosive dynamics, and general (not necessarily exponential) timing
distributions, as specified under the policy and reward semantics in
Subsections~\emph{Policy Types} and \emph{Reward Model and PAC-Optimality}.
For any discount factor $\beta>0$, the following hold:
\begin{enumerate}
  \item There exists a sequence of discretized generalized semi-Markov decision
  processes (GSMDPs), equivalently discretized SMDPs, that approximate $V$,
  whose optimal value functions converge uniformly to the supremum expected
  discounted reward achievable under all admissible CSA policies.
  \item The Bellman operator $T$ associated with each discretized model is a
  contraction with modulus $e^{-\beta\Delta}$ for discretization step $\Delta$,
  so value iteration $V_{k+1} := T V_k$ converges geometrically to the optimal
  value function $V^*_\Delta$.
  \item For any $\varepsilon>0$, a sufficiently fine discretization yields a
  stationary (possibly randomized) policy that is $\varepsilon$-optimal for
  the original CSA $V$.
  \item When all timing distributions are exponential, the CSA reduces to a
  CTMDP, which can be uniformized into a discrete-time MDP. In this case,
  classical CTMDP value iteration applies directly and converges to the
  optimal value function without residual-time state augmentation.
\end{enumerate}
\end{theorem}
\begin{proof}
We prove each item in turn.

\medskip
\noindent
(1) 
Consider first the case of arbitrary (non-exponential) timing distributions.
Because such distributions violate the memoryless property, the CSA $V$
cannot in general be reduced directly to a CTMDP.
Instead, we embed $V$ into a generalized semi-Markov decision process (GSMDP)
by augmenting each discrete state with the residual times of all enabled
timed activities.
Control actions determine which activities are enabled and how their
completion events update both the discrete state and the residual timers.
This yields a Markovian extended state space faithfully representing the
event-driven dynamics of the CSA.

Discretizing time with step size $\Delta>0$ produces a discrete-time SMDP
whose optimal value function $V^*_\Delta$ is well-defined and bounded.
Standard results for GSMDPs/SMDPs imply that
$V^*_\Delta \to V^*$ uniformly as $\Delta \to 0$, where $V^*$ is the
supremum expected discounted reward achievable in the CSA.

\medskip
\noindent
(2) 
Let $T$ denote the Bellman operator of the discretized model.
Discounting by $\beta>0$ implies that $T$ is a contraction with modulus
$\gamma = e^{-\beta\Delta}$.
Hence, by the Banach fixed-point theorem, the value iteration sequence
$V_{k+1} := T V_k$ converges geometrically to the unique fixed point
$V^*_\Delta$.
This fixed point coincides with the optimal value function of the
discretized GSMDP (cf.\ Howard~\cite[Chapter~15]{Howard71}).

\medskip
\noindent
(3) 
By standard discretization error bounds for semi-Markov decision processes
(cf.\ Puterman~\cite[Chapter~7]{puterman1994markov}),
for any $\varepsilon>0$ there exists a discretization step $\Delta$
sufficiently small such that
$\|V^*_\Delta - V^*\|_\infty < \varepsilon$.
An optimal stationary policy for the discretized model, therefore, induces
a stationary (possibly randomized) policy that is $\varepsilon$-optimal
for the original CSA $V$.

\medskip
\noindent
(4) 
Finally, when all timing distributions are exponential, the residual-time
information is unnecessary due to the memoryless property.
In this case, the CSA reduces directly to a CTMDP.
Standard uniformization transforms this CTMDP into an equivalent
discrete-time MDP, and classical contraction-based value iteration applies
directly, converging to the optimal value function without state
augmentation.
\end{proof}
\begin{remark}
Although the above results guarantee the existence of $\varepsilon$-optimal
policies and convergence of value iteration, the computational complexity
depends critically on the timing structure.
For exponential timing, uniformization yields tractable dynamic programming
with well-established solvers.
For general CSAs, the GSMDP embedding requires residual-time augmentation,
and discretization may cause substantial state-space growth.
Practical implementations, therefore, rely on truncation, abstraction,
state aggregation, or approximation heuristics to balance computational
tractability with near-optimality guarantees.
\end{remark}
In general, a Controlled Stochastic Automaton (CSA) may include timed activities with arbitrary, non-exponential sojourn time distributions. In contrast to the exponential case—where the system dynamics reduce to a continuous-time Markov decision process (CTMDP) and can be uniformly transformed into a discrete-time MDP—the presence of non-exponential timing breaks the memoryless property and thus precludes direct reduction to a CTMDP.

To handle general timing behavior, the CSA can be embedded into a Generalized Semi-Markov Decision Process (GSMDP) by augmenting each state with residual clocks for all concurrently enabled activities. This extended Markovian representation captures the event-driven semantics and enables policy-based analysis. The resulting GSMDP is structurally equivalent to a semi-Markov decision process (SMDP) over an extended state space, where both control actions and general sojourn times are explicitly represented. Such embeddings support dynamic programming via time discretization and residual-timer updates, preserving the accuracy of the underlying stochastic behavior.

This construction enables tractable approximation of optimal control strategies. Specifically, for any discount factor $\beta > 0$, there exists a sequence of discretized GSMDPs whose optimal value functions converge uniformly to the supremum expected discounted reward achievable in the original CSA. These results justify the use of discretization-based value iteration, policy improvement, and PAC-style learning even under general timing assumptions.

Combined with our formal reward semantics, these results guarantee that the accumulated reward functional $R_{\tau'}^{\tau}(\pi \mid q)$ can be approximated within any desired tolerance $\varepsilon$, enabling robust performance guarantees for decision-making and policy synthesis even in settings with complex timing dynamics.

\section{Conclusion}

This paper introduced Controlled Stochastic Activity Networks (Controlled SANs) as a unified, automata-theoretic framework that integrates nondeterministic control, probabilistic branching, and stochastic timing into a cohesive modeling language for real-time systems.

By establishing formal connections between Controlled SANs, classical models like CTMDPs, GSMPs, and GSMDPs, and recent extensions such as Controlled Probabilistic Automata, we provide a rigorous semantic hierarchy for analyzing expressiveness and optimality. We also introduced a structured taxonomy of control policies and developed uniform techniques for approximating $\varepsilon$-optimal policies via discretized GSMDPs.

A distinctive contribution of our framework is its compatibility with AI-based control synthesis, including PAC learning, reinforcement learning, and approximate dynamic programming. This creates a robust bridge between traditional formal methods and scalable AI planning under uncertainty.

Future work will focus on incorporating temporal-logic specifications, controller synthesis under logical constraints, and model-checking algorithms into CSA-based tools. Applications in cyber-physical systems, autonomous robotics, and dependable AI are particularly promising. Ultimately, Controlled SANs offer a principled foundation for building the next generation of adaptive, formally verifiable intelligent systems.

\section*{AI Assistance Statement}
During the preparation of this work, the author used ChatGPT (OpenAI) to assist with language polishing, technical phrasing, and improving the clarity of exposition. After using this tool, the author reviewed and edited the content as needed and takes full responsibility for the content of the publication.

\bibliographystyle{elsarticle-num}
\bibliography{references}
\begin{appendices}

\section{Proofs}

\subsection*{Proof of Theorem~\ref{theorem2.1}}
\label{appendix:thm2.1proof}

\begin{proof}
We begin by considering a restricted class of activity networks
\cite{movaghar2001stochastic} consisting only of
\emph{instantaneous activities}, \emph{standard gates}, and a special
class of gates known as \emph{inhibitor gates}.
An inhibitor gate is an input gate with an enabling predicate $g$
and the identity function as its output function, where
$g(x)=\mathit{true}$ if and only if $x=0$.

This class of activity networks is equivalent, at the structural level,
to the class of \emph{extended Petri nets} with inhibitor arcs
\cite{Peterson1981}.
Such nets strictly generalize ordinary Petri nets and form the basis of
several expressive stochastic Petri net models, including
Generalized Stochastic Petri Nets (GSPNs)
\cite{Marsan1984,Balbo1997} when timing is added.

It is a classical result that extended Petri nets with inhibitor arcs
can simulate Turing-complete models of computation.
Using register machines \cite{Shepardson1963} or Minsky’s program machines
\cite{Minsky1967}, one can show that extended Petri nets simulate
deterministic and nondeterministic Turing machines
\cite{Hack1975,Peterson1981,Sipser2013}.
Consequently, they can represent any computable relation.

Since activity networks with instantaneous activities, standard gates,
and inhibitor gates are behaviorally equivalent to extended Petri nets,
it follows that such activity networks can also simulate a
nondeterministic Turing machine and hence represent any computable
relation.
This universality property forms the basis of the proof.

Let
\[
S = (Q, A, C, \rightarrow, Q_0)
\]
be a computable controlled automaton.
Without loss of generality, assume $Q = \mathbb{N}$.
Define the following relations:
\begin{itemize}
\item For each $a \in A$ and $c \in C$,
\[
R_{ac} = \{ (i,j) \mid i,j \in \mathbb{N},\; i \xrightarrow{a,c} j \};
\]
\item For each $a \in A$, define
\[
G_a : \mathbb{N} \to \{\mathit{true},\mathit{false}\},
\]
where $G_a(i)=\mathit{true}$ iff there exist $c \in C$ and $j \in \mathbb{N}$
such that $i \xrightarrow{a,c} j$;
\item Define
\[
R_{Q_0} = \{ (1,i) \mid i \in Q_0 \}.
\]
\end{itemize}

Because $S$ is a \emph{computable} controlled automaton,
the relations $G_a$, $R_{ac}$, and $R_{Q_0}$ are computable.
Hence, each can be represented by a nondeterministic Turing machine.
By the universality result above, there exist corresponding activity
networks
\[
K_{G_a}, \quad K_{R_{ac}}, \quad \text{and} \quad K_{Q_0},
\]
constructed solely from instantaneous activities, standard gates, and
inhibitor gates, that simulate these relations.

We now construct a controlled activity network $K$ whose set of timed
activities is $A$ and whose set of control actions is $C$.
For each pair $(a,c) \in A \times C$, $K$ contains a controlled activity
subnetwork of the form depicted in Figure~5.
In this subnetwork, $K_{G_a}$ and $K_{R_{ac}}$ simulate the enabling
predicate $G_a$ and the transition relation $R_{ac}$, respectively.

Let $P_{1a}, P_{2a}, P_{3ac}$, and $P_M$ be places with the following
roles:
when $P_{1a}$ is empty, all instantaneous activities of $K_{G_a}$ are
disabled; when $P_{3ac}$ is empty, all instantaneous activities of
$K_{R_{ac}}$ are disabled.
Initially, $P_{1a}$ contains one token, while $P_{2a}$ and $P_{3ac}$ are
empty.

Whenever the system state place $P_S$ acquires a marking $x$ such that
$G_a(x)=\mathit{true}$, the subnetwork $K_{G_a}$ executes.
After finitely many instantaneous firings, $P_{1a}$ loses its token,
$P_{2a}$ gains a token, and the marking of $P_S$ remains unchanged.
When $P_{3ac}$ subsequently gains a token, the subnetwork $K_{R_{ac}}$
executes.
After finitely many instantaneous firings, $P_{3ac}$ loses its token,
$P_{1a}$ regains a token, and the marking of $P_S$ changes from $x$ to
$y$ such that $(x,y)\in R_{ac}$.

Initialization is handled by an additional controlled activity
subnetwork, depicted in Figure~6.
Here, $K_{Q_0}$ simulates the relation $R_{Q_0}$.
Initially, a place $P_0$ contains one token and $P_S$ is empty.
After finitely many instantaneous firings, $P_0$ loses its token and the
marking of $P_S$ is set to some $x$ with $(1,x)\in R_{Q_0}$, after which
all activities of $K_{Q_0}$ are disabled.

Assuming that $K$ contains no controlled activity subnetworks other than
those described above, it follows that the controlled activity network
$K$ realizes a controlled automaton that is isomorphic to $S$.
This completes the proof.
\end{proof}

\end{appendices}
\pagestyle{empty}
\begin{figure}[htbp]
  \centering
  \vspace*{-3cm} 
  \hspace*{-2cm} 
  \includegraphics[scale=1]{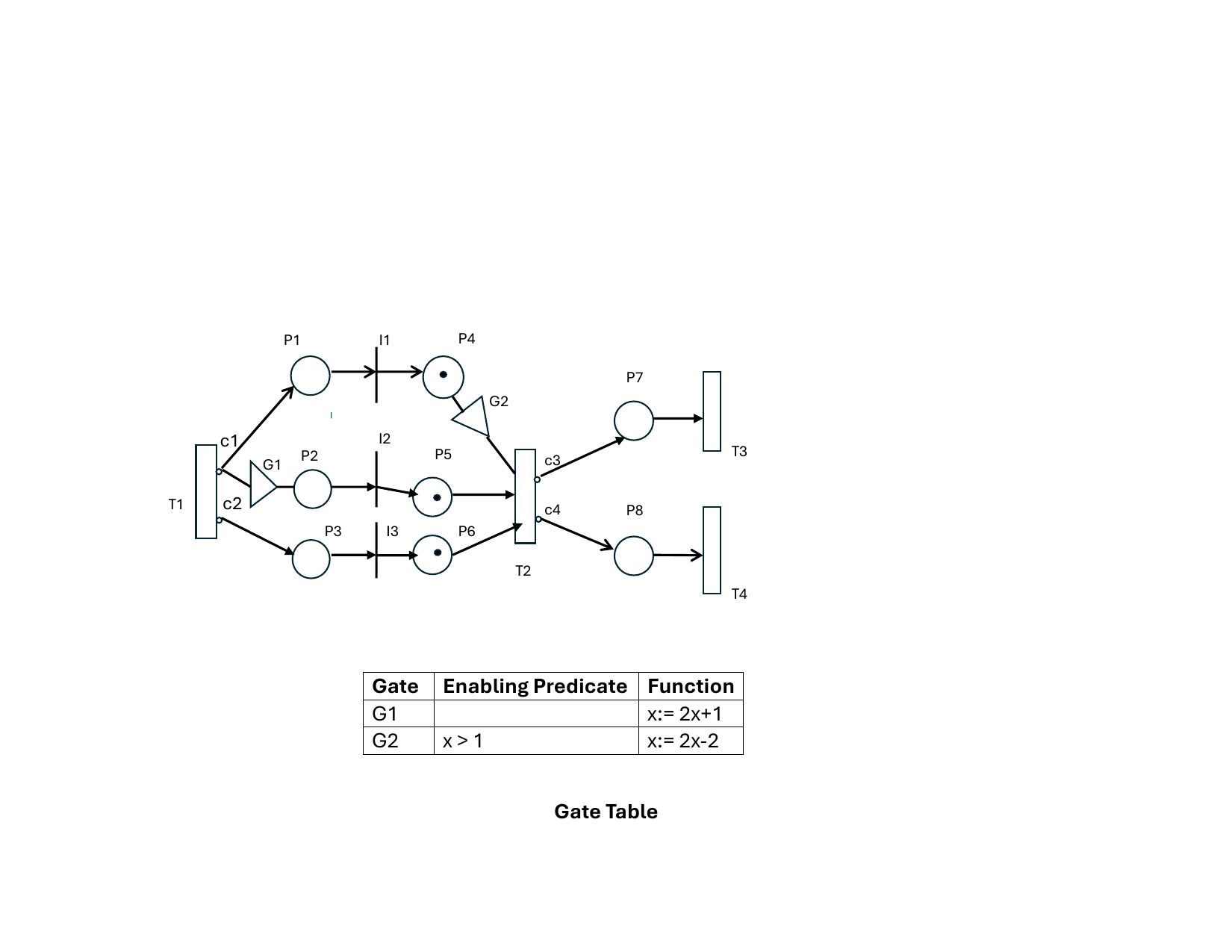}
  \captionsetup{labelfont={large}, font=large}
  \caption{An example of a simple controlled activity network with a marking.}
  \label{diagram}
\end{figure}
\begin{figure}[htbp]
  \centering
  \vspace*{-3cm} 
  \hspace*{-2cm} 
  \includegraphics[scale=1]{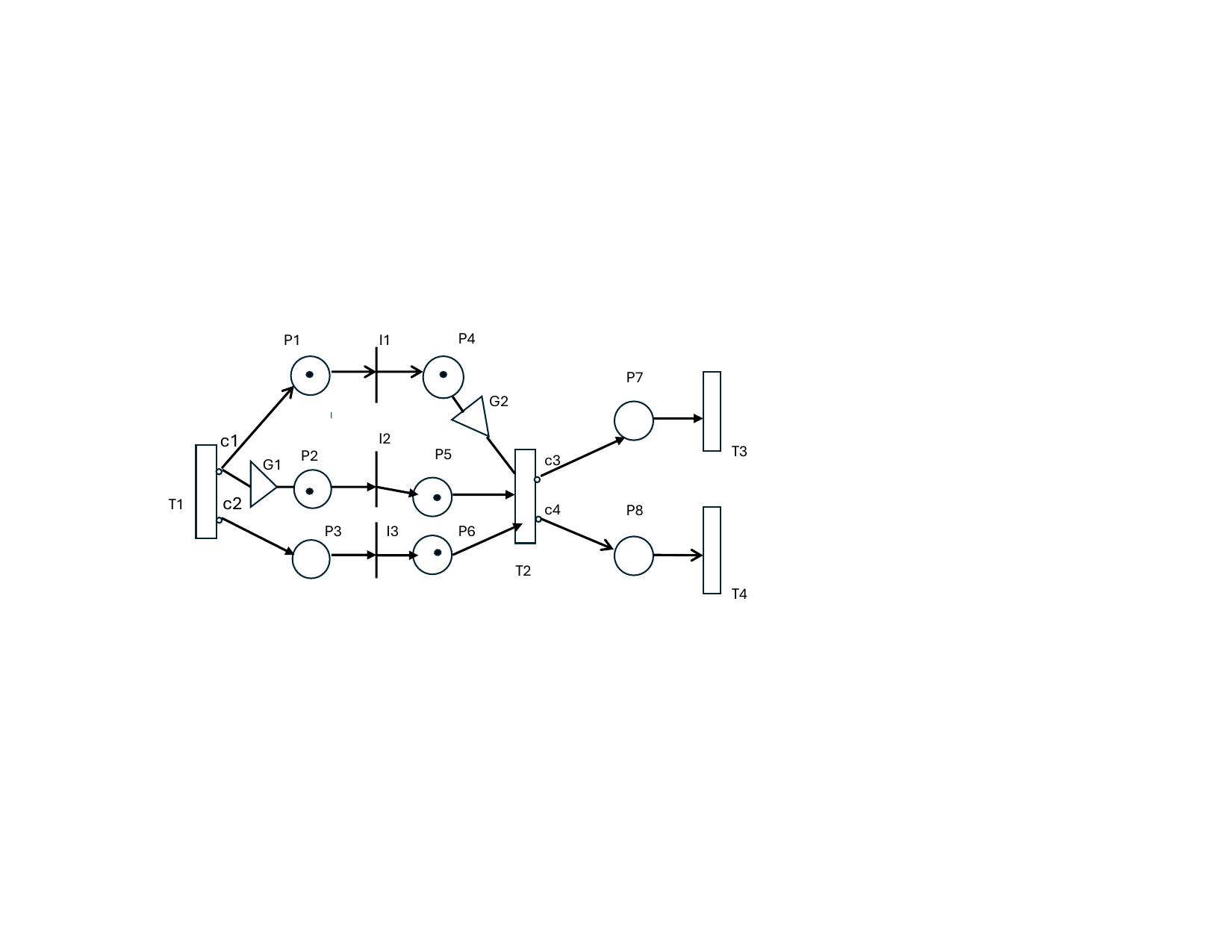}
  \captionsetup{labelfont={large}, font=large}
  \caption{Marking of the model after $T1$ completes via control action $c1$ in the model of Figure 1.}
  \label{diagram}
\end{figure}
\begin{figure}[htbp]
  \centering
  \vspace*{-3cm} 
  \hspace*{-2cm} 
  \includegraphics[scale=1]{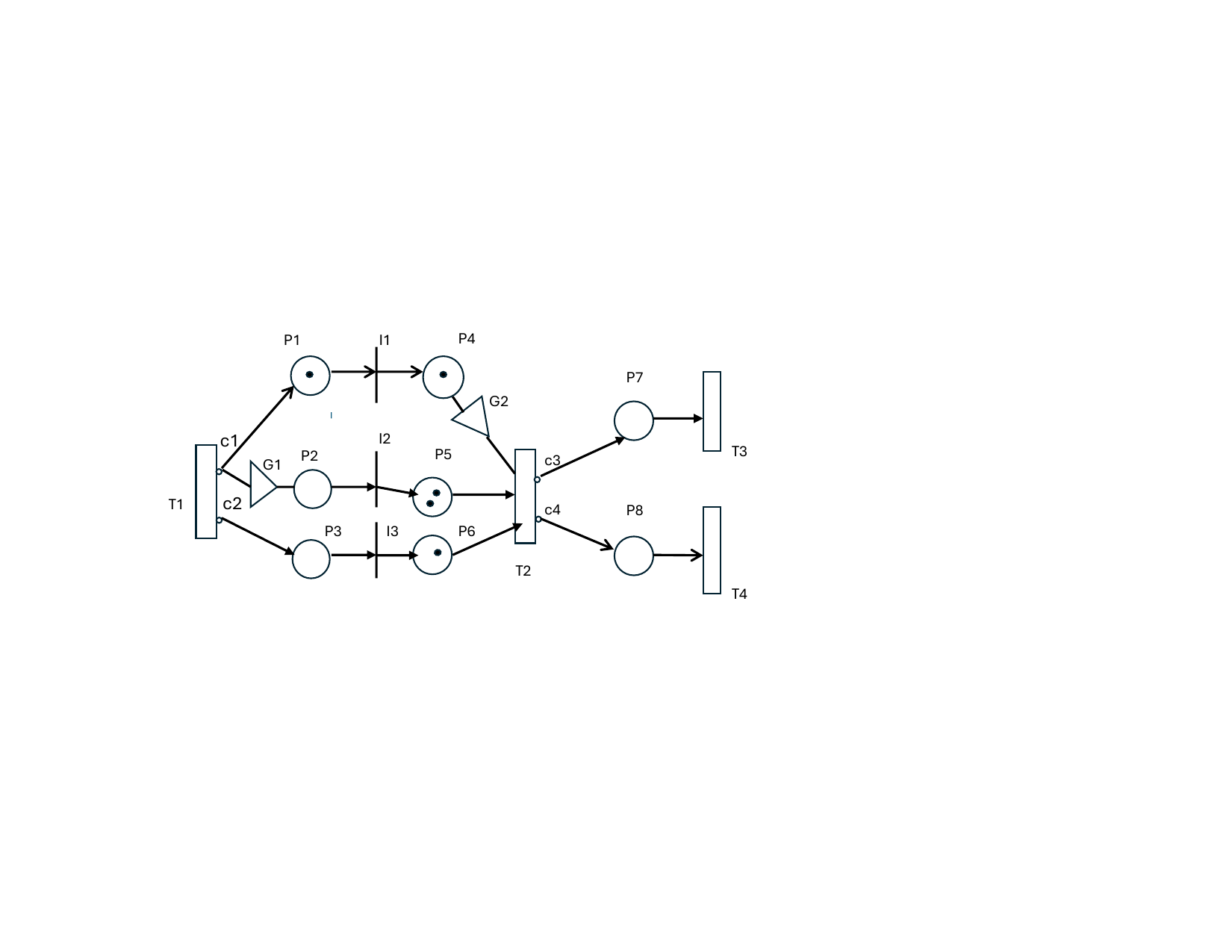}
  \captionsetup{labelfont={large}, font=large}
  \caption{Marking of the model after $I2$ completes in the model of Figure 2.}
  \label{diagram}
\end{figure}
\begin{figure}[htbp]
  \centering
  \vspace*{-3cm} 
  \hspace*{-2cm} 
  \includegraphics[scale=1]{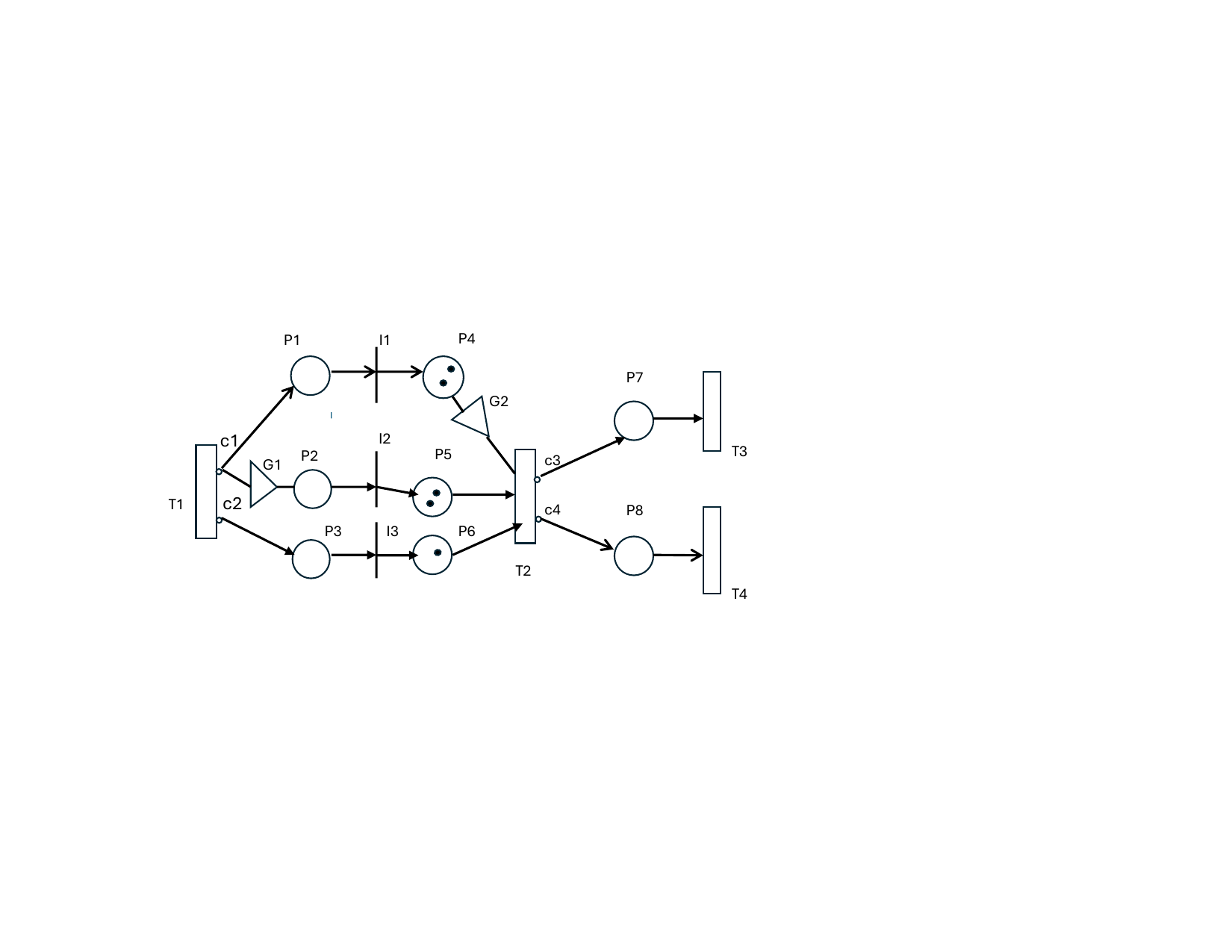}
  \captionsetup{labelfont={large}, font=large}
  \caption{Marking of the model after $I1$ completes in the model of Figure 3.}
  \label{diagram}
\end{figure}
\newpage
$\;$
\\
$\;$
\put(0,-50){\framebox(50,50){\large {$K_{G_{a}}$}}} 
\put(50,-25){\vector(1,0){50}} \put(110,-25){\circle{20}}
\put(110,-50){$P_{2a}$}
\put(120,-25){\vector(1,0){40}} \put(160,-25){\rule{3mm}{9mm}} 
\put(160,-50){\rule{3mm}{9mm}}
\put(175,-40){$a$}   \put(170,-25){\circle{5}} \put(170, -20){$c$} 
\put(170,-25){\vector(1,0){40}} \put(220,-25){\circle{20}}
\put(220,-50){$P_{3ac}$}
\put(230,-25){\vector(1,0){50}}
\put(280,-50){\framebox(50,50){\large {$K_{R_{ac}}$}}}
\put(160,-100){\circle{20}}
\put(160,-123){$P_{1a}$} 
\put(160,-150){\circle{20}}
\put(160,-175){$P_{S}$}
\put(153,-92){\vector(-2,1){103}}
\put(280,-40){\vector(-2,-1){110}}
\put(50,-47){\vector(1,-1){100}}
\put(280,-42){\vector(-1,-1){110}}
\put(150,-150){\vector(-3,2){150}}
\put(168,-157){\vector(3,2){160}} 
\put(60,-220){\large Figure 5.  A controlled activity subnetwork of $K$} 
\put(60,-235){\large corresponding to an activity $a$  and control action $c$ of $M$.}
$\;$
\\
$\;$
\vspace{3cm}
\\
\put(50,-10){\circle{20}}
\put(50,-37){$P_{0}$} 
\put(60,-10){\vector(1,0){80}}
\put(140,-40){\framebox(60,60){\large $K_{Q_{0}}$}}
\put(280,-10){\circle{20}}
\put(280,-37){$P_{S}$}
\put(200,0){\vector(1,0){80}}
\put(280,-20){\vector(-1,0){80}}
\put(60,-90){\large Figure 6.  A controlled activity subnetwork} 
\put(60,-105){\large of $K$ corresponding to $Q_{0}$.}
\end{document}